\title{\vspace{-2cm}A Hybrid Framework Combining Autoregression and Common Factors for Matrix Time Series}
\author{Zhiyun Fan$^\dagger$, Xiaoyu Zhang$^\ddagger$, and Di Wang$^\dagger$\\$\dagger$ Shanghai Jiao Tong University and $\ddagger$ Tongji University}
\let\counterwithin\relax
\tikzstyle{startstop} = [ellipse,minimum width=3cm, minimum height=1.5cm, align=left, draw=black, fill=red!20]
\tikzstyle{process} = [rectangle, rounded corners,minimum width=4cm, minimum height=1.5cm, align=left, draw=black, fill=blue!20]
\tikzstyle{decision} = [diamond, minimum width=4cm, minimum height=1.5cm, align=left, draw=black, fill=green!30]
\tikzstyle{circle} = [circle, minimum size=2cm, align=left, draw=black, fill=yellow!20]
\tikzstyle{arrow} = [thick,->,>=stealth]
\tikzstyle{io} = [rectangle, rounded corners,minimum width=4cm, minimum height=1.5cm, align=left, draw=black, fill=orange!20]
\newtheorem{assumption}{Assumption}
\newtheorem{definition}{Definition}
\newtheorem{lemma}{Lemma}
\newtheorem{proposition}{Proposition}
\newtheorem{theorem}{Theorem}
\newtheorem{remark}{Remark}
\theoremstyle{definition}
\DeclareMathOperator*{\argmin}{arg\,min}
\newcommand{\bm}{\mathbf}
\newcommand{\bbm}{\boldsymbol}
\newcommand{\V}[1]{\mathrm{vec}(#1)}
\newcommand{\norm}[1]{\left\| #1 \right\|}
\newcommand{\fnorm}[1]{\norm{#1}_{\mathrm{F}}}
\newcommand{\opnorm}[1]{\norm{#1}_{\mathrm{op}}}
\newcommand{\sqmat}[1]{\left[ #1 \right]}
\newcommand{\inner}[2]{\left\langle #1, #2\right\rangle}
\begin{document}

\setlength{\parindent}{16pt}

\maketitle

\begin{abstract}
  Matrix-valued time series are ubiquitous in modern economics and finance, yet modeling them requires navigating a trade-off between flexibility and parsimony. We propose the Matrix Autoregressive model with Common Factors (MARCF), a unified framework for high-dimensional matrix time series that bridges the structural gap between the Matrix Autoregression (MAR) and Matrix Factor Model (MFM). While MAR typically assumes distinct predictor and response subspaces and MFM enforces identical ones, MARCF explicitly characterizes the intersection of these subspaces. By decomposing the coefficient matrices into common, predictor-specific, and response-specific components, the framework accommodates distinct input and output structures while exploiting their overlap for dimension reduction. We develop a regularized gradient descent estimator that is scalable for high-dimensional data and can efficiently handle the non-convex parameter space. Theoretical analysis establishes local linear convergence of the algorithm and statistical consistency of the estimator under high-dimensional scaling. The estimation efficiency and interpretability of the proposed methods are demonstrated through simulations and an application to global macroeconomic forecasting.
\end{abstract}

\textit{Keywords}: Matrix-valued time series, factor model, autoregression, dimension reduction.

\newpage

\setlength\abovedisplayskip{4pt}
\setlength\belowdisplayskip{4pt}

\section{Introduction}\label{sec: intro}

Matrix-valued time series, where each observation is a matrix rather than a vector, have become ubiquitous in modern economics and finance. Prominent examples include multinational macroeconomic indicators (e.g., GDP, inflation, and interest rates observed across multiple countries) and financial panel data characterized by inherent two-way dependencies. Modeling such data presents a fundamental statistical challenge: one must capture complex temporal dynamics and cross-sectional correlations while respecting the intrinsic matrix structure to maintain estimation efficiency and interpretability.

Currently, the literature is dominated by two distinct modeling paradigms, each dealing with the high-dimensionality of matrix data through different structural assumptions. The first paradigm is the Matrix Autoregressive (MAR) framework, which extends vector autoregression to the matrix domain. For instance, \citet{RongChen2021MatrixAR} introduced a bilinear MAR model, while \citet{XiaoRRMAR2024} proposed the Reduced-Rank MAR (RRMAR) model. A series of extensions have been developed to capture more complex data patterns, such as spatio-temporal MAR model \citep{Hsu2Spatio-TempMAR}, envelope-based MAR model \citep{Samadi25EMAR}, sparse MAR model \citep{jiang24regu_highdim_MAR}, and additive MAR models \citep{Zhang24additiveMAR}, among many others. The defining feature of the RRMAR framework is its flexibility: it allows the subspace containing predictive information (the predictor subspace) to be distinct from the subspace in which the future response evolves (the response subspace) \citep{huang2025supervised}. While this framework allows for complex dynamics, it ignores potential structural overlaps, leading to parameter inefficiency when strong co-movements exist.

The second paradigm is the Matrix Factor Model (MFM) \citep{wang2019factor}. These models assume that the high-dimensional observation is driven by a low-dimensional latent factor process. In the literature, two primary classes of assumptions on factors are commonly adopted. The first class assumes that factors are pervasive and influence most observed series, while allowing weak serial dynamic dependence in the error processes \citep[e.g.][]{stock2002forecasting,bai2002determining,bai2008large,chen2023statistical}. The second assumes that the latent factors capture all dynamic dependencies of the observed process, rendering the error process devoid of serial dependence \citep[e.g.][]{LamAndYao2012,wang2019factor,gao2022modeling,gao2023twotrans}. The literature has expanded to include constrained factors \citep{Chen24ConstaintMFM} and tensor-based decompositions \citep{Chang23CPMFM}. In contrast to MAR models, MFM frameworks typically enforce a strict symmetry: the loadings that map factors to the observed matrix are invariant over the lag structure. This implies that the predictor and response subspaces are identical. While highly parsimonious, this assumption is restrictive; it limits the ability of the model to capture asymmetric transmission mechanisms where the driving forces of the dynamics transform structurally over time (e.g., a shock in a specific financial sector propagating to the broad real economy).

Consequently, practitioners face a dilemma between the flexibility of MAR and the parsimony of MFM. Ideally, one would seek a hybrid approach that integrates the strengths of both paradigms. In the vector time series literature, significant efforts have been made to combine autoregressive dynamics with factor structures. A seminal example is the Factor-Augmented VAR (FAVAR) \citep{bernanke2005measuring,bai2016estimation}, which augments standard VAR systems with factors extracted from large datasets to capture broad economic movements. More recently, \citet{miao2023high} investigated high-dimensional VARs with common factors, employing low-rank constraints on the transition matrix to consistently estimate dynamics in large systems. While these approaches successfully blend factors and autoregressions, they typically treat factors as auxiliary regressors or impose low-rank structures on the vectorized process. Additionally, they do not explicitly model the relationship between the response and predictor spaces. A step in this direction was taken by \cite{wang2023commonFactor}, who proposed a common basis VAR model that identifies the intersection between predictor and response subspaces. However, all these methods are fundamentally designed for vector-valued data. Direct application to matrix-valued series requires vectorization, which demolishes the inherent two-way structural information, such as the distinct correlations among countries (rows) and economic indicators (columns), and fails to address the specific identifiability challenges imposed by the bilinear matrix structure.

In this paper, we propose the Matrix Autoregressive model with Common Factors (MARCF), a unified framework that generalizes the RRMAR specification and bridges the gap between MAR and MFM. Instead of treating the row and column spaces of the coefficient matrices as arbitrary low-rank subspaces, we explicitly model the intersection between the predictor and response subspaces. We decompose the relevant factor spaces into \textbf{common components} shared between the past and present, \textbf{predictor-specific components} unique to the past, and \textbf{response-specific components} unique to the present; see Section \ref{sec: model} for a formal introduction and interpretations of these components. Crucially, this decomposition applies simultaneously to both the row and column dimensions, allowing the model to capture not only symmetric dependencies but also complex interaction dynamics. For instance, the framework can model scenarios where dynamics are driven by a common factor in the cross-section (rows) interacting with a specific leading indicator in the variable dimension (columns). This capability allows the model to reveal asymmetric transmission channels, such as global country-level trends being predicted by specific financial variables.

This decomposition offers three key advantages. First, it creates a continuum between the existing paradigms: the model reduces to RRMAR when the intersection is empty ($d_i=0$) and approximates a structured MFM when the subspaces fully align ($d_i=r_i$). Second, by identifying the shared structure, the MARCF model achieves a dimension reduction of approximately $p_1d_1+p_2d_2$ relative to the baseline RRMAR, significantly improving forecasting performance in high-dimensional regimes. Third, it provides a transparent interpretation of the sources of variation, allowing practioners to distinguish between persistent trends, driving forces, and reactive components in a complex time series dataset.

To estimate the model, we develop a computationally efficient regularized gradient descent algorithm. Unlike existing methods that rely on alternating minimization \citep{RongChen2021MatrixAR, XiaoRRMAR2024}, our approach avoids computationally expensive matrix inversions and scales efficiently with the dimension of the data. We tackle the complex identification issues inherent in bilinear models by introducing a balanced regularization strategy, for which we establish theoretical guarantees regarding statistical consistency, algorithmic convergence, and structural dimension selection consistency.

The remainder of the paper is organized as follows. Section \ref{sec: model} introduces the MARCF framework and its relationship to RRMAR and MFM. Section \ref{sec: estimation} details the estimation strategy, including the regularized loss function, gradient descent algorithm, initialization, and rank selection. Section \ref{sec: stat theory} presents the theoretical properties of the estimator, including computational convergence, statistical error rates, and rank selection consistency. Section \ref{sec: simulation} reports simulation results, and Section \ref{sec: real data} illustrates the model performance and interpretability using a multinational macroeconomic dataset. Technical proofs and supplementary details are provided in the supplementary materials.

\section{Model}\label{sec: model}

\subsection{Matrix Factor Model and Reduced-Rank MAR}\label{subsec: RRMAR}

Matrix-valued time series models extend classical multivariate time series methods to data with inherent two-way structure. A fundamental approach in this domain is the Matrix Factor Model (MFM). The MFM assumes that the high-dimensional observed matrix $\bm{Y}_t$ is driven by a low-dimensional latent process, given by
\begin{equation}\label{eq:DFM}
    \bm{Y}_t = \bm{\Lambda}_1\bm{F}_t\bm{\Lambda}_2^\top + \bbm{\varepsilon}_t,
\end{equation}
where $\bbm{\varepsilon}_t\in\mathbb{R}^{p_1\times p_2}$ is a matrix-valued error term. As discussed in Section \ref{sec: intro}, the literature varies in its assumptions regarding the factors and error terms. In this paper, we assume that the latent factors $\bm{F}_t$ drive all temporal dynamics and $\bbm{\varepsilon}_t$ is a white noise process.

Another fundamental approach is the Matrix Autoregressive (MAR) model, which captures bilinear dependencies in an observed time series $\bm{Y}_t\in\mathbb{R}^{p_1\times p_2}$ through the recursion
\begin{equation}\label{eq:MAR1}
    \bm{Y}_t = \bm{A}_1\bm{Y}_{t-1}\bm{A}_2^\top + \bm{E}_t,\quad t=1,2,\dots,T,
\end{equation}
where each $\bm{A}_i \in \mathbb{R}^{p_i \times p_i}$ is a coefficient matrix and $\bm{E}_t$ is a matrix-valued white noise process. To address the high-dimensional challenge of the parameter space, the Reduced-Rank MAR (RRMAR) model \citep{XiaoRRMAR2024} imposes low-rank constraints on the coefficient matrices, requiring $\text{rank}(\bm{A}_i)=r_i\ll p_i$ for $i=1,2$. We parameterize this constraint using the singular value decomposition (SVD) $\bm{A}_i=\bm{U}_i\bm{S}_i\bm{V}_i^\top$, where $\bm{U}_i\in\mathbb{R}^{p_i\times r_i}$ and $\bm{V}_i\in\mathbb{R}^{p_i\times r_i}$ have orthonormal columns, and $\bm{S}_i$ is a diagonal matrix of singular values. Although the SVD components are not unique, the column spaces of $\bm{U}_i$ and $\bm{V}_i$, as well as the corresponding subspace projection matrices $\bm{U}_i\bm{U}_i^\top$ and $\bm{V}_i\bm{V}_i^\top$, are uniquely defined.

The RRMAR and MFM frameworks are closely linked through their subspace representations. First, the low-rank structure of the RRMAR allows for a supervised factor analysis interpretation \citep{wang2022high,wang2023commonFactor,huang2025supervised}. Specifically, by pre-multiplying \eqref{eq:MAR1} by $\bm{U}_1^\top$ and post-multiplying by $\bm{U}_2$, we obtain the following low-dimensional representation of the dynamics:
\begin{equation}\label{eq:low-dimensional-AR}
    \bm{U}_1^\top\bm{Y}_t\bm{U}_2 = \bm{S}_1(\bm{V}_1^\top\bm{Y}_{t-1}\bm{V}_2)\bm{S}_2 + \bm{U}_1^\top\bm{E}_t\bm{U}_2.
\end{equation}
Following \citet{huang2025supervised}, we view $\bm{U}_1^\top\bm{Y}_t\bm{U}_2$ and $\bm{V}_1^\top\bm{Y}_t\bm{V}_2$ as the \textit{response factor} and $\textit{predictor factor}$. The term ``response factor'' is justified because the RRMAR model implies
\begin{equation}
    \bm{Y}_t = \bm{U}_1(\bm{U}_1^\top\bm{Y}_t\bm{U}_2)\bm{U}_2^\top + (\bm{E}_t - \bm{U}_1\bm{U}_1^\top\bm{E}_t\bm{U}_2\bm{U}_2^\top) =: \bm{U}_1(\bm{U}_1^\top\bm{Y}_t\bm{U}_2)\bm{U}_2^\top + \widetilde{\bm{E}}_t.
\end{equation}
This takes the form of an MFM as in \eqref{eq:DFM} where the low-dimensional transformation $\bm{U}_1^\top\bm{Y}_t\bm{U}_2$ summarizes all predictable information in $\bm{Y}_t$. On the other hand, the relationship in \eqref{eq:low-dimensional-AR} implies that the predictor factor $\bm{V}_1^\top\bm{Y}_{t-1}\bm{V}_2$ contains all driving forces in the historical data necessary to predict the future response. 

Consider now the case where the latent factor $\bm{F}_t$ in MFM follows a MAR process, as proposed by \citet{yu2024dynamic}:
\begin{equation}\label{eq:factor_AR}
    \bm{F}_t = \bm{B}_1\bm{F}_{t-1}\bm{B}_2^\top + \bbm{\xi}_t,
\end{equation}
where $\bm{B}_i\in\mathbb{R}^{r_i\times r_i}$ are coefficient matrices and $\bbm{\xi}_t$ is low-dimensional white noise. Substituting \eqref{eq:factor_AR} into \eqref{eq:DFM}, the structured dynamics become
\begin{equation}\label{eq:MAR_in_DFM}
    \bm{Y}_t = (\bm{\Lambda}_1\bm{B}_1\bm{\Lambda}_1)\bm{Y}_{t-1}(\bm{\Lambda}_2\bm{B}_2^\top\bm{\Lambda}_2^\top) + \bm{N}_t,
\end{equation}
where $\bm{N}_t$ is a composite noise term involving $\bbm{\xi}_t$ and a moving average of $\bm{E}_t$. Equation \eqref{eq:MAR_in_DFM} reveals the implicit MAR structure of this dynamic MFM, with the effective coefficient matrices $\bm{\Lambda}_i\bm{B}_i\bm{\Lambda}_i^\top$. Comparing this to the RRMAR decomposition in \eqref{eq:low-dimensional-AR}, we observe a fundamental restriction in the MFM framework: the predictor subspace and the response subspace are identical (both spanned by $\bm{\Lambda}_i$). While this symmetry is parsimonious, it prevents the model from capturing rotational dynamics where the relevant subspaces shift between the predictor and the response. In contrast, the RRMAR model allows $\bm{V}_i$ and $\bm{U}_i$ to be distinct, offering greater flexibility but potentially ignoring structural commonalities shared by the response and predictor.

\subsection{Matrix Autoregression with Common Factors}

To reconcile the distinct subspace assumption of the RRMAR model with the shared subspace assumption of the MFM, we propose a hybrid framework. This framework explicitly parameterizes the intersection of the predictor and response subspaces. By controlling the degree of this overlap, we can capture shared dynamics while accommodating distinct structural features in the response and predictor spaces.

Let $\mathcal{M}(\bm{U}_i)$ and $\mathcal{M}(\bm{V}_i)$ denote the response and predictor subspaces, respectively. We assume their intersection has dimension $d_i$, where $0\leq d_i\leq r_i$. The parameters $d_1$ and $d_2$ quantify the extent of structural overlap. Given these dimensions, there exist orthogonal matrices $\bm{O}_{i1}$ and $\bm{O}_{i2}$ allowing the decomposition:
\begin{equation}
    \bm{U}_i\bm{O}_{i1} = [\bm{C}_i~\bm{R}_i]\quad\text{and} \quad \bm{V}_i\bm{O}_{i2} = [\bm{C}_i~\bm{P}_i].
\end{equation}
Here, $\bm{C}_i\in\mathbb{R}^{p_i\times d_i}$ represents the \textit{common subspaces} shared by both responses and predictors. The matrix $\bm{R}_i\in\mathbb{R}^{p_i\times (r_i-d_i)}$ spans the \textit{response-specific subspace}, while $\bm{P}_i\in\mathbb{R}^{p_i\times (r_i-d_i)}$ spans the \textit{predictor-specific subspace}.

Substituting this decomposition into the general bilinear framework yields the \textit{Matrix Autoregressive model with Common Factors} (MARCF):
\begin{equation}\label{eq:MARCF}
    \bm{Y}_t = [\bm{C}_1~\bm{R}_1]\bm{D}_1[\bm{C}_1~\bm{P}_1]^\top\bm{Y}_{t-1}[\bm{C{}}_2~\bm{P}_2]\bm{D}_2^\top[\bm{C}_2~\bm{R}_2]^\top + \bm{E}_t.
\end{equation}
In this formulation, $\bm{D}_i=\bm{O}_{i1}^\top\bm{S}_i\bm{O}_{i2}$ serves as a dense mixing matrix within the low-dimensional latent space. 

This specification provides a unified continuum between the two existing paradigms. When $d_1=d_2=0$, the common component vanishes, and the model reduces to the RRMAR framework with disjoint subspaces. Conversely, when $d_i=r_i$, the specific components vanish, and the predictor and response subspaces align perfectly, approximating the structure of a dynamic MFM. By estimating $d_1$ and $d_2$ from the data, the MARCF model adaptively balances the efficiency of shared factors with the flexibility of rotational dynamics.

For interpretation, we let $\bm{C}_i^\top\bm{C}_i=\bm{I}_{d_i}$, $\bm{R}_i^\top\bm{R}_i=\bm{P}_i^\top\bm{P}_i = \bm{I}_{r_i - d_i}$, and $\bm{C}_i^\top\bm{R}_i = \bm{C}_i^\top\bm{P}_i = \bm{0}_{d_i\times(r_i-d_i)}$; while the detailed identification conditions are formally presented in Section \ref{sec:2.4}. By pre-multiplying \eqref{eq:MARCF} by $[\bm{C}_1~\bm{R}_1]^\top$ and post-multiplying by $[\bm{C}_2~\bm{R}_2]$, we have
\begin{equation}\label{eq:MARCF-2}
\begin{bmatrix}
    \bm C_1^\top\bm Y_t\bm C_2 & \bm C_1^\top\bm Y_t\bm R_2\\
    \bm R_1^\top\bm Y_t\bm C_2 & \bm R_1^\top\bm Y_t\bm R_2
  \end{bmatrix}
  =\bm D_1\begin{bmatrix}
    \bm C_1^\top\bm Y_{t-1}\bm C_2 & \bm C_1^\top\bm Y_{t-1}\bm P_2\\
    \bm P_1^\top\bm Y_{t-1}\bm C_2 & \bm P_1^\top\bm Y_{t-1}\bm P_2
  \end{bmatrix}\bm D_2^\top
  +\text{Noise}.
\end{equation}
This representation extends the factor interpretation in \eqref{eq:low-dimensional-AR} by explicitly characterizing the \textit{common factors} shared by responses and predictors, alongside their specific counterparts.

Furthermore, when $d_1$ and $d_2$ are nonzero, the proposed framework achieves significant dimension reduction. The total number of free parameters in the MARCF model is given by
\begin{equation}\label{eq:df_MARCF}
    \mathrm{df}_{\mathrm{MARCF}} = p_1(2r_1 - d_1) + p_2(2r_2 - d_2) + r_1^2 + r_2^2.
\end{equation}
This contrasts with $\mathrm{df}_{\mathrm{RRMAR}} = 2p_1 r_1 + 2p_2 r_2 + r_1^2 + r_2^2$ for the standard reduced-rank model. In high-dimensional settings, the MARCF model reduces the effective dimension of parameter space by approximately $p_1 d_1 + p_2 d_2$, enhancing both estimation efficiency and forecasting accuracy.

\subsection{Model Interpretation}\label{sec:model_interpretation}

The MARCF framework offers a transparent decomposition of the sources of variation, enabling an innovative analysis of dynamic mechanisms. To illustrate, consider the multinational macroeconomic application discussed in Section \ref{sec: intro}, where rows represent countries and columns represent economic indicators.

\paragraph{(1) Row Dynamics (Country-Level Heterogeneity):}
The row decomposition identifies groups of countries with distinct roles in the global system.
\begin{itemize}
    \item \textbf{Common Countries ($\bm{C}_1$):} This identifies systemic economies, such as the G7 nations. These countries exhibit \textit{persistent influence}: their economic state at time $t-1$ strongly predicts the global state at time $t$, and they remain the primary drivers of variance in the response at time $t$.
    \item \textbf{Predictor-Specific Countries ($\bm{P}_1$):} This subspace captures ``leading'' economies. For instance, it may isolate emerging markets or commodity exporters. While these nations might not dominate total global variance in response, shocks to their economies, such as a sudden shift in manufacturing output, often serve as early warning signals, transmitting information that predicts future movements in the core economies.
    \item \textbf{Response-Specific Countries ($\bm{R}_1$):} This captures ``reactive'' economies, possibly small open economies. These countries absorb global shocks and exhibit volatility in the response period, but their idiosyncratic fluctuations have little predictive power for the future global state.
\end{itemize}

\paragraph{(2) Column Dynamics (Variable-Level Lead-Lag Relationships):}
The column decomposition separates indicators based on their temporal properties.
\begin{itemize}
    \item \textbf{Common Variables ($\bm{C}_2$):} These represent fundamental macroeconomic aggregates like industrial production or GDP. These variables are auto-correlated and central to describing the economic cycle; they predict themselves and are the primary targets of prediction.
    \item \textbf{Predictor-Specific Variables ($\bm{P}_2$):} These correspond to financial variables or sentiment indices (e.g., stock returns, PMI). These are classic leading indicators; they contain rich information about the future direction of the economy but may be noisy or less relevant for defining the real economic state at the current moment.
    \item \textbf{Response-Specific Variables ($\bm{R}_2$):} These capture lagging indicators, such as unemployment rates. These variables react to past economic conditions but are slow to adjust and do not necessarily drive future cycles.
\end{itemize}

\paragraph{(3) Interaction Dynamics (Cross-Sectional Spillovers):}
Beyond the diagonal blocks, the MARCF model captures crucial cross-terms (e.g., corresponding to $\bm{C}_1^\top \bm{Y}_{t-1} \bm{P}_2$). These interactions allow for asymmetric dependencies where the source of predictive information differs structurally from the response it generates. For instance, the model can describe how \textit{predictor-specific variables} (e.g., leading indicators, $\bm{P}_2$) in \textit{common countries} ($\bm{C}_1$) drive future \textit{common real output} ($\bm{C}_2$). This capability allows the model to capture dynamic transformations where the factors driving prediction differ from those characterizing the response, contrasting with standard factor models that enforce a restrictive symmetry.

\subsection{Stationarity and Identification}\label{sec:2.4}

The MARCF model defined in \eqref{eq:MARCF} belongs to the general class of bilinear MAR processes. Following \citet{RongChen2021MatrixAR}, weak stationarity is guaranteed provided that the spectral radii of the coefficient matrices satisfy $\rho(\bm{A}_1)\rho(\bm{A}_2) < 1$, where $\bm{A}_i = [\bm{C}_i~\bm{R}_i]\bm{D}_i[\bm{C}_i~\bm{P}_i]^\top$ for $i=1,2$.

The model structure introduces identification challenges at two distinct levels. First, the coefficient matrices $\bm{A}_1$ and $\bm{A}_2$ are subject to global scaling indeterminacy: for any nonzero scalar $c$, the pairs $(\bm{A}_1, \bm{A}_2)$ and $(c \bm{A}_1, c^{-1} \bm{A}_2)$ yield identical $\bm{A}=\bm{A}_2\otimes\bm{A}_1$. In other words, the Kronecker product $\bm{A}$ is identifiable, but the row and column coefficient matrices are not. To resolve this, we adopt the constraint $\|\bm{A}_1\|_\mathrm{F} = \|\bm{A}_2\|_\mathrm{F}$. We define the identifiable true value of $\bm{A}$ as $\bm{A}^*$, and define the signal strength of the true model as $\phi=\|\bm{A}^*\|_\text{F}^{1/2}$ such that $\|\bm{A}_1^*\|_\text{F}=\|\bm{A}_2^*\|_\text{F} = \phi$.

Second, given a fixed $\bm{A}_i$, the decomposition into components $(\bm{C}_i,\bm{R}_i,\bm{P}_i,\bm{D}_i)$ is not unique. Specifically, for any invertible matrices $\bm{Q}_1\in\mathbb{R}^{d_i\times d_i}$ and $\bm{Q}_2,\bm{Q}_3\in\mathbb{R}^{(r_i-d_i)\times(r_i-d_i)}$, the parameter sets
\begin{equation}
    (\bm{C}_i,\bm{R}_i,\bm{P}_i,\bm{D}_i)\quad\text{and}\quad(\bm{C}_i\bm{Q}_1,\bm{R}_i\bm{Q}_2,\bm{P}_i\bm{Q}_3,\text{diag}(\bm{Q}_1^{-1},\bm{Q}_2^{-1})\bm{D}_i\text{diag}(\bm{Q}_1^{-\top},\bm{Q}_3^{-\top}))
\end{equation}
are equivalent. To resolve this internal indeterminacy and balance these components, we impose the following constraints:
\begin{equation}\label{eq:identification}
    \bm{C}_i^\top \bm{C}_i = b^2 \bm{I}_{d_i}, \quad \bm{R}_i^\top \bm{R}_i = \bm{P}_i^\top \bm{P}_i = b^2 \bm{I}_{r_i - d_i}, \quad\text{and}\quad \bm{C}_i^\top \bm{R}_i = \bm{C}_i^\top \bm{P}_i = \bm{0}_{d_i \times (r_i - d_i)},
\end{equation}
where $b>0$ is a balancing scalar. To ensure that $\bm{D}_i$ is of the same order of magnitude as $(\bm{C}_i,\bm{R}_i,\bm{P}_i)$, we set $b\asymp\phi^{1/3}$. This choice implies that
$\|\bm{C}_i\|_\text{F} \asymp \|\bm{R}_i\|_\text{F} \asymp \|\bm{P}_i\|_\text{F} \asymp \|\bm{D}_i\|_\text{F}$ when $r_i$ and $d_i$ remain constant. 

Under the balancing constraint in \eqref{eq:identification}, the components are determined uniquely up to orthogonal rotations. We treat parameterizations that differ only by such rotations as equivalent. Let $\bm{\Theta}$ denote the full set of component matrices. For any two sets $\bm{\Theta}$ and $\bm{\Theta}'$ satisfying the identification conditions in \eqref{eq:identification}, we define their distance as
\begin{equation}\label{def: error}
    \begin{split}    
        \mathrm{dist}(\bm{\Theta}, \bm{\Theta}')^2 = \min_{\substack{\bm{Q}_1 \in \mathbb{O}^{d_i}, \\ \bm{Q}_2, \bm{Q}_3 \in \mathbb{O}^{r_i - d_i}}} & 
        \Big\{
        \|\bm{C}_i - \bm{C}_i' \bm{Q}_1\|_\mathrm{F}^2 + \|\bm{R}_i - \bm{R}_i' \bm{Q}_2\|_\mathrm{F}^2 + \|\bm{P}_i - \bm{P}_i' \bm{Q}_3\|_\mathrm{F}^2  \\ 
        & \quad + \|\bm{D}_i - \text{diag}(\bm{Q}_1, \bm{Q}_2)^\top \bm{D}_i' \text{diag}(\bm{Q}_1, \bm{Q}_3)\|_\mathrm{F}^2
        \Big\},
    \end{split}
\end{equation}
where $\mathbb{O}^k$ denotes the set of $k\times k$ orthogonal matrices. This metric remains invariant under rotations and forms the basis of our theoretical analysis in Section \ref{sec: stat theory}. 

\begin{remark}[Interpretation vs. Estimation]
    The parameterization with $b \asymp \phi^{1/3}$ is designed to facilitate theoretical analysis and numerical optimization. However, for model interpretation as discussed in Section \ref{sec:model_interpretation}, it is often more intuitive to work with orthonormal bases. One can always transform the estimated parameters to an equivalent representation where $b=1$ (i.e., $\bm{C}_i^\top \bm{C}_i = \bm{I}_{d_i}$) by absorbing the scaling factors into $\bm{D}_i$. The interpretation of the subspaces and dynamic factors in Section \ref{sec:model_interpretation} remains invariant under this rescaling.
\end{remark}

\section{Estimation and Modeling Procedure}\label{sec: estimation}

This section presents the estimation methodology for the MARCF model \eqref{eq:MARCF}. Given the high-dimensional and non-convex nature of the parameter space, we employ a computationally efficient strategy that combines spectral initialization with a regularized gradient descent algorithm. We first define the optimization objective, which incorporates specific regularization terms to enforce the balanced identification constraints discussed in Section \ref{sec:2.4}. Subsequently, we describe an initialization procedure that exploits the subspace structures of the auxiliary RRMAR model. Finally, we propose a data-driven procedure for sequentially selecting the model ranks $(r_1, r_2)$ and the common subspace dimensions $(d_1, d_2)$.

\subsection{Regularized Gradient Descent Estimation}\label{subsec: estimation}

Consider the observed matrix-valued time series $\{\bm{Y}_t\}_{t=0}^T$. For a fixed set of structural dimensions $(r_1, r_2, d_1, d_2)$, the model parameters are collected in $\bm{\Theta}=(\{\bm{C}_i,\bm{R}_i,\bm{P}_i,\bm{D}_i\}_{i=1}^2)$. Recall that the implied coefficient matrices are given by
\begin{equation}\label{eq:decomposition}
    \bm{A}_i=[\bm{C}_i~\bm{R}_i]\bm{D}_i[\bm{C}_i~\bm{P}_i]^\top,\quad \text{for }i=1,2.
\end{equation}
Our estimation procedure minimizes the least-squares loss function:
\begin{equation}\label{eq: LS loss}
    \mathcal{L}(\bm{\Theta})=\frac{1}{2T}\sum_{t=1}^T\left\|\bm{Y}_t-[\bm{C}_1~\bm{R}_1]\bm{D}_1[\bm{C}_1~\bm{P}_1]^\top\bm{Y}_{t-1}[\bm{C}_2~\bm{P}_2]\bm{D}_2^\top[\bm{C}_2~\bm{R}_2]^\top\right\|_\text{F}^2.
\end{equation}

To ensure estimation stability and satisfy the identification conditions outlined in Section \ref{sec:2.4}, we incorporate two types of regularization. First, to resolve the global scaling indeterminacy between $\bm{A}_1$ and $\bm{A}_2$, we introduce a regularization term that encourages their Frobenius norms to be equal:
\begin{equation}
    \mathcal{R}_1(\bm{\Theta}) = (\|[\bm{C}_1~\bm{R}_1]\bm{D}_1[\bm{C}_1~\bm{P}_1]^\top\|_\text{F}^2 - \|[\bm{C}_2~\bm{P}_2]\bm{D}_2^\top[\bm{C}_2~\bm{R}_2]^\top\|_\text{F}^2)^2 = (\|\bm{A}_1\|_\text{F}^2 - \|\bm{A}_2\|_\text{F}^2)^2.
\end{equation}
This regularization term, consistent with the related work on reduced-rank models \citep{Tu2016,Guquanquan2017}, helps stabilize the estimation of these two coefficient matrices. Second, for a given balance scalar $b$, inspired by \citet{Han2022} and \citet{wang2023commonFactor}, we introduce a second regularization term to enforce the identification constraints on the components in \eqref{eq:identification}
\begin{equation}
    \mathcal{R}_2(\bbm \Theta;b):=\sum_{i=1}^2 \bigg(\|[\bm{C}_i ~ \bm{R}_i]^\top[\bm{C}_i ~ \bm{R}_i] - b^2\bm{I}_{r_i}\|_\text{F}^2 + \|[\bm{C}_i ~ \bm{P}_i]^\top[\bm{C}_i ~ \bm{P}_i] - b^2\bm{I}_{r_i}\|_\text{F}^2 \bigg).
\end{equation}
The full regularized objective function is
\begin{equation}\label{eq: loss}
\overline{\mathcal{L}}(\bm{\Theta};\lambda_1,\lambda_2,b)=\mathcal{L}(\bm{\Theta})+\frac{\lambda_1}{4}\mathcal{R}_1(\bbm\Theta)+\frac{\lambda_2}{2}\mathcal{R}_2(\bbm \Theta;b),
\end{equation}
where $\lambda_1$ and $\lambda_2$ are tuning parameters.

The model parameters are estimated by minimizing $\overline{\mathcal{L}}$ using a standard gradient descent algorithm, as outlined in Algorithm \ref{algo: GD}. Here, $\nabla_{\bm{M}}\overline{\mathcal{L}}^{(j)}$ denotes the partial gradient of $\overline{\mathcal{L}}$ with respect to any component $\bm{M}\in\{\bm{C}_i,\bm{R}_i,\bm{P}_i,\bm{D}_i\}_{i=1}^2$, evaluated at the parameters $\bbm\Theta^{(j)}$ after $j$ iterations. The partial gradients with respect to $\bm{C}_i$, $\bm{R}_i$, $\bm{P}_i$, and $\bm{D}_i$ are derived from the chain rule and are provided in detail in Appendix S2. Crucially, all update steps involve only matrix multiplications and additions, avoiding expensive matrix inversions or SVDs at each iteration, making the method highly scalable to high-dimensional data.
\begin{algorithm}
    \caption{Gradient Decent Algorithm for MARCF}
    \label{algo: GD}
    \begin{algorithmic}[1]
        \State \textbf{input:} data $\{\bm Y_t\}_{t=0}^T$, initial values $\bbm\Theta^{(0)}$, $r_1,r_2,d_1,d_2$, hyperparameters parameters $\lambda_1,\lambda_2,b$, step size $\eta$, and max iteration $J$.
        \For{$j = 0$ to $J-1$}
            \For{$i = 1$ to 2}
                
                \State $\bm C_i^{(j+1)} = \bm C_i^{(j)}-\eta\nabla_{\bm C_i}\overline{\mathcal{L}}^{(j)} $,\quad$\bm R_i^{(j+1)} = \bm R_i^{(j)}-\eta\nabla_{\bm R_i}\overline{\mathcal{L}}^{(j)},$
                \State $\bm P_i^{(j+1)} = \bm P_i^{(j)}-\eta\nabla_{\bm P_i}\overline{\mathcal{L}}^{(j)} $,\quad$\bm D_i^{(j+1)} = \bm D_i^{(j)}-\eta\nabla_{\bm D_i}\overline{\mathcal{L}}^{(j)}.$
            \EndFor
        \EndFor
        \State \textbf{output:} $\widehat{\bbm\Theta}=(\{\bm C_i^{(J)},\bm R_i^{(J)},\bm P_i^{(J)},\bm D_i^{(J)}\}_{i=1}^2)$.
    \end{algorithmic}
\end{algorithm}

\subsection{Algorithm Initialization}\label{subsec: initial}

Since the optimization landscape of \eqref{eq: loss} is nonconvex, the global convergence of Algorithm \ref{algo: GD} depends critically on high-quality initialization. We propose a spectral initialization strategy that leverages the structure of the simpler RRMAR model. The core idea is to first estimate the aggregate predictor and response subspaces, and then extract their intersection to initialize the common components.

Given the dimensions $(r_1, r_2, d_1, d_2)$ and scalar $b$, the initialization proceeds in two steps:

\begin{enumerate}
    \item First obtain a consistent estimate of the total subspaces by fitting the RRMAR model:
    \begin{equation}\label{eq: RR.LS}
        \widehat{\bm A}_1^{\text{RR}}, \widehat{\bm A}_2^{\text{RR}}:=\argmin_{\mathrm{rank}(\bm A_i)\leq r_i, i=1,2}\frac{1}{2T}\sum_{i=1}^{T}\fnorm{\bm Y_t-\bm A_1\bm Y_{t-1}\bm A_2^\top}^2.
    \end{equation}
    This can be solved using the ALS method \citep{XiaoRRMAR2024}. Let $\widehat{\bm A}_i^\mathrm{RR}$ be rescaled such that $\|\widehat{\bm A}_1^\mathrm{RR}\|_\text{F} = \|\widehat{\bm A}_2^\mathrm{RR}\|_\text{F}$. We compute the rank-$r_i$ SVD: $\widehat{\bm A}_i^\mathrm{RR} = \widehat{\bm U}_i \widehat{\bm S}_i \widehat{\bm V}_i^\top$. Here, $\widehat{\bm U}_i$ spans the sum of the common and response-specific subspaces, while $\widehat{\bm V}_i$ spans the sum of the common and predictor-specific subspaces.

    \item Use spectral methods to decompose the subspaces of $\widehat{\bm A}_1^\mathrm{RR}$ and $\widehat{\bm A}_2^\mathrm{RR}$ to obtain $\bbm\Theta^{(0)}$. We first discuss the rationale of the spectral method. Denote $\bm U_i:=[\bm C_i~\bm R_i]$ and $\bm V_i:=[\bm C_i~\bm P_i]$ with orthonormal columns. For any matrix $\bm{M}$ with orthonormal columns, denote $\mathcal{P}_{\bm{M}}:=\bm{M}\bm{M}^\top$ as the subspace projection matrix and $\mathcal{P}_{\bm{M}}^\perp:=\bm{I}_p-\bm{M}\bm{M}^\top$ as the subspace orthogonal complement projector. Straightforward algebra shows that $\mathcal{P}_{\bm U_i}\mathcal{P}_{\bm V_i}^\perp=\mathcal{P}_{\bm R_i}\mathcal{P}_{\bm P_i}^\perp$. This implies that $\mathcal{M}(\bm R_i)$ is spanned by the first $r_i-d_i$ left singular vectors of $\mathcal{P}_{\bm U_i}\mathcal{P}_{\bm V_i}^\perp$. A similar argument holds for $\mathcal{M}(\bm P_i)$. Additionally,
    $\mathcal{P}_{\bm R_i}^\perp \,\mathcal{P}_{\bm P_i}^\perp
    \bigl(\mathcal{P}_{\bm U_i} + \mathcal{P}_{\bm V_i}\bigr)
    \mathcal{P}_{\bm R_i}^\perp \,\mathcal{P}_{\bm P_i}^\perp
    \;=\;2\,\mathcal{P}_{\bm C_i},$
    implying that $\mathcal{M}(\bm C_i)$ is spanned by the leading $d_i$ eigenvectors of the left-hand side. Based on these, we obtain the initial values as follows:
    \begin{enumerate}
        \item For each $i=1,2$, using the SVD components in the first step, calculate the top $r_i-d_i$ left singular vectors of $\mathcal{P}_{\widehat{\bm U}_i}\mathcal{P}_{\widehat{\bm V}_i}^\perp$ and $\mathcal{P}_{\widehat{\bm V}_i}\mathcal{P}_{\widehat{\bm U}_i}^\perp$, and denote them by $\widetilde{\mathbf{R}}_i$ and $\widetilde{\mathbf{P}}_i$, respectively. Calculate the top $d_i$ eigenvectors of 
        $\mathcal{P}_{\widetilde{\bm R}_i}^\perp\mathcal{P}_{\widetilde{\bm P}_i}^\perp\left(\mathcal{P}_{\widehat{\bm U}_i}+\mathcal{P}_{\widehat{\bm V}_i}\right)\mathcal{P}_{\widetilde{\bm R}_i}^\perp\mathcal{P}_{\widetilde{\bm P}_i}^\perp,$
        and denote them by $\widetilde{\mathbf{C}}_i$. Then calculate $\widetilde{\mathbf{D}}_1=[\widetilde{\mathbf{C}}_1 ~\widetilde{\mathbf{R}}_1]^{\top}~ \widehat{\mathbf{A}}_1^\text{RR}[\widetilde{\mathbf{C}}_1~\widetilde{\mathbf{P}}_1]$ and $\widetilde{\mathbf{D}}_2=[\widetilde{\mathbf{C}}_2~ \widetilde{\mathbf{R}}_2]^{\top}~ \widehat{\mathbf{A}}_2^\text{RR}[\widetilde{\mathbf{C}}_2~\widetilde{\mathbf{P}}_2]$.
        
        \item For each $i=1,2$, set $\mathbf{C}_i^{(0)}=b \widetilde{\mathbf{C}}_i, \mathbf{R}_i^{(0)}=b \widetilde{\mathbf{R}}_i, \mathbf{P}_i^{(0)}=b \widetilde{\mathbf{P}}_i$, and $\mathbf{D}_i^{(0)}=b^{-2} \widetilde{\mathbf{D}}_i$.
    \end{enumerate}
\end{enumerate}

\subsection{Selection of Ranks and Common Dimensions}\label{subsec: selection}

Although our estimation methodology is applicable to any pre-specified $(r_1,r_2,d_1,d_2)$, the appropriate selection of these parameters is critical in practice. These structural parameters define the model's complexity and determine the interpretability of the factors. Since the true ranks are unknown in real-world applications, we propose a two-step procedure to determine $(r_1,r_2)$ and $(d_1,d_2)$ sequentially.

For rank selection, since $r_1$ and $r_2$ are typically much smaller than $p_1$ and $p_2$, we choose two upper bounds for the selection: $\bar{r}_1\ll p_1$ and $\bar{r}_2\ll p_2$. Then, we estimate the RRMAR model \eqref{eq: RR.LS} with ranks $\bar{r}_1$ and $\bar{r}_2$. Denote the rescaled solutions by $\widehat{\bm A}_1^\mathrm{RR}(\bar{r}_1)$ and $\widehat{\bm A}_2^\mathrm{RR}(\bar{r}_2)$, respectively. Let $\widehat{\sigma}_{i,1}\geq \widehat{\sigma}_{i,2}\geq ...\geq \widehat{\sigma}_{i,\bar{r}_i}$ be the singular values of $\widehat{\bm A}_i^\mathrm{RR}(\bar{r}_i)$. Motivated by \cite{Xia2015SelectR}, we introduce a ridge-type ratio to determine $r_1$ and $r_2$ separately:
$$
\widehat{r}_i=\underset{1 \leqslant j < \bar{r}_i}{\arg \min } \frac{\widehat{\sigma}_{i,j+1}+s(p_1,p_2, T)}{\widehat{\sigma}_{i,j}+s(p_1,p_2, T)},
$$
where we suggest $s(p_1,p_2,T)=\sqrt{(p_1+p_2)\log(T)/(20T)}$. The theoretical guarantee of this estimator is given in Section \ref{sec:consistency_of_rank_selection}, and its empirical performance is justified by the numerical experiments in Section \ref{sec: simulation}. Once $s(p_1,p_2,T)$ is properly specified and $\bar{r}_i> r_i$, the method is not sensitive to the choice to $\bar{r}_i$. Therefore, in practice, if $p_i$ is not too large and the computational cost is affordable, $\bar{r}_i$ can be chosen largely or even $\bar{r}_i=p_i$.

After $r_1$ and $r_2$ are determined, the problem reduces to selecting a model in low dimensions. We use the BIC criterion to select $d_1$ and $d_2$. Let $\widehat{\bm A}(\widehat{r}_1,\widehat{r}_2,d_1,d_2)$ be the estimator of $\bm A_2\otimes \bm A_1$ in \eqref{eq:decomposition} under $\widehat{r}_i$ and $d_i$. Then, 
$$
    \mathrm{BIC}(d_1,d_2) = T p_1p_2 \log \left(\sum_{t=1}^T\|\text{vec}(\bm{Y}_t)-\widehat{\bm A}(\widehat{r}_1,\widehat{r}_2,d_1,d_2) \text{vec}(\bm{Y}_{t-1})\|_2^2\right)+\textup{df}_{\mathrm{MARCF}}\log (T),
$$
where $\textup{df}_{\mathrm{MARCF}}$ is defined in \eqref{eq:df_MARCF} with $\widehat{r}_1$ and $\widehat{r}_2$. Then, we choose $d_1$ and $d_2$ as:
$$
    \widehat{d}_1,\widehat{d}_2=\min_{1\leq d_i\leq \widehat{r}_i, i=1,2}\mathrm{BIC}(d_1,d_2).
$$
\begin{remark}
    While simultaneous selection of $r_i$ and $d_i$ via BIC is theoretically possible, the search space of size $O(\bar{r}_1^2 \bar{r}_2^2)$ is computationally prohibitive. The proposed sequential approach balances statistical accuracy with computational feasibility. Rolling window cross-validation may also be used for hyperparameter tuning but incurs significantly higher computational costs.
\end{remark}

\section{Theory}\label{sec: stat theory}

In this section, we investigate the theoretical properties of the proposed estimation methodology for the MARCF model. To facilitate a clear understanding of the results, we outline a roadmap that decouples the optimization landscape from the stochastic nature of the data.

First, in Section \ref{sec:4.1_compuConvergence}, we establish the computational convergence of the algorithm via a deterministic analysis. Theorem \ref{theorem: computational convergence} demonstrates that Algorithm \ref{algo: GD} achieves local linear convergence, conditioned on a valid initialization and the local curvature of the loss function (characterized by Restricted Strong Convexity and Smoothness). Subsequently, in Section \ref{sec:4.2_statRates}, we provide the stochastic analysis under mild assumptions on the data-generating process. We verify that the conditions required for computational convergence hold with high probability and derive the final statistical error rates for the estimator. Finally, Section \ref{sec:consistency_of_rank_selection} proves the consistency of the rank selection procedure.

\subsection{Computational Convergence Analysis}\label{sec:4.1_compuConvergence}

We first analyze the convergence properties of Algorithm \ref{algo: GD}. Due to the nonconvex nature of the objective function $\overline{\mathcal{L}}$, our analysis relies on the assumptions of restricted strong convexity (RSC) and restricted strong smoothness (RSS), which characterize the curvature of the loss function along the directions of relevant low-rank matrices. These properties are defined with respect to the Kronecker product parameter matrix $\bm A=\bm A_2\otimes \bm A_1$. Let $\widetilde{\mathcal{L}}(\bm A)$ denote the least-squares loss function with respect to $\bm A$:
\begin{equation}
    \label{eq: LS loss2}
    \widetilde{\mathcal{L}}(\bm{A})=\frac{1}{2T}\sum_{t=1}^T\left\|\text{vec}(\bm{Y}_t)-\bm{A}\text{vec}(\bm{Y}_{t-1})\right\|_2^2.
\end{equation}

\begin{definition}[RSC/RSS Condition] \label{def:RSCRSS}
    The loss function $\widetilde{\mathcal{L}}(\bm{A})$ is said to be restricted strongly convex (RSC) with parameter $\alpha$ and restricted strongly smooth (RSS) with parameter $\beta$ (where $0<\alpha\leq \beta$) if, for any two parameter matrices $\bm{A}=\bm{A}_2\otimes\bm{A}_1$ and $\bm{A}'=\bm{A}_2'\otimes\bm{A}_1'$ admitting the MARCF structural decomposition presented in \eqref{eq:decomposition}, the following inequality holds:
    \begin{equation}
        \frac{\alpha}{2}\left\|\mathbf{A}-\mathbf{A}'\right\|_{\mathrm{F}}^2 \leqslant \widetilde{\mathcal{L}}(\bm{A})-\widetilde{\mathcal{L}}\left(\mathbf{A}'\right)-\left\langle\nabla \widetilde{\mathcal{L}}\left(\mathbf{A}'\right), \mathbf{A}-\mathbf{A}'\right\rangle \leqslant \frac{\beta}{2}\left\|\mathbf{A}-\mathbf{A}'\right\|_{\mathrm{F}}^2.
    \end{equation}
\end{definition}

The statistical error is quantified by the following deviation bound.
\begin{definition}[Deviation Bound]\label{def: statistical error xi}
    For given ranks $(r_1,r_2)$, common subspace dimensions $(d_1,d_2)$, and the identifiable true value $\bm A^*$, the deviation bound is defined as
    \begin{equation}
        \xi(r_1,r_2,d_1,d_2):=\sup_{\substack{[\bm C_i~\bm R_i]\in\mathbb{O}^{p_i\times r_i},\\
        [\bm C_i~\bm P_i]\in\mathbb{O}^{p_i\times r_i},\\\bm D_i\in\mathbb{R}^{r_i},\fnorm{\bm D_i}=1}}\inner{\nabla\widetilde{\mathcal{L}}(\bm{A}^*)}{[\bm{C}_2~\bm{R}_2]\bm{D}_2[\bm{C}_2~\bm{P}_2]^\top\otimes[\bm{C}_1~\bm{R}_1]\bm{D}_1[\bm{C}_1~\bm{P}_1]^\top}.
    \end{equation}
\end{definition}

To quantify the estimation error, we consider $\fnorm{\bm A_2\otimes\bm A_1-\bm A^*}^2$, which is invariant under scaling and rotation. For the truth $\bm{A}_i^*$ and its decomposition components $\bm{C}_i^*$, $\bm{R}_i^*$, and $\bm{P}_i^*$, we follow the identification constraint in Section \ref{sec:2.4} to require $\|\bm A_1^*\|_\mathrm{F}=\|\bm A_2^*\|_\mathrm{F}=\phi$, $[\bm C_i^*~\bm R_i^*]^\top[\bm C_i^*~\bm R_i^*]=\phi^{2/3}\bm I_{r_i}$ and $[\bm C_i^*~\bm P_i^*]^\top[\bm C_i^*~\bm P_i^*]=\phi^{2/3}\bm I_{r_i}$, for $i=1,2$. As defined in \eqref{def: error}, $\mathrm{dist}(\bbm \Theta^{(j)},\bbm \Theta^*)^2$ measures the estimation error of the parameters after $j$ iterations, and naturally, $\mathrm{dist}(\bbm \Theta^{(0)},\bbm \Theta^*)^2$ represents the initialization error. Let $\underline{\sigma}:=\min(\sigma_{1,r_1},\sigma_{2,r_2})$ be the smallest sigular value among all the non-zero singular values of $\bm A_1^*$ and $\bm A_2^*$. Define $\kappa:=\phi/\underline{\sigma}$, which quantifies the ratio between the overall signal and the minimal signal. Then, we have the following sufficient conditions for local convergence of the algorithm.
\begin{theorem}[Local Linear Convergence]
    \label{theorem: computational convergence}
    Suppose that the RSC/RSS condition is satisfied with $\alpha$ and $\beta$. If $\lambda_1\asymp\alpha$, $\lambda_2\asymp\kappa^{-2}\phi^{8/3}\alpha$, $b\asymp\phi^{1 / 3}$, and $\eta=\eta_0\phi^{-10/3}\beta^{-1}$ with $\eta_0$ being a sufficiently small positive constant, $\xi^2\lesssim \kappa^{-6}\phi^4\alpha^3\beta^{-1}$, and the initialization error $\mathrm{dist}(\bbm \Theta^{(0)},\bbm \Theta^*)^2\leq c_0\kappa^{-2}\phi^{2/3}\alpha\beta^{-1}$, where $c_0$ is a small constant, then for all $j \geq 1$,
    \begin{equation}
        \begin{split}
            \mathrm{dist}(\bbm \Theta^{(j)},\bbm \Theta^*)^2 \leq & (1-C\eta_0\alpha\beta^{-1}\kappa^{-2})^j\mathrm{dist}(\bbm \Theta^{(0)},\bbm \Theta^*)^2
            +C\kappa^4\phi^{-10/3}\alpha^{-2}\xi^2(r_1,r_2,d_1,d_2),\\
        \end{split}
    \end{equation}
    \begin{equation}
        \begin{split}
            \fnorm{\bm A_1^{(j)}-\bm A_1^*}^2+\fnorm{\bm A_2^{(j)}-\bm A_2^*}^2 \lesssim & ~\kappa^2(1-C\eta_0\alpha\beta^{-1}\kappa^{-2})^j\left[\fnorm{\bm A_1^{(0)}-\bm A_1^*}^2+\fnorm{\bm A_2^{(0)}-\bm A_2^*}^2\right]\\
            &+\kappa^4\phi^{-2}\alpha^{-2}\xi^2(r_1,r_2,d_1,d_2),
        \end{split}
    \end{equation}
    and
    \begin{equation}
        \begin{split}
            \fnorm{\bm A_2^{(j)}\otimes\bm A_1^{(j)}-\bm A_2^*\otimes\bm A_1^*}^2\lesssim &~\kappa^2(1-C\eta_0\alpha\beta^{-1}\kappa^{-2})^j\fnorm{\bm A_2^{(0)}\otimes\bm A_1^{(0)}-\bm A_2^*\otimes\bm A_1^*}^2\\
            &+\kappa^4\alpha^{-2}\xi^2(r_1,r_2,d_1,d_2).
        \end{split}
    \end{equation}
\end{theorem}

Theorem \ref{theorem: computational convergence} confirms that, provided a good initialization, Algorithm \ref{algo: GD} converges linearly to a neighborhood of the truth defined by the statistical noise level $\xi$. Importantly, the conditions on the hyperparameters do not explicitly depend on the ambient dimensions $p_1, p_2$ or sample size $T$, but rather on the signal structure ($\alpha, \beta, \kappa$), highlighting the scalability of the method. The RSC/RSS and initialization conditions are justified in next subsection.

\subsection{Statistical Convergence Rates}\label{sec:4.2_statRates}

We now establish the statistical guarantees by verifying the deterministic conditions of Theorem \ref{theorem: computational convergence} under a probabilistic framework. We impose the following standard assumptions on the data generating process.

\begin{assumption}[Stationarity]
    \label{assumption: spectral redius}
    The spectral radius of $\mathbf{A}^*=\bm A_1^*\otimes \bm A_2^*$ is strictly less than one.
\end{assumption}

\begin{assumption}[Sub-Gaussian Noise]
    \label{assumption: sub-Gaussian noise}
    The white noise $\bm E_t$ can be represented as $\V{\bm E_t}=\bbm\Sigma_{\bm e}^{1/2}\bbm\zeta_t$, where $\bbm\Sigma_{\bm e}$ is a positive definite matrix, $\{\bbm\zeta_t\}$ are independent and identically distributed random variables with $\mathbb{E}[\bbm \zeta_t]=0$ and $\mathrm{Cov}(\bbm \zeta_t)=\bm I_{p_1p_2}$. The entries of $\bbm\zeta_t$, denoted by $\{\zeta_{it}\}_{i=1}^{p_1p_2}$, are independent $\tau^2$-sub-Gaussian, i.e., $\mathbb{E}\left[\exp \left(\mu \zeta_{i t}\right)\right] \leq \exp \left(\tau^2 \mu^2 / 2\right)$ for any $\mu \in \mathbb{R}$ and $1 \leq i \leq p_1p_2$.
\end{assumption}

Assumption \ref{assumption: spectral redius} ensures the existence of a unique strictly stationary solution to the proposed model. The sub-Gaussianity condition in Assumption \ref{assumption: sub-Gaussian noise} is standard in high-dimensional time series literature, such as \cite{Zheng2020LinearRisVAR} and \cite{wang2024LRtsAR}.

To ensure the identifiability of the common versus specific subspaces, we require a minimal separation between the response-specific and predictor-specific components. We quantify the separation between subspaces using $\sin\theta$ distance. Let $s_{i,1} \geq \cdots \geq s_{i,r_i-d_i} \geq 0$ be the singular values of $\mathbf{R}_i^{\top} \mathbf{P}_i$. Then, the canonical angles are defined as $\theta_k(\mathbf{R}_i, \mathbf{P}_i)=\arccos \left(s_{i,k}\right)$ for $1 \leq k \leq r_i-d_i$ and $i=1,2$.

\begin{assumption}[Minimal Gap between Specific Subspaces]
    \label{assumption: gmin}
    There exists a constant $g_{\min}>0$ such that $\min_{i=1,2}\sin \theta_1\left(\mathbf{R}_i^*, \mathbf{P}_i^*\right) \geq g_{\min }$. 
\end{assumption}

We quantify the dependency structure of the time series following the spectral approach as in \cite{BasuandGeorgeMichailidis2015}. The characterstic matrix polynomial of the MARCF model is given by $\mathcal{A}(z)=\mathbf{I}_p-\mathbf{A}^* z$ for $z\in\mathbb{C}$. Define $\mu_{\min }(\mathcal{A})=\min _{|z|=1} \lambda_{\min }\left(\mathcal{A}^{\dagger}(z) \mathcal{A}(z)\right)$ and $\mu_{\max }(\mathcal{A})=$ $\max _{|z|=1} \lambda_{\max }\left(\mathcal{A}^{\dagger}(z) \mathcal{A}(z)\right)$, where $\mathcal{A}^{\dagger}(z)$ is the conjugate transpose of $\mathcal{A}(z)$. We define the key theoretical quantities:
\begin{equation}
    M_1= \frac{\lambda_{\max }\left(\bbm{\Sigma}_{\bm e}\right)}{\mu_{\min }^{1/2}\left(\mathcal{A}\right)},~
    M_2= \frac{\lambda_{\min }\left(\bbm{\Sigma}_{\bm e}\right) \mu_{\min }\left(\mathcal{A}\right)}{\lambda_{\max}\left(\bbm{\Sigma}_{\bm e}\right) \mu_{\max }\left(\mathcal{A}\right)},~
    \alpha_{\textup{RSC}} = \frac{\lambda_{\min }\left(\bbm{\Sigma}_{\bm e}\right)}{2\mu_{\max }(\mathcal{A})},\text{ and } \beta_{\textup{RSS}}=\frac{3\lambda_{\max }\left(\bbm{\Sigma}_{\bm e}\right)}{2\mu_{\min }(\mathcal{A})}.
\end{equation}    
    
The following proposition verifies that the RSC and RSS conditions required for optimization hold with high probability given sufficient sample size.
\begin{proposition}[Condition Verification]\label{prop:conditionVerification}
    Under Assumptions \ref{assumption: spectral redius}--\ref{assumption: gmin}, if $T\gtrsim [M_2^{-2}\max\left(\tau,\tau^2\right) + g_{\min}^{-4}\kappa^2\underline{\sigma}^{-4}\alpha_{\textup{RSC}}^{-3}\beta_{\textup{RSS}} \tau^2M_1^2](p_1r_1+p_2r_2+4r_1^2r_2^2)$, we have that with probability at least $1-C\exp \left(-C\left(p_1r_1+p_2r_2\right)\right)$,
    the RSC/RSS condition in \ref{def:RSCRSS} holds with $\alpha=\alpha_{\textup{RSC}}$ and $\beta=\beta_{\textup{RSS}}$ and the initial value $\bm{\Theta}^{(0)}$ obtained as in Section \ref{subsec: initial} satisfies $\mathrm{dist}^2_{(0)}\lesssim \kappa^{-2}\phi^{2/3}\alpha_{\textup{RSC}}\beta_{\textup{RSS}}^{-1}$.
\end{proposition}

Combining the computational convergence with the bound on the statistical deviation $\xi$, we derive the final error rates for the MARCF estimator.

\begin{theorem}[Statistical Error Rates]
    \label{theorem: statistical error}
    Under Assumptions \ref{assumption: spectral redius}--\ref{assumption: gmin}, if the hyperparameters $\lambda_1$, $\lambda_2$, $b$, and $\eta$ are set as in Theorem \ref{theorem: computational convergence} with $\alpha=\alpha_\mathrm{RSC}$ and $\beta=\beta_\mathrm{RSS}$, $T\gtrsim [M_2^{-2}\max\left(\tau,\tau^2\right) + g_{\min}^{-4}\kappa^2\underline{\sigma}^{-4}\alpha_{\textup{RSC}}^{-3}\beta_{\textup{RSS}} \tau^2M_1^2](p_1r_1+p_2r_2+4r_1^2r_2^2)$,
    and
    \begin{equation}\label{LB of stat convergence}
        J\gtrsim\frac{\log\left(\kappa^{-2}g_{\min}^{4}\right)}{\log(1-C\eta_0\alpha_\mathrm{RSC}\beta_\mathrm{RSS}^{-1}\kappa^{-2})},
    \end{equation}
    then, with probability at least $1-C\exp\left(-C(p_1r_1+p_2r_2)\right)$,
$$
    \fnorm{\bm A_1^{(J)}-\bm A_1^*}^2+\fnorm{\bm A_2^{(J)}-\bm A_2^*}^2\lesssim \kappa^4\phi^{-2}\alpha_\mathrm{RSC}^{-2}\tau^2M_1^2\frac{\mathrm{df}_\mathrm{MARCF}}{T},
$$
and
$$
    \fnorm{\bm A_2^{(J)}\otimes \bm A_1^{(J)}-\bm A_2^*\otimes \bm A_1^*}^2\lesssim \kappa^4\alpha_\mathrm{RSC}^{-2}\tau^2M_1^2\frac{\mathrm{df}_\mathrm{MARCF}}{T}.
$$
where $\mathrm{df}_\mathrm{MARCF}$ is the effective number of free parameters defined in \eqref{eq:df_MARCF}.
\end{theorem}

Theorem \ref{theorem: statistical error} establishes that the estimator achieves the parametric rate $O_p(\mathrm{df}_\mathrm{MARCF}/T)$. Since $\mathrm{df}_\mathrm{MARCF} =\mathrm{df}_\mathrm{RRMAR} - (p_1d_1 + p_2d_2)$, the MARCF model offers a demonstrable variance reduction over the standard RRMAR model by exploiting the common subspace structure. Moreover, the required number of iterations in \eqref{LB of stat convergence} does not explicitly depend on $p_1$, $p_2$, or $T$, implying the scalability and computational efficiency of our method even in high-dimensional regimes. 

\subsection{Consistency of Rank Selection}\label{sec:consistency_of_rank_selection}

In this subsection, we establish the consistency of rank selection under the standard high-dimensional asymptotic regime \citep[see, e.g.,][]{lam2011estimation, LamAndYao2012}, where $T$, $p_1$, and $p_2\to\infty$ with $r_1$ and $r_2$ remaining fixed.
\begin{theorem}[Rank Selection Consistency]
    \label{theorem: selecting r}
    Under Assumptions \ref{assumption: spectral redius} and \ref{assumption: sub-Gaussian noise}, if $\bar{r}_1\geq r_1, \bar{r}_2\geq r_2$, $T\gtrsim (p_1r_1+p_2r_2+4r_1^2r_2^2)M_2^{-2}\max\left(\tau,\tau^2\right)$, $\phi^{-1}\alpha_{\textup{RSC}}^{-1}\tau M_1\sqrt{(p_1+p_2)/T}=o(s(p_1,p_2,T))$, and $s(p_1,p_2,T)=o(\sigma_{i,r_i}^{-1}\min_{1\leq j\leq r_i-1}\sigma_{i,j+1}/\sigma_{i,j})$ for $i=1,2$, then we have $\mathbb{P}(\widehat{r}_1=r_1,\widehat{r}_2=r_2)\to 1$, as $T\to \infty$ and $p_1,p_2\to\infty$.
\end{theorem}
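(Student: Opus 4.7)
The plan is to transfer singular value perturbation of the reduced-rank least-squares estimator $\widehat{\bm A}_i^{\mathrm{RR}}(\bar r_i)$ used in the first stage into the ridge ratio, and then argue via a three-case analysis on $j$ that the ratio is uniquely minimized at $j=r_i$ with high probability. First I would invoke Theorem~\ref{theorem: statistical error} specialized to the degenerate case $d_1=d_2=0$, in which $\mathrm{df}_\mathrm{MCS}=\mathrm{df}_\mathrm{MRR}$, to obtain
$$\|\widehat{\bm A}_i^{\mathrm{RR}}(\bar r_i)-\bm A_i^*\|_\mathrm{F}\lesssim \epsilon_T:=\phi^{-1}\alpha^{-1}\tau M_1\sqrt{\mathrm{df}_\mathrm{MRR}/T}$$
on an event of probability converging to $1$, using the sample size hypothesis $T\gtrsim p_1^2p_2^2M_2^{-2}\max(\tau,\tau^2)$ inherited from that theorem. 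Weyl's inequality then yields $|\widetilde\sigma_{i,j}-\sigma_{i,j}^*|\leq \|\widehat{\bm A}_i^{\mathrm{RR}}-\bm A_i^*\|_{\mathrm{op}}\leq\epsilon_T$ uniformly in $j$, and since $\mathrm{rank}(\bm A_i^*)=r_i$, we have in particular $\widetilde\sigma_{i,j}\in[0,\epsilon_T]$ for every $j>r_i$.

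Second, writing $\rho_{i,j}:=(\widetilde\sigma_{i,j+1}+s)/(\widetilde\sigma_{i,j}+s)$ with $s=s(p_1,p_2,T)$, I would split into three regimes. At the true rank $j=r_i$, Weyl gives $\rho_{i,r_i}\leq (\epsilon_T+s)/(\sigma_{i,r_i}^*-\epsilon_T+s)$, which is of order $s/\sigma_{i,r_i}^*$ once the hypothesis $\epsilon_T=o(s)$ is invoked. For $j<r_i$, both $\widetilde\sigma_{i,j+1}$ and $\widetilde\sigma_{i,j}$ are within $\epsilon_T$ of strictly positive population singular values, so $\rho_{i,j}$ tends to $\sigma_{i,j+1}^*/\sigma_{i,j}^*$, a quantity bounded away from $0$. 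For $j>r_i$, both singular values lie in $[0,\epsilon_T]$ with $\epsilon_T\ll s$, so $\rho_{i,j}\geq s/(\epsilon_T+s)\to 1$. Comparing these three orders, $j=r_i$ is strictly the unique minimizer precisely when $s/\sigma_{i,r_i}^*$ falls strictly below $\min_{j<r_i}\sigma_{i,j+1}^*/\sigma_{i,j}^*$, which is exactly what the remaining rate hypothesis on $s$ encodes. A union bound over $i\in\{1,2\}$ then delivers $\mathbb{P}(\widehat r_1=r_1,\widehat r_2=r_2)\to 1$.

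The main obstacle is the first step: Theorem~\ref{theorem: statistical error} is stated for the MARCF estimator returned by Algorithm~\ref{algo: GD}, whereas the selection procedure relies on the pure least-squares RRMAR estimator fitted under the possibly inflated nominal rank $\bar r_i\geq r_i$. To close this gap I would either specialize the proof of Theorem~\ref{theorem: statistical error} by fixing $d_i=0$ and propagating $\bar r_i$ through $\mathrm{df}_\mathrm{MRR}$, or appeal to an analogous reduced-rank regression error bound in the cited literature; in either case the extra rank only enters through the stated $\mathrm{df}_\mathrm{MRR}/T$ rate. Once the Frobenius bound on $\widehat{\bm A}_i^{\mathrm{RR}}-\bm A_i^*$ is in hand, the three-case comparison of $\rho_{i,j}$ is a clean perturbation argument, and the probability statement follows by combining the high-probability event of step one with a union bound over $i\in\{1,2\}$ and the $\bar r_i-1$ candidate ratios.
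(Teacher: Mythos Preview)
Your overall strategy---Frobenius error bound on the first-stage estimator, transfer to singular value perturbation, then a three-case comparison of the ridge ratio---is exactly the route the paper takes. The three-case analysis and the role of the two rate hypotheses on $s(p_1,p_2,T)$ are identical in substance; the paper uses Mirsky's inequality where you use Weyl's, which is cosmetically different but yields the same uniform bound on $|\widetilde\sigma_{i,j}-\sigma_{i,j}^*|$.

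The one point worth flagging is the first step, which you already identify as the obstacle. The paper does \emph{not} specialize Theorem~\ref{theorem: statistical error} to $d_1=d_2=0$; that theorem controls the gradient-descent iterates and carries extra factors ($\underline{\sigma}^{-2}$, $\beta^{-1}$) that do not appear in the rate $\phi^{-1}\alpha^{-1}\tau M_1\sqrt{\mathrm{df}_\mathrm{MRR}/T}$ built into the hypothesis. Instead the paper invokes its initialization lemma (Proposition~\ref{proposition: initialization} / Lemma~\ref{lemma: initialization}), which bounds the \emph{global} RRMAR least-squares minimizer directly via the basic inequality $\mathcal{L}(\widehat{\bm A}^{\mathrm{RR}})\leq\mathcal{L}(\bm A^*)$, RSC, and the deviation bound~$\xi$. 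This is essentially your second suggested workaround (``an analogous reduced-rank regression error bound''), and it is the cleaner route because it avoids the convergence machinery entirely and delivers exactly the rate $\epsilon_T$ that the theorem's hypothesis is calibrated to.
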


This theorem provides sufficient conditions under which the selected ranks $\widehat{r}_1$ and $\widehat{r}_2$ converge in probability to the true ranks $r_1$ and $r_2$. In practice, if the parameters $M_1$, $M_2$, $\alpha_{\textup{RSC}}$, $\sigma_{1,r_1}$, and $\sigma_{2,r_2}$ are bounded, and $\phi$ is either bounded or diverges slowly as $p_1,p_2\to\infty$, the conditions simplify to requiring that $s(p_1,p_2,T)\to 0$ and $\sqrt{(p_1+p_2)/T}/s(p_1,p_2,T)\to 0$. These simplified conditions are more tractable in applications and help ensure the interpretability and reliability of the rank selection procedure. Particularly, the value $s(p_1,p_2,T)=\sqrt{(p_1+p_2)\log(T)/(20T)}$ suggested in Section \ref{subsec: selection} satisfies these requirements, ensuring consistency across a wide range of settings for $p_1$, $p_2$, and $T$.

\section{Simulation Studies}\label{sec: simulation}

In this section, we evaluate the finite-sample performance of the proposed MARCF model through a series of simulation experiments. The objectives are three-fold: (i) to assess the accuracy of parameter estimation and model selection, (ii) to validate the model's ability to identify structured factor-driven dynamics, and (iii) to compare its performance with existing approaches. 

\subsection{Experiment I: Estimation Accuracy and Model Selection}

In the first experiment, we generate data from the MARCF model defined in \eqref{eq:MARCF}. We first assess the reliability of the proposed procedures for selecting the model ranks and common dimensions. Then, we compare the MARCF model with two competing approaches: the full-rank MAR model \citep{RongChen2021MatrixAR} and the RRMAR model \citep{RongChen2021MatrixAR}, by evaluating their estimation accuracy of $\bm A_2 \otimes \bm A_1$.

We fix the model ranks at $r_1=2$ and $r_2=3$, and consider all combinations where $d_1\in\{0,1,2\}$ and $d_2\in\{0,1,2,3\}$. Two settings for dimensions and sample size are examined: (1) $p_1=20$, $p_2=10$, and $T=500$; and (2) $p_1=30$, $p_2=20$, and $T=1000$. For each combination of $\{p_1,p_2,d_1,d_2,T\}$, the matrices $\bm A_1^*$, $\bm A_2^*$, and the data are generated as follows:
\begin{enumerate}
    \item For $i=1, 2$, generate $\bm D_i$ as $\bm D_i=\bm O_{i,1}^\top\bm S_i\bm O_{i,2}$, where $\bm O_{i,1}$ and $\bm O_{i,2}$ are random orthogonal matrices, and $\bm S_i \in \mathbb{R}^{r_i\times r_i}$ is a diagonal matrix with entries drawn from $U(0.8,1)$ when $r_i=d_i$, and from $U(0.9,1.1)$ when $r_i\neq d_i$.
    \item Generate two random matrices $\bm C_1\in\mathbb{R}^{p_1\times r_1}$ and $\bm C_2\in\mathbb{R}^{p_2\times r_2}$ with orthonormal columns. Throughout this section, random matrices with orthonormal columns are constructed by applying QR decomposition to matrices with i.i.d. standard normal entries. Then, generate random matrices $\bm M_{i,1}$ and $\bm M_{i,2}$, and apply QR decomposition onto $(\bm I-\bm C_i\bm C_i^\top)\bm M_{i,1}$ and $(\bm I-\bm C_i\bm C_i^\top)\bm M_{i,2}$ to obtain $\bm R_i$ and $\bm P_i$. Results with $\sin\theta_1<0.8$ are rejected. 
    \item Combine the components to form $\bm A_1$ and $\bm A_2$. We check for stationarity and repeat the above steps if $\rho(\bm A_1)\cdot\rho(\bm A_2)\geq 1$. The white noise is generated with $\V{\bm E_t}\overset{i.i.d.}\sim N(\bm 0,\bm I)$, and $\{\bm Y_t\}_{t=0}^T$ is generated according to model \eqref{eq:MARCF}.
\end{enumerate}

For each pair of $(d_1,d_2)$, the MARCF model is trained using the full procedure proposed in Section \ref{sec: estimation}, including rank selection, common dimension selection, initialization, and final gradient descent estimation (see Appendix S1 of supplementary materials for details). For rank selection, we set $\bar{r}_1=\bar{r}_2=8$ with $s\left(p_1, p_2, T\right)=\sqrt{\left(p_1+p_2\right) \log (T) /(20 T)}$. Both the RRMAR and MARCF models are estimated using our gradient descent algorithm, as RRMAR is a special case of MARCF. Based on preliminary experiments, we observed that the model performance is insensitive to variations in $\lambda_1, \lambda_2$, and $b$. Accordingly, we set these parameters to 1. We set $\lambda_1=\lambda_2=b=1$ and $\eta=0.001$. Both initialization and final estimation are performed by running Algorithm \ref{algo: GD} until convergence or for a maximum of 1000 iterations. No divergence was observed under these hyperparameter settings.

Table \ref{table:sim1_accuracy} presents the accuracy of selecting $(r_1, d_1, r_2, d_2)$ across various settings. A trial is considered successful only if all four parameters are correctly identified. The results show that in both settings of $p_1$, $p_2$, and $T$, the proposed selection procedure achieves a success rate approaching 1. Even when $d_1$ and $d_2$ are close to $r_1$ and $r_2$, the procedure maintains high accuracy. 

\begin{table}[htp]
    \centering
    \caption{Selection accuracy under $r_1=3$ and $r_2=2$. The left panel corresponds to the case with $p_1=20$, $p_2=10$, and $T=500$, and the right panel corresponds to the case with $p_1=30$, $p_2=20$, and $T=1000$. For each combination of $p_i$, $r_i$, $d_i$, and $T$, the successful rates are computed from 500 replications.}
    \begin{tabular}{ccccc|cccc}
    \toprule
     & $d_1=0$ & $d_1=1$ & $d_1=2$ & $d_1=3$ & $d_1=0$ & $d_1=1$ & $d_1=2$ & $d_1=3$\\
    \midrule
    $d_2=0$ & 1.000 & 1.000 & 1.000 & 1.000 & 1.000 & 0.998 & 1.000 & 1.000\\
    $d_2=1$ & 1.000 & 0.988 & 0.946 & 0.944 & 1.000 & 0.984 & 0.926 & 0.942\\
    $d_2=2$ & 1.000 & 0.982 & 0.884 & 0.998 & 1.000 & 0.954 & 0.854 & 0.998\\
    \bottomrule
    \end{tabular}
    \label{table:sim1_accuracy}
\end{table}

Figure \ref{fig:sim1_estimation_error} displays the relative estimation errors of the three models under comparison. The error is defined as $\|\widehat{\bm A}_2\otimes \widehat{\bm A}_1-\bm A_2^*\otimes \bm A_1^*\|_\text{F}/\|\bm A_2^*\otimes \bm A_1^*\|_\text{F}$. For each model, the lower and upper lines of each error bar represent the first and third quartiles of the 500 errors, respectively and the cross markers indicate the medians. In all cases, the estimation errors of the MAR model are significantly larger than those of the other models, as expected, due to its failure to account for the low-rank structure in the data. When $d_1=d_2=0$, the MARCF model reduces to the RRMAR model, and the estimation errors of the two are identical. However, as $d_1$ and $d_2$ increase, the error bars of the MARCF model decrease and gradually diverge from those of the RRMAR model, highlighting the improved performance of MARCF model when common structures of the subspaces exist. This performance advantage becomes most pronounced when $d_1=r_1$ and $d_2=r_2$, where MARCF achieves the lowest estimation error. This trend empirically validates the efficiency gains obtained by explicitly modeling the common subspace structures.

\begin{figure}[htp]
    \centering
    \hspace{-0.25cm}
    \begin{subfigure}[b]{0.5\textwidth}
        \centering
        \includegraphics[width=\linewidth]{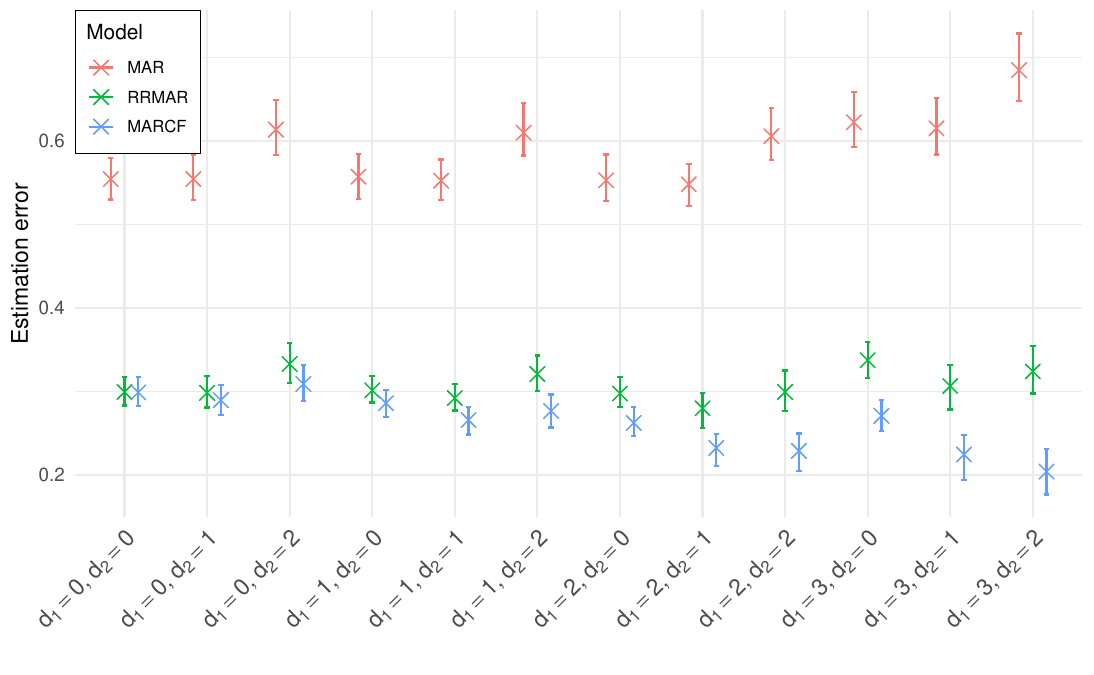}
        \subcaption{}
        \label{subfig:500}
    \end{subfigure}
    \hspace{-0.25cm}
    \begin{subfigure}[b]{0.5\textwidth}
        \centering
        \includegraphics[width=\linewidth]{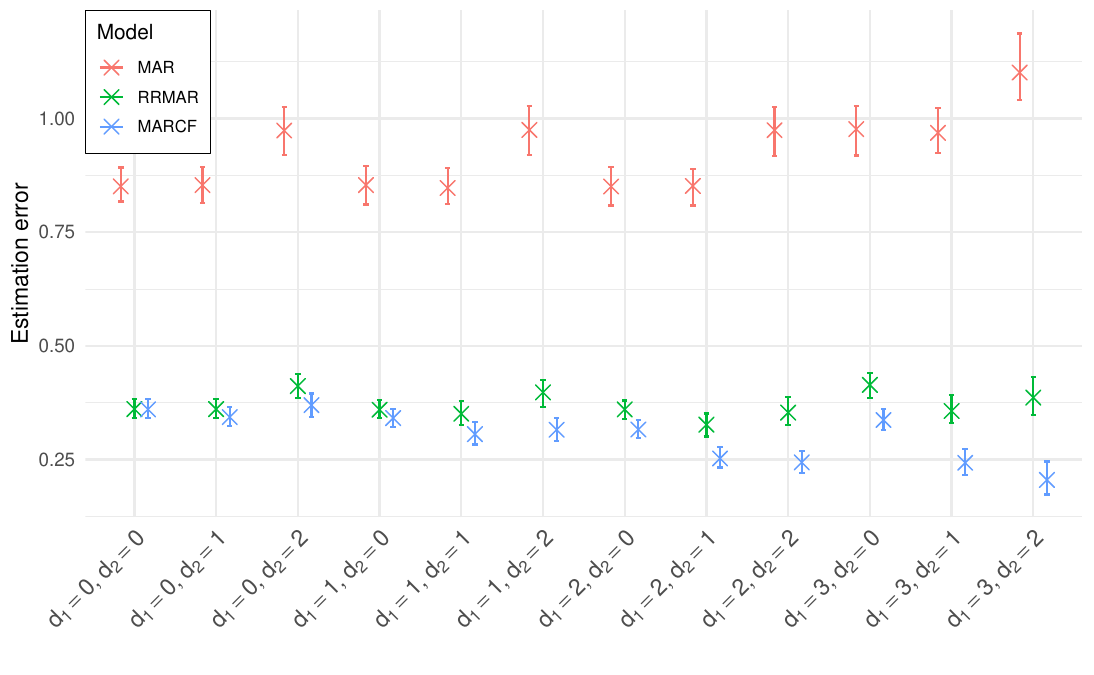}
        \subcaption{}
        \label{subfig:1000}
    \end{subfigure}
    \caption{Relative estimation error with various combinations of $(d_1,d_2)$ under two settings of $p_1$, $p_2$ and $T$: (a) $p_1=20$, $p_2=10$, and $T=500$; and (b) $p_1=30$, $p_2=20$, and $T=1000$. Results are based on 500 replications.}
    \label{fig:sim1_estimation_error}
\end{figure}

\subsection{Experiment II: Identification of Factor-Driven Dynamics}

In the second experiment, we evaluate whether the MARCF model can correctly identify a pure factor-driven structure when the data are generated from the dynamic MFM, as defined in \eqref{eq:DFM} and \eqref{eq:factor_AR}. The goal is to assess the proportion of trials in which the modeling precedure correctly identifies the dynamic MFM structure, and to compare the estimation errors of the factor loading projection matrices between MARCF and MFM in \citet{wang2019factor}.

We set $p_1=p_2=16$, $T=800$, and consider cases where $r_1=d_1\in\{1,2,3,4\}$ and $r_2=d_2\in\{1,2,3,4\}$. The data are generated as follows:
\begin{enumerate}
    \item For $i=1,2$, generate $\bm \Lambda_i\in \mathbb{R}^{p_i\times r_i}$ with random orthonormal columns as factor loading matrices. Generate $\bm B_i\in\mathbb{R}^{r_i\times r_i}$ using the same procedure as for $\bm D_i$ in Experiment I. Ensure stationarity by rejecting realizations where $\rho(\bm B_1)\cdot\rho(\bm B_2)\geq 1$.
    \item Generate two independent white noise processes $\bm E_t\in\mathbb{R}^{p_1\times p_2}$ and $\bbm \xi_t\in\mathbb{R}^{r_1 \times r_2}$ with $\V{\bm E_t}\stackrel{\text{i.i.d.}}{\sim} N_{p_1p_2}(\bm 0,\bm I)$ and $\V{\bbm \xi_t}\stackrel{\text{i.i.d.}}{\sim} N_{r_1r_2}(\bm 0,\bm I)$. Construct $\bm W_t=\bm E_t-\bm\Lambda_1\bm\Lambda_1^\top\bm E_t\bm \Lambda_2\bm \Lambda_2^\top$ to eliminate the moving average term in the composite noise of \eqref{eq:MAR_in_DFM}.
    \item Generate the factors as $\bm F_t=\bm B_1\bm F_{t-1}\bm B_2^\top+\bbm\xi_t$ and the observation as $\bm Y_t=\bm\Lambda_1\bm F_t\bm\Lambda_2^\top+\bm W_t$.
\end{enumerate}
The training procedure and tuning parameters are identical to those in Experiment I, except that the step size $\eta$ is initially set to 0.01 and reduced to 0.001 if Algorithm \ref{algo: GD} diverges. For the MFM, we use the TIPUP method from \cite{Chen2022TensorFM} with true ranks.

Table \ref{table: DMFM accuracy} presents the proportion of trials in which the MFM pattern is correctly identified, i.e., $\widehat{r}_1=\widehat{d}_1=r_1$ and $\widehat{r}_2=\widehat{d}_2=r_2$. The results demonstrate that the MARCF model successfully identifies the MFM with high probability, particularly when multiple factors are present. Even for $r_1=r_2=1$, the success rate is 84.8\%.

\begin{table}[htp]
    \centering
    \caption{Selection accuracy on MFM data. For each pair of $(r_1, r_2)$, the result is the successful rate over 500 replications.}
    \begin{tabular}{cccccc}
        \toprule
        & $r_1 = 1$ & $r_1 = 2$ & $r_1 = 3$ & $r_1 = 4$ \\
        \midrule
        $r_2 = 1$ & 0.848 & 0.996 & 0.984 & 0.992 \\
        $r_2 = 2$ & 0.992 & 0.996 & 1.000 & 0.996 \\
        $r_2 = 3$ & 0.990 & 0.996 & 1.000 & 1.000 \\
        $r_2 = 4$ & 0.992 & 0.998 & 1.000 & 0.998 \\
        \bottomrule
    \end{tabular}
    \label{table: DMFM accuracy}
\end{table}

Figure \ref{fig:sim2} persents the log estimation errors of the proposed model and MFM. Only trials with correctly identified parameters are included. The error is defined as 
$$
\mathrm{Log ~estimation ~error}=\log\left(\fnorm{\widehat{\bm \Lambda}_i(\widehat{\bm \Lambda}_i^\top\widehat{\bm \Lambda}_i)^{-1}\widehat{\bm \Lambda}_i^\top-\bm \Lambda_i^*\bm \Lambda_i^{*\top}}\right),\quad \mathrm{for~} i=1,2.
$$ 
Figures \ref{subfig:sim2_L1} and \ref{subfig:sim2_L2} show the estimation errors for the column spaces $\mathcal{M}(\bbm\Lambda_1^*)$ and $\mathcal{M}(\bbm\Lambda_2^*)$, respectively. In each subplot, the left (blue) boxplot represents the MFM, and the right (yellow) boxplot represents the MARCF model. For $\mathcal{M}(\bm{\Lambda}_1^*)$, the MARCF model shows slightly smaller errors when $r_2=1$, and significantly smaller errors when $r_2\geq2$, especially for $r_2\geq3$. Similar trends hold for $\mathcal{M}(\bm{\Lambda}_2^*)$. Overall, MARCF performs comparably to or better than MFM, demonstrating its effectiveness and flexibility.

\begin{figure}[htp]
    \centering
    \hspace{-0.2cm}
    \begin{subfigure}[b]{0.5\textwidth}
        \centering
        \includegraphics[width=\linewidth]{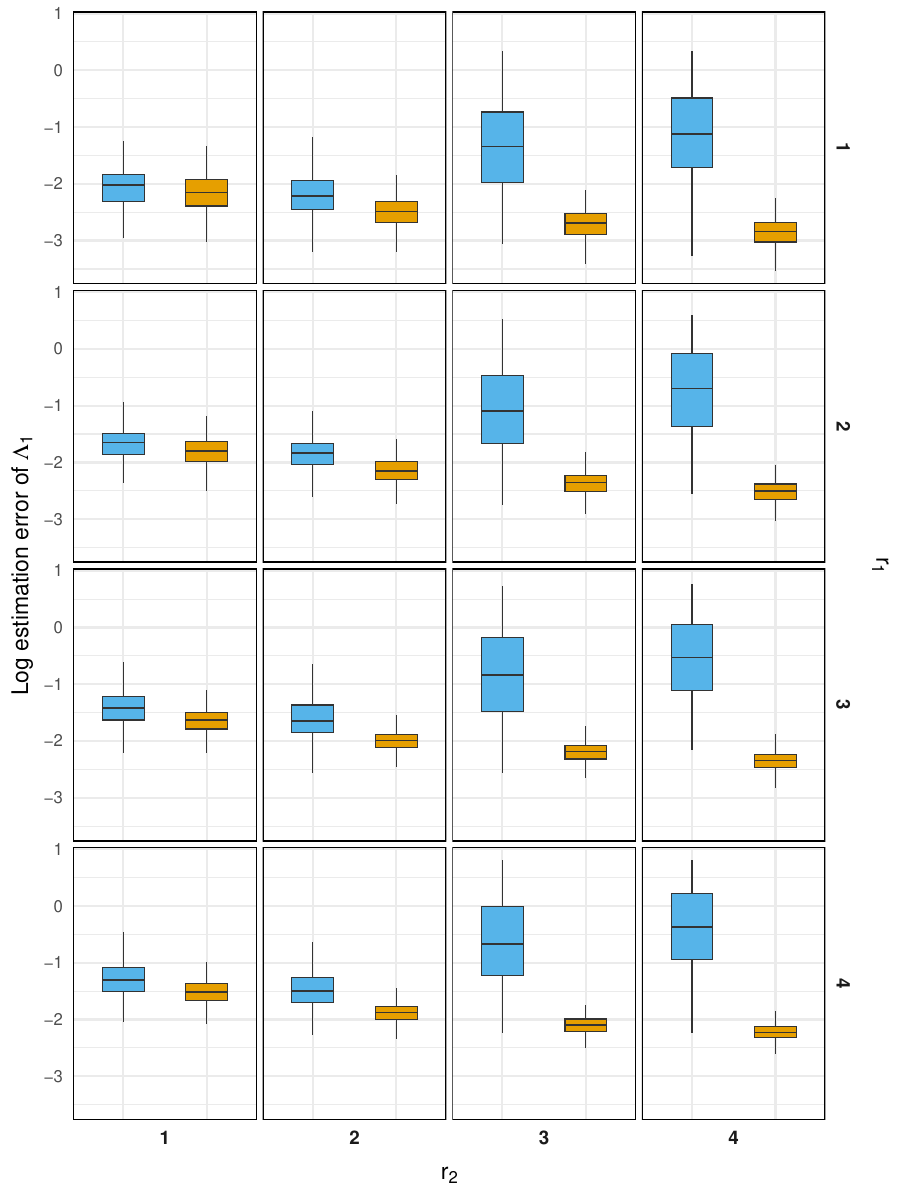}
        \subcaption{}
        \label{subfig:sim2_L1}
    \end{subfigure}
    \hspace{-0.2cm}
    \begin{subfigure}[b]{0.5\textwidth}
        \centering
        \includegraphics[width=\linewidth]{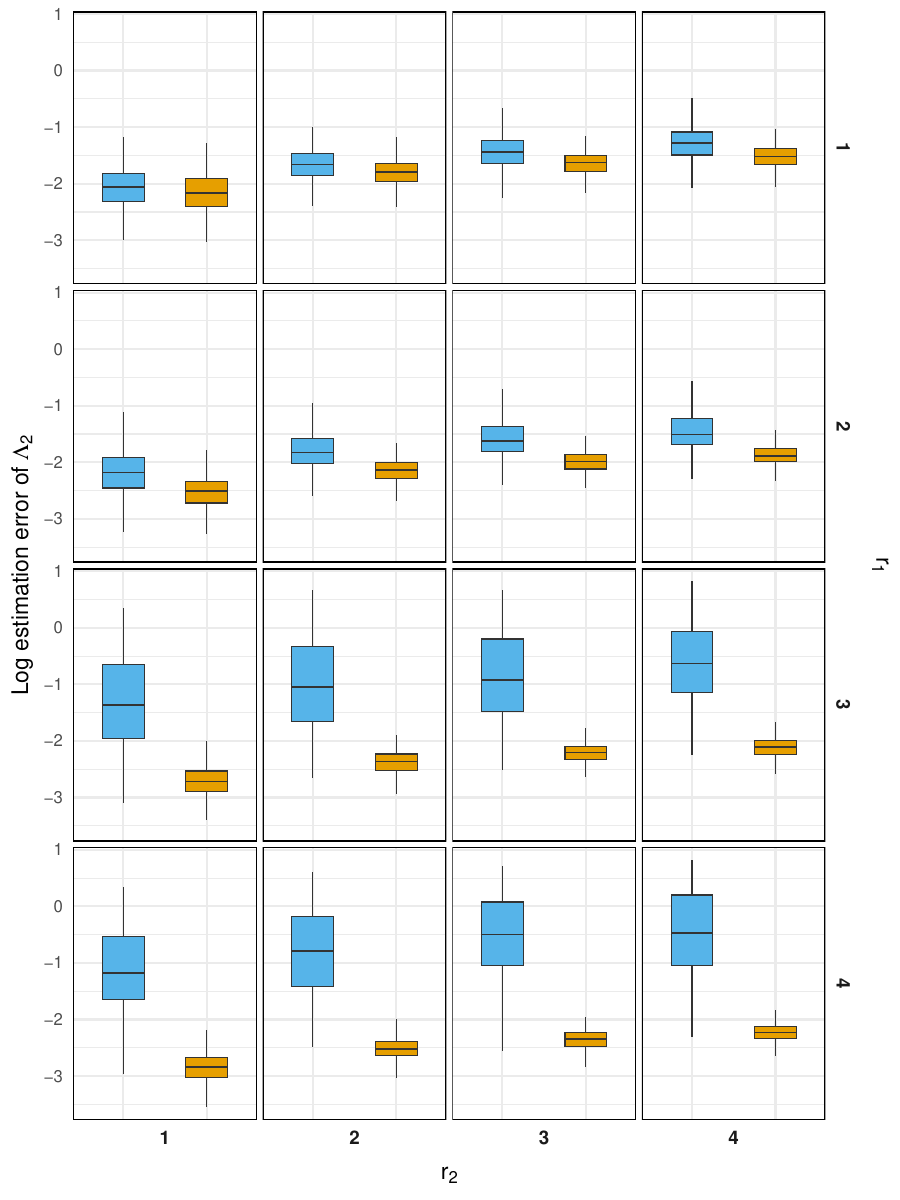}
        \subcaption{}
        \label{subfig:sim2_L2}
    \end{subfigure}
    \caption{Log estimation error of (a) $\mathcal{M}(\bm \Lambda_1^*)$; and (b) $\mathcal{M}(\bm \Lambda_2^*)$. In each subplot, the left (blue) and right (yellow) boxplot represent the estimation errors of the dynamic MFM and the MARCF model, respectively. Each pair of $(r_1,r_2)$ is evaluated over 500 replications.}
    \label{fig:sim2}
\end{figure}

\section{Real Example: Quarterly Macroeconomic Data}\label{sec: real data}

To demonstrate the empirical utility of the MARCF framework, we apply it to a multinational macroeconomic dataset. This dataset, previously analyzed by \citet{Chen24ConstaintMFM}, comprises 10 key economic indicators observed across 14 countries over 107 quarters, spanning from 1990-Q2 to 2016-Q4. The data is available in the supplementary materials for reproducibility.

The economic indicators include the Consumer Price Index (CPI) for food (CPIF), energy (CPIE), and total items (CPIT); long-term interest rates (government bond yields, IRLT) and 3-month interbank rates (IR3); total industrial production excluding construction (PTEC) and total manufacturing production (PTM); Gross Domestic Product (GDP); and the total value of exports (ITEX) and imports (ITEM). The 14 countries are Australia (AUS), Austria (AUT), Canada (CAN), Denmark (DNK), Finland (FIN), France (FRA), Germany (DEU), Ireland (IRL), the Netherlands (NLD), Norway (NOR), New Zealand (NZL), Sweden (SWE), the United Kingdom (GBR), and the United States (USA). Prior to analysis, all series were transformed to induce stationarity (via logarithmic differences), seasonally adjusted, and standardized to zero mean and unit variance.

\subsection{Model Selection and Estimation}

We determine the model structure based on rolling window validation. Based on the predictive performance over a 16-quarter validation set, we selected the total ranks $r_1 = 4$ and $r_2 = 5$. Subsequently, the common subspace dimensions were selected via BIC as described in Section \ref{subsec: selection}, yielding $d_1 = 3$ and $d_2 = 2$. These non-zero intersections ($d_i > 0$) and non-zero specific dimensions ($r_i > d_i$) provide strong empirical support that the data structure is hybrid: neither the purely disjoint subspaces of RRMAR nor the perfectly aligned subspaces of DMF adequately capture the complexity of global economic dynamics. For interpretation, the estimated basis matrices $[\widehat{\bm C}_i~\widehat{\bm R}_i]$ and $[\widehat{\bm C}_i~\widehat{\bm P}_i]$ are orthogonalized via QR decomposition to enforce strict orthonormality.

\subsection{Interpretation of Global Economic Dynamics}

We dissect the estimated dynamics by analyzing the projection matrices associated with the row (country) and column (indicator) subspaces.

\paragraph{Country-Level Dynamics (Row Space).} 
Figure \ref{fig: country 1} displays the estimated projection matrices for the country dimension. The diagonal elements of these matrices reveal four distinct clusters with specific roles in the global economy: (1) an Anglo-Pacific group (Australia and UK); (2) a Core Economic Driver group (USA, Germany, Netherlands); (3) a Periphery group (Ireland, Finland); and (4) a general group including Canada and other EU members.

\begin{figure}[htp]
    \centering
    \begin{subfigure}[b]{0.32\textwidth} 
        \centering
        \includegraphics[width=\textwidth]{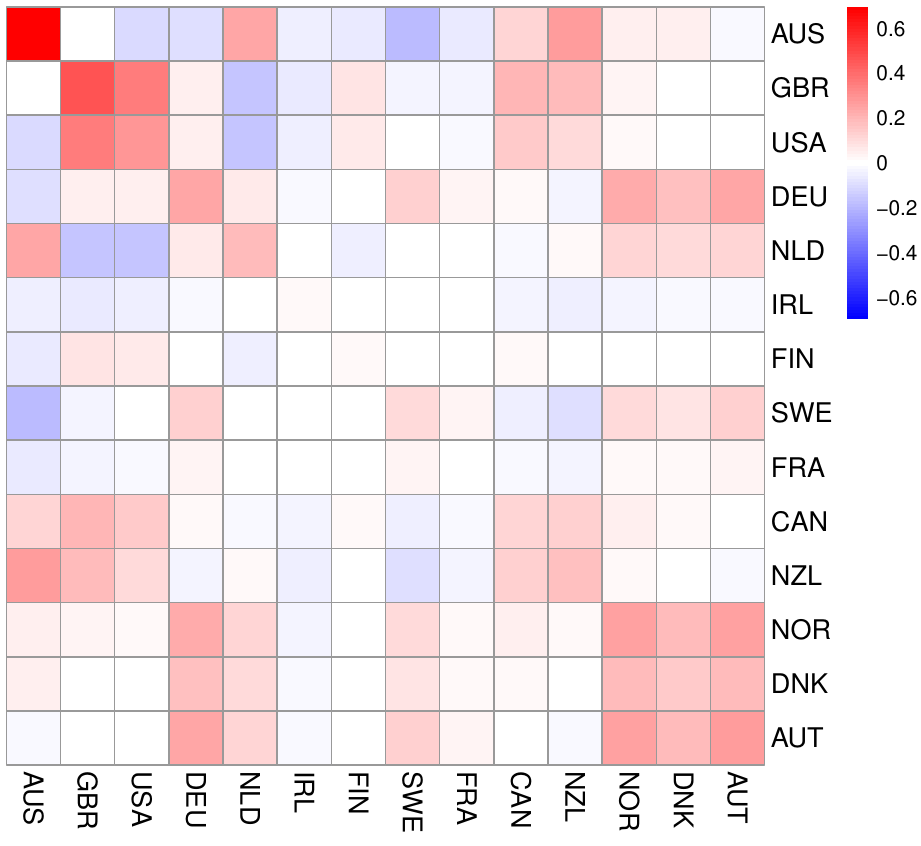}
        \caption{$\widehat{\bm C}_1\widehat{\bm C}_1^\top$}
        \label{subfig: C1}
    \end{subfigure}
    \begin{subfigure}[b]{0.32\textwidth} 
        \centering
        \includegraphics[width=\textwidth]{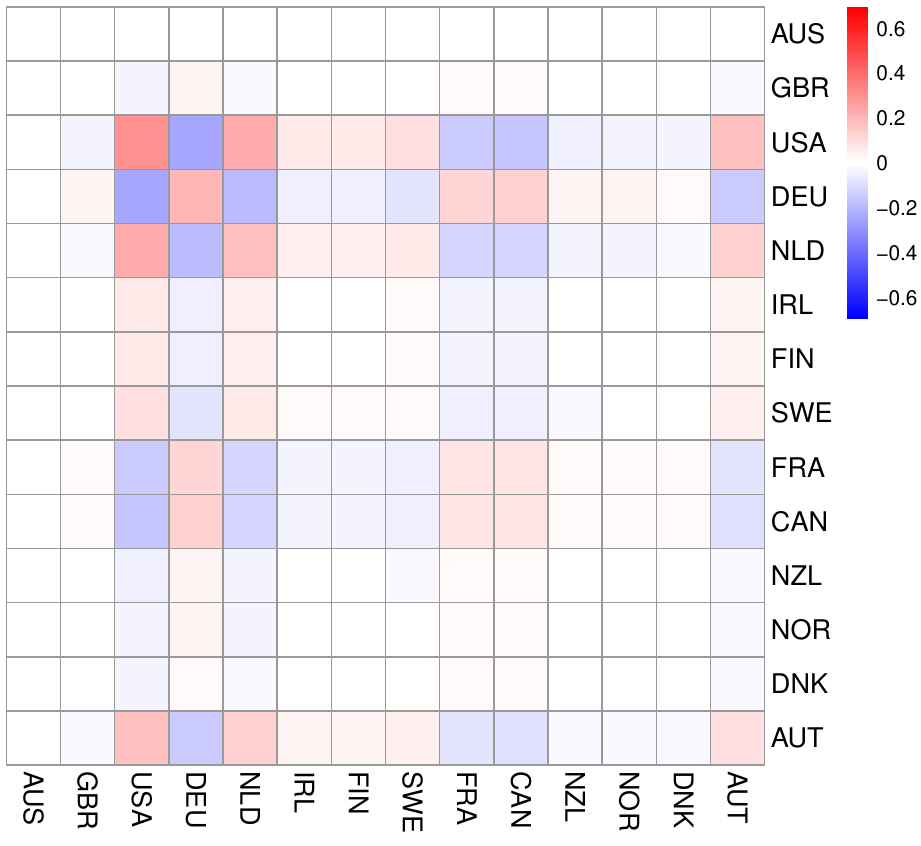}
        \caption{$\widehat{\bm P}_1\widehat{\bm P}_1^\top$}
        \label{subfig: P1}
    \end{subfigure}
    \begin{subfigure}[b]{0.32\textwidth} 
        \centering
        \includegraphics[width=\textwidth]{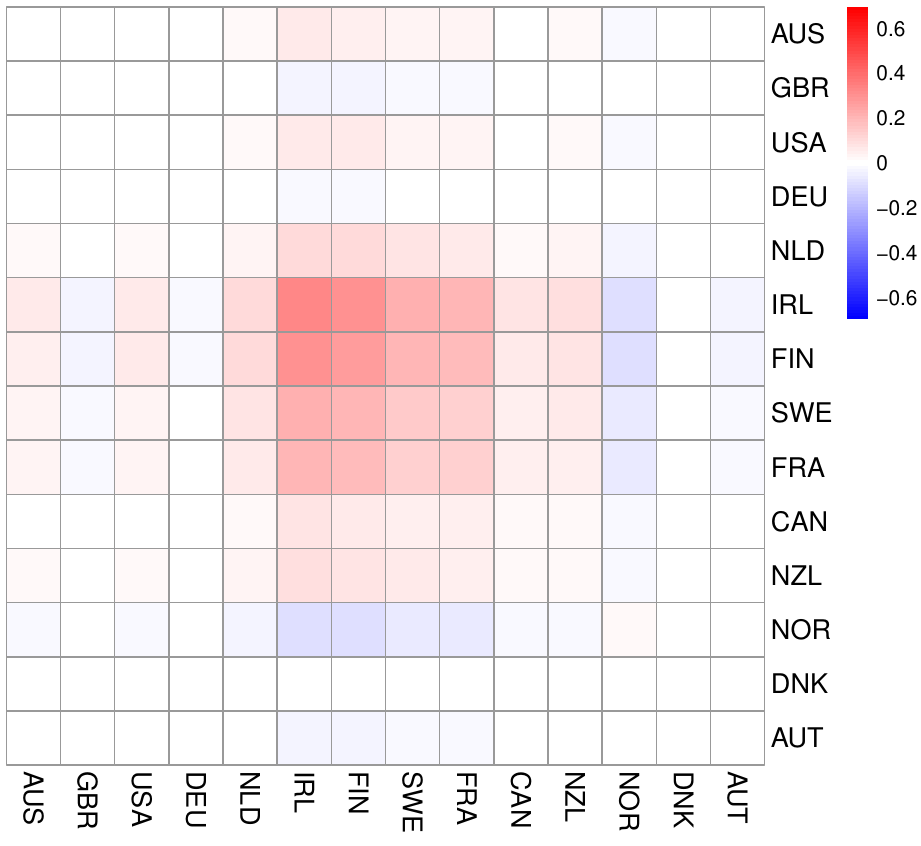}
        \caption{$\widehat{\bm R}_1\widehat{\bm R}_1^\top$}
        \label{subfig: R1}
    \end{subfigure}
    \caption{Estimates of common and specific projection matrices over countries.}
    \label{fig: country 1}
\end{figure}

The decomposition of these groups across the subspaces offers significant economic insight:
\begin{itemize}
    \item Common Influencers: Australia and the UK (Figure \ref{subfig: C1}) exhibit strong loadings in the common subspace. This implies a dual role: their economic conditions are highly predictive of global trends, and simultaneously, they are strongly predicted by the global state. This reflects their status as deeply integrated, open economies with significant financial sectors.
    \item Global Drivers: The USA, Germany, and the Netherlands feature prominently in the predictor-specific subspace (Figure \ref{subfig: P1}) but are less visible in the response-specific subspace. This asymmetry suggests a causal directionality: shocks originating in these major economies drive future global dynamics, but their own fluctuations are less determined by external factors. This aligns with the US's role as a global policy setter and Germany/Netherlands as the industrial and logistical engines of Europe.
    \item Global Responders: In contrast, Ireland and Finland load heavily on the response-specific subspace (Figure \ref{subfig: R1}). This characterizes them as small open economies that absorb global shocks—meaning their economic state is predicted by global factors—but their idiosyncratic movements have limited predictive power for the rest of the world.
\end{itemize}

Figure \ref{fig: country 2} combines these components to visualize the total predictive and response footprints. The predictor footprint (Figure \ref{subfig: C1_P1}) is dominated by the largest economies (USA, Germany, UK, Australia), confirming that global co-movements are driven by this core group. The correlation structure (off-diagonal elements) reveals positive associations within trade-dependent clusters (Australia/NZL/Netherlands) and cultural/economic blocks (USA/UK). Interestingly, negative correlations appear between the USA and continental European powers (Germany/France), potentially reflecting divergent economic cycles or structural differences between Anglo-Saxon and Continental economic models.

\begin{figure}[!htp]
    \centering
    \begin{subfigure}[b]{0.32\textwidth} 
        \centering
        \includegraphics[width=\textwidth]{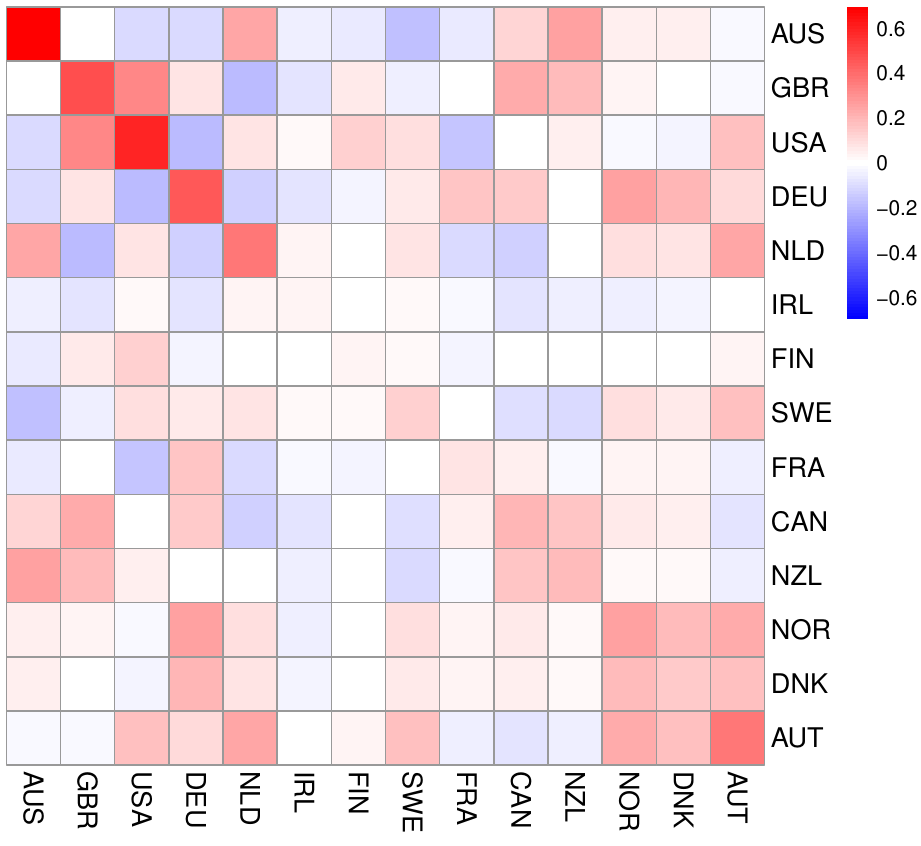}
        \caption{$\widehat{\bm C}_1\widehat{\bm C}_1^\top+\widehat{\bm P}_1\widehat{\bm P}_1^\top$}
        \label{subfig: C1_P1}
    \end{subfigure}
    \begin{subfigure}[b]{0.32\textwidth} 
        \centering
        \includegraphics[width=\textwidth]{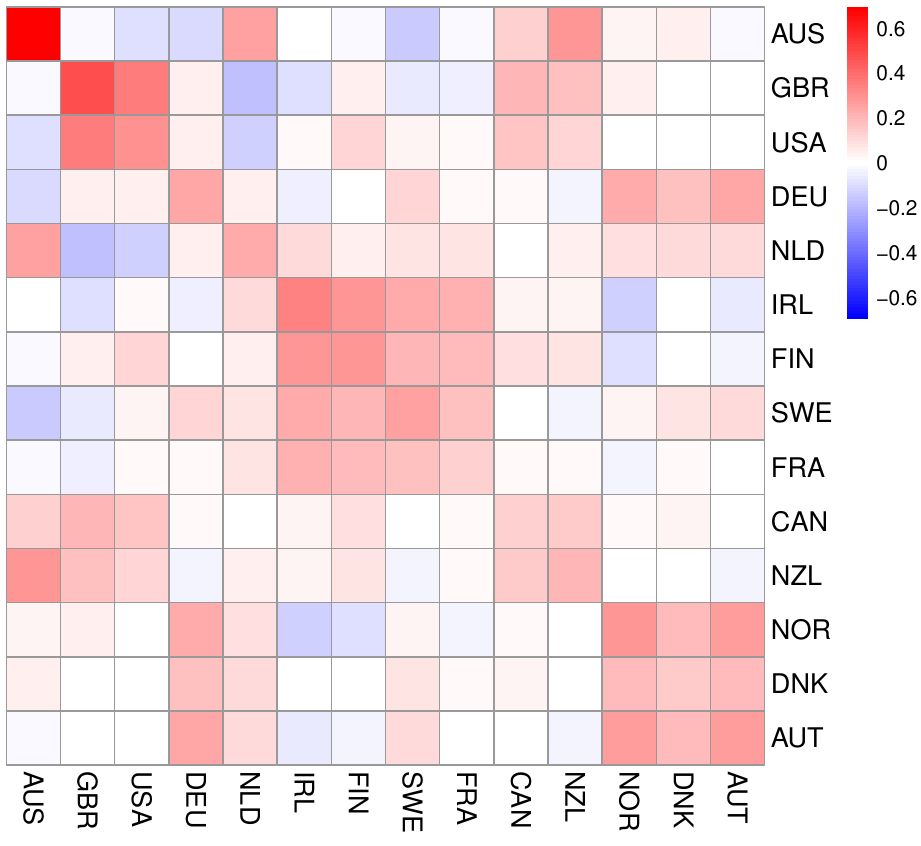}
        \caption{$\widehat{\bm C}_1\widehat{\bm C}_1^\top+\widehat{\bm R}_1\widehat{\bm R}_1^\top$}
        \label{subfig: C1_R1}
    \end{subfigure}
    \caption{Estimates of full predictor and response projection matrices over countries.}
    \label{fig: country 2}
\end{figure}

\paragraph{Indicator-Level Dynamics (Column Space).}
The analysis of economic indicators (Figures \ref{fig: index 1} and \ref{fig: index 2}) reveals clear lead-lag relationships. 
\begin{itemize}
    \item Predictors: Real economic activity measures, including GDP, Manufacturing (PTM), Industrial Production (PTEC), and Exports (ITEX), dominate the predictor-specific subspace (Figure \ref{subfig: P2}). This confirms their utility as leading indicators for future economic states.
    \item Responders: Energy prices (CPIE) are dominant in the response-specific subspace (Figure \ref{subfig: R2}) but negligible in the predictor space. This is consistent with the status of most countries in the sample as energy importers; their domestic energy prices are determined by global markets (the response) but do not themselves drive global cycles.
    \item Commonality: Food prices (CPIF) span both subspaces, suggesting high persistence and integration into both the cost of living and production chains.
\end{itemize}

\begin{figure}[htb]
    \centering
    \begin{subfigure}[b]{0.32\textwidth} 
        \centering
        \includegraphics[width=\textwidth]{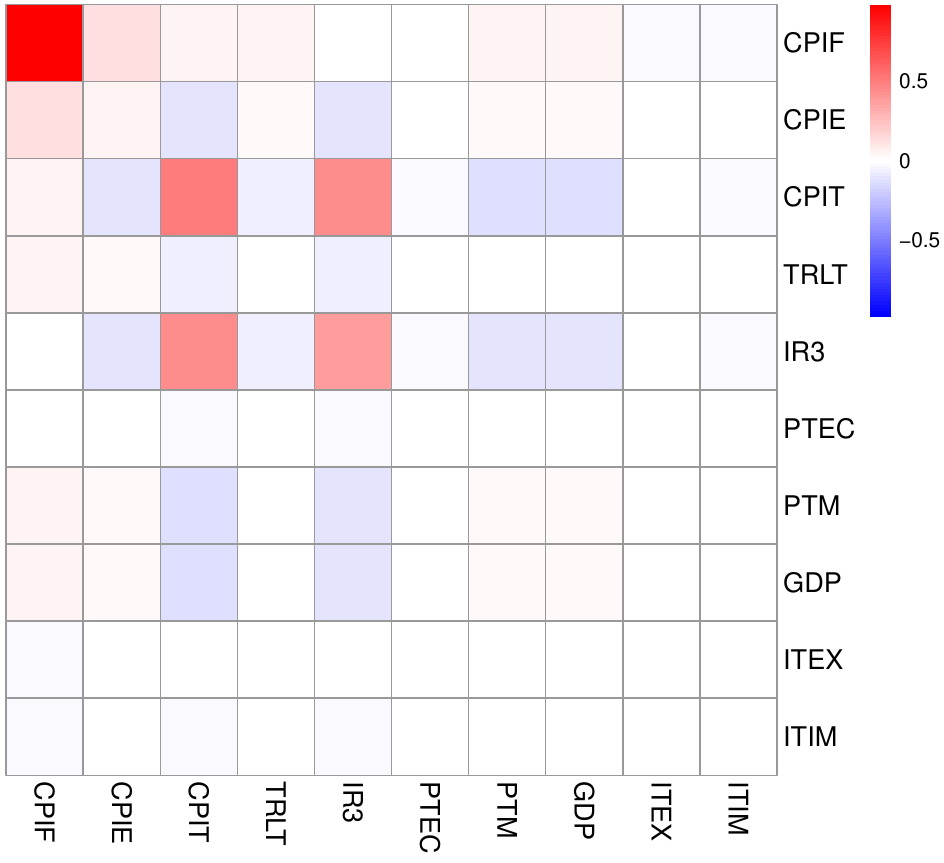}
        \caption{$\widehat{\bm C}_2\widehat{\bm C}_2^\top$}
        \label{subfig: C2}
    \end{subfigure}
    \begin{subfigure}[b]{0.32\textwidth} 
        \centering
        \includegraphics[width=\textwidth]{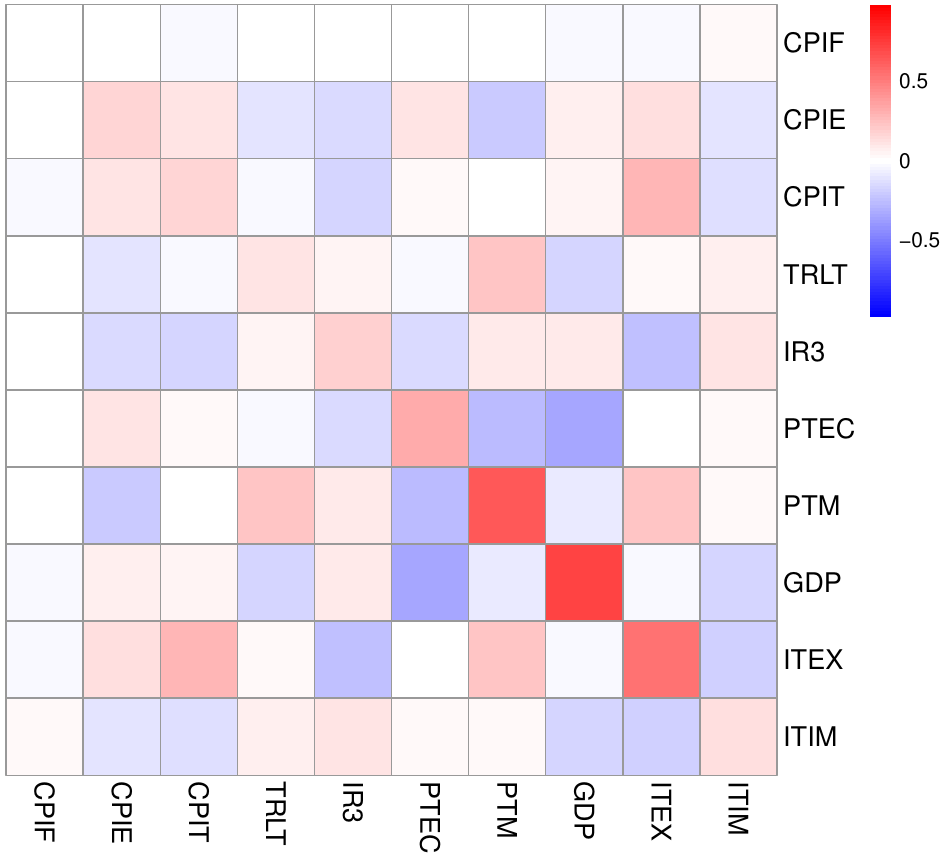}
        \caption{$\widehat{\bm P}_2\widehat{\bm P}_2^\top$}
        \label{subfig: P2}
    \end{subfigure}
    \begin{subfigure}[b]{0.32\textwidth} 
        \centering
        \includegraphics[width=\textwidth]{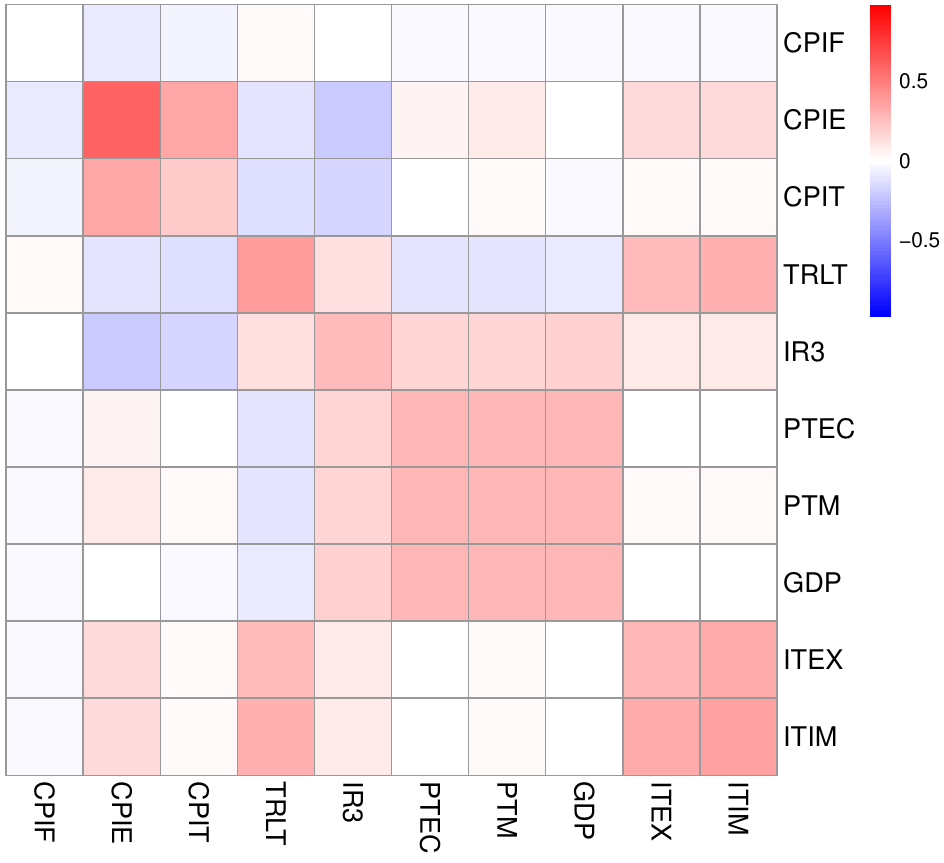}
        \caption{$\widehat{\bm R}_2\widehat{\bm R}_2^\top$}
        \label{subfig: R2}
    \end{subfigure}
    \caption{Estimates of common and specific projection matrices over economic indicators.}
    \label{fig: index 1}
\end{figure}

\begin{figure}[htb]
    \centering
    \begin{subfigure}[b]{0.32\textwidth} 
        \centering
        \includegraphics[width=\textwidth]{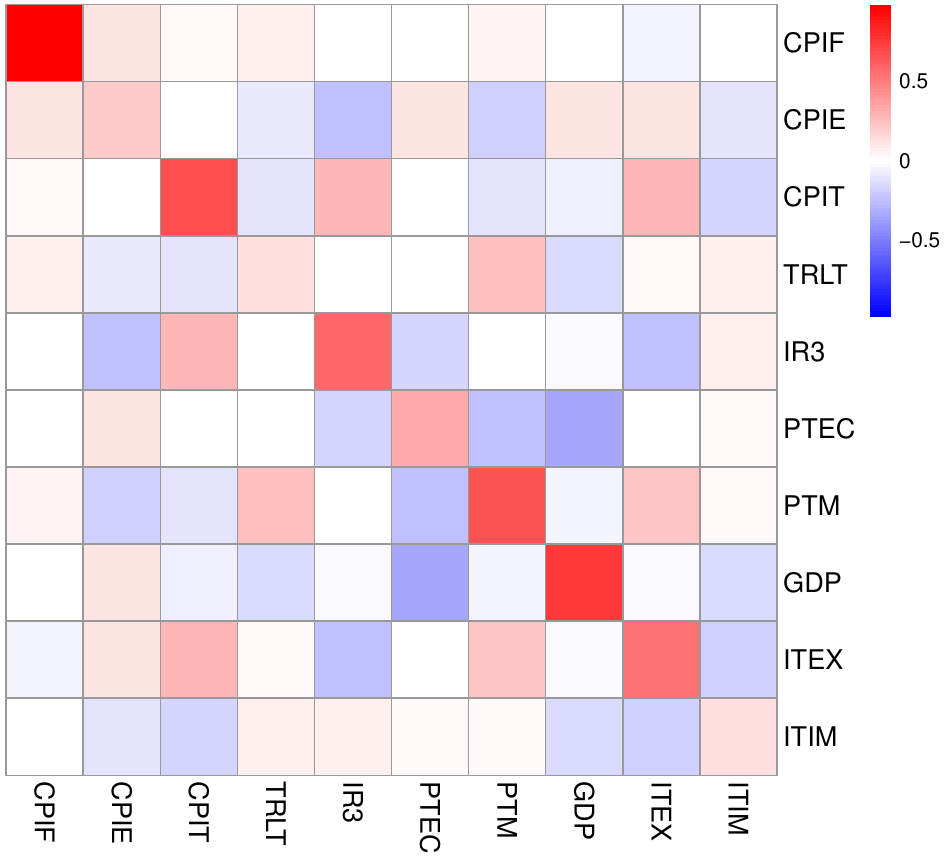}
        \caption{$\widehat{\bm C}_2\widehat{\bm C}_2^\top+\widehat{\bm P}_2\widehat{\bm P}_2^\top$}
        \label{subfig: C2_P2}
    \end{subfigure}
    \begin{subfigure}[b]{0.32\textwidth} 
        \centering
        \includegraphics[width=\textwidth]{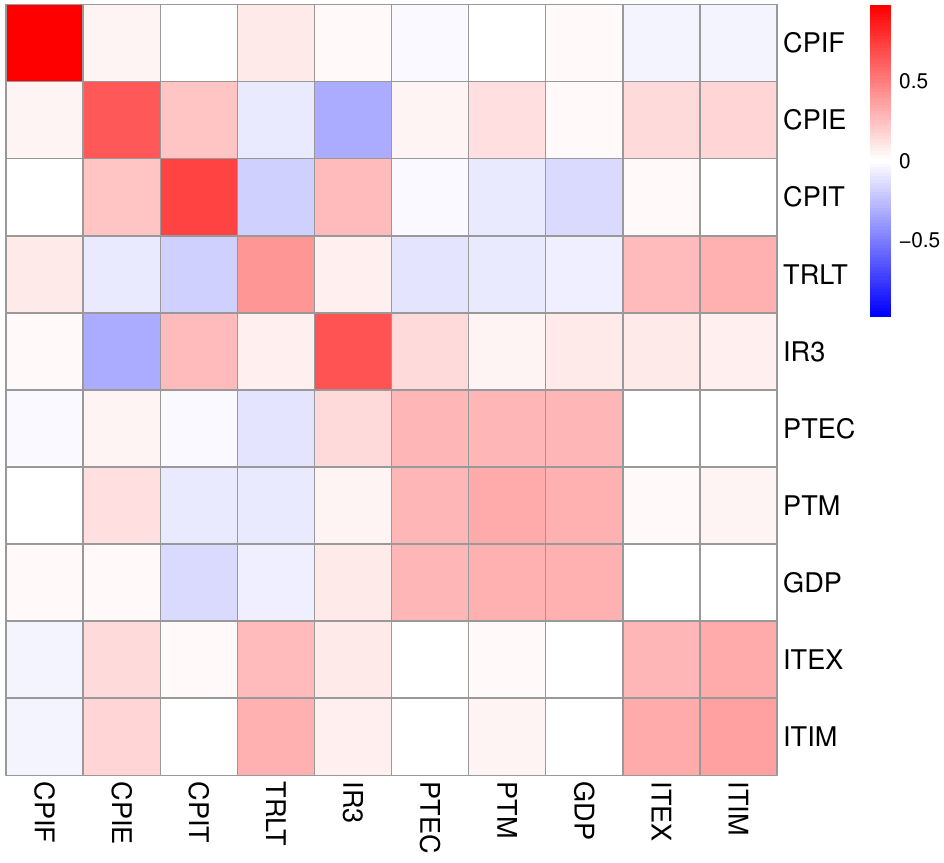}
        \caption{$\widehat{\bm C}_2\widehat{\bm C}_2^\top+\widehat{\bm R}_2\widehat{\bm R}_2^\top$}
        \label{subfig: C2_R2}
    \end{subfigure}
    \caption{Estimates of full predictor and response projection matrices over economic indicators.}
    \label{fig: index 2}
\end{figure}

\subsection{Forecasting Performance}

Finally, we evaluate the out-of-sample forecasting performance. We compute one-step-ahead forecast errors using 15 rolling windows covering the period from 2013-Q1 to 2016-Q3. We compare MARCF against the RRMAR model (with identical total ranks $r_1=4, r_2=5$) and the dynamic MFM under three specifications: (1) rank selected by information criterion ($r_1=r_2=1$) \citep{Han22rkInTSFM}; (2) ranks matching our common dimensions ($r_1=3, r_2=2$); and (3) a high-rank specification ($r_1=r_2=8$). For the dynamic MFM, after estimating the factor loading and process, we use a low-dimensional MAR model \eqref{eq:MAR1} to fit the process for prediction.

Table \ref{table:forecasting} reports the mean and median sums of squared out-of-sample prediction errors. The MARCF model outperforms all benchmarks. Notably, it improves upon RRMAR by exploiting the common subspace structure to reduce estimation variance, and it significantly outperforms dynamic MFM, which is constrained by the rigid assumption of symmetric predictor/response subspaces. In terms of model fit, the mean in-sample prediction errors of MFM(8,8) and RRMAR are slightly lower (239.90 and 242.31, respectively) compared to MARCF (243.25). Their inferior out-of-sample performance indicates overfitting. In contrast, MFM(1,1) and MFM(3,2) suffer from underfitting, yielding higher errors (334.30 and 276.32, respectively). The above empirical evidence supports the theoretical premise that real-world economic data occupies a middle ground, possessing both shared factors and distinct rotational dynamics, that is best captured by the hybrid MARCF framework.

\begin{table}[htp]
    \centering
    \caption{Means and medians of the out-of-sample forecast errors across 15 rolling windows.}
    \begin{tabular}{cccccc}
        \toprule
          & RRMAR & MARCF & MFM(1,1) & MFM(3,2) & MFM(8,8)\\
        \midrule
        Mean & 121.911 & \textbf{121.080} & 131.533 & 124.096 & 123.541\\
        Median & 69.410 & \textbf{67.697} & 83.657 & 77.006 & 75.403\\
        \bottomrule
    \end{tabular}
    \label{table:forecasting}
\end{table}

\section{Conclusion}

In this paper, we have introduced the Matrix Autoregressive model with Common Factors (MARCF), a unified framework for high-dimensional matrix time series that bridges the structural gap between reduced-rank autoregression and dynamic factor models. By explicitly parameterizing the intersection between predictor and response subspaces, the MARCF framework resolves the tension between the structural rigidity of reduced-rank models (which assume distinct subspaces) and the restrictiveness of factor models (which assume identical subspaces).

Our contributions can be summarized as follows:
\begin{itemize}
    \item \textbf{Methodological Unification:} We proposed a flexible decomposition that separates dynamics into common (shared), predictor-specific, and response-specific components. This allows the model to adaptively capture the ``subspace alignment'' of the data, providing a continuum that encompasses both RRMAR and DMF as special cases while enabling the modeling of complex interaction dynamics.
    \item \textbf{Scalable Estimation:} We developed a regularized gradient descent algorithm designed for the non-convex nature of the parameter space. By incorporating balanced regularization terms ($b \asymp \phi^{1/3}$), our approach resolves the scaling indeterminacies inherent in bilinear systems and avoids the computationally expensive matrix operations associated with alternating least squares, ensuring scalability to large-scale economic datasets.
    \item \textbf{Theoretical Rigor:} We established the statistical consistency of the estimator and the rank selection procedure. Our analysis decouples the geometric landscape of the loss function from the stochastic properties of the data, proving that the estimator achieves the optimal convergence rate governed by the effective degrees of freedom.
\end{itemize}

The empirical application to global macroeconomic indicators highlights the practical value of this framework. The MARCF model not only improved forecasting accuracy but also provided interpretable insights into the global economy, distinguishing between ``driver'' countries (dominating the predictor subspace) and ``responder'' countries (dominating the response subspace).

Several promising directions remain for future research. First, while we focused on dense subspace structures, incorporating sparsity constraints on the basis matrices ($\bm{C}, \bm{R}, \bm{P}$) could further enhance interpretability and efficiency, particularly when $p_1$ and $p_2$ are very large. Second, extending this framework to tensor-valued time series with an order $K \ge 3$ would allow for the modeling of more complex structures, though this introduces additional algebraic challenges regarding tensor rank selection. Finally, developing inferential tools, such as confidence intervals for the estimated factors or impulse response functions, remains an important challenge due to the non-standard asymptotics of singular subspace identification.

\setlength{\bibsep}{4pt}
\bibliography{mainbib.bib}

\counterwithin{equation}{section}
\counterwithin{lemma}{section}
\counterwithin{figure}{section}
\renewcommand{\thelemma}{\thesection.\arabic{lemma}}
\renewcommand{\theHlemma}{\thesection.\arabic{lemma}}
\AtAppendix{\counterwithin{lemma}{section}}
\AtAppendix{\counterwithin{proposition}{section}}

\appendix
\newpage 
\begin{center}
  \Large\bfseries Supplementary Material to ``A Hybrid Framework Combining Autoregression and Common Factors for Matrix Time Series"
\end{center}

This is the supplementary material for the main paper. We present detailed proofs of the theorems in the main text. Lemmas, their proofs, along with other technical tools, are also provided. Appendix \ref{append:modeling_procedure} summarizes the entire procedure for the proposed model. Appendix \ref{sec: gradients} provides the detailed expressions of the gradients used in the proposed algorithm. Appendix \ref{sec: computation converge} establishes the computational convergence of the algorithm via a deterministic analysis. Appendix \ref{sec:statistical_convergence} derives the statistical convergence rates of the final estimators, along with verification of the RSC/RSS condition and bounding of the deviation term and initialization errors via a stochastic analysis. Finally, Appendix \ref{sec:rank_selection} establishes the selection consistency of the rank selection method.

\section{Modeling Procedure}\label{append:modeling_procedure}

First, we provide a summary of the entire procedure for constructing the MARCF model on matrix-valued time series data. This is presented in the following Figure \ref{fig: modeling flowchart} to offer a clear and step-by-step guidance for the modeling process.

\begin{figure}[!htb]

\centering

\begin{tikzpicture}\label{fig: modeling flowchart}
\node[io] (input) {\textbf{Input}:\\
 centered and standardized stationary data $\mathcal{D}$; $\bar{r}_1,\bar{r}_2;$ step size $\eta;\lambda_1,\lambda_2,b$};

\node[process, below =1.5cm of input.center] (rank) {
    \textbf{Rank selection}:\\
    1. Estimate an RRMAR model with\\
    \hspace{1em} $\bar{r}_1,\bar{r}_2$ to obtain $\widehat{\bm A}_1^\mathrm{RR}(\bar{r}_1),\widehat{\bm A}_2^\mathrm{RR}(\bar{r}_2)$;\\
    2. Compute singular values and \\
    \hspace{1em} select $\widehat{r}_1,\widehat{r}_2$ by Section \ref{subsec: selection}.
};

\node[process, below=0.8cm of rank] (dim) {
    \textbf{Common dimension selection}:\\
    1. Estimate an RRMAR model with $\widehat{r}_1,\widehat{r}_2$;\\
    2. For every possible pair of $(d_1,d_2)$:\\
    \hspace{1em}\textbullet\hspace{0.05em} initialize $\bbm\Theta^{(0)}$ from $\widehat{\bm A}_1^\mathrm{RR}(\widehat{r}_1),\widehat{\bm A}_2^\mathrm{RR}(\widehat{r}_2)$;\\
    \hspace{1em}\textbullet\hspace{0.05em} run Algorithm \ref{algo: GD} with step size $\eta$\\
    \hspace{1.5em} and parameters $\lambda_1,\lambda_2,b$;\\
    3. Compute BIC and find $\widehat{d}_1,\widehat{d}_2$.
};

\node[process, below=1cm of dim] (estimation) {
    \textbf{Final estimation}: \\ Initialize $\bbm\Theta^{(0)}$ with $\widehat{r}_1,\widehat{r}_2,\widehat{d}_1,\widehat{d}_2$
    and run Algorithm \ref{algo: GD}.

};

\node[io, below=1cm of estimation] (output) {
    \textbf{Output}: \\
    estimates
    $\widehat{\bbm\Theta}=(\widehat{\bm C}_i,\widehat{\bm R}_i,\widehat{\bm P}_i,\widehat{\bm D}_i)_{i=1}^2$
};

\draw[arrow] (input) -- ++(0,-1.5) -| (rank.north);
\draw[arrow] (rank) -- (dim);
\draw[arrow] (dim) -- (estimation);
\draw[arrow] (estimation) -- (output);

\end{tikzpicture}

\caption{Modeling procedure of MARCF model.}
\label{fig: modeling flowchart}

\end{figure}

\newpage

\section{Detailed Expressions of Gradients in Algorithm \ref{algo: GD}}\label{sec: gradients}

In this appendix, we provide detailed expressions of the gradients in Algorithm \ref{algo: GD} which are omitted in the main article.

The objective of Algorithm \ref{algo: GD} is to minimize $\overline{\mathcal{L}}$ in \eqref{eq: loss} with known $r_1,r_2$ and $d_1,d_2$:
\begin{equation}\label{Penalized LS}
    \widehat{\bbm \Theta}:=\argmin \overline{\mathcal{L}}(\bm{\Theta};\lambda_1,\lambda_2,b)=\argmin \left\{\mathcal{L}(\bm{\Theta})+\frac{\lambda_1}{4}\mathcal{R}_1(\bbm\Theta)+\frac{\lambda_2}{2}\mathcal{R}_2(\bbm \Theta;b)\right\}.
\end{equation}
To derive the expressions of the gradients, we first define some useful notations. For any matrix $\bm{M}\in\mathbb{R}^{p_1\times p_1}$ and $\bm{N}\in\mathbb{R}^{p_2\times p_2}$, since all entries of $\bm M\otimes \bm N$ and $\V{\bm M}\V{\bm N}^\top$ are the same up to a permutation, we define the entry-wise re-arrangement operator $\mathcal{P}$ such that
\begin{equation}
    \mathcal{P}(\bm{M}\otimes\bm{N}) = \text{vec}(\bm{M})\text{vec}(\bm{N})^\top,
\end{equation}
where the dimensions of the matrices are omitted for simplicity.
The inverse operator $\mathcal{P}^{-1}$ is defined such that
\begin{equation}
    \mathcal{P}^{-1}(\text{vec}(\bm{M})\text{vec}(\bm{N})^\top) = \bm{M}\otimes\bm{N},
\end{equation}
where we use the notation vec($\cdot$) and mat($\cdot$) to denote the vectorization operator and its inverse matricization operator (with dimensions omitted for brevity), respectively.

The proposed MARCF model is given by
$$\bm{Y}_t = \bm{A}_1\bm{Y}_{t-1}\bm{A}_2^\top + \bm{E}_t, \quad t=1,\dots,T,$$
where $\bm{A}_i=[\bm{C}_i~\bm{R}_i]\bm{D}_i[\bm{C}_i~\bm{P}_i]^\top$, for $i=1,2$. The matrix time series can be vectorized and the model be rewritten as
\begin{equation}
    \bm{y}_t = (\bm{A}_2\otimes\bm{A}_1)\bm{y}_{t-1} + \bm{e}_t,~~~t=1,\dots,T,
\end{equation}
where $\bm y_t:=\V{\bm Y_t}$ and $\bm e_t:=\V{\bm E_t}$. We denote $\bm{A}:=\bm{A}_2\otimes\bm{A}_1$ to be the uniquely identifiable parameter matrix, and we also let $\bm Y:=[\bm y_T, \bm y_{T-1},...,\bm y_1]$ and $\bm X:=[\bm y_{T-1}, \bm y_{T-2},...,\bm y_0]$. 
Then, the loss function $\mathcal{L}$ in \eqref{eq: LS loss} can be viewed as a function of $\bm A$:
\begin{equation}
    \widetilde{\mathcal{L}}(\bm{A}) :=\mathcal{L}(\bbm\Theta)=\frac{1}{2T}\fnorm{\bm{Y}-\bm{A}\bm{X}}^2=\frac{1}{2T}\fnorm{\bm{Y}-(\bm  A_2\otimes \bm A_1)\bm{X}}^2.
\end{equation}

First, the gradient of $\widetilde{\mathcal{L}}$ with respect to $\bm A$ is
\begin{equation}
    \nabla\widetilde{\mathcal{L}}(\bm A):=\frac{\partial\widetilde{\mathcal{L}}}{\partial \bm A}=-\frac{1}{T}\sum_{t=1}^T(\bm{y}_t-\bm{A}\bm{y}_{t-1})\bm{y}_{t-1}^\top=-\frac{1}{T}(\bm{Y}-\bm{A}\bm{X})\bm{X}^\top.
\end{equation}
Let $\bm{a}_i=\V{\bm{A}_i}$ and $\bm{B}:=\V{\bm A_2}\otimes \V{\bm A_1}^\top=\bm{a}_2\bm{a}_1^\top$. Obviously, $\bm{A}=\mathcal{P}^{-1}(\bm{B})$ and $\bm{B}=\mathcal{P}(\bm{A})$. Define $\mathcal{L}_2(\bm{B})=\widetilde{\mathcal{L}}(\bm{A})$. Based on the entry-wise matrix permutation operators $\mathcal{P}$ and $\mathcal{P}^{-1}$, we have
\begin{equation}
    \begin{split}
        \nabla\mathcal{L}_2(\bm B) & = \mathcal{P}(\nabla\widetilde{\mathcal{L}}(\bm A))=\mathcal{P}(-(\bm{Y}-\bm{A}\bm{X})\bm{X}^\top/T),\\
        \nabla_{\bm{a}_1}\mathcal{L}_2 & = \mathcal{P}(-(\bm{Y}-\bm{A}\bm{X})\bm{X}^\top/T)^\top\bm{a}_2,\\
        \nabla_{\bm{a}_2}\mathcal{L}_2 & = \mathcal{P}(-(\bm{Y}-\bm{A}\bm{X})\bm{X}^\top/T)\bm{a}_1,\\
        \nabla_{\bm{A}_1}\widetilde{\mathcal{L}} & = \text{mat}(\mathcal{P}(-(\bm{Y}-\bm{A}\bm{X})\bm{X}^\top/T)^\top\text{vec}(\bm{A}_2))\\
        & = \text{mat}(\mathcal{P}(\nabla\widetilde{\mathcal{L}}(\bm{A}))^\top \text{vec}(\bm{A}_2)), \\
        \text{and  }\nabla_{\bm{A}_2}\widetilde{\mathcal{L}} & = \text{mat}(\mathcal{P}(-(\bm{Y}-\bm{A}\bm{X})\bm{X}^\top/T)\text{vec}(\bm{A}_1))\\
        & = \text{mat}(\mathcal{P}(\nabla\widetilde{\mathcal{L}}(\bm{A})) \text{vec}(\bm{A}_1)).
    \end{split}
\end{equation}

The gradients of $\widetilde{\mathcal{L}}$ with respect to $\bm C_i$, $\bm R_i$, $\bm P_i$, and $\bm D_i$ are:
\begin{equation}
    \begin{split}
        \nabla_{\bm{C}_i}\widetilde{\mathcal{L}} & = \nabla_{\bm{A}_i}\widetilde{\mathcal{L}}(\bm{C}_i\bm{D}_{i,11}^\top+\bm{P}_i\bm{D}_{i,12}^\top) + \nabla_{\bm{A}_i}\widetilde{\mathcal{L}}^\top(\bm{C}_i\bm{D}_{i,11}+\bm{R}_i\bm{D}_{i,21}),\\
        \nabla_{\bm{R}_i}\widetilde{\mathcal{L}} & = \nabla_{\bm{A}_i}\widetilde{\mathcal{L}}[\bm{C}_i~\bm{P}_i][\bm{D}_{i,21}~\bm{D}_{i,22}]^\top,\\
        \nabla_{\bm{P}_i}\widetilde{\mathcal{L}} & = \nabla_{\bm{A}_i}\widetilde{\mathcal{L}}^\top[\bm{C}_i~\bm{R}_i][\bm{D}_{i,12}^\top~\bm{D}_{i,22}^\top]^\top,\\
        \text{and  }\nabla_{\bm{D}_i}\widetilde{\mathcal{L}} & = [\bm{C}_i~\bm{R}_i]^\top\nabla_{\bm{A}_i}\widetilde{\mathcal{L}}[\bm{C}_i~\bm{P}_i].
    \end{split}
\end{equation}

Denote the following block structure of $\bm D_i$ as
$$
\bm D_i=\begin{pmatrix}
    \bm D_{i,11} & \bm D_{i,12}\\
    \bm D_{i,21} & \bm D_{i,22}
\end{pmatrix},
$$
where $\bm D_{i,11}\in\mathbb{R}^{d_i\times d_i}, \bm D_{i,12}\in\mathbb{R}^{d_i\times (r_i-d_i)}, \bm D_{i,21}\in\mathbb{R}^{(r_i-d_i)\times d_i}, \bm D_{i,22}\in\mathbb{R}^{(r_i-d_i)\times (r_i-d_i)}$ are block matrices in $\bm{D}_i$, for $i=1,2$. 
Finally, combined with the partial gradients of the regularization terms, the partial gradients of $\overline{\mathcal{L}}$ are given by:
\begin{equation}
    \begin{aligned}
    \nabla_{\bm C_i}\overline{\mathcal{L}}=&\nabla_{\bm{A}_i}\widetilde{\mathcal{L}}(\bm{C}_i\bm{D}_{i,11}^\top+\bm{P}_i\bm{D}_{i,12}^\top) + \nabla_{\bm{A}_i}\widetilde{\mathcal{L}}^\top(\bm{C}_i\bm{D}_{i,11}+\bm{R}_i\bm{D}_{i,21})\\
    &+(-1)^{i+1}\lambda_1\left(\fnorm{\bm A_1}^2-\fnorm{\bm A_2}^2\right)\left(\bm A_i[\bm C_i~\bm P_i][\bm D_{i,11}~\bm D_{i,12}]^\top+\bm A_i^\top[\bm C_i~\bm R_i][\bm D_{i,11}^\top~\bm D_{i,21}^\top]^\top\right)\\
    &+\lambda_2\left(2\bm{C}_i(\bm{C}_i^\top\bm{C}_i-b^2\bm{I}_{d_i})+\bm{R_i}\bm{R}_i^\top\bm{C}_i+\bm{P_i}\bm{P}_i^\top\bm{C}_i\right),\\
    \nabla_{\bm{R}_i}\overline{\mathcal{L}}=&\nabla_{\bm{A}_i}\widetilde{\mathcal{L}}[\bm{C}_i ~ \bm{P}_i][\bm{D}_{i,21}\; \bm{D}_{i,22}]^\top+(-1)^{i+1}\lambda_1\left(\fnorm{\bm A_1}^2-\fnorm{\bm A_2}^2\right)\bm A_i[\bm C_i~\bm P_i][\bm D_{i,21}~\bm D_{i,22}]^\top\\
    &+\lambda_2\left(\bm{R}_i(\bm{R}_i^\top\bm{R}_i-b^2\bm{I}_{r_i-d_i})+\bm{C}_i\bm{C}_i^\top\bm{R}_i\right),\\
    \nabla_{\bm{P}_i}\overline{\mathcal{L}}=&\nabla_{\bm{A}_i}\widetilde{\mathcal{L}}^\top[\bm{C}_i \; \bm{R}_i][\bm{D}_{i,12}^\top\; \bm{D}_{i,22}^\top]^\top+(-1)^{i+1}\lambda_1\left(\fnorm{\bm A_1}^2-\fnorm{\bm A_2}^2\right)\bm A_i^\top[\bm C_i~\bm R_i][\bm D_{i,12}^\top~\bm D_{i,22}^\top]^\top\\
    &+\lambda_2\left(\bm{P}_i(\bm{P}_i^\top\bm{P}_i-b^2\bm{I}_{r_i-d_i})+\bm{C_i}\bm{C}_i^\top\bm{P}_i\right),\\
    \nabla_{\bm{D}_i}\overline{\mathcal{L}}=&[\bm{C}_i \; \bm{R}_i]^\top\nabla_{\bm{A}_i}\widetilde{\mathcal{L}}[\bm{C}_i \; \bm{P}_i]+(-1)^{i+1}\lambda_1\left(\fnorm{\bm A_1}^2-\fnorm{\bm A_2}^2\right)[\bm C_i~\bm R_i]^\top\bm A_i[\bm C_i~\bm P_i].
    \end{aligned}
\end{equation}

\newpage

\section{Computational Convergence Analysis}\label{sec: computation converge}

In this appendix, we present the proof of Theorem \ref{theorem: computational convergence}, i.e., the computational convergence of regularized gradient decent algorithm proposed in the main article. To make it easy to read, it is divided into several steps. Auxiliary lemmas and their proofs are presented in Appendix \ref{sub:auxiliary_lemmas}.

\subsection{Proof of Theorem \ref{theorem: computational convergence}}

\textit{\textbf{Step 1.}} \textbf{(Notations, conditions, and proof outline)}

\noindent We begin by introducing some important notations and conditions required for the convergence analysis. Other notations not mentioned in this step are inherited from Appendix \ref{sec: gradients}. We then provide an outline of the proof, highlighting the key intermediate results and main ideas.

To begin, we restate the definitions of the measures quantifying the estimation error and the statistical error. Throughout this article, the true values of all parameters are defined by letters with an asterisk superscript, i.e., $\{\bm C_i^*,\bm R_i^*,\bm P_i^*,\bm D_i^*,\bm A_i^*\}_{i=1}^2$. Let $\mathbb{O}^{n_1\times n_2}$ be the set of $n_1\times n_2$ matrices with orthonormal columns. When $n_1=n_2=n$, we use $\mathbb{O}^n$ for short. For the $j$-th iterate, we quantify the combined estiamtion errors up to optimal rotations defined in \eqref{def: error} as
\begin{equation}\label{eq:running_error}
    \begin{split}
        \mathrm{dist}(\bbm\Theta^{(j)},\bbm\Theta^*)^2 = \min_{\substack{\bm O_{i,r},\bm O_{i,p}\in \mathbb{O}^{r_i-d_i}\\\bm O_{i,c}\in \mathbb{O}^{d_i}}}\sum_{i=1,2}\Big\{ & \|\bm{C}_i^{(j)}-\bm{C}^*\bm{O}_{i,c}\|_\text{F}^2 + \|\bm{R}_i^{(j)}-\bm{R}^*\bm{O}_{i,r}\|_\text{F}^2\\
        + \|\bm{P}_i^{(j)}-\bm{P}^*\bm{O}_{i,p}\|_\text{F}^2 + \|&\bm{D}_i^{(j)}-\text{diag}(\bm{O}_{i,c},\bm{O}_{i,r})^\top\bm{D}_i^*\text{diag}(\bm{O}_{i,c},\bm{O}_{i,p})\|_\text{F}^2\Big\},
    \end{split}
\end{equation}
and the corresponding optimal rotations as $\bm{O}_{i,c}^{(j)},\bm{O}_{i,r}^{(j)},\bm{O}_{i,p}^{(j)}$, for $i=1,2$. For simplicity, we let $\bm O_{i, u}:=\text{diag}(\bm{O}_{i,c},\bm{O}_{i,r})$, $\bm O_{i, v}:=\text{diag}(\bm{O}_{i,c},\bm{O}_{i,p})$, and use $\mathrm{dist}^2_{(j)}$ to represent $\mathrm{dist}(\bbm\Theta^{(j)},\bbm\Theta^*)^2$.

In addition, as defined in Definition \ref{def: statistical error xi}, the statistical error is quantified by
\begin{equation}
    \xi(r_1,r_2,d_1,d_2):=\sup_{\substack{[\bm C_i~\bm R_i]\in\mathbb{O}^{p_i\times r_i},\\
    [\bm C_i~\bm P_i]\in\mathbb{O}^{p_i\times r_i},\\\bm D_i\in\mathbb{R}^{r_i},\fnorm{\bm D_i}=1}}\inner{\nabla\widetilde{\mathcal{L}}(\bm{A}^*)}{[\bm{C}_2~\bm{R}_2]\bm{D}_2[\bm{C}_2~\bm{P}_2]^\top\otimes[\bm{C}_1~\bm{R}_1]\bm{D}_1[\bm{C}_1~\bm{P}_1]^\top},
\end{equation}
where $\bm{A}^*=\bm{A}_2^*\otimes\bm{A}_1^*$ and $\widetilde{\mathcal{L}}(\bm A)$ is referred to as the least-squares loss function with respect to the identifiable Kronecker product type parameter $\bm A=\bm A_2\otimes \bm A_1$.

Next, we state three conditions required for convergence analysis. The first condition stipulates that $\widetilde{\mathcal{L}}(\bm A)$ satisfies the RSC/RSS condition defined in \ref{def:RSCRSS}. Specifically, we assume there exist constants $0 < \alpha \leq \beta$ such that for any pair of identifiable parameter matrices $\bm A=\bm A_2\otimes \bm A_1$ and $\bm A'=\bm A_2'\otimes \bm A_1'$ admitting the MARCF structural decomposition, where $\bm A_i=[\bm C_i~\bm R_i]\bm D_i[\bm C_i~\bm P_i]^\top$ and $\bm A_i'=[\bm C_i'~\bm R_i']\bm D_i'[\bm C_i'~\bm P_i']^\top$, the following inequalities hold:
\begin{equation}
    \begin{gathered}
    \mathrm{(RSC)}\quad \frac{\alpha}{2}\fnorm{\bm A-\bm A'}^2\leq \mathcal{L}(\bm A)-\mathcal{L}(\bm A')-\inner{\nabla\mathcal{L}(\bm A')}{\bm A-\bm A'},\\
    \mathrm{(RSS)}\quad \mathcal{L}(\bm A)-\mathcal{L}(\bm A')-\inner{\nabla\mathcal{L}(\bm A')}{\bm A-\bm A'}\leq \frac{\beta}{2}\fnorm{\bm A-\bm A'}^2,
    \end{gathered}
\end{equation}
The condition is a premise of Theorem \ref{theorem: computational convergence}. As in \cite{Nesterov2004}, it implies that 
$$\mathcal{L}\left(\mathbf{A}'\right)-\mathcal{L}(\mathbf{A}) \geq\left\langle\nabla \mathcal{L}(\mathbf{A}), \mathbf{A}'-\mathbf{A}\right\rangle+\frac{1}{2 \beta}\left\|\nabla \mathcal{L}\left(\mathbf{A}'\right)-\nabla \mathcal{L}(\mathbf{A})\right\|_{\mathrm{F}}^2.$$
Combining this inequality with the RSC condition, we have
\begin{equation}\label{RCG}
    \begin{aligned}
    &\left\langle\nabla \mathcal{L}(\mathbf{A})-\nabla \mathcal{L}\left(\mathbf{A}'\right), \mathbf{A}-\mathbf{A}'\right\rangle \geq \frac{\alpha}{2}\left\|\mathbf{A}-\mathbf{A}'\right\|_{\mathrm{F}}^2+\frac{1}{2 \beta}\left\|\nabla \mathcal{L}(\mathbf{A})-\nabla \mathcal{L}\left(\mathbf{A}'\right)\right\|_{\mathrm{F}}^2,
\end{aligned}\end{equation}
which is equivalent to the restricted correlated gradient (RCG) condition in \citep{Han2022}.

The second and the third conditions ensure that the estimates remain within a small neighborhood of the true values during all iterations. They are established inductively in the proof.
Let $\underline{\sigma}:=\min\{\sigma_{r_1}(\bm A_1^*), \sigma_{r_2}(\bm A_2^*)\}$ be the smallest value among all the non-zero singular values of $\bm A_1^*$ and $\bm A_2^*$. Define $\phi:=\fnorm{\bm A_1^*}=\fnorm{\bm A_2^*}$  and $\kappa:=\phi/\underline{\sigma}$. With the notations, the second condition is specified as
\begin{equation}\label{condition:E_upperBound}
    \mathrm{dist}^2_{(j)}\leq \frac{C_D\alpha\phi^{2/3}}{\beta\kappa^2},
\end{equation}
and then
\begin{equation}
    \mathrm{dist}^2_{(j)}\leq \frac{C_D\phi^{2/3}}{\kappa^2}\leq C_D\phi^{2/3},\text{ for all } j=0,1,2,\dots
\end{equation}
where $C_D$ is positive constant small enough to ensure local convergence.

For the decomposition of $\bm A_i^*$, we require $[\bm C_i^*~\bm R_i^*]^\top[\bm C_i^*~\bm R_i^*]=b^2\bm I_{r_i}$, $[\bm C_i^*~\bm P_i^*]^\top[\bm C_i^*~\bm P_i^*]=b^2\bm I_{r_i},$ for $i=1,2$. For simplicity, we consider $b=\phi^{1/3}$, though our proof can be readily extended to the case where $b\asymp \phi^{1/3}$. 

The third condition is:
\begin{equation}\label{condition: upper bound of pieces}\begin{gathered}
     \fnorm{\left[\bm C_i^{(j)}~ \bm R_i^{(j)}\right]}\leq (1+c_b)b,\quad  \fnorm{\left[\bm C_i^{(j)}~ \bm R_i^{(j)}\right]}\leq (1+c_b)b,\\
     \fnorm{\bm D_i^{(j)}}\leq \frac{(1+c_b)\phi}{b^2},\quad \forall i=1,2,\text{ and }j=0,1,2,...
\end{gathered}\end{equation}
By sub-multiplicative property of Frobenius norm, $\fnorm{\bm A_1^{(j)}}\leq (1+c_a)\phi$ and $\fnorm{\bm A_2^{(j)}}\leq (1+c_a)\phi$ with $c_b$ and $c_a$ being two positive constants. We assume that $c_b\leq 0.01$ and then $c_a\leq 0.04$. In fact, the constant $0.01$ reflects the accuracy of the initial estimate, which can be replaced by any small positive numbers. 

Finally, we give an outline of the proof of Theorem \ref{theorem: computational convergence}, which proceeds by induction. Assuming that \eqref{condition:E_upperBound} and \eqref{condition: upper bound of pieces} hold for $\bbm\Theta^{(j)}$, based on on RSC and RSS conditions, we derive an recursive relationship between $\mathrm{dist}^2_{(j+1)}$ and $\mathrm{dist}^2_{(j)}$ with an extra statistical error term. Then, we prove that $\eqref{condition:E_upperBound}$ and $\eqref{condition: upper bound of pieces}$ hold for $\bbm\Theta^{(j+1)}$ and verify the conditions for $\bbm\Theta^{(0)}$, thereby finishing the inductive argument. Finally, built on the recursive relationship, we establish the three upper bounds presented Theorem \ref{theorem: computational convergence}.

Specifically, in Steps 2-4, we focus on the update at $(j+1)$-th iterate to establish a recursive inequality between $\mathrm{dist}_{(j+1)}^2$ and $\mathrm{dist}_{(j)}^2$, given \eqref{RCG}, \eqref{condition:E_upperBound}, and \eqref{condition: upper bound of pieces}. In Step 2, for $\bm R_1$, we use the rule $\bm R_1^{(j+1)}=\bm R_1^{(j)}-\eta\nabla_{\bm R_1}\overline{\mathcal{L}}$ and upper bound the estimation error of $\bm R^{(j+1)}$ as
\begin{equation}
    \begin{split}
        \min_{\bm O_{1,r}\in\mathbb{O}^{r_1-d_1}}\fnorm{\bm R_1^{(j+1)}-\bm R_1^*\bm O_{1,r}}^2 \leq& \min_{\bm O_{1,r}\in\mathbb{O}^{r_1-d_1}}\fnorm{\bm R_1^{(j)}-\bm R_1^*\bm O_{1,r}}^2\\ 
        &-2\eta Q_{\bm R_1,1} -2\lambda_1\eta G_{\bm R_1}-2\lambda_2\eta T_{\bm R_1}+ \eta^2 Q_{\bm R_1,2}.
    \end{split}
\end{equation}
Similarly, we do the same for $\bm P_i^{(j)}, \bm C_i^{(j)}$ and $\bm D_i^{(j)}$ and sum them up to obtain the following first-stage upper bound, which is an informal version of \eqref{E upper bound: Q2Q1GT}.
\begin{equation}
\begin{aligned}
    \mathrm{dist}^2_{(j+1)}&\leq \mathrm{dist}^2_{(j)}+\eta^2Q_2
    -2\eta Q_1
    -2\lambda_1\eta G
    -2\lambda_2\eta T.
\end{aligned}
\end{equation}

In Steps 3.1-3.4, we give further the lower bounds for $Q_1$, $G$, and $T$, whose coefficients are $-2\eta$, and the upper bound for $Q_2$, whose coefficient is $\eta^2$. They lead to an intermediate upper bound in \eqref{E upper bound 1}, whose right hand side is negatively correlated to the estimation error of $\bm A_2^{(j)}\otimes \bm A_1^{(j)}$. Thus, in Step 3.5, we construct a lower bound with respect to $\mathrm{dist}^2_{(j)}$ and the regularization terms, $\mathcal{R}_1(\bbm\Theta^{(j)})^2$ and $\mathcal{R}_2({\bbm\Theta^{(j)}})^2$. Plugging it into the intermediate bound, we have the second-stage upper bound of $\mathrm{dist}_{(j+1)}^2$ as follows, which is an informal version of \eqref{E upper bound 2}
\begin{equation}
    \begin{aligned}
        \mathrm{dist}^2_{(j+1)}\leq&\left(1-2\eta\rho_1\right)\mathrm{dist}^2_{(j)}+\rho_2\xi^2\\
        &+\rho_3\fnorm{\nabla\widetilde{\mathcal{L}}(\bm A^{(j)})-\nabla\widetilde{\mathcal{L}}(\bm A^*)}^2+\rho_4\mathcal{R}_2\left({\bbm\Theta^{(j)}}\right)^2+\rho_5\mathcal{R}_1\left(\bbm\Theta^{(j)}\right)^2.
    \end{aligned}
\end{equation}
In the above inequality, $\{\rho_i\}_{i=1}^5$ are coefficients related to $\eta$, $\lambda_1$, $\lambda_2$, $\alpha$, $\beta$, $\phi$, and $\kappa$.

Next, in Step 4, we impose some sufficient conditions on $\eta$, $\lambda_1,\lambda_2$, $\alpha$, and $\beta$ for algorithm convergence. Then, we obtain the final recursive relationship \eqref{recursive of E}:
\begin{equation}
    \mathrm{dist}^2_{(j+1)}\leq (1-C_0\eta_0\alpha\beta^{-1}\kappa^{-2})\mathrm{dist}^2_{(j)}+C\kappa^4\alpha^{-2}\phi^{-10/3}\xi^2.
\end{equation}

To complete the inductive argument, in Step 5, we first verify the conditions \eqref{condition:E_upperBound} and \eqref{condition: upper bound of pieces} for $\bbm\Theta^{(0)}$. Given that \eqref{condition:E_upperBound} is an assumption of Theorem 3, it suffices to verify \eqref{condition: upper bound of pieces}. Then, we prove that both conditions hold for $\bbm\Theta^{(j+1)}$ and complete the induction. Finally, in Step 6, we establish the three upper bounds presented in Theorem \ref{theorem: computational convergence} based on the recursive relationship and the approximate equivalence of the three metrics, thereby completing the whole proof.\\

\noindent \textbf{\textit{Step 2.}} (\textbf{Upper bound of $\mathrm{dist}^2_{(j+1)}-\mathrm{dist}^2_{(j)}$})

\noindent By definition,
\begin{equation}
    \begin{split}
        & \mathrm{dist}^2_{(j+1)}\\
        = & \sum_{i=1,2}\Big\{ \|\bm{R}_i^{(j+1)}-\bm{R}_i^*\bm{O}_{1,r}^{(j+1)}\|_\text{F}^2 + \|\bm{P}_i^{(j+1)}-\bm{P}_i^*\bm{O}_{1,p}^{(j+1)}\|_\text{F}^2 + \|\bm{C}_i^{(j+1)}-\bm{C}^*\bm{O}_{1,c}^{(j+1)}\|_\text{F}^2\\
        & + \|\bm{D}_i^{(j+1)}-\bm{O}_{1,u}^{(j+1)\top}\bm{D}^*\bm{O}_{1,v}^{(j+1)}\|_\text{F}^2\Big\}\\
        \leq & \sum_{i=1,2}\Big\{ \|\bm{R}_i^{(j+1)}-\bm{R}_i^*\bm{O}_{1,r}^{(j)}\|_\text{F}^2 + \|\bm{P}_i^{(j+1)}-\bm{P}_i^*\bm{O}_{1,p}^{(j)}\|_\text{F}^2 + \|\bm{C}_i^{(j+1)}-\bm{C}^*\bm{O}_{1,c}^{(j)}\|_\text{F}^2\\
        & + \|\bm{D}_i^{(j+1)}-\bm{O}_{1,u}^{(j)\top}\bm{D}^*\bm{O}_{1,v}^{(j)}\|_\text{F}^2\Big\}.
    \end{split}
\end{equation}
In the following substeps, we derive upper bounds for the errors of $\bm C,\bm R,\bm P$ and $\bm D$ separately. Then, we combine them to give a first-stage upper bound \eqref{E upper bound: Q2Q1GT}.

~\newline
\noindent\textit{Step 2.1} (Upper bounds for the errors of $\bm{R}_i$ and $\bm{P}_i$)

\noindent By definition, we have
\begin{equation}\label{eq:R_split}
    \begin{aligned}
        & \|\bm{R}_1^{(j+1)}-\bm{R}_1^*\bm{O}_{1,r}^{(j)}\|_\text{F}^2 \\
        = & \Big\|\bm{R}_1^{(j)}-\bm{R}_1^*\bm{O}_{1,r}^{(j)} -\eta\Big(\nabla_{\bm{A}_1}\widetilde{\mathcal{L}}(\bm{A}^{(j)})[\bm{C}_1^{(j)}~\bm{P}_1^{(j)}][\bm{D}_{1,21}^{(j)}~\bm{D}_{1,22}^{(j)}]^\top\\
        &+\lambda_1\left(\fnorm{\bm A_1^{(j)}}^2-\fnorm{\bm A_2^{(j)}}^2\right)\bm A_i^{(j)}[\bm C_i^{(j)}~\bm P_i^{(j)}][\bm D_{1,21}^{(j)}~\bm D_{1,22}^{(j)}]^\top \\
         & + \lambda_2\bm{R}_1^{(j)}(\bm{R}_1^{(j)\top}\bm{R}_1^{(j)}-b^2\bm{I}_{r_1-d_1}) + \lambda_2\bm{C}_1^{(j)}\bm{C}_1^{(j)\top}\bm{R}_1^{(j)} \Big)\Big\|_\text{F}^2\\
        = & \|\bm{R}_1^{(j)}-\bm{R}_1^*\bm{O}_{1,r}^{(j)}\|_\text{F}^2 + \eta^2\Big\|\nabla_{\bm{A}_1}\widetilde{\mathcal{L}}(\bm{A}^{(j)})[\bm{C}_1^{(j)}~\bm{P}_1^{(j)}][\bm{D}_{1,21}^{(j)}~\bm{D}_{1,22}^{(j)}]^\top \\
        &+\lambda_1\left(\fnorm{\bm A_1^{(j)}}^2-\fnorm{\bm A_2^{(j)}}^2\right)\bm A_i^{(j)}[\bm C_i^{(j)}~\bm P_i^{(j)}][\bm D_{1,21}^{(j)}~\bm D_{1,22}^{(j)}]^\top\\
        & + \lambda_2\bm{R}_1^{(j)}(\bm{R}_1^{(j)\top}\bm{R}_1^{(j)}-b^2\bm{I}_{r_1-d_1}) + \lambda_2\bm{C}_1^{(j)}\bm{C}_1^{(j)\top}\bm{R}_1^{(j)}\Big\|_\text{F}^2\\
        - & 2\eta\Big\langle\bm{R}_1^{(j)}-\bm{R}_1^*\bm{O}_{1,r}^{(j)},\nabla_{\bm{A}_1}\widetilde{\mathcal{L}}(\bm{A}^{(j)})[\bm{C}_1^{(j)}~\bm{P}_1^{(j)}][\bm{D}_{1,21}^{(j)}~\bm{D}_{1,22}^{(j)}]^\top\Big\rangle\\
        - & 2\lambda_1\eta\inner{\bm{R}_1^{(j)}-\bm{R}_1^*\bm{O}_{1,r}^{(j)}}{\left(\fnorm{\bm A_1^{(j)}}^2-\fnorm{\bm A_2^{(j)}}^2\right)\bm A_i^{(j)}[\bm C_i^{(j)}~\bm P_i^{(j)}][\bm D_{1,21}^{(j)}~\bm D_{1,22}^{(j)}]^\top}\\
        - & 2\lambda_2\eta\Big\langle\bm{R}_1^{(j)}-\bm{R}_1^*\bm{O}_{1,r}^{(j)}, \bm{R}_1^{(j)}(\bm{R}_1^{(j)\top}\bm{R}_1^{(j)}-b^2\bm{I}_{r_1-d_1})\Big\rangle\\
        - & 2\lambda_2\eta\Big\langle\bm{R}_1^{(j)}-\bm{R}_1^*\bm{O}_{1,r}^{(j)}, \bm{C}_1^{(j)}\bm{C}_1^{(j)\top}\bm{R}_1^{(j)}\Big\rangle\\
        :=&\fnorm{\bm{R}_1^{(j)}-\bm{R}_1^*\bm{O}_{1,r}^{(j)}}^2+\eta^2 I_{\bm R_1,2}-2\eta I_{\bm R_1,1}.
    \end{aligned}
{}\end{equation}

For $I_{\bm R_1,2}$ (the second term in \eqref{eq:R_split}), by Cauchy's inequality,
\begin{equation}
    \begin{aligned}
        I_{\bm R_1,2}=& \Big\|\nabla_{\bm{A}_1}\widetilde{\mathcal{L}}(\bm{A}^{(j)})[\bm{C}_1^{(j)}~\bm{P}_1^{(j)}][\bm{D}_{1,21}^{(j)}~\bm{D}_{1,22}^{(j)}]^\top\\
        &+\lambda_1\left(\fnorm{\bm A_1^{(j)}}^2-\fnorm{\bm A_2^{(j)}}^2\right)\bm A_i^{(j)}[\bm C_i^{(j)}~\bm P_i^{(j)}][\bm D_{1,21}^{(j)}~\bm D_{1,22}^{(j)}]^\top\\
        &+ \lambda_2\bm{R}_1^{(j)}(\bm{R}_1^{(j)\top}\bm{R}_1^{(j)}-b^2\bm{I}_{r_1-d_1}) + \lambda_2\bm{C}_1^{(j)}\bm{C}_1^{(j)\top}\bm{R}_1^{(j)}\Big\|_\text{F}^2 \\ 
        \leq & 4 \|\nabla_{\bm{A}_1}\widetilde{\mathcal{L}}(\bm{A}^{(j)})[\bm{C}_1^{(j)}~\bm{P}_1^{(j)}][\bm{D}_{1,21}^{(j)}~\bm{D}_{1,22}^{(j)}]^\top\|_\text{F}^2\\
        & + 4\lambda_1^2\left(\fnorm{\bm A_1^{(j)}}^2-\fnorm{\bm A_2^{(j)}}^2\right)^2\fnorm{\bm A_i^{(j)}[\bm C_i^{(j)}~\bm P_i^{(j)}][\bm D_{1,21}^{(j)}~\bm D_{1,22}^{(j)}]^\top}^2\\
        & + 4\lambda_2^2\|\bm{R}_1^{(j)}(\bm{R}_1^{(j)\top}\bm{R}_1^{(j)}-b^2\bm{I}_{r_1-d_1})\|_\text{F}^2+4\lambda_2^2\|\bm{C}_1^{(j)}\bm{C}_1^{(j)\top}\bm{R}_1^{(j)}\|_\text{F}^2,
    \end{aligned}
\end{equation}
where the first term in the RHS of $I_{\bm R_1,2}$ can be bounded by
\begin{equation}
    \begin{split}
        & \|\nabla_{\bm{A}_1}\widetilde{\mathcal{L}}(\bm{A}^{(j)})[\bm{C}_1^{(j)}~\bm{P}_1^{(j)}][\bm{D}_{1,21}^{(j)}~\bm{D}_{1,22}^{(j)}]^\top\|_\text{F}^2\\
        = & \|\text{mat}(\mathcal{P}(\nabla\widetilde{\mathcal{L}}(\bm{A}^{(j)}))^\top\text{vec}(\bm{A}_2^{(j)}))[\bm{C}_1^{(j)}~\bm{P}_1^{(j)}][\bm{D}_{1,21}^{(j)}~\bm{D}_{1,22}^{(j)}]^\top\|_\text{F}^2 \\
        \leq & 2\|\text{mat}(\mathcal{P}(\nabla\widetilde{\mathcal{L}}(\bm{A}^{*}))^\top\text{vec}(\bm{A}_2^{(j)}))[\bm{C}_1^{(j)}~\bm{P}_1^{(j)}][\bm{D}_{1,21}^{(j)}~\bm{D}_{1,22}^{(j)}]^\top\|_\text{F}^2 \\
        & + 2\|\text{mat}(\mathcal{P}(\nabla\widetilde{\mathcal{L}}(\bm{A}^{(j)})-\nabla\widetilde{\mathcal{L}}(\bm{A}^{*}))^\top\text{vec}(\bm{A}_2^{(j)}))[\bm{C}_1^{(j)}~\bm{P}_1^{(j)}][\bm{D}_{1,21}^{(j)}~\bm{D}_{1,22}^{(j)}]^\top\|_\text{F}^2.\\
    \end{split}
\end{equation}
By duality of the Frobenius norm and definition of $\xi$, we have
\begin{equation}
    \begin{split}
        & \Big\|\text{mat}(\mathcal{P}(\nabla\widetilde{\mathcal{L}}(\bm{A}^*))^\top\text{vec}(\bm{A}_2^{(j)}))[\bm{C}_1^{(j)}~\bm{P}_1^{(j)}][\bm{D}_{1,21}^{(j)}~\bm{D}_{1,22}^{(j)}]^\top\Big\|_\text{F}\\
        = & \sup_{\|\bm{W}\|_\text{F}=1}\left\langle\nabla\widetilde{\mathcal{L}}(\bm{A}^*),\bm{A}_2^{(j)}\otimes [\bm{C}_1^{(j)}~\bm{W}]\begin{bmatrix} \bm{0} & \bm{0} \\ \bm{D}_{1,21}^{(j)} & \bm{D}_{1,22}^{(j)} \end{bmatrix} [\bm{C}_1^{(j)}~\bm{P}_1^{(j)}]^\top \right\rangle\\
        \leq & \|\bm{A}_2^{(j)}\|_\text{F}\cdot\|[\bm{D}_{1,21}^{(j)}~\bm{D}_{1,22}^{(j)}]\|_\text{op}\cdot\|[\bm{C}_{1}^{(j)}~\bm{P}_{1}^{(j)}]\|_\text{op}\cdot\xi(r_1,r_2,d_1,d_2).
    \end{split}
\end{equation}
Thus, the first term can be bounded by
\begin{equation}
    \begin{split}
        & \|\nabla_{\bm{A}_1}\widetilde{\mathcal{L}}(\bm{A}^{(j)})[\bm{C}_1^{(j)}~\bm{P}_1^{(j)}][\bm{D}_{1,21}^{(j)}~\bm{D}_{1,22}^{(j)}]^\top\|_\text{F}^2\\
        \leq & 2\|\text{mat}(\mathcal{P}(\nabla\widetilde{\mathcal{L}}(\bm{A}^{*}))^\top\text{vec}(\bm{A}_2^{(j)}))[\bm{C}_1^{(j)}~\bm{P}_1^{(j)}][\bm{D}_{1,21}^{(j)}~\bm{D}_{1,22}^{(j)}]^\top\|_\text{F}^2 \\
        & + 2\|\text{mat}(\mathcal{P}(\nabla\widetilde{\mathcal{L}}(\bm{A}^{(j)})-\nabla\widetilde{\mathcal{L}}(\bm{A}^{*}))^\top\text{vec}(\bm{A}_2^{(j)}))\|_\text{F}^2\cdot\|[\bm{C}_1^{(j)}~\bm{P}_1^{(j)}]\|_\text{op}^2\cdot\|[\bm{D}_{1,21}^{(j)}~\bm{D}_{1,22}^{(j)}]\|_\text{F}^2 \\
        \leq & 4b^{-2}\phi^4\cdot[\xi^2(r_1,r_2,d_1,d_2)+\|\nabla\widetilde{\mathcal{L}}(\bm{A}^{(j)})-\nabla\widetilde{\mathcal{L}}(\bm{A}^*)\|_\text{F}^2].
    \end{split}
\end{equation}
The second term can be bounded by
\begin{equation}
    \begin{aligned}
        & \lambda_1^2\left(\fnorm{\bm A_1^{(j)}}^2-\fnorm{\bm A_2^{(j)}}^2\right)^2\fnorm{\bm A_i^{(j)}[\bm C_i^{(j)}~\bm P_i^{(j)}][\bm D_{1,21}^{(j)}~\bm D_{1,22}^{(j)}]^\top}^2 \\ 
        & \leq 2\lambda_1^2b^{-2}\phi^4\left(\fnorm{\bm A_1^{(j)}}^2-\fnorm{\bm A_2^{(j)}}^2\right)^2.
    \end{aligned}
\end{equation}
The third term can be bounded by
\begin{equation}
    \begin{split}
        & \lambda_2^2\|\bm{R}_1^{(j)}(\bm{R}_1^{(j)\top}\bm{R}_1^{(j)}-b^2\bm{I}_{r_1-d_1})\|_\text{F}^2 \leq \lambda_2^2\|\bm{R}_1^{(j)}\|_\text{op}^2\|\bm{R}_1^{(j)\top}\bm{R}_1^{(j)}-b^2\bm{I}_{r_1-d_1}\|_\text{F}^2\\
        \leq & 2\lambda_2^2b^2\|\bm{R}_1^{(j)\top}\bm{R}_1^{(j)}-b^2\bm{I}_{r_1-d_1}\|_\text{F}^2,
    \end{split}
\end{equation}
and the fourth term can be bounded by
\begin{equation}
    \lambda_2^2\|\bm{C}_1^{(j)}\bm{C}_1^{(j)\top}\bm{R}_1^{(j)}\|_\text{F}^2 \leq 2\lambda_2^2b^2\|\bm{C}_1^{(j)\top}\bm{R}_1^{(j)}\|_\text{F}^2.
\end{equation}
Combining these four upper bounds, we have
\begin{equation}\label{eq:IR12_bound}
    \begin{split}
        I_{\bm R_1,2}
        \leq& 16b^{-2}\phi^4\left(\xi^2(r_1,r_2,d_1,d_2)+\|\nabla\widetilde{\mathcal{L}}(\bm{A}^{(j)})-\nabla\widetilde{\mathcal{L}}(\bm{A}^{*})\|_\text{F}^2\right)\\
        &+ 8\lambda_2^2b^2\left(\|\bm{R}_1^{(j)\top}\bm{R}_1^{(j)}-b^2\bm{I}_{r_1-d_1}\|_\text{F}^2 + \|\bm{C}_1^{(j)\top}\bm{R}_1^{(j)}\|_\text{F}^2\right)\\
        &+8\lambda_1^2b^{-2}\phi^4\left(\fnorm{\bm A_1^{(j)}}^2-\fnorm{\bm A_2^{(j)}}^2\right)^2\\
        :=&Q_{\bm R_1,2}.
    \end{split}
\end{equation}

For $I_{\bm R_1,1}$ defined in \eqref{eq:R_split}, rewrite its first term:
\begin{equation}
    \begin{split}
        & \left\langle\bm{R}_1^{(j)}-\bm{R}_1^*\bm{O}_{1,r}^{(j)},\nabla_{\bm{A}_1}\widetilde{\mathcal{L}}(\bm{A}^{(j)})[\bm{C}_1^{(j)}~\bm{P}_1^{(j)}][\bm{D}_{1,21}^{(j)}~\bm{D}_{1,22}^{(j)}]^\top\right\rangle \\
        = & \left\langle\bm{R}_1^{(j)}[\bm{D}_{1,21}^{(j)}~\bm{D}_{1,22}^{(j)}][\bm{C}_1^{(j)}~\bm{P}_1^{(j)}]^\top-\bm{R}_1^*\bm{O}_{1,r}^{(j)}[\bm{D}_{1,21}^{(j)}~\bm{D}_{1,22}^{(j)}][\bm{C}_1^{(j)}~\bm{P}_1^{(j)}]^\top,\nabla_{\bm{A}_1}\widetilde{\mathcal{L}}(\bm{A}^{(j)})\right\rangle \\
        := & \left\langle\bm{A}_{1,r}^{(j)},\nabla_{\bm{A}_1}\widetilde{\mathcal{L}}(\bm{A}^{(j)})\right\rangle\\
        =&\inner{\bm{A}_{1,r}^{(j)}}{\text{mat}(\mathcal{P}(\nabla\widetilde{\mathcal{L}}(\bm{A}^{(j)}))^\top \text{vec}(\bm{A}_2^{(j)}))}\\
        =&\inner{\V{\bm{A}_{1,r}^{(j)}}}{\mathcal{P}(\nabla\widetilde{\mathcal{L}}(\bm{A}^{(j)}))^\top\V{\bm{A}_2^{(j)}}}\\
        =&\inner{\V{\bm A_2^{(j)}}\V{\bm{A}_{1,r}^{(j)}}^\top}{\mathcal{P}(\nabla\widetilde{\mathcal{L}}(\bm{A}^{(j)}))}\\
        =&\inner{\bm A_2^{(j)}\otimes\bm A_{1,r}^{(j)}}{\nabla\widetilde{\mathcal{L}}(\bm{A}^{(j)})}\\
        :=&Q_{\bm R_1,1}
    \end{split}
\end{equation}
For the second term of $I_{\bm R_1,1}$, define
\begin{equation}
    \label{eq:G_R1}
    G_{\bm R_1}:=\inner{\bm{R}_1^{(j)}-\bm{R}_1^*\bm{O}_{1,r}^{(j)}}{\left(\fnorm{\bm A_1^{(j)}}^2-\fnorm{\bm A_2^{(j)}}^2\right)\bm A_i^{(j)}[\bm C_i^{(j)}~\bm P_i^{(j)}][\bm D_{1,21}^{(j)}~\bm D_{1,22}^{(j)}]^\top}.    
\end{equation}    
    
For the third and fourth terms of $I_{\bm R_1,1}$, define
\begin{equation}
    T_{\bm R_1} := \left\langle\bm{R}_1^{(j)}-\bm{R}_1^*\bm{O}_{1,r}^{(j)},\bm{R}_1^{(j)}(\bm{R}_1^{(j)\top}\bm{R}_1^{(j)}-b^2\bm{I}_{r_1-d_1}) + \bm{C}_1^{(j)}\bm{C}_1^{(j)\top}\bm{R}_1^{(j)}\right\rangle.
\end{equation}
Therefore, we can rewrite $I_{\bm R_1,1}$ as
\begin{equation}
    I_{\bm R_1,1}=Q_{\bm R_1,1}+\lambda_1 G_{\bm R_1}+\lambda_2T_{\bm R_1}.
\end{equation}
Combining the bound for the $I_{\bm R_1,2}$ in \eqref{eq:IR12_bound}, we have
\begin{equation}
    \|\bm{R}_1^{(j+1)}-\bm{R}_1^*\bm{O}_{1,r}^{(j)}\|_\text{F}^2 - \|\bm{R}_1^{(j)}-\bm{R}_1^*\bm{O}_{1,r}^{(j)}\|_\text{F}^2 \leq -2\eta Q_{\bm R_1,1} -2\lambda_1\eta G_{\bm R_1}-2\lambda_2\eta T_{\bm R_1}+ \eta^2 Q_{\bm R_1,2}.
\end{equation}
Similarly, for $\bm{R}_2^{(j+1)}$, $\bm{P}_1^{(j+1)}$, and $\bm{P}_2^{(j+1)}$, we can define the similar quantities and show that
\begin{equation}\label{eq:R_P_bound}
    \begin{split}
        &\|\bm{R}_2^{(j+1)}-\bm{R}_2^*\bm{O}_{2,r}^{(j)}\|_\text{F}^2 - \|\bm{R}_2^{(j)}-\bm{R}_2^*\bm{O}_{2,r}^{(j)}\|_\text{F}^2 \leq -2\eta Q_{\bm R_2,1} -2\lambda_1\eta G_{\bm R_2}-2\lambda_2\eta T_{\bm R_2}+ \eta^2 Q_{\bm R_2,2},\\
        &\|\bm{P}_1^{(j+1)}-\bm{P}_1^*\bm{O}_{1,p}^{(j)}\|_\text{F}^2 - \|\bm{P}_1^{(j)}-\bm{P}_1^*\bm{O}_{1,p}^{(j)}\|_\text{F}^2 \leq -2\eta Q_{\bm P_1,1} -2\lambda_1\eta G_{\bm P_1}-2\lambda_2\eta T_{\bm P_1}+ \eta^2 Q_{\bm P_1,2},\\
        \text{and  }&\|\bm{P}_2^{(j+1)}-\bm{P}_2^*\bm{O}_{2,p}^{(j)}\|_\text{F}^2 - \|\bm{P}_2^{(j)}-\bm{P}_2^*\bm{O}_{2,p}^{(j)}\|_\text{F}^2 \leq -2\eta Q_{\bm P_2,1} -2\lambda_1\eta G_{\bm P_2}-2\lambda_2\eta T_{\bm P_2}+ \eta^2 Q_{\bm P_2,2}.
    \end{split}
\end{equation}

~\newline
\noindent\textit{Step 2.2} (Upper bound for the errors of $\bm C_i$)

\noindent For $\bm C_1$, note that
\begin{equation}
    \nabla_{\bm{C}_i}\widetilde{\mathcal{L}}^{(j)}  = \nabla_{\bm{A}_i}\widetilde{\mathcal{L}}^{(j)}(\bm{C}_i^{(j)}\bm{D}_{i,11}^{(j)\top}+\bm{P}_i^{(j)}\bm{D}_{i,12}^{(j)\top}) + \nabla_{\bm{A}_i}\widetilde{\mathcal{L}}^{(j)\top}(\bm{C}_i^{(j)}\bm{D}_{i,11}^{(j)}+\bm{R}_i^{(j)}\bm{D}_{i,21}^{(j)}),
\end{equation}
and then,
\begin{equation}\label{eq:C_split}
\begin{aligned}
& \left\|\bm C_1^{(j+1)}-\bm C_1^* \bm O_{1,c}^{(j)}\right\|_{\mathrm{F}}^2 \\
= & \left\|\bm C_1^{(j)}-\bm C_1^* \bm O_{1,c}^{(j)}-\eta\Big\{\nabla_{\bm C_1} \mathcal{L}^{(j)}+\lambda_1\left(\fnorm{\bm A_1^{(j)}}^2-\fnorm{\bm A_2^{(j)}}^2\right)\bm A_i^{(j)}[\bm C_i^{(j)}~\bm P_i^{(j)}][\bm D_{i,11}^{(j)}~\bm D_{i,12}^{(j)}]^\top\right.\\
&+\lambda_1\left(\fnorm{\bm A_1^{(j)}}^2-\fnorm{\bm A_2^{(j)}}^2\right)\bm A_i^{(j)^\top}[\bm C_i^{(j)}~\bm R_i^{(j)}][\bm D_{i,11}^{(j)^\top}~\bm D_{i,21}^{(j)\top}]^\top\\
&\left.+2 \lambda_2 \bm C_1^{(j)}\left(\bm C_1^{(j) \top} \bm C_1^{(j)}-b^2 \bm I_{d_1}\right)+\lambda_2 \bm R_1^{(j)} \bm R_1^{(j) \top} \bm C_1^{(j)}+\lambda_2 \bm P_1^{(j)} \bm P_1^{(j) \top} \bm C_1^{(j)}\Big\} \right\|_\mathrm{F}^2\\
= & \left\|\bm C_1^{(j)}-\bm C_1^* \bm O_{1,c}^{(j)}\right\|_{\mathrm{F}}^2 \\
+& \eta^2\bigg\|\nabla_{\bm C_1} \mathcal{L}^{(j)}+\lambda_1\left(\fnorm{\bm A_1^{(j)}}^2-\fnorm{\bm A_2^{(j)}}^2\right)\bm A_i^{(j)}[\bm C_i^{(j)}~\bm P_i^{(j)}][\bm D_{i,11}^{(j)}~\bm D_{i,12}^{(j)}]^\top\\
&\quad +\lambda_1\left(\fnorm{\bm A_1^{(j)}}^2-\fnorm{\bm A_2^{(j)}}^2\right)\bm A_i^{(j)^\top}[\bm C_i^{(j)}~\bm R_i^{(j)}][\bm D_{i,11}^{(j)^\top}~\bm D_{i,21}^{(j)\top}]^\top\\
&\quad +2 \lambda_2 \bm C_1^{(j)}\left(\bm C_1^{(j) \top} \bm C_1^{(j)}-b^2 \bm I_{d_1}\right)+\lambda_2 \bm R_1^{(j)} \bm R_1^{(j) \top} \bm C_1^{(j)}+\lambda_2 \bm P_1^{(j)} \bm P_1^{(j) \top} \bm C_1^{(j)}\bigg\|_{\mathrm{F}}^2 \\
-& 2 \eta\left\langle\bm C_1^{(j)}-\bm C_1^* \bm O_{1,c}^{(j)}, \nabla_{\bm C_1} \mathcal{L}^{(j)}\right\rangle \\
-&2\lambda_1\eta\inner{\bm C_1^{(j)}-\bm C_1^* \bm O_{1,c}^{(j)}}{\left(\fnorm{\bm A_1^{(j)}}^2-\fnorm{\bm A_2^{(j)}}^2\right)\bm A_i^{(j)}[\bm C_i^{(j)}~\bm P_i^{(j)}][\bm D_{i,11}^{(j)}~\bm D_{i,12}^{(j)}]^\top}\\
-&2\lambda_1\eta\inner{\bm C_1^{(j)}-\bm C_1^* \bm O_{1,c}^{(j)}}{\left(\fnorm{\bm A_1^{(j)}}^2-\fnorm{\bm A_2^{(j)}}^2\right)\bm A_i^{(j)^\top}[\bm C_i^{(j)}~\bm R_i^{(j)}][\bm D_{i,11}^{(j)^\top}~\bm D_{i,21}^{(j)\top}]^\top}\\
-& 2 \lambda_2 \eta\left\langle\bm C_1^{(j)}-\bm C_1^* \bm O_{1,c}^{(j)}, 2 \bm C_1^{(j)}\left(\bm C_1^{(j) \top} \bm C_1^{(j)}-b^2 \bm I_{d_1}\right)\right\rangle \\
-& 2 \lambda_2 \eta\left\langle\bm C_1^{(j)}-\bm C_1^* \bm O_{1,c}^{(j)}, \bm R_1^{(j)} \bm R_1^{(j) \top} \bm C_1^{(j)}+\bm P_1^{(j)} \bm P_1^{(j) \top} \bm C_1^{(j)}\right\rangle\\
:=&\left\|\bm C_1^{(j)}-\bm C_1^* \bm O_{1,c}^{(j)}\right\|_{\mathrm{F}}^2+\eta^2I_{\bm C_1,2}-2\eta I_{\bm C_1,1}.
\end{aligned}
\end{equation}

For $I_{\bm C_1,2}$ in \eqref{eq:C_split},
$$\begin{aligned} I_{\bm C_1,2}\leq &5\left\|\nabla_{\bm C_1} \mathcal{L}^{(j)}\right\|_{\mathrm{F}}^2+40\lambda_1^2b^{-2}\phi^4\left(\fnorm{\bm A_1^{(j)}}^2-\fnorm{\bm A_2^{(j)}}^2\right)^2\\
&+40\lambda_2^2b^2\left\|\bm C_1^{(j) \top} \bm C_1^{(j)}-b^2 \bm I_{d_1}\right\|_{\mathrm{F}}^2+10\lambda_2^2b^2\left\|\bm R_1^{(j) \top} \bm C_1^{(j)}\right\|_{\mathrm{F}}^2+10\lambda_2^2b^2\left\|\bm P_1^{(j) \top} \bm C_1^{(j)}\right\|_{\mathrm{F}}^2.\end{aligned}$$
The first term of the RHS can be bounded as
$$
\begin{aligned}
&\left\|\nabla_{\bm C_1} \mathcal{L}^{(j)}\right\|_{\mathrm{F}}^2 \\
& \leq 2\left\|\nabla_{\bm A_1} \widetilde{\mathcal{L}}(\bm A^{(j)})\left[\bm C_1^{(j)}~ \bm P_1^{(j)}\right]\left[\bm D_{1,11}^{(j)}~ \bm D_{1,12}^{(j)}\right]^{\top}\right\|_{\mathrm{F}}^2\\
&+2\left\|\nabla_{\bm A_1} \widetilde{\mathcal{L}}(\bm A^{(j)})^\top\left[\bm C_1^{(j)}~ \bm R_1^{(j)}\right]\left[\bm D_{1,11}^{(j)\top} ~\bm D_{1,21}^{(j)\top}\right]^{\top}\right\|_{\mathrm{F}}^2 \\
&=2\left\| \text{mat}\left(\mathcal{P}\left(\nabla \widetilde{\mathcal{L}}\left(\bm A^{(j)}\right)\right)^{\top} \V{\bm A_2^{(j)}\right)}\left[\bm C_1^{(j)}~  \bm P_1^{(j)}\right]\left[\bm D_{1,11}^{(j)}~ \bm D_{1,12}^{(j)}\right]^{\top}\right\|_\mathrm{F}^2\\
&+2 \left\|\text{mat}\left(\mathcal{P}\left(\nabla \widetilde{\mathcal{L}}\left(\bm A^{(j)}\right)\right)^{\top} \V{\bm A_2^{(j)}}\right)^{\top}\left[\bm C_1^{(j)}~ \bm R_1^{(j)}\right]\left[\bm D_{1,11}^{(j)\top}~  \bm D_{1,21}^{(j)\top}\right]^{\top} \right\|_\mathrm{F}^2\\
& \leq 4\left\|\text{mat}\left(\mathcal{P}\left(\nabla \widetilde{\mathcal{L}}\left(\bm A^*\right)\right)^{\top} \V{\bm A_2^{(j)}}\right)\left[\bm C_1^{(j)}~ \bm P_1^{(j)}\right]\left[\bm D_{1,11}^{(j)}~ \bm D_{1,12}^{(j)}\right]^{\top}\right\|_{\mathrm{F}}^2\\
&+4\left\|\text{mat}\left(\mathcal{P}\left(\nabla \widetilde{\mathcal{L}}\left(\bm A^*\right)\right)^{\top} \V{\bm A_2^{(j)}\right)}^\top\left[\bm C_1^{(j)}~ \bm R_1^{(j)}\right]\left[\bm D_{1,11}^{(j)\top}~ \bm D_{1,21}^{(j)\top}\right]^{\top}\right\|_{\mathrm{F}}^2 \\
&+4\left\|\text{mat}\left(\mathcal{P}\left(\nabla\widetilde{\mathcal{L}}(\bm A^{(j)})-\nabla\widetilde{\mathcal{L}}(\bm A^*)\right)^\top\V{\bm A_2^{(j)}}\right)\left[\bm C_1^{(j)}~ \bm P_1^{(j)}\right]\left[\bm D_{1,11}^{(j)}~ \bm D_{1,12}^{(j)}\right]^{\top}\right\|_{\mathrm{F}}^2 \\
&+4\left\|\text{mat}\left(\mathcal{P}\left(\nabla\widetilde{\mathcal{L}}(\bm A^{(j)})-\nabla\widetilde{\mathcal{L}}(\bm A^*)\right)^\top\V{\bm A_2^{(j)}}\right)^{\top}\left[\bm C_1^{(j)}~ \bm R_1^{(j)}\right]\left[\bm D_{1,11}^{(j)\top}~ \bm D_{1,21}^{(j)\top}\right]^{\top}\right\|_{\mathrm{F}}^2 \\
& \leq 16b^{-2} \phi^4\xi^2(r_1,r_2,d_1,d_2)+16b^{-2} \phi^4\left\|\nabla\widetilde{\mathcal{L}}(\bm A^{(j)})-\nabla\widetilde{\mathcal{L}}(\bm A^*)\right\|_\text{F}^2\\
& = 16b^{-2} \phi^4\left(\xi^2(r_1,r_2,d_1,d_2)+\left\|\nabla\widetilde{\mathcal{L}}(\bm A^{(j)})-\nabla\widetilde{\mathcal{L}}(\bm A^*)\right\|_\text{F}^2\right).
\end{aligned}
$$
Thus, $I_{\bm C_1,2}$ can be upper bounded by
\begin{equation}
    \label{eq:I_C12_bound}
    \begin{aligned}
        I_{\bm C_1,2} \leq& 80b^{-2} \phi^4\left(\xi^2(r_1,r_2,d_1,d_2)+\left\|\nabla\widetilde{\mathcal{L}}(\bm A^{(j)})-\nabla\widetilde{\mathcal{L}}(\bm A^*)\right\|_\text{F}^2\right)\\
        &+40\lambda_1^2b^{-2}\phi^4\left(\fnorm{\bm A_1^{(j)}}^2-\fnorm{\bm A_2^{(j)}}^2\right)^2\\
        &+40\lambda_1^2b^2\left\|\bm C_1^{(j) \top} \bm C_1^{(j)}-b^2 \bm I_{d_1}\right\|_{\mathrm{F}}^2+10a^2b^2\left\|\bm R_1^{(j) \top} \bm C_1^{(j)}\right\|_{\mathrm{F}}^2+10\lambda_1^2b^2\left\|\bm P_1^{(j) \top} \bm C_1^{(j)}\right\|_{\mathrm{F}}^2\\
        :=&Q_{\bm C_1,2}.
    \end{aligned}    
\end{equation}

For $I_{\bm C_1,1}$ in \eqref{eq:C_split}, similarly to $\bm R_1$ step, we rewrite its first term as the following:
$$
\begin{aligned}
& \left\langle\bm C_1^{(j)}-\bm C_1^* \bm O_{1,c}^{(j)}, \nabla_{\bm C_1} \mathcal{L}^{(j)}\right\rangle \\
= & \left\langle\bm R_1^{(j)} \bm D_{1,21}^{(j)} \bm C_1^{(j) \top}-\bm R_1^{(j)} \bm D_{1,21}^{(j)} \bm O_{1,c}^{(j) \top} \bm C_1^{* \top}, \nabla_{\bm A_1} \widetilde{\mathcal{L}}\left(\mathbf{A}^{(j)}\right)\right\rangle \\
+ & \left\langle\bm C_1^{(j)} \bm D_{1,12}^{(j)} \bm P_1^{(j) \top}-\bm C_1^* \bm O_{1,c}^{(j)} \bm D_{1,12}^{(j)} \bm P_1^{(j) \top}, \nabla_{\bm A_1} \widetilde{\mathcal{L}}\left(\mathbf{A}^{(j)}\right)\right\rangle \\
+ & \left\langle\bm C_1^{(j)} \bm D_{1,11}^{(j)} \bm C_1^{(j) \top}-\bm C_1^* \bm O_{1,c}^{(j)} \bm D_{1,11}^{(j)} \bm C_1^{(j) \top}, \nabla_{\bm A_1} \widetilde{\mathcal{L}}\left(\mathbf{A}^{(j)}\right)\right\rangle \\
+ & \left\langle\bm C_1^{(j)} \bm D_{1,11}^{(j)} \bm C_1^{(j) \top}-\bm C_1^{(j)} \bm D_{1,11}^{(j)} \bm O_{1,c}^{(j)} \bm C_1^{* \top}, \nabla_{\bm A_1} \widetilde{\mathcal{L}}\left(\mathbf{A}^{(j)}\right)\right\rangle \\
:= & \left\langle\mathbf{A}_{\bm C_1}^{(j)}, \nabla_{\bm A_1} \widetilde{\mathcal{L}}\left(\mathbf{A}^{(j)}\right)\right\rangle\\
=&\left\langle\bm A_2^{(j)}\otimes \bm A_{\bm C_1}^{(j)},\nabla\widetilde{\mathcal{L}}(\bm A^{(j)})\right\rangle\\
:=&Q_{\bm C_1,1}.
\end{aligned}
$$
For the last three terms, denote
$$
\begin{aligned}G_{\bm C_1}:=&\inner{\bm C_1^{(j)}-\bm C_1^* \bm O_{1,c}^{(j)}}{\left(\fnorm{\bm A_1^{(j)}}^2-\fnorm{\bm A_2^{(j)}}^2\right)\bm A_i^{(j)}[\bm C_i^{(j)}~\bm P_i^{(j)}][\bm D_{i,11}^{(j)}~\bm D_{i,12}^{(j)}]^\top}\\
+&\inner{\bm C_1^{(j)}-\bm C_1^* \bm O_{1,c}^{(j)}}{\left(\fnorm{\bm A_1^{(j)}}^2-\fnorm{\bm A_2^{(j)}}^2\right)\bm A_i^{(j)\top}[\bm C_i^{(j)}~\bm R_i^{(j)}][\bm D_{i,11}^{(j)\top}~\bm D_{i,21}^{(j)\top}]^\top}
\end{aligned}$$
and
$$
T_{\bm C_1}:=\left\langle\bm C_1^{(j)}-\bm C_1^* \bm O_{1,c}^{(j)}, 2 \bm C_1^{(j)}\left(\bm C_1^{(j) \top} \bm C_1^{(j)}-b^2 \bm I_{d_1}\right)+\bm R_1^{(j)} \bm R_1^{(j) \top} \bm C_1^{(j)}+\bm P_1^{(j)} \bm P_1^{(j) \top} \bm C_1^{(j)}\right\rangle .
$$
Therefore, we can rewrite $I_{\bm C_1,1}$ as
$$I_{\bm C_1,1}=Q_{\bm C_1,1}+\lambda_1 G_{\bm C_1}+\lambda_2 T_{\bm C_1}.$$

Combining these bounds for $I_{\bm C_1,2}$ in \eqref{eq:I_C12_bound}, we have
$$
\left\|\bm C_1^{(j+1)}-\bm C_1^* \bm O_{1,c}^{(j)}\right\|_{\mathrm{F}}^2-\left\|\bm C_1^{(j)}-\bm C_1^* \bm O_{1,c}^{(j)}\right\|_{\mathrm{F}}^2 \leq -2\eta Q_{\bm C_1,1}-2\lambda_1\eta G_{\bm C_1,1}-2\lambda_2\eta T_{\bm C_1,1}+\eta^2 Q_{\bm C_1, 2}.
$$
Similarly, we also have
$$
\left\|\bm C_2^{(j+1)}-\bm C_2^* \bm O_{2,c}^{(j)}\right\|_{\mathrm{F}}^2-\left\|\bm C_2^{(j)}-\bm C_2^* \bm O_{2,c}^{(j)}\right\|_{\mathrm{F}}^2 \leq -2\eta Q_{\bm C_2,1}-2\lambda_1\eta G_{\bm C_2,1}-2\lambda_2\eta T_{\bm C_2,1}+\eta^2 Q_{\bm C_2, 2}.
$$

~\newline
\noindent \textit{Step 2.3} (Upper bound for the errors of $\bm D_i$)

\begin{equation}\label{eq:D_split}
\begin{aligned}
    & \left\|\bm D_1^{(j+1)}-\bm O_{1,u}^{(j) \top} \bm D_1^* \bm O_{1,v}^{(j)}\right\|_{\mathrm{F}}^2 \\
    = & \left\|\bm D_1^{(j)}-\bm O_{1,u}^{(j) \top} \bm D_1^* \bm O_{1,v}^{(j)}-\eta\left\{\left[\bm C_1^{(j)}~ \bm R_1^{(j)}\right]^{\top} \nabla_{\bm A_1}\widetilde{\mathcal{L}}(\bm A^{(j)})\left[\bm C_1^{(j)}~ \bm P_1^{(j)}\right]\right.\right.\\
    &\left.\left.+\lambda_1\left(\fnorm{\bm A_1^{(j)}}^2-\fnorm{\bm A_2^{(j)}}^2\right)[\bm C_i^{(j)}~\bm R_i^{(j)}]^\top\bm A_i^{(j)}[\bm C_i^{(j)}~\bm P_i^{(j)}]\right\}\right\|_{\mathrm{F}}^2 \\
    = & \left\|\bm D_1^{(j)}-\bm O_{1,u}^{(j) \top} \bm D_1^* \bm O_{1,v}^{(j)}\right\|_{\mathrm{F}}^2\\
    +&\eta^2\left\|\left[\bm C_1^{(j)}~ \bm R_1^{(j)}\right]^{\top} \nabla_{\bm A_1}\widetilde{\mathcal{L}}(\bm A^{(j)})\left[\bm C_1^{(j)}~ \bm P_1^{(j)}\right]\right.
    \\&\qquad +\left.\lambda_1\left(\fnorm{\bm A_1^{(j)}}^2-\fnorm{\bm A_2^{(j)}}^2\right)[\bm C_i^{(j)}~\bm R_i^{(j)}]^\top\bm A_i^{(j)}[\bm C_i^{(j)}~\bm P_i^{(j)}]\right\|_{\mathrm{F}}^2 \\
    - & 2 \eta\left\langle\bm D_1^{(j)}-\bm O_{1,u}^{(j) \top} \bm D_1^* \bm O_{1,v}^{(j)},\left[\bm C_1^{(j)}~ \bm R_1^{(j)}\right]^{\top} \nabla_{\bm A_1}\widetilde{\mathcal{L}}(\bm A^{(j)})\left[\bm C_1^{(j)}~ \bm P_1^{(j)}\right]\right\rangle\\
    -&2\lambda_1\eta\inner{\bm D_1^{(j)}-\bm O_{1,u}^{(j) \top} \bm D_1^* \bm O_{1,v}^{(j)}}{\left(\fnorm{\bm A_1^{(j)}}^2-\fnorm{\bm A_2^{(j)}}^2\right)\left[\bm C_i^{(j)}~\bm R_i^{(j)}\right]^\top\bm A_i^{(j)}\left[\bm C_i^{(j)}~\bm P_i^{(j)}\right]}\\
    :=&\left\|\bm D_1^{(j)}-\bm O_{1,u}^{(j) \top} \bm D_1^* \bm O_{1,v}^{(j)}\right\|_{\mathrm{F}}^2+\eta^2 I_{\bm D_1,2}-2\eta I_{\bm D_1,1}.
\end{aligned}
\end{equation}

For $I_{\bm D_1,2}$:
$$\begin{aligned}
    I_{\bm D_1,2}\leq& 2\fnorm{\left[\bm C_1^{(j)}~ \bm R_1^{(j)}\right]^{\top} \nabla_{\bm A_1}\widetilde{\mathcal{L}}(\bm A^{(j)})\left[\bm C_1^{(j)}~ \bm P_1^{(j)}\right]}^2+4\lambda_1^2b^4\phi^2\left(\fnorm{\bm A_1^{(j)}}^2-\fnorm{\bm A_2^{(j)}}^2\right)^2.
\end{aligned}$$
For the first term,
$$\begin{aligned}
    & \left\|\left[\bm C_1^{(j)}~ \bm R_1^{(j)}\right]^{\top} \nabla_{\bm A_1}\widetilde{\mathcal{L}}(\bm A^{(j)})\left[\bm C_1^{(j)}~ \bm P_1^{(j)}\right]\right\|_{\mathrm{F}}^2 \\
    =&\left\|\left[\bm C_1^{(j)}~ \bm R_1^{(j)}\right]^{\top} \text{mat}\left(\mathcal{P}\left(\nabla \widetilde{\mathcal{L}}\left(\bm A^{(j)}\right)\right)^{\top} \V{\bm A_2^{(j)}\right)}\left[\bm C_1^{(j)}~ \bm P_1^{(j)}\right]\right\|_\text{F}^2\\
    \leq & 2\left\|\left[\bm C_1^{(j)}~ \bm R_1^{(j)}\right]^{\top} \text{mat}\left(\mathcal{P}\left(\nabla \widetilde{\mathcal{L}}\left(\bm A^*\right)\right)^{\top} \V{\bm A_2^{(j)}\right)}\left[\bm C_1^{(j)}~ \bm P_1^{(j)}\right]\right\|_{\mathrm{F}}^2 \\
    + & 2\left\|\left[\bm C_1^{(j)}~ \bm R_1^{(j)}\right]^{\top}\text{mat}\left(\mathcal{P}\left[\nabla \widetilde{\mathcal{L}}\left(\bm A^{(j)}\right)-\nabla\widetilde{\mathcal{L}}(\bm A^*)\right]^{\top} \V{\bm A_2^{(j)}}\right)\left[\bm C_1^{(j)}~ \bm P_1^{(j)}\right]\right\|_{\mathrm{F}}^2 \\
    \leq & 4b^4\phi^2\left(\xi^2(r_1,r_2,d_1,d_2)+\left\|\nabla\widetilde{\mathcal{L}}(\bm A^{(j)})-\nabla\widetilde{\mathcal{L}}(\bm A^*)\right\|_\text{F}^2\right)
\end{aligned}$$
Then $I_{\bm D_1,2}$ can be upper bounded as 
$$\begin{aligned}
&I_{\bm D_1,2}\\
\leq& 4b^4\phi^2\left(\xi^2(r_1,r_2,d_1,d_2)+\left\|\nabla\widetilde{\mathcal{L}}(\bm A^{(j)})-\nabla\widetilde{\mathcal{L}}(\bm A^*)\right\|_\text{F}^2\right)+4\lambda_1^2b^4\phi^2\left(\fnorm{\bm A_1^{(j)}}^2-\fnorm{\bm A_2^{(j)}}^2\right)^2\\
:=&Q_{\bm D_1,2}
\end{aligned}$$

For $I_{\bm D_1,1}$, similarly we define
$$
\begin{aligned}
& \left\langle\bm D_1^{(j)}-\bm O_{1,u}^{(j) \top} \bm D_1^* \bm O_{1,v}^{(j)}, \left[\bm C_1^{(j)}~ \bm R_1^{(j)}\right]^{\top} \nabla_{\bm A_1}\widetilde{\mathcal{L}}(\bm A^{(j)})\left[\bm C_1^{(j)}~ \bm P_1^{(j)}\right]\right\rangle \\
= & \left\langle\bm A_1^{(j)}-\left[\bm C_1^{(j)}~ \bm R_1^{(j)}\right] \bm O_{1,u}^{(j) \top} \bm D_1^* \bm O_{1,v}^{(j)}\left[\bm C_1^{(j)}~ \bm P_1^{(j)}\right]^{\top}, \nabla_{\bm A_1}\widetilde{\mathcal{L}}(\bm A^{(j)})\right\rangle \\
:= & \left\langle\bm A_{\bm D_1}^{(j)}, \nabla_{\bm A_1}\widetilde{\mathcal{L}}(\bm A^{(j)})\right\rangle\\
=&\left\langle\bm A_2^{(j)}\otimes\bm A_{\bm D_1}^{(j)},\nabla\widetilde{\mathcal{L}}(\bm A^{(j)})\right\rangle\\
:=&Q_{\bm D_1, 1},
\end{aligned}
$$
and 
$$G_{\bm D_1}:=\inner{\bm D_1^{(j)}-\bm O_{1,u}^{(j) \top} \bm D_1^* \bm O_{1,v}^{(j)}}{\left(\fnorm{\bm A_1^{(j)}}^2-\fnorm{\bm A_2^{(j)}}^2\right)\left[\bm C_1^{(j)}~\bm R_1^{(j)}\right]^\top\bm A_1^{(j)}\left[\bm C_1^{(j)}~\bm P_1^{(j)}\right]}$$

Hence, we have
$$
\left\|\bm D_1^{(j+1)}-\bm O_{1,u}^{(j) \top} \bm D_1^* \bm O_{1,v}^{(j)}\right\|_{\mathrm{F}}^2-\left\|\bm D_1^{(j)}-\bm O_{1,u}^{(j) \top} \bm D_1^* \bm O_{1,v}^{(j)}\right\|_{\mathrm{F}}^2 \leq-2 \eta Q_{\bm D_1, 1}-2\lambda_1\eta G_{\bm D_1}+\eta^2 Q_{\bm D_1, 2},
$$
and
$$\left\|\bm D_2^{(j+1)}-\bm O_{2,u}^{(j) \top} \bm D_2^* \bm O_{2,v}^{(j)}\right\|_{\mathrm{F}}^2-\left\|\bm D_2^{(j)}-\bm O_{2,u}^{(j) \top} \bm D_2^* \bm O_{2,v}^{(j)}\right\|_{\mathrm{F}}^2 \leq-2 \eta Q_{\bm D_2, 1}-2\lambda_1\eta G_{\bm D_2}+\eta^2 Q_{\bm D_2, 2}.$$

Combining the pieces above,
\begin{equation}\label{E upper bound: Q2Q1GT}
\begin{aligned}
    \mathrm{dist}^2_{(j+1)}&\leq \mathrm{dist}^2_{(j)}+\eta^2\sum_{i=1}^{2}\left(Q_{\mathrm{D}_i, 2}+Q_{\mathrm{R}_i, 2}+Q_{\mathrm{P}_i, 2}+Q_{\mathrm{C}_i, 2}\right)\\
    &-2\eta\sum_{i=1}^{2}\left(Q_{\mathrm{D}_i, 1}+Q_{\mathrm{R}_i, 1}+Q_{\mathrm{P}_i, 1}+Q_{\mathrm{C}_i, 1}\right)\\
    &-2\lambda_1\eta\sum_{i=1}^{2}\left(G_{D_i}+G_{R_i}+G_{P_i}+G_{C_i}\right)\\
    &-2\lambda_2\eta\sum_{i=1}^{2}\left(T_{R_i}+T_{P_i}+T_{C_i}\right).
\end{aligned}
\end{equation}\\

\noindent\textbf{\textit{Step 3.}} (\textbf{Recursive relationship between $\mathrm{dist}^2_{(j+1)}$ and $\mathrm{dist}^2_{(j)}$})

\noindent In this step, we will derive upper bounds of $Q_{\cdot,1},Q_{\cdot,2},G$ and $T$ terms, and finally obtain a upper bound as in \eqref{E upper bound 2}.\\

\noindent\textit{Step 3.1} (Lower bound for $\sum_{i=1}^{2}\left(Q_{\bm D_i, 1}+Q_{\bm R_i, 1}+Q_{\bm P_i, 1}+Q_{\bm C_i, 1}\right)$ )

\noindent By definition, 
\begin{equation}\label{Q_1}
\begin{aligned}
& \sum_{i=1}^{2}\left(Q_{\bm D_i, 1}+Q_{\bm R_i, 1}+Q_{\bm P_i, 1}+Q_{\bm C_i, 1}\right)\\
=&\left\langle
    \begin{matrix}
        \bm A_2^{(j)}\otimes\left(\mathbf{A}_{\bm D_1}^{(j)}+\mathbf{A}_{\bm R_1}^{(j)}+\mathbf{A}_{\bm P_1}^{(j)}+\mathbf{A}_{\bm C_1}^{(j)}\right)\\
        +\left(\mathbf{A}_{\bm D_2}^{(j)}+\mathbf{A}_{\bm R_2}^{(j)}+\mathbf{A}_{\bm P_2}^{(j)}+\mathbf{A}_{\bm C_2}^{(j)}\right)\otimes \bm A_1^{(j)}
    \end{matrix},\nabla \widetilde{\mathcal{L}}\left(\mathbf{A}^{(j)}\right)\right\rangle.
\end{aligned}
\end{equation}

Noting that
$$
\begin{aligned}
& \mathbf{A}_{\bm D_1}^{(j)}+\mathbf{A}_{\bm R_1}^{(j)}+\mathbf{A}_{\bm P_1}^{(j)}+\mathbf{A}_{\bm C_1}^{(j)} \\
= & 3 \bm A_1^{(j)}-\left[\bm C_1^{(j)}~ \bm R_1^{(j)}\right] \bm O_{1,u}^{(j) \top} \bm D_1^* \bm O_{1,v}^{(j)}\left[\bm C_1^{(j)}~ \bm P_1^{(j)}\right]^{\top}-\left[\bm C_1^{(j)}~ \bm R_1^{(j)}\right] \bm D_1^{(j)} \bm O_{1,v}^{(j) \top}\left[\bm C_1^*~ \bm P_1^*\right]^{\top} \\
- & {\left[\bm C_1^*~ \bm R_1^*\right] \bm O_{1,u}^{(j)} \bm D_1^{(j)}\left[\bm C_1^{(j)}~ \bm P_1^{(j)}\right]^{\top} },
\end{aligned}
$$
we define 
$$
\bm H_1^{(j)}:=\bm A_2^{(j)}\otimes\left(\mathbf{A}_{\bm D_1}^{(j)}+\mathbf{A}_{\bm R_1}^{(j)}+\mathbf{A}_{\bm P_1}^{(j)}+\mathbf{A}_{\bm C_1}^{(j)}\right)
+\left(\mathbf{A}_{\bm D_2}^{(j)}+\mathbf{A}_{\bm R_2}^{(j)}+\mathbf{A}_{\bm P_2}^{(j)}+\mathbf{A}_{\bm C_2}^{(j)}\right)\otimes \bm A_1^{(j)},
$$
which contains all the first order perturbation terms. By Lemma \ref{lemma:H_upperBound}, we know that
$$\bm H_1^{(j)}=\bm A_2^{(j)}\otimes\bm A_1^{(j)}-\bm A_2^*\otimes\bm A_1^*+\bm H^{(j)}.$$
Hence, \eqref{Q_1} can be simplified as
$\left\langle\bm A_2^{(j)}\otimes\bm A_1^{(j)}-\bm A_2^*\otimes\bm A_1^*+\bm H^{(j)},\nabla \widetilde{\mathcal{L}}\left(\mathbf{A}^{(j)}\right)\right\rangle.$
With conditions \eqref{condition: upper bound of pieces} \eqref{condition:E_upperBound} and Lemma \ref{lemma:H_upperBound}, we have
$$\fnorm{\bm H^{(j)}}\leq C_h\phi^{4/3}\mathrm{dist}^2_{(j)},$$
where $C_h$ is a constant of moderate size.

Then, for \eqref{Q_1}, with RCG condition \eqref{RCG}, we have
\begin{equation}\label{Q1_inner_lowerBound}
\begin{aligned}
    &\left\langle
        \begin{matrix}\bm A_2^{(j)}\otimes\left(\mathbf{A}_{\bm D_1}^{(j)}+\mathbf{A}_{\bm R_1}^{(j)}+\mathbf{A}_{\bm P_1}^{(j)}+\mathbf{A}_{\bm C_1}^{(j)}\right)\\
        +\left(\mathbf{A}_{\bm D_2}^{(j)}+\mathbf{A}_{\bm R_2}^{(j)}+\mathbf{A}_{\bm P_2}^{(j)}+\mathbf{A}_{\bm C_2}^{(j)}\right)\otimes \bm A_1^{(j)}
        \end{matrix}
        , \nabla \widetilde{\mathcal{L}}\left(\mathbf{A}^{(j)}\right)\right\rangle\\
    =&\left\langle\bm A_2^{(j)}\otimes\bm A_1^{(j)}-\bm A_2^*\otimes\bm A_1^*+\bm H^{(j)},\nabla \widetilde{\mathcal{L}}\left(\mathbf{A}^{(j)}\right)\right\rangle\\
    =&\left\langle\bm A_2^{(j)}\otimes\bm A_1^{(j)}-\bm A_2^*\otimes\bm A_1^*,\nabla\widetilde{\mathcal{L}}(\bm A^{(j)})-\nabla\widetilde{\mathcal{L}}(\bm A^*)\right\rangle+\left\langle\bm H^{(j)},\nabla\widetilde{\mathcal{L}}(\bm A^{(j)})-\nabla\widetilde{\mathcal{L}}(\bm A^*)\right\rangle\\
    &+\left\langle\bm A_2^{(j)}\otimes\bm A_1^{(j)}-\bm A_2^*\otimes\bm A_1^*+\bm H^{(j)},\nabla \widetilde{\mathcal{L}}\left(\mathbf{A}^*\right)\right\rangle\\
    \geq&\frac{\alpha}{2}\left\|\bm A_2^{(j)}\otimes\bm A_1^{(j)}-\bm A_2^*\otimes\bm A_1^*\right\|_F^2+\frac{1}{2\beta}\left\|\nabla\widetilde{\mathcal{L}}(\bm A^{(j)})-\nabla\widetilde{\mathcal{L}}(\bm A^*)\right\|_F^2\\
    &-\fnorm{\bm H^{(j)}}\fnorm{\nabla\widetilde{\mathcal{L}}(\bm A^{(j)})-\nabla\widetilde{\mathcal{L}}(\bm A^*)}\\
    &-\left|\left\langle\bm A_2^{(j)}\otimes\left(\mathbf{A}_{\bm D_1}^{(j)}+\mathbf{A}_{\bm R_1}^{(j)}+\mathbf{A}_{\bm P_1}^{(j)}+\mathbf{A}_{\bm C_1}^{(j)}\right),\nabla \widetilde{\mathcal{L}}\left(\mathbf{A}^*\right)\right\rangle\right|\\
    &-\left|\left\langle\left(\mathbf{A}_{\bm D_2}^{(j)}+\mathbf{A}_{\bm R_2}^{(j)}+\mathbf{A}_{\bm P_2}^{(j)}+\mathbf{A}_{\bm C_2}^{(j)}\right)\otimes \bm A_1^{(j)},\nabla \widetilde{\mathcal{L}}\left(\mathbf{A}^*\right)\right\rangle\right|.
\end{aligned}
\end{equation}

For the fourth term in \eqref{Q1_inner_lowerBound}, plugging in $b=\phi^{1/3}$:
\begin{equation}
\begin{aligned}
        &\left|\left\langle\bm A_2^{(j)}\otimes\left(\mathbf{A}_{\bm D_1}^{(j)}+\mathbf{A}_{\bm R_1}^{(j)}+\mathbf{A}_{\bm P_1}^{(j)}+\mathbf{A}_{\bm C_1}^{(j)}\right),\nabla \widetilde{\mathcal{L}}\left(\mathbf{A}^*\right)\right\rangle\right|\\
    \leq&\left|\left\langle\bm A_2^{(j)}\otimes\left(\mathbf{A}_1^{(j)}-\left[\bm C_1^*~ \bm R_1^*\right] \bm O_{1,u}^{(j)} \bm D_1^{(j)}\left[\bm C_1^{(j)}~ \bm P_1^{(j)}\right]^{\top}\right), \nabla \widetilde{\mathcal{L}}\left(\mathbf{A}^*\right)\right\rangle\right| \\
        &+\left|\left\langle\bm A_2^{(j)}\otimes\left(\mathbf{A}_1^{(j)}-\left[\bm C_1^{(j)}~ \bm R_1^{(j)}\right] \bm D_1^{(j)} \bm O_{1,v}^{(j) \top}\left[\bm C_1^*~ \bm P_1^*\right]^{\top}\right), \nabla \widetilde{\mathcal{L}}\left(\mathbf{A}^*\right)\right\rangle\right| \\
    &+\left|\left\langle\bm A_2^{(j)}\otimes\left(\mathbf{A}_1^{(j)}-\left[\bm C_1^{(j)}~ \bm R_1^{(j)}\right] \bm O_{1,u}^{(j) \top} \bm D_1^* \bm O_{1,v}^{(j)}\left[\bm C_1^{(j)}~ \bm P_1^{(j)}\right]^{\top}\right), \nabla \widetilde{\mathcal{L}}\left(\mathbf{A}^*\right)\right\rangle\right| \\
        \leq&\xi(r_1,r_2,d_1,d_2)\left(\left\|\bm A_2^{(j)}\right\|_\mathrm{F}\left\|\bm D_1^{(j)}\right\|_{\text {op }} \cdot\left\|\left[\bm C_1^{(j)}~ \bm P_1^{(j)}\right]\right\|_{\mathrm{op}} \cdot\left\|\left[\bm C_1^{(j)}~ \bm R_1^{(j)}\right]-\left[\bm C_1^*~ \bm R_1^*\right] \bm O_{1,u}^{(j)}\right\|_{\mathrm{F}}\right)\\
    &+ \xi(r_1,r_2,d_1,d_2)\left(\left\|\bm A_2^{(j)}\right\|_\mathrm{F}\left\|\bm D_1^{(j)}\right\|_{\text {op }} \cdot\left\|\left[\bm C_1^{(j)}~ \bm R_1^{(j)}\right]\right\|_{\mathrm{op}} \cdot\left\|\left[\bm C_1^{(j)}~ \bm P_1^{(j)}\right]-\left[\bm C_1^*~ \bm P_1^*\right] \bm O_{1,v}^{(j)}\right\|_{\mathrm{F}}\right) \\
        &+ \xi(r_1,r_2,d_1,d_2)\left(\left\|\bm A_2^{(j)}\right\|_\mathrm{F}\left\|\left[\bm C_1^{(j)}~ \bm R_1^{(j)}\right]\right\|_{\mathrm{op}} \cdot\left\|\left[\bm C_1^{(j)}~ \bm P_1^{(j)}\right]\right\|_{\mathrm{op}} \cdot\left\|\bm D_1^{(j)}-\bm O_{1,u}^{(j) \top} \bm D_1^* \bm O_{1,v}^{(j)}\right\|_{\mathrm{F}}\right) \\
    \leq& \left(2b^2+4 \phi / b\right)\phi \xi(r_1,r_2,d_1,d_2)\mathrm{dist}_{(j)} \\
    =&6\phi^{5/3}\xi(r_1,r_2,d_1,d_2)\mathrm{dist}_{(j)}.
\end{aligned}
\end{equation}
Applying the same analysis on the last term in \eqref{Q1_inner_lowerBound}, the last two terms in \eqref{Q1_inner_lowerBound} can be upper bounded by
\begin{align}
    &\left|\left\langle\bm A_2^{(j)}\otimes\left(\mathbf{A}_{\bm D_1}^{(j)}+\mathbf{A}_{\bm R_1}^{(j)}+\mathbf{A}_{\bm P_1}^{(j)}+\mathbf{A}_{\bm C_1}^{(j)}\right),\nabla \widetilde{\mathcal{L}}\left(\mathbf{A}^*\right)\right\rangle\right|\\
    &+\left|\left\langle\left(\mathbf{A}_{\bm D_2}^{(j)}+\mathbf{A}_{\bm R_2}^{(j)}+\mathbf{A}_{\bm P_2}^{(j)}+\mathbf{A}_{\bm C_2}^{(j)}\right)\otimes \bm A_1^{(j)},\nabla \widetilde{\mathcal{L}}\left(\mathbf{A}^*\right)\right\rangle\right|\\
    \leq&6\phi^{5/3}\xi(r_1,r_2,d_1,d_2)\mathrm{dist}_{(j)}+6\phi^{5/3}\xi(r_1,r_2,d_1,d_2)\mathrm{dist}_{(j)}\\
    =&12\phi^{5/3}\xi(r_1,r_2,d_1,d_2)\mathrm{dist}_{(j)}\\
    \leq&36c\phi^{10/3}\mathrm{dist}^2_{(j)}+\frac{1}{c}\xi(r_1,r_2,d_1,d_2)^2, \quad\forall c >0.
\end{align}
The last inequality stems from the fact that $x^2+y^2\geq 2xy$.

For the third term in \eqref{Q1_inner_lowerBound}, we know from Lemma \ref{lemma:H_upperBound} that $\fnorm{\bm H^{(j)}}\leq C_h\phi^{4/3}\mathrm{dist}^2_{(j)}$, and with condition \eqref{condition:E_upperBound}, $\mathrm{dist}^2_{(j)}\leq C_D\phi^{2/3}\alpha\beta^{-1}\kappa^{-2}$, and 
$$
\begin{aligned}
    &\fnorm{\bm H^{(j)}}\fnorm{\nabla\widetilde{\mathcal{L}}(\bm A^{(j)})-\nabla\widetilde{\mathcal{L}}(\bm A^*)}\\
    \leq &\frac{1}{4\beta}\fnorm{\nabla\widetilde{\mathcal{L}}(\bm A^{(j)})-\nabla\widetilde{\mathcal{L}}(\bm A^*)}^2+\beta\fnorm{\bm H^{(j)}}^2\\
    \leq &\frac{1}{4\beta}\fnorm{\nabla\widetilde{\mathcal{L}}(\bm A^{(j)})-\nabla\widetilde{\mathcal{L}}(\bm A^*)}^2+\beta(C_h^2\phi^{8/3}\mathrm{dist}^2_{(j)})\mathrm{dist}^2_{(j)}\\
    \leq &\frac{1}{4\beta}\fnorm{\nabla\widetilde{\mathcal{L}}(\bm A^{(j)})-\nabla\widetilde{\mathcal{L}}(\bm A^*)}^2+\frac{C_H\alpha\phi^{10/3}}{\kappa^2} \mathrm{dist}^2_{(j)},
\end{aligned}
$$
where $C_H=C_h^2C_D$ is a small positive constant.
Consequently, putting together the bounds, we have:
\begin{equation}
\begin{aligned}
    &\left\langle
        \begin{matrix}\bm A_2^{(j)}\otimes\left(\mathbf{A}_{\bm D_1}^{(j)}+\mathbf{A}_{\bm R_1}^{(j)}+\mathbf{A}_{\bm P_1}^{(j)}+\mathbf{A}_{\bm C_1}^{(j)}\right)\\
        +\left(\mathbf{A}_{\bm D_2}^{(j)}+\mathbf{A}_{\bm R_2}^{(j)}+\mathbf{A}_{\bm P_2}^{(j)}+\mathbf{A}_{\bm C_2}^{(j)}\right)\otimes \bm A_1^{(j)}
        \end{matrix}
        , \nabla \widetilde{\mathcal{L}}\left(\mathbf{A}^{(j)}\right)\right\rangle\\
    \geq&\frac{\alpha}{2}\fnorm{\bm A_2^{(j)}\otimes\bm A_1^{(j)}-\bm A_2^*\otimes\bm A_1^*}^2+\frac{1}{4\beta}\fnorm{\nabla\widetilde{\mathcal{L}}(\bm A^{(j)})-\nabla\widetilde{\mathcal{L}}(\bm A^*)}^2\\
    &-\frac{C_H\alpha\phi^{10/3}}{\kappa^2} \mathrm{dist}^2_{(j)}-36c\phi^{10/3}\mathrm{dist}^2_{(j)}-\frac{1}{c}\xi(r_1,r_2,d_1,d_2)^2.
\end{aligned}
\end{equation}

~\\
\noindent\textit{Step 3.2.} (Lower bound for $\sum_{i=1}^{2}\left(G_{\bm D_i}+G_{\bm R_i}+G_{\bm P_i}+G_{\bm C_i}\right)$)

\noindent Recall the definitions in \eqref{eq:G_R1} and \eqref{eq:R_P_bound}. It can be easily verified that, by adding the terms together, we have 
\begin{equation}\label{eq: G1}
    \begin{aligned}
    &G_{\bm D_1}+G_{\bm R_1}+G_{\bm P_1}+G_{\bm C_1}\\
    =&\inner{\begin{matrix}
        \bm A_1^{(j)}-\left[\bm C_1^*~\bm R_1^*\right]\bm O_{1,u}\bm D_1^{(j)}\sqmat{\bm C_1^{(j)}~\bm P_1^{(j)}}^\top\\
        +\bm A_1^{(j)}-\left[\bm C_1^{(j)}~\bm R_1^{(j)}\right]\bm O_{1,u}^\top\bm D_1^*\bm O_{1,v}\sqmat{\bm C_1^{(j)}~\bm P_1^{(j)}}^\top\\
        +\bm A_1^{(j)}-\left[\bm C_1^{(j)}~\bm R_1^{(j)}\right]\bm D_1^{(j)}\bm O_{1,v}^\top\sqmat{\bm C_1^*~\bm P_1^*}^\top
    \end{matrix}}
    {\left(\fnorm{\bm A_1^{(j)}}^2-\fnorm{\bm A_2^{(j)}}^2\right)\bm A_1^{(j)}}\\
    :=&\inner{\bm H_{\bm A_1,1}}{\left(\fnorm{\bm A_1^{(j)}}^2-\fnorm{\bm A_2^{(j)}}^2\right)\bm A_1^{(j)}}.
\end{aligned}\end{equation}
The left side of the inner product, $\bm H_{\bm A_1,1}$, contains the first order perturbation terms with respect to $\bm A_1$. By Lemma A.1 in \cite{wang2023commonFactor}, it is exactly $\bm A_1^{(j)}-\bm A_1^*+\bm H_{\bm A_1,2}$, where $\bm H_{\bm A_1,2}$ comprises all the second and third-order perturbation terms. Applying it with $b=\phi^{1/3}$, we have $\fnorm{\bm H_{\bm A_1,2}}\leq 5\phi^{1/3}\mathrm{dist}^2_{(j)}$. Similarly for $\bm A_2$, we have
\begin{equation}\label{eq: G2}
    \begin{aligned}
    &G_{\bm D_2}+G_{\bm R_2}+G_{\bm P_2}+G_{\bm C_2}\\
    =&\inner{\begin{matrix}
        \bm A_2^{(j)}-\left[\bm C_2^*~\bm R_2^*\right]\bm O_{2,u}\bm D_2^{(j)}\sqmat{\bm C_2^{(j)}~\bm P_2^{(j)}}^\top\\
        +\bm A_2^{(j)}-\left[\bm C_2^{(j)}~\bm R_2^{(j)}\right]\bm O_{2,u}^\top\bm D_2^*\bm O_{2,v}\sqmat{\bm C_2^{(j)}~\bm P_2^{(j)}}^\top\\
        +\bm A_2^{(j)}-\left[\bm C_2^{(j)}~\bm R_2^{(j)}\right]\bm D_2^{(j)}\bm O_{2,v}^\top\sqmat{\bm C_2^*~\bm P_2^*}^\top
    \end{matrix}}
    {-\left(\fnorm{\bm A_1^{(j)}}^2-\fnorm{\bm A_2^{(j)}}^2\right)\bm A_2^{(j)}}\\
    :=&\inner{\bm H_{\bm A_2,1}}{-\left(\fnorm{\bm A_1^{(j)}}^2-\fnorm{\bm A_2^{(j)}}^2\right)\bm A_2^{(j)}}\\
    =&\inner{\bm A_2^{(j)}-\bm A_2^*+\bm H_{\bm A_2,2}}{-\left(\fnorm{\bm A_1^{(j)}}^2-\fnorm{\bm A_2^{(j)}}^2\right)\bm A_2^{(j)}},
\end{aligned}\end{equation}
and $\fnorm{\bm H_{\bm A_2,2}}\leq 5\phi^{1/3}\mathrm{dist}^2_{(j)}$.

Now we define 
$$\bm Z=\begin{pmatrix}
    \V{\bm A_1^{(j)}}\\
    \V{\bm A_2^{(j)}}
\end{pmatrix},
\widetilde{\bm Z}=\begin{pmatrix}
    \V{\bm A_1^{(j)}}\\
    -\V{\bm A_2^{(j)}}
\end{pmatrix},
\bm Z^*=\begin{pmatrix}
    \V{\bm A_1^*}\\
    \V{\bm A_2^*}
\end{pmatrix},
\widetilde{\bm Z}^*=\begin{pmatrix}
    \V{\bm A_1^*}\\
    -\V{\bm A_2^*}
\end{pmatrix}.$$
Then, one can show that
$$\fnorm{\bm A_1^{(j)}}^2-\fnorm{\bm A_2^{(j)}}^2=\bm Z^\top\widetilde{\bm Z}=\widetilde{\bm Z}^\top\bm Z.$$
Vectorizing the matrices and putting \eqref{eq: G1} and \eqref{eq: G2} together, we have
$$\begin{aligned}
    &\sum_{i=1}^{2}\left(G_{\bm D_i}+G_{\bm R_i}+G_{\bm P_i}+G_{\bm C_i}\right)\\
    =&\inner{\bm Z-\bm Z^*}{\widetilde{\bm Z}\widetilde{\bm Z}^\top\bm Z}\\
    +&\inner{\bm H_{\bm A_1,2}}{\left(\fnorm{\bm A_1^{(j)}}^2-\fnorm{\bm A_2^{(j)}}^2\right)\bm A_1^{(j)}}+\inner{\bm H_{\bm A_2,2}}{-\left(\fnorm{\bm A_1^{(j)}}^2-\fnorm{\bm A_2^{(j)}}^2\right)\bm A_2^{(j)}}.
\end{aligned}$$

The second and third terms can be lower bouned by
$$\begin{aligned}
    &\inner{\bm H_{\bm A_1,2}}{\left(\fnorm{\bm A_1^{(j)}}^2-\fnorm{\bm A_2^{(j)}}^2\right)\bm A_1^{(j)}}+\inner{\bm H_{\bm A_2,2}}{-\left(\fnorm{\bm A_1^{(j)}}^2-\fnorm{\bm A_2^{(j)}}^2\right)\bm A_2^{(j)}}\\
    \geq &-\left|\fnorm{\bm A_1^{(j)}}^2-\fnorm{\bm A_2^{(j)}}^2\right|\left(\fnorm{\bm A_1^{(j)}}\fnorm{\bm H_{\bm A_1,2}}+\fnorm{\bm A_2^{(j)}}\fnorm{\bm H_{\bm A_2,2}}\right)\\
    \geq &-20\phi^{4/3}\mathrm{dist}^2_{(j)}\left|\fnorm{\bm A_1^{(j)}}^2-\fnorm{\bm A_2^{(j)}}^2\right|.
\end{aligned}$$

For the first term, noting that $\bm Z^{*\top}\widetilde{\bm Z}^*=\widetilde{\bm Z}^{*\top}\bm Z^*=0$ and $\widetilde{\bm Z}^\top\bm Z^*=\bm Z^\top\widetilde{\bm Z}^*$, we have
$$\begin{aligned}
    &\inner{\bm Z-\bm Z^*}{\widetilde{\bm Z}\widetilde{\bm Z}^\top\bm Z}\\
    =&\inner{\widetilde{\bm Z}^\top\bm Z-\widetilde{\bm Z}^\top\bm Z^*}{\widetilde{\bm Z}^\top\bm Z}\\
    =&\frac{1}{2}\fnorm{\widetilde{\bm Z}^\top\bm Z}^2+\frac{1}{2}\inner{\widetilde{\bm Z}^\top\bm Z-2\widetilde{\bm Z}^\top\bm Z^*}{\widetilde{\bm Z}^\top\bm Z}\\
    =&\frac{1}{2}\fnorm{\widetilde{\bm Z}^\top\bm Z}^2+\frac{1}{2}\inner{\widetilde{\bm Z}^\top(\bm Z-\bm Z^*)}{\widetilde{\bm Z}^\top\bm Z}+\frac{1}{2}\inner{-\widetilde{\bm Z}^\top\bm Z^*}{\widetilde{\bm Z}^\top\bm Z}\\
    =&\frac{1}{2}\fnorm{\widetilde{\bm Z}^\top\bm Z}^2+\frac{1}{2}\inner{\widetilde{\bm Z}^\top(\bm Z-\bm Z^*)}{\widetilde{\bm Z}^\top\bm Z}+\frac{1}{2}\inner{\bm Z^{*\top}\widetilde{\bm Z}^*-\bm Z^\top\widetilde{\bm Z}^*}{\widetilde{\bm Z}^\top\bm Z}\\
    =&\frac{1}{2}\fnorm{\widetilde{\bm Z}^\top\bm Z}^2+\frac{1}{2}\inner{(\bm Z-\bm Z^*)^\top(\widetilde{\bm Z}-\widetilde{\bm Z}^*)}{\widetilde{\bm Z}^\top\bm Z}\\
    \geq &\frac{1}{2}\fnorm{\widetilde{\bm Z}^\top\bm Z}^2-\frac{1}{2}\fnorm{\bm Z-\bm Z^*}\fnorm{\widetilde{\bm Z}-\widetilde{\bm Z}^*}\fnorm{\widetilde{\bm Z}^\top\bm Z}
\end{aligned}$$
Note that $\fnorm{\bm Z-\bm Z^*}^2=\fnorm{\widetilde{\bm Z}-\widetilde{\bm Z}^*}^2=\fnorm{\bm A_1^{(j)}-\bm A_1^*}^2+\fnorm{\bm A_2^{(j)}-\bm A_2^*}^2$. By Lemma \ref{lemma:A1A2 and Error} with $c_b\leq 0.01$, we have $\fnorm{\bm A_1^{(j)}-\bm A_1^*}^2+\fnorm{\bm A_2^{(j)}-\bm A_2^*}^2\leq 50\phi^{4/3}\mathrm{dist}^2_{(j)}$, and hence,
$$
\inner{\bm Z-\bm Z^*}{\widetilde{\bm Z}\widetilde{\bm Z}^\top\bm Z} \geq \frac{1}{2}\left(\fnorm{\bm A_1^{(j)}}^2-\fnorm{\bm A_2^{(j)}}^2\right)^2-25\phi^{4/3}\mathrm{dist}^2_{(j)}\left|\fnorm{\bm A_1^{(j)}}^2-\fnorm{\bm A_2^{(j)}}^2\right|.
$$
Combining the two pieces, we have the lower bound of $G$ terms:
\begin{equation}
\begin{aligned}
    &\sum_{i=1}^{2}\left(G_{\bm D_i}+G_{\bm R_i}+G_{\bm P_i}+G_{\bm C_i}\right)\\
    \geq & \frac{1}{2}\left(\fnorm{\bm A_1^{(j)}}^2-\fnorm{\bm A_2^{(j)}}^2\right)^2-45\phi^{4/3}\mathrm{dist}^2_{(j)}\left|\fnorm{\bm A_1^{(j)}}^2-\fnorm{\bm A_2^{(j)}}^2\right|\\
    \geq &\frac{1}{4}\left(\fnorm{\bm A_1^{(j)}}^2-\fnorm{\bm A_2^{(j)}}^2\right)^2-1200\phi^{8/3}\mathrm{dist}^4_{(j)}\\
    \geq & \frac{1}{4}\left(\fnorm{\bm A_1^{(j)}}^2-\fnorm{\bm A_2^{(j)}}^2\right)^2-\frac{1200C_D}{\kappa^2}\phi^{10/3}\mathrm{dist}^2_{(j)}.
\end{aligned}
\end{equation}

~\\
\noindent\textit{Step 3.3.} (Lower bound for $\sum_{i=1}^{2}\left(T_{\bm R_i}+T_{\bm P_i}+T_{\bm C_i}\right)$)

\noindent Similarly to \citet{wang2023commonFactor}, it can be verified that
\begin{equation}\label{T_RPC}
\begin{aligned}
    &T_{\bm R_1}+T_{\bm P_1}+T_{\bm C_1}+T_{\bm R_2}+T_{\bm P_2}+T_{\bm C_2}\\
    =&\left\langle\left[\bm C_1^{(j)}~ \bm R_1^{(j)}\right]-\left[\bm C_1^*~ \bm R_1^*\right] \bm O_{1,u}^{(j)},\left[\bm C_1^{(j)}~ \bm R_1^{(j)}\right]\left(\left[\bm C_1^{(j)}~ \bm R_1^{(j)}\right]^{\top}\left[\bm C_1^{(j)}~ \bm R_1^{(j)}\right]-b^2 \bm{I}_{r_1}\right)\right\rangle \\ 
    +& \left\langle\left[\bm C_1^{(j)}~ \bm P_1^{(j)}\right]-\left[\bm C_1^*~ \bm P_1^*\right] \bm O_{1,v}^{(j)},\left[\bm C_1^{(j)}~ \bm P_1^{(j)}\right]\left(\left[\bm C_1^{(j)}~ \bm P_1^{(j)}\right]^{\top}\left[\bm C_1^{(j)}~ \bm P_1^{(j)}\right]-b^2 \bm{I}_{r_1}\right)\right\rangle\\
    +&\left\langle\left[\bm C_2^{(j)}~ \bm R_2^{(j)}\right]-\left[\bm C_2^*~ \bm R_2^*\right] \bm O_{2,u}^{(j)},\left[\bm C_2^{(j)}~ \bm R_2^{(j)}\right]\left(\left[\bm C_2^{(j)}~ \bm R_2^{(j)}\right]^{\top}\left[\bm C_2^{(j)}~ \bm R_2^{(j)}\right]-b^2 \bm{I}_{r_2}\right)\right\rangle \\ 
    +& \left\langle\left[\bm C_2^{(j)}~ \bm P_2^{(j)}\right]-\left[\bm C_2^*~ \bm P_2^*\right] \bm O_{2,v}^{(j)},\left[\bm C_2^{(j)}~ \bm P_2^{(j)}\right]\left(\left[\bm C_2^{(j)}~ \bm P_2^{(j)}\right]^{\top}\left[\bm C_2^{(j)}~ \bm P_2^{(j)}\right]-b^2 \bm{I}_{r_2}\right)\right\rangle.
\end{aligned}
\end{equation}
Denote $\bm U_1^{(j)}=[\bm C_1^{(j)}~\bm R_1^{(j)}], \bm V_1^{(j)}=[\bm C_1^{(j)}~\bm P_1^{(j)}],\bm U_1^*=[\bm C_1^*~\bm R_1^*], \bm V_1^*=[\bm C_1^*~\bm P_1^*]$. Recall that $\bm U_1^{*\top}\bm U_1^*=b^2\bm I_{r_1}$ and $\bm V_1^{*\top}\bm V_1^*=b^2\bm I_{r_1}$, for the first term we have 
$$\begin{aligned}
    &\left\langle\bm U_1^{(j)}-\bm U_1^* \bm O_{1,u}^{(j)},\bm U_1^{(j)}\left(\bm U_1^{(j)\top}\bm U_1^{(j)}-b^2 \bm{I}_{r_1}\right)\right\rangle\\
    =&\frac{1}{2}\inner{\bm U_1^{(j)\top}\bm U_1^{(j)}-\bm U_1^{*\top}\bm U_1^*}{\bm U_1^{(j)\top}\bm U_1^{(j)}-b^2 \bm{I}_{r_1}}+\frac{1}{2}\inner{\bm U_1^{(j)\top}(\bm U_1^{(j)}-\bm U_1^* \bm O_{1,u}^{(j)})}{\bm U_1^{(j)\top}\bm U_1^{(j)}-b^2 \bm{I}_{r_1}}\\
    +&\frac{1}{2}\inner{\bm U_1^{*\top}\bm U_1^*-\bm U_1^{(j)\top}\bm U_1^* \bm O_{1,u}^{(j)}}{\bm U_1^{(j)\top}\bm U_1^{(j)}-b^2 \bm{I}_{r_1}}.
\end{aligned}$$
Since $\bm U_1^{(j)\top}\bm U_1^{(j)}-b^2 \bm{I}_{r_1}$ is symmetric, we have
$$\begin{aligned}
    &\inner{\bm U_1^{*\top}\bm U_1^*-\bm U_1^{(j)\top}\bm U_1^* \bm O_{1,u}^{(j)}}{\bm U_1^{(j)\top}\bm U_1^{(j)}-b^2 \bm{I}_{r_1}}\\
    =&\inner{\bm U_1^{*\top}\bm U_1^*- \bm O_{1,u}^{(j)\top}\bm U_1^{*\top}\bm U_1^{(j)}}{\bm U_1^{(j)\top}\bm U_1^{(j)}-b^2 \bm{I}_{r_1}}\\
    =&\inner{\bm O_{1,u}^{(j)\top}\bm U_1^{*\top}(\bm U_1^*\bm O_{1,u}^{(j)}-\bm U_1^{(j)})}{\bm U_1^{(j)\top}\bm U_1^{(j)}-b^2 \bm{I}_{r_1}}.
\end{aligned}$$
Therefore, with condition \eqref{condition:E_upperBound},
$$
\begin{aligned}
    &\left\langle\bm U_1^{(j)}-\bm U_1^* \bm O_{1,u}^{(j)},\bm U_1^{(j)}\left(\bm U_1^{(j)\top}\bm U_1^{(j)}-b^2 \bm{I}_{r_1}\right)\right\rangle\\
    =&\frac{1}{2}\fnorm{\bm U_1^{(j)\top}\bm U_1^{(j)}-b^2\bm I_{r_1}}^2+\frac{1}{2}\inner{(\bm U_1^{(j)}-\bm U_1^* \bm O_{1,u}^{(j)})^\top(\bm U_1^{(j)}-\bm U_1^* \bm O_{1,u}^{(j)})}{\bm U_1^{(j)\top}\bm U_1^{(j)}-b^2\bm I_{r_1}}\\
    \geq &\frac{1}{2}\fnorm{\bm U_1^{(j)\top}\bm U_1^{(j)}-b^2\bm I_{r_1}}^2-\frac{1}{2}\mathrm{dist}^2_{(j)}\fnorm{\bm U_1^{(j)\top}\bm U_1^{(j)}-b^2\bm I_{r_1}}\\
    \geq &\frac{1}{4}\fnorm{\bm U_1^{(j)\top}\bm U_1^{(j)}-b^2\bm I_{r_1}}^2-\frac{1}{4}\mathrm{dist}^4_{(j)}\\
    \geq &\frac{1}{4}\fnorm{\bm U_1^{(j)\top}\bm U_1^{(j)}-b^2\bm I_{r_1}}^2-\frac{C_D}{4}\phi^{2/3}\mathrm{dist}^2_{(j)}.
\end{aligned}
$$
Adding the lower bounds of the four terms in \eqref{T_RPC} together, we have
$$\begin{aligned}
    &T_{\bm R_1}+T_{\bm P_1}+T_{\bm C_1}+T_{\bm R_2}+T_{\bm P_2}+T_{\bm C_2}\\
    \geq &\frac{1}{4}\sum_{i=1}^{2}\left(\fnorm{\left[\bm C_i^{(j)}~ \bm R_i^{(j)}\right]^{\top}\left[\bm C_i^{(j)}~ \bm R_i^{(j)}\right]-b^2 \bm I_{r_i}}^2+\fnorm{\left[\bm C_i^{(j)}~ \bm P_i^{(j)}\right]^{\top}\left[\bm C_i^{(j)}~ \bm P_i^{(j)}\right]-b^2 \bm I_{r_i}}^2\right)\\
    -&C_D\phi^{2/3}\mathrm{dist}^2_{(j)}.
\end{aligned}$$

~\\
\noindent\textit{Step 3.4.} (Upper bound for $\sum_{i=1}^{2}\left(Q_{\bm D_i, 2}+Q_{\bm R_i, 2}+Q_{\bm P_i, 2}+Q_{\bm C_i, 2}\right)$)

\noindent Following the definitions and plugging in $b=\phi^{1/3}$, we have
\begin{equation}\label{Q2_upperBound}
\begin{aligned}
    &\sum_{i=1}^{2}\left(Q_{\bm D_i, 2}+Q_{\bm R_i, 2}+Q_{\bm P_i, 2}+Q_{\bm C_i, 2}\right)\\
    =&232\phi^{10/3}\left(\xi^2(r_1,r_2,d_1,d_2)+\fnorm{\nabla\widetilde{\mathcal{L}}(\bm A^{(j)})-\nabla\widetilde{\mathcal{L}}(\bm A^*)}^2\right)\\
    +&\lambda_2^2\phi^{2/3}\sum_{i=1}^2\left(18\fnorm{\bm C_i^{(j)\top}\bm R_i^{(j)}}^2+18\fnorm{\bm C_i^{(j)\top}\bm P_i^{(j)}}^2+40\fnorm{\bm C_i^{(j)\top}\bm C_i-b^2\bm I_{d_i}}^2\right.\\
    +&\left.8\fnorm{\bm R_i^{(j)\top}\bm R_i-b^2\bm I_{r_i-d_i}}^2+8\fnorm{\bm P_i^{(j)\top}\bm P_i-b^2\bm I_{r_i-d_i}}^2\right)\\
    +&120\lambda_1^2\phi^{10/3}\left(\fnorm{\bm A_1^{(j)}}^2-\fnorm{\bm A_2^{(j)}}^2\right)^2\\
    \leq &232\phi^{10/3}\left(\xi^2(r_1,r_2,d_1,d_2)+\fnorm{\nabla\widetilde{\mathcal{L}}(\bm A^{(j)})-\nabla\widetilde{\mathcal{L}}(\bm A^*)}^2\right)\\
    +&20\lambda_2^2\phi^{2/3}\sum_{i=1}^2\left(\fnorm{\left[\bm C_i^{(j)}~ \bm R_i^{(j)}\right]^{\top}\left[\bm C_i^{(j)}~ \bm R_i^{(j)}\right]-b^2 \bm I_{r_i}}^2+\fnorm{\left[\bm C_i^{(j)}~ \bm P_i^{(j)}\right]^{\top}\left[\bm C_i^{(j)}~ \bm P_i^{(j)}\right]-b^2 \bm I_{r_i}}^2\right)\\
    +&120\lambda_1^2\phi^{10/3}\left(\fnorm{\bm A_1^{(j)}}^2-\fnorm{\bm A_2^{(j)}}^2\right)^2.
\end{aligned}
\end{equation}

\noindent\textit{Step 3.5.} (Lower bound for $\fnorm{\bm A_2^{(j)}\otimes\bm A_1^{(j)}-\bm A_2^*\otimes\bm A_1^*}^2$)

\noindent So far, we have derived bounds for all parts in \eqref{E upper bound: Q2Q1GT}. Combining these pieces, we have
\begin{equation}\label{E upper bound 1}
\begin{aligned}
    &\mathrm{dist}^2_{(j+1)}-\mathrm{dist}^2_{(j)}\\
    \leq&-\alpha\eta\fnorm{\bm A_2^{(j)}\otimes\bm A_1^{(j)}-\bm A_2^*\otimes\bm A_1^*}^2\\
    -&2\eta\left(-\frac{C_H\alpha\phi^{10/3}}{\kappa^2} -36c\phi^{10/3}-C_D\lambda_2\phi^{2/3}-\frac{1200\lambda_1 C_D}{\kappa^2}\phi^{10/3}\right)\mathrm{dist}^2_{(j)}\\
    +&\left(232\eta^2 \phi^{10/3}-\frac{\eta}{2\beta}\right)\fnorm{\nabla\widetilde{\mathcal{L}}(\bm A^{(j)})-\nabla\widetilde{\mathcal{L}}(\bm A^*)}^2\\
    +&\left(232\eta^2 \phi^{10/3}+\frac{2\eta}{c}\right)\xi^2(r_1,r_2,d_1,d_2)\\
    +&\left(20\lambda_2^2\eta^2\phi^{2/3}-\frac{\lambda_2\eta}{2}\right)\\
    \times &\sum_{i=1}^2\left(\fnorm{\left[\bm C_i^{(j)}~ \bm R_i^{(j)}\right]^{\top}\left[\bm C_i^{(j)}~ \bm R_i^{(j)}\right]-b^2 \bm I_{r_i}}^2+\fnorm{\left[\bm C_i^{(j)}~ \bm P_i^{(j)}\right]^{\top}\left[\bm C_i^{(j)}~ \bm P_i^{(j)}\right]-b^2 \bm I_{r_i}}^2\right)\\
    +&\left(120\lambda_1^2\eta^2\phi^{10/3}-\frac{1}{2}\lambda_1\eta\right)\left(\fnorm{\bm A_1^{(j)}}^2-\fnorm{\bm A_2^{(j)}}^2\right)^2.
\end{aligned}
\end{equation}

Next, we construct a lower bound of $\fnorm{\bm A_2^{(j)}\otimes\bm A_1^{(j)}-\bm A_2^*\otimes\bm A_1^*}^2$ related to $\mathrm{dist}^2_{(j)}$. Viewing $\V{\bm A_2}$ and $\V{\bm A_1}$ as the factors in $\V{\bm A_2}\V{\bm A_1}^\top$, our first regularizer 
\begin{equation}
    \left(\fnorm{\bm A_1}^2-\fnorm{\bm A_2}^2\right)^2=\left(\V{\bm A_1}^\top\V{\bm A_1}-\V{\bm A_2}^\top\V{\bm A_2}\right)^2
\end{equation}
can be regarded as a special case of \cite{Guquanquan2017}, as well as a generalization of \cite{Tu2016}. Define the distance of two vectors up to a sign switch as
$$\mathrm{dist}^2(\V{\bm A_i^{(j)}},\V{\bm A_i^*})=\min_{s=\pm 1}\fnorm{\V{\bm A_i^{(j)}}-\V{\bm A_i^*}s}^2, \quad i=1,2.$$
For more details of this type of distance, see \cite{Cai2018perturbtion}. 
By Lemma \ref{lemma:A1A2 and Error}, since $C_D$ is small,
$$\fnorm{\bm A_1^{(j)}-\bm A_1^*}^2+\fnorm{\bm A_2^{(j)}-\bm A_2^*}^2\leq 30\phi^{4/3}D\leq 30C_D\phi^2\leq \phi^2, \quad i=1,2.$$
Then, we have $\fnorm{\bm A_i^{(j)}-\bm A_i^*}\leq \fnorm{\bm A_i^*}$. Hence, we know that the angle between $\V{\bm A_i^{(j)}}$ and $\V{\bm A_i^*}$ is an acute angle. Thus,
$$\mathrm{dist}^2(\V{\bm A_i^{(j)}},\V{\bm A_i^*})=\fnorm{\V{\bm A_i^{(j)}}-\V{\bm A_i^*}}^2,\quad i=1,2.$$ 
Meanwhile, applying the permutation operator $\mathcal{P}$, the Frobenius norm remains unchanged
$$\fnorm{\bm A_2^{(j)}\otimes\bm A_1^{(j)}-\bm A_2^*\otimes\bm A_1^*}^2=\fnorm{\V{\bm A_2^{(j)}}\V{\bm A_1^{(j)}}^\top-\V{\bm A_2^*}\V{\bm A_1^*}^\top}^2.$$

Recalling the notations defined in Step 3.5 and applying Lemma \ref{lemma: Tu16, dist(U,X) and UU'-XX'}, we have
\begin{equation}\label{A1A2 and kronecker product}
\begin{aligned}
    &\fnorm{\bm A_1^{(j)}-\bm A_1^*}^2+\fnorm{\bm A_2^{(j)}-\bm A_2^*}^2\\
    =&\mathrm{dist}^2(\bm Z,\bm Z^*)\\
    \leq &\frac{1}{4(\sqrt{2}-1)\phi^2}\fnorm{\bm Z\bm Z^\top-\bm Z^*\bm Z^{*\top}}^2\\
    =&\frac{1}{4(\sqrt{2}-1)\phi^2}\left(2\fnorm{\bm A_2^{(j)}\otimes\bm A_1^{(j)}-\bm A_2^*\otimes\bm A_1^*}^2\right.\\
    &+\left.\fnorm{\V{\bm A_1^{(j)}}\V{\bm A_1^{(j)}}^\top-\V{\bm A_1^*}\V{\bm A_1^*}^\top}^2\right.\\
    &+\left.\fnorm{\V{\bm A_2^{(j)}}\V{\bm A_2^{(j)}}^\top-\V{\bm A_2^*}\V{\bm A_2^*}^\top}^2\right).
\end{aligned}
\end{equation}
To erase the last two terms, we note that
\begin{equation}\label{regularizer1 and kronecker product}
\begin{aligned}
&\left(\fnorm{\bm A_1^{(j)}}^2-\fnorm{\bm A_2^{(j)}}^2\right)^2\\
=&\fnorm{\widetilde{\bm Z}^\top\bm Z}^2\\
=&\inner{\bm Z\bm Z^\top}{\widetilde{\bm Z}\widetilde{\bm Z}^\top}\\
=&\inner{\bm Z\bm Z^\top-\bm Z^*\bm Z^{*^\top}}{\widetilde{\bm Z}\widetilde{\bm Z}^\top-\widetilde{\bm Z}^*\widetilde{\bm Z}^{*\top}}+\inner{\bm Z\bm Z^\top}{\widetilde{\bm Z}^*\widetilde{\bm Z}^{*\top}}+\inner{\bm Z^*\bm Z^{*^\top}}{\widetilde{\bm Z}\widetilde{\bm Z}^\top}-\inner{\bm Z^*\bm Z^{*^\top}}{\widetilde{\bm Z}^*\widetilde{\bm Z}^{*\top}}\\
=&\inner{\bm Z\bm Z^\top-\bm Z^*\bm Z^{*^\top}}{\widetilde{\bm Z}\widetilde{\bm Z}^\top-\widetilde{\bm Z}^*\widetilde{\bm Z}^{*\top}}+2\fnorm{\bm Z^\top\widetilde{\bm Z}^*}^2-\underbrace{\fnorm{\bm Z^{*\top}\widetilde{\bm Z}^*}^2}_{=0}\\
\geq &\inner{\bm Z\bm Z^\top-\bm Z^*\bm Z^{*^\top}}{\widetilde{\bm Z}\widetilde{\bm Z}^\top-\widetilde{\bm Z}^*\widetilde{\bm Z}^{*\top}}\\
=&\fnorm{\V{\bm A_1^{(j)}}\V{\bm A_1^{(j)}}^\top-\V{\bm A_1^*}\V{\bm A_1^*}^\top}^2+\fnorm{\V{\bm A_2^{(j)}}\V{\bm A_2^{(j)}}^\top-\V{\bm A_2^*}\V{\bm A_2^*}^\top}^2\\
-&2\fnorm{\bm A_2^{(j)}\otimes\bm A_1^{(j)}-\bm A_2^*\otimes\bm A_1^*}^2.
\end{aligned}
\end{equation}
Then, combine \eqref{A1A2 and kronecker product} and \eqref{regularizer1 and kronecker product}, we have
\begin{equation}\label{A1A2 and kron and regularizer1}
    \fnorm{\bm A_1^{(j)}-\bm A_1^*}^2+\fnorm{\bm A_2^{(j)}-\bm A_2^*}^2\leq \frac{4}{\phi^2}\fnorm{\bm A_2^{(j)}\otimes\bm A_1^{(j)}-\bm A_2^*\otimes\bm A_1^*}^2+\frac{1}{\phi^2}\left(\fnorm{\bm A_1^{(j)}}^2-\fnorm{\bm A_2^{(j)}}^2\right)^2.
\end{equation}
Next, by Lemma \ref{lemma:A1A2 and Error} and $c_b\leq 0.01$, we have
$$\begin{aligned}
    \mathrm{dist}^2_{(j)} &\leq 100\phi^{-4/3}\kappa^2\left(\left\|\mathbf{A}_1^{(j)}-\mathbf{A}_1^*\right\|_{\mathrm{F}}^2+\left\|\mathbf{A}_2^{(j)}-\mathbf{A}_2^*\right\|_{\mathrm{F}}^2\right) \\
    & +24\phi^{-2/3}\sum_{i=1}^{2}\left(\fnorm{\left[\bm C_i^{(j)}~ \bm R_i^{(j)}\right]^{\top}\left[\bm C_i^{(j)}~ \bm R_i^{(j)}\right]-b^2 \bm I_{r_i}}^2+\fnorm{\left[\bm C_i^{(j)}~\bm P_i^{(j)}\right]^{\top}\left[\bm C_i^{(j)}~\bm P_i^{(j)}\right]-b^2 \bm I_{r_i}}^2\right).
\end{aligned}$$
Then, we obtain a lower bound:
\begin{equation}\label{LB of kronecker w.r.t E}
\begin{aligned}
    &\fnorm{\bm A_2^{(j)}\otimes\bm A_1^{(j)}-\bm A_2^*\otimes\bm A_1^*}^2\\
    \geq& \frac{\phi^{10/3}}{400\kappa^2}\mathrm{dist}^2_{(j)}-\frac{1}{4}\left(\fnorm{\bm A_1^{(j)}}^2-\fnorm{\bm A_2^{(j)}}^2\right)^2\\
    -&\frac{3\phi^{8/3}}{50\kappa^2}\sum_{i=1}^{2}\left(\fnorm{\left[\bm C_i^{(j)}~ \bm R_i^{(j)}\right]^{\top}\left[\bm C_i^{(j)}~ \bm R_i^{(j)}\right]-b^2 \bm I_{r_i}}^2+\fnorm{\left[\bm C_i^{(j)}~\bm P_i^{(j)}\right]^{\top}\left[\bm C_i^{(j)}~\bm P_i^{(j)}\right]-b^2 \bm I_{r_i}}^2\right).
\end{aligned}
\end{equation}

Finally, plugging the lower bound \eqref{LB of kronecker w.r.t E} into \eqref{E upper bound 1}, we have the following upper bound:

\begin{equation}\label{E upper bound 2}
    \begin{aligned}
        \mathrm{dist}^2_{(j+1)}\leq&\left(1-2\eta\left(\frac{\alpha\phi^{10/3}}{800\kappa^2}-\frac{C_h^2C_D\alpha\phi^{10/3}}{\kappa^2} -36c\phi^{10/3}-C_D\lambda_2\phi^{2/3}-\frac{1200\lambda_1 C_D}{\kappa^2}\phi^{10/3}\right)\right)\mathrm{dist}^2_{(j)}\\
        +&\eta\left(\eta 232\phi^{10/3}+\frac{2}{c}\right)\xi^2(r_1,r_2,d_1,d_2)\\
        +&\eta\left(232\eta \phi^{10/3}-\frac{1}{2\beta}\right)\fnorm{\nabla\widetilde{\mathcal{L}}(\bm A^{(j)})-\nabla\widetilde{\mathcal{L}}(\bm A^*)}^2\\
        +&\frac{1}{2}\eta\left(40\lambda_2^2\eta\phi^{2/3}+\frac{3\alpha\phi^{8/3}}{25\kappa^2}-\lambda_2\right)\\
        &\times\sum_{i=1}^2\left(\fnorm{\left[\bm C_i^{(j)}~ \bm R_i^{(j)}\right]^{\top}\left[\bm C_i^{(j)}~ \bm R_i^{(j)}\right]-b^2 \bm I_{r_i}}^2+\fnorm{\left[\bm C_i^{(j)}~ \bm P_i^{(j)}\right]^{\top}\left[\bm C_i^{(j)}~ \bm P_i^{(j)}\right]-b^2 \bm I_{r_i}}^2\right)\\
        +&\frac{1}{4}\eta\left(\alpha+480\lambda_1^2\eta\phi^{10/3}-2\lambda_1\right)\left(\fnorm{\bm A_1^{(j)}}^2-\fnorm{\bm A_2^{(j)}}^2\right)^2.
    \end{aligned}
\end{equation}

\newpage
\noindent\textbf{\textit{Step 4.}} (\textbf{Convergence analysis of $\mathrm{dist}^2_{(j+1)}$})

\noindent For the tuning parameters $\eta, \lambda_2$, $\lambda_1$ and $c$, let
$$\eta=\frac{\eta_0}{\beta\phi^{10/3}},\quad \lambda_2=\frac{\alpha\phi^{8/3}}{\kappa^2},\quad \lambda_1=\alpha,\quad\text{and}\quad c=\frac{C_D\alpha}{\kappa^2}.$$
We see that with $\eta_0\leq 1/480$ and $C_D$ being small enough, the error is reduced by iteration.

For the coefficient of the third term of \eqref{E upper bound 2},
$$
232\eta \phi^{10/3}-\frac{1}{2\beta}\leq\frac{232}{480\beta}-\frac{1}{2\beta}\leq 0.$$
For the fourth term of \eqref{E upper bound 2},
$$\begin{aligned}
    40\lambda_2^2\eta\phi^{2/3}+\frac{3\alpha\phi^{8/3}}{25\kappa^2}-\lambda_2\leq& \frac{40\alpha^2\phi^{8/3}}{480\kappa^4\beta}+\frac{3\alpha\phi^{8/3}}{25\kappa^2}-\frac{\alpha\phi^{8/3}}{\kappa^2}\\
    \leq &\frac{\alpha\phi^{8/3}}{\kappa^2}\left(\frac{1}{12}+\frac{3}{25}-1\right)\\
    \leq &0.
\end{aligned}$$
For the fifth term of \eqref{E upper bound 2},
$$\begin{aligned}
    \alpha+480\lambda_1^2\eta\phi^{10/3}-2\lambda_1\leq& \alpha+\frac{\alpha^2}{\beta}-2\alpha\leq (1+1-2)\alpha\leq 0.
\end{aligned}$$

Form now on, for the sake of simplicity, $C$ will denote a constant whose exact value may change in different contexts. For the second term of \eqref{E upper bound 2},
$$
    \eta\left(\eta 232\phi^{10/3}+\frac{2}{c}\right)\leq \frac{\eta_0}{\beta\phi^{10/3}}\left(\frac{232}{480\beta}+\frac{2\kappa^2}{C_D\alpha}\right)\leq \frac{C\eta_0\kappa^2}{\alpha\beta\phi^{10/3}}.
$$
For the first term, i.e., the coefficient of $\mathrm{dist}^2_{(j)}$,
$$\begin{aligned}
    &\frac{\alpha\phi^{10/3}}{800\kappa^2}-\frac{C_h^2C_D\alpha\phi^{10/3}}{\kappa^2} -36c\phi^{10/3}-C_D\lambda_2\phi^{2/3}-\frac{1200\lambda_1 C_D}{\kappa^2}\phi^{10/3}\\
    =&\frac{\alpha\phi^{10/3}}{\kappa^2}\left(\frac{1}{800}-\left(C_h^2+1237\right)C_D\right)\\
    :=&\frac{\alpha\phi^{10/3}}{\kappa^2}\times \frac{C_0}{2}.
\end{aligned}$$
Therefore, we can derive the following recursive relationship:
\begin{equation}\label{recursive of E}
    \mathrm{dist}^2_{(j+1)}\leq (1-C_0\eta_0\alpha\beta^{-1}\kappa^{-2})\mathrm{dist}^2_{(j)}+C\eta_0\kappa^2\alpha^{-1}\beta^{-1}\phi^{-10/3}\xi^2(r_1,r_2,d_1,d_2).
\end{equation}
When $C_D$ is small enough, $C_0$ is a positive constant and $1-C_0\eta_0\alpha\beta^{-1}\kappa^{-2}$ is a constant smaller than 1. Hence, the computational error can be reduced through the proposed gradient decent algorithm.\\

\noindent \textbf{\textit{Step 5.}} (\textbf{Verification of the conditions})

\noindent In this step, we verify that the conditions \eqref{condition: upper bound of pieces} and \eqref{condition:E_upperBound} in the convergence analysis hold recursively.

For $j=0$, since initialization condition $\mathrm{dist}^2_{(0)}\leq C_D\alpha\beta^{-1}\kappa^{-2}\phi^{2/3}$ holds, we have
\begin{equation}
    \begin{aligned}
        \fnorm{[\bm C_i^{(0)}~ \bm R_i^{(0)}]}&\leq \fnorm{[\bm C_i^{(0)}~ \bm R_i^{(0)}]-[\bm C_i^*~ \bm R_i^*]\bm O_{i,u}^{(0)}}+\fnorm{[\bm C_i^*~ \bm R_i^*]\bm O_{i,u}^{(0)}}\\
        &\leq \mathrm{dist}_{(0)}+\phi^{1/3}\\
        &\leq (1+c_b)b^{1/3},\\
        \fnorm{[\bm C_i^{(0)}~ \bm P_i^{(0)}]}&\leq \fnorm{[\bm C_i^{(0)}~ \bm P_i^{(0)}]-[\bm C_i^*~ \bm P_i^*]\bm O_{i,v}^{(0)}}+\fnorm{[\bm C_i^*~ \bm P_i^*]\bm O_{i,v}^{(0)}}\\
        &\leq \mathrm{dist}_{(0)}+\phi^{1/3}\\
        &\leq (1+c_b)b^{1/3},
    \end{aligned}
\end{equation} 
and
\begin{equation}
\begin{aligned}
    \fnorm{\bm D_i^{(0)}}&\leq \fnorm{\bm D_i^{(0)}-\bm O_{i,u}^{(0)\top}\bm D_i^*\bm O_{i,v}^{(0)}}+\fnorm{\bm O_{i,u}^{(0)\top}\bm D_i^*\bm O_{i,v}^{(0)}}\\
    &\leq \sqrt{\mathrm{dist}^2_{(0)}}+\phi^{1/3}\\
    &\leq \frac{(1+c_b)\phi}{b^2}, \quad \mathrm{for~} i=1,2.
\end{aligned}
\end{equation}

Then, suppose \eqref{condition:E_upperBound} and \eqref{condition: upper bound of pieces} hold at step $j$, for $j+1$, we have
\begin{equation}
\begin{aligned}
    \mathrm{dist}^2_{(j+1)}&\leq(1-C_0\eta_0\alpha\beta^{-1}\kappa^{-2})\mathrm{dist}^2_{(j)}+C\eta_0\kappa^2\alpha^{-1}\beta^{-1}\phi^{-10/3}\xi^2(r_1,r_2,d_1,d_2)\\
    &\leq \frac{C_D\alpha\phi^{2/3}}{\beta\kappa^2}-\eta_0\phi^{2/3}\alpha^2\beta^{-1}\kappa^2\left(\frac{C_0 C_D}{\beta\kappa^6}-\frac{C\xi^2(r_1,r_2,d_1,d_2)}{\alpha^3\phi^{4}}\right).
\end{aligned}
\end{equation}
Since $\phi^4\geq C\beta\kappa^6\xi^2\alpha^{-3}$ for some universally big constant, we can verify that
$$\frac{C_2}{\beta\kappa^6}-\frac{C_1\xi^2(r_1,r_2,d_1,d_2)}{\alpha^3\phi^{4}}\geq 0,$$
then 
$$\mathrm{dist}^2_{(j+1)}\leq \frac{C_D\alpha\phi^{2/3}}{\beta\kappa^2}.$$
By the same argument as $j=0$, we can verify that condition \eqref{condition: upper bound of pieces} holds. Then the induction is completed.\\

\noindent \textbf{\textit{Step 6.}} (\textbf{Construction of the upper bounds})

In the preceding steps, we proved that \eqref{condition:E_upperBound}, \eqref{condition: upper bound of pieces}, and the recursive relationship \eqref{recursive of E} hold for any $j\geq 0$. By iterating this relationship and summing the resulting geometric series, noting that $\sum_{j=0}^{\infty}(1-C_0\eta_0\alpha\beta^{-1}\kappa^{-2})^j = C_0^{-1}\eta_0^{-1}\alpha^{-1}\beta\kappa^2$, we obtain the upper bound of $\mathrm{dist}^2_{(j)}$:
\begin{equation}
    \mathrm{dist}^2_{(j)}\leq (1-C_0\eta_0\alpha\beta^{-1}\kappa^{-2})^{j}\mathrm{dist}^2_{(0)}+C\kappa^4\alpha^{-2}\phi^{-10/3}\xi^2(r_1,r_2,d_1,d_2),
\end{equation}

For the error bound of $\bm A_1$ and $\bm A_2$, by Lemma \ref{lemma:A1A2 and Error}, we have
$$\begin{aligned}
&\fnorm{\bm A_1^{(j)}-\bm A_1^*}^2+\fnorm{\bm A_2^{(j)}-\bm A_2^*}^2\\
\leq& C\phi^{4/3}\mathrm{dist}^2_{(j)}\\
\leq& C\phi^{4/3}(1-C_0\eta_0\alpha\beta^{-1}\kappa^{-2})^{j}\mathrm{dist}^2_{(0)}+C\kappa^4\alpha^{-2}\phi^{-2}\xi^2(r_1,r_2,d_1,d_2)\\
\leq&C\kappa^2(1-C_0\eta_0\alpha\beta^{-1}\kappa^{-2})^{j}\left(\fnorm{\bm A_1^{(0)}-\bm A_1^*}^2+\fnorm{\bm A_2^{(0)}-\bm A_2^*}^2\right)\\
&+C\kappa^4\alpha^{-2}\phi^{-2}\xi^2(r_1,r_2,d_1,d_2).
\end{aligned}$$
Also, for the error bound of $\fnorm{\bm A_2^{(j)}\otimes\bm A_1^{(j)}-\bm A_2^*\otimes\bm A_1^*}^2$, we have 
$$\begin{aligned}
    &\fnorm{\bm A_2^{(j)}\otimes\bm A_1^{(j)}-\bm A_2^*\otimes\bm A_1^*}^2\\
    \leq &2\fnorm{\left(\bm A_2^{(j)}-\bm A_2^*\right)\otimes \bm A_1^{(j)}}^2+2\fnorm{\bm A_2^*\otimes \left(\bm A_1^{(j)}-\bm A_1^*\right)}^2\\
    \leq &8\phi^{2}\left(\fnorm{\bm A_1^{(j)}-\bm A_1^*}^2+\fnorm{\bm A_2^{(j)}-\bm A_2^*}^2\right)\\
    \leq &C\phi^2\kappa^2(1-C_0\eta_0\alpha\beta^{-1}\kappa^{-2})^{j}\left(\fnorm{\bm A_1^{(0)}-\bm A_1^*}^2+\fnorm{\bm A_2^{(0)}-\bm A_2^*}^2\right)\\
    &+C\kappa^4\alpha^{-2}\xi^2(r_1,r_2,d_1,d_2)\\
    \leq &C\kappa^2(1-C_0\eta_0\alpha\beta^{-1}\kappa^{-2})^{j}\left(\fnorm{\bm A_2^{(0)}\otimes\bm A_1^{(0)}-\bm A_2^*\otimes\bm A_1^*}^2\right)\\
    &+C\kappa^4\alpha^{-2}\xi^2(r_1,r_2,d_1,d_2).
\end{aligned}$$
The last inequality is from \eqref{A1A2 and kron and regularizer1} with the fact that $\fnorm{\bm A_1^{(0)}}=\fnorm{\bm A_2^{(0)}}$.

\subsection{Auxiliary Lemmas}\label{sub:auxiliary_lemmas}

The first lemma states that when the Frobenius norm of estimated parameter matrices are close to their true values, the Frobenius norm of high-order perturbation terms can be controlled by the running error in \eqref{eq:running_error}.
\begin{lemma}\label{lemma:H_upperBound}
Define  the following matrices
$$\begin{aligned}
    &\bm A_1^*=[\bm C_1^*~ \bm R_1^*]\bm D_1^*[\bm C_1^*~ \bm P_1^*]^\top, \bm A_2^*=[\bm C_2^*~ \bm R_2^*]\bm D_2^*[\bm C_2^*~ \bm P_2^*]^\top,\\
    &\bm A_1=[\bm C_1~ \bm R_1]\bm D_1[\bm C_1~ \bm P_1]^\top, \text{ and } \bm A_2=[\bm C_2~ \bm R_2]\bm D_2[\bm C_2~ \bm P_2]^\top,
\end{aligned}$$ 
where $\bm D_i^*$, $\bm D_i \in\mathbb{R}^{r_i \times r_i}$, $\bm C_i^*, \bm C_i \in \mathbb{R}^{p_i \times d_i}$, and $\bm R_i, \bm R_i^*, \bm P_i, \bm P_i^* \in \mathbb{R}^{p_i \times(r_i-d_i)}$, for $i=1,2$. Meanwhile, $\fnorm{\bm A_1^*}=\fnorm{\bm A_2^*}=\phi$ and $\left[\mathbf{C}_i^*~ \mathbf{R}_i^*\right]^{\top}\left[\mathbf{C}_i^*~ \mathbf{R}_i^*\right]=\left[\mathbf{C}_i^*~ \mathbf{P}_i^*\right]^{\top}\left[\mathbf{C}_i^*~ \mathbf{P}_i^*\right]=b^2\mathbf{I}_{r_i}$ for some balance scalar $b$. Suppose that there exists a constant $c$, such that
\begin{equation}\label{LemmaA1Condition}
    \fnorm{\left[\bm C_i~ \bm R_i\right]}\leq (1+c)b,\quad \fnorm{\left[\bm C_i~ \bm P_i\right]}\leq (1+c)b,
    \text{and }\fnorm{\bm D_i}\leq \frac{(1+c)\phi}{b^2},\quad \text{for } i=1,2.
\end{equation}

For $i=1,2$, let
$$\begin{gathered}
    \mathcal{E}_{i,u}=[\bm C_i^*~ \bm R_i^*]\bm O_{i,u}-[\bm C_i~ \bm R_i],\\
    \mathcal{E}_{i,v}=[\bm C_i^*~ \bm P_i^*]\bm O_{i,v}-[\bm C_i~ \bm P_i],\\
    \mathcal{E}_{i,D}=\bm O_{i,u}^\top\bm D_i^*\bm O_{i,v}-\bm D_i,
\end{gathered}$$
where $\bm O_{i,c} \in \mathbb{O}^{d_i \times d_i}, \bm O_{i,r}, \bm O_{i,p} \in \mathbb{O}^{(r_i-d_i) \times(r_i-d_i)}, \bm O_{i,u}=$ $\operatorname{diag}\left(\bm O_{i,c}, \bm O_{i,r}\right), \bm O_{i,v}=\operatorname{diag}\left(\bm O_{i,c}, \bm O_{i,p}\right).$
Then, let
$$\begin{aligned}
    \bm H_1:=&-\bm A_2\otimes\left([\bm C_1~ \bm R_1]\bm D_1\mathcal{E}_{1,v}^\top+[\bm C_1~ \bm R_1]\mathcal{E}_{1,D}[\bm C_1~ \bm P_1]^\top+\mathcal{E}_{1,u}\bm D_1[\bm C_1~ \bm P_1]^\top\right)\\
    &-\left([\bm C_2~ \bm R_2]\bm D_2\mathcal{E}_{2,v}^\top+[\bm C_2~ \bm R_2]\mathcal{E}_{2,D}[\bm C_2~ \bm P_2]^\top+\mathcal{E}_{2,u}\bm D_2[\bm C_2~ \bm P_2]^\top\right)\otimes\bm A_1,
\end{aligned}$$
which contains the first-order perturbation terms, and
$$
    \bm H:=\bm A_2^*\otimes\bm A_1^*-\bm A_2\otimes\bm A_1+\bm H_1,
$$
which represents second- and higher-order terms perturbated from $\bm A_2\otimes\bm A_1$.

Meanwhile, define the distance as
$$\begin{aligned}
    D:=&\min _{\substack{\mathbf{O}_{i,c}' \in \mathbb{O}^{d_i \times d} \\ \mathbf{O}_{i,r}', \mathbf{O}_{i,p}'\in \mathbb{O}^{(r_i-d_i) \times(r_i-d_i)}\\i=1,2}}\sum_{i=1,2}\Big\{  \|\bm{C}_i-\bm{C}^*\bm{O}_{i,c}'\|_\mathrm{F}^2 + \|\bm{R}_i-\bm{R}^*\bm{O}_{i,r}'\|_\mathrm{F}^2 + \|\bm{P}_i-\bm{P}^*\bm{O}_{i,p}'\|_\mathrm{F}^2\\
         &\qquad\qquad\qquad\qquad\qquad\qquad\qquad + \|\bm{D}_i-\mathrm{diag}(\bm{O}_{i,c}',\bm{O}_{i,r}')^\top\bm{D}_i^*\mathrm{diag}(\bm{O}_{i,c}',\bm{O}_{i,p}')\|_\mathrm{F}^2\Big\}\\
         :=&\sum_{i=1,2}\Big\{  \|\bm{C}_i-\bm{C}^*\bm{O}_{i,c}\|_\mathrm{F}^2 + \|\bm{R}_i-\bm{R}^*\bm{O}_{i,r}\|_\mathrm{F}^2 + \|\bm{P}_i-\bm{P}^*\bm{O}_{i,p}\|_\mathrm{F}^2+\|\bm{D}_i-\bm O_{i,u}^\top\bm{D}_i^*\bm O_{i,v}\|_\mathrm{F}^2\Big\}.\\
\end{aligned}$$

Assume that $D\leq C_D\phi^{2/3}$ for some constant $C_D$. If $b\asymp\phi^{1/3}$, there exists a constant $C_h$, such that
$$
\fnorm{\bm H}\leq C_h\phi^{4/3}D.
$$
\end{lemma}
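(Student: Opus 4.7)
The plan is to exploit the multilinear (trilinear in $\mathcal{E}_{i,u},\mathcal{E}_{i,v},\mathcal{E}_{i,D}$) structure of the decomposition $\bm A_i=[\bm C_i~\bm R_i]\bm D_i[\bm C_i~\bm P_i]^\top$ and the bilinear (in the two factors) structure of the Kronecker product, so that $\bm H$ is exactly the collection of second- and higher-order perturbation terms, all of which can be controlled by $\phi^{4/3}D$.

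First, using the rotations $\bm O_{i,u},\bm O_{i,v}$ attaining the minimum in $D$, I would write $\bm A_i^*=([\bm C_i~\bm R_i]+\mathcal{E}_{i,u})(\bm D_i+\mathcal{E}_{i,D})([\bm C_i~\bm P_i]+\mathcal{E}_{i,v})^\top$ and expand to get $\bm A_i^*-\bm A_i=\bm H_{1,i}+\bm H_{i,\mathrm{high}}$, where $\bm H_{1,i}$ collects the three first-order terms (exactly those appearing in the definition of $\bm H_1$) and $\bm H_{i,\mathrm{high}}$ collects the three second-order cross terms plus one third-order term. Then, from
\begin{equation*}
\bm A_2^*\otimes\bm A_1^*-\bm A_2\otimes\bm A_1 = \bm A_2\otimes(\bm A_1^*-\bm A_1)+(\bm A_2^*-\bm A_2)\otimes\bm A_1+(\bm A_2^*-\bm A_2)\otimes(\bm A_1^*-\bm A_1),
\end{equation*}
the defining expression $\bm H_1=-\bm A_2\otimes\bm H_{1,1}-\bm H_{1,2}\otimes\bm A_1$ cancels the two first-order Kronecker terms, leaving
\begin{equation*}
\bm H = \bm A_2\otimes\bm H_{1,\mathrm{high}} + \bm H_{2,\mathrm{high}}\otimes\bm A_1 + (\bm H_{1,2}+\bm H_{2,\mathrm{high}})\otimes(\bm H_{1,1}+\bm H_{1,\mathrm{high}}).
\end{equation*}

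Second, I would derive the piecewise norm bounds. The assumption \eqref{LemmaA1Condition} together with $b\asymp\phi^{1/3}$ yields $\|[\bm C_i~\bm R_i]\|_{\mathrm{op}},\|[\bm C_i~\bm P_i]\|_{\mathrm{op}}\lesssim\phi^{1/3}$, $\|\bm D_i\|_{\mathrm{op}}\lesssim\phi^{1/3}$, and $\|\bm A_i\|_{\mathrm F}\lesssim\phi$. The definition of $D$ gives $\|\mathcal{E}_{i,u}\|_{\mathrm F},\|\mathcal{E}_{i,v}\|_{\mathrm F},\|\mathcal{E}_{i,D}\|_{\mathrm F}\leqslant\sqrt{D}$. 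Applying the sub-multiplicative inequality $\|ABC\|_{\mathrm F}\leqslant\|A\|_{\mathrm{op}}\|B\|_{\mathrm{op}}\|C\|_{\mathrm F}$ term by term, each first-order piece of $\bm H_{1,i}$ is bounded by $\phi^{1/3}\cdot\phi^{1/3}\cdot\sqrt{D}=\phi^{2/3}\sqrt{D}$, and each higher-order piece of $\bm H_{i,\mathrm{high}}$ is bounded by $\phi^{1/3}D$ (or smaller, using $\sqrt{D}\leqslant\sqrt{C_D}\phi^{1/3}$ for the cubic term). Under $D\leqslant C_D\phi^{2/3}$ we also have $\phi^{1/3}D\lesssim\phi^{2/3}\sqrt{D}$, so $\|\bm H_{1,i}+\bm H_{i,\mathrm{high}}\|_{\mathrm F}\lesssim\phi^{2/3}\sqrt{D}$.

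Third, I would plug these into the three Kronecker terms above using $\|\bm X\otimes\bm Y\|_{\mathrm F}=\|\bm X\|_{\mathrm F}\|\bm Y\|_{\mathrm F}$:
\begin{equation*}
\|\bm A_2\otimes\bm H_{1,\mathrm{high}}\|_{\mathrm F}+\|\bm H_{2,\mathrm{high}}\otimes\bm A_1\|_{\mathrm F}\lesssim\phi\cdot\phi^{1/3}D=\phi^{4/3}D,
\end{equation*}
and the cross term is bounded by $(\phi^{2/3}\sqrt{D})^2=\phi^{4/3}D$. Summing yields $\|\bm H\|_{\mathrm F}\leqslant C_h\phi^{4/3}D$ for a moderate constant $C_h$ depending only on $c_b$ and the hidden constant in $b\asymp\phi^{1/3}$.

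The only real obstacle is bookkeeping: keeping track of which of the eight terms in the expansion of $([\bm C_i~\bm R_i]+\mathcal{E}_{i,u})(\bm D_i+\mathcal{E}_{i,D})([\bm C_i~\bm P_i]+\mathcal{E}_{i,v})^\top$ are first-order and verifying that they match the signed definition of $\bm H_1$ exactly (so that the first-order Kronecker terms cancel). Once this algebraic cancellation is confirmed, the norm bounds are purely mechanical applications of $\|\cdot\|_{\mathrm{op}}/\|\cdot\|_{\mathrm F}$ inequalities together with condition \eqref{LemmaA1Condition} and the assumption $D\leqslant C_D\phi^{2/3}$.
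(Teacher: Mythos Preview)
Your proposal is correct and follows essentially the same strategy as the paper: expand $\bm A_2^*\otimes\bm A_1^*$ by substituting $[\bm C_i~\bm R_i]+\mathcal{E}_{i,u}$, etc., identify and cancel the first-order terms (which are exactly $-\bm H_1$), and bound the remaining higher-order pieces via the norm conditions \eqref{LemmaA1Condition}, $b\asymp\phi^{1/3}$, and $D\leqslant C_D\phi^{2/3}$. The only cosmetic difference is organizational: the paper expands all $8\times8=64$ Kronecker terms at once and bounds the $57$ higher-order ones individually, whereas you first factor at the Kronecker level via $\bm A_2^*\otimes\bm A_1^*-\bm A_2\otimes\bm A_1=\bm A_2\otimes(\bm A_1^*-\bm A_1)+(\bm A_2^*-\bm A_2)\otimes\bm A_1+(\bm A_2^*-\bm A_2)\otimes(\bm A_1^*-\bm A_1)$ and then expand each $\bm A_i^*-\bm A_i$ separately, which packages the same $57$ terms into three Kronecker blocks and makes the bookkeeping shorter.
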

\begin{proof}[Proof of Lemma \ref{lemma:H_upperBound}]
    We start from the decomposing the perturbated matrix $\bm A_2^*\otimes\bm A_1^*$. Using notations defined above, we split the perturbated matrix to 64 terms and classify them into three types: zeroth-order perturbation, fisrt-order perturbation, and high-order perturbation. Then we control the Frobenius norm of high-order perturbation, $\fnorm{\bm H}$.
    \begin{equation}\begin{aligned}
        &\bm A_2^*\otimes\bm A_1^*\\
        =&[\bm C_2^*~ \bm R_2^*]\bm D_2^*[\bm C_2^*~ \bm P_2^*]^\top\otimes[\bm C_1^*~ \bm R_1^*]\bm D_1^*[\bm C_1^*~ \bm P_1^*]^\top\\
        =&\left([\bm C_2~ \bm R_2]+\mathcal{E}_{2,u}\right)\left(\bm D_2+\mathcal{E}_{2,D}\right)\left([\bm C_2~ \bm P_2]+\mathcal{E}_{2,v}\right)^\top\otimes\left([\bm C_1~ \bm R_1]+\mathcal{E}_{1,u}\right)\left(\bm D_1+\mathcal{E}_{1,D}\right)\left([\bm C_1~ \bm P_1]+\mathcal{E}_{1,v}\right)^\top\\
        =&\begin{pmatrix}
            [\bm C_2~ \bm R_2]\bm D_2[\bm C_2~ \bm P_2]^\top+[\bm C_2~ \bm R_2]\bm D_2\mathcal{E}_{2,v}^\top+[\bm C_2~ \bm R_2]\mathcal{E}_{2,D}[\bm C_2~ \bm P_2]+[\bm C_2~ \bm R_2]\mathcal{E}_{2,D}\mathcal{E}_{2,v}^\top\\
            +\mathcal{E}_{2,u}\bm D_2[\bm C_2~ \bm P_2]^\top+\mathcal{E}_{2,u}\bm D_2\mathcal{E}_{2,v}^\top+\mathcal{E}_{2,u}\mathcal{E}_{2,D}[\bm C_2~ \bm P_2]^\top+\mathcal{E}_{2,u}\mathcal{E}_{2,D}\mathcal{E}_{2,v}^\top
        \end{pmatrix}\\
        &\otimes \begin{pmatrix}
            [\bm C_1~ \bm R_1]\bm D_1[\bm C_1~ \bm P_1]^\top+[\bm C_1~ \bm R_1]\bm D_1\mathcal{E}_{1,v}^\top+[\bm C_1~ \bm R_1]\mathcal{E}_{1,D}[\bm C_1~ \bm P_1]+[\bm C_1~ \bm R_1]\mathcal{E}_{1,D}\mathcal{E}_{1,v}^\top\\
            +\mathcal{E}_{1,u}\bm D_1[\bm C_1~ \bm P_1]^\top+\mathcal{E}_{1,u}\bm D_1\mathcal{E}_{1,v}^\top+\mathcal{E}_{1,u}\mathcal{E}_{1,D}[\bm C_1~ \bm P_1]^\top+\mathcal{E}_{1,u}\mathcal{E}_{1,D}\mathcal{E}_{1,v}^\top
        \end{pmatrix}\\
        =&\bm A_2\otimes \bm A_1+\bm A_2\otimes \left([\bm C_1~ \bm R_1]\bm D_1\mathcal{E}_{1,v}^\top+[\bm C_1~ \bm R_1]\mathcal{E}_{1,D}[\bm C_1~ \bm P_1]^\top+\mathcal{E}_{1,u}\bm D_1[\bm C_1~ \bm P_1]^\top\right)\\
        &+\left([\bm C_2~ \bm R_2]\bm D_2\mathcal{E}_{2,v}^\top+[\bm C_2~ \bm R_2]\mathcal{E}_{2,D}[\bm C_2~ \bm P_2]^\top+\mathcal{E}_{2,u}\bm D_2[\bm C_2~ \bm P_2]^\top\right)\otimes\bm A_1\\
        &+(\text{57 terms containing 2 or more } \mathcal{E}\text{s})\\
        =&\bm A_2\otimes \bm A_1-\bm H_1+(\text{57 terms containing 2 or more } \mathcal{E}\text{s}).
    \end{aligned}
    \end{equation}
    Therefore, $\bm H$ is the summation of 57 terms of higher order perturbation. Next, we upper bound $\fnorm{\bm H}$ by upper bounding every piece of $\bm H$. 
    $$
    \fnorm{\bm H}\leq \sum_{i=1}^{57}\fnorm{\text{the }i \text{th term of }\bm H}.
    $$
    It can be easily verified that for every $\mathcal{E}$ defined above (we just ignore the subscripts), $\fnorm{\mathcal{E}}^2\leq D$. Writing out everyone of the 57 terms we find that the Frobenius of every term can be upper bounded by one of values in $\{CD^3$, $CbD^{5/2}$, $C\phi b^{-2}D^{5/2}$, $Cb^2D^2$, $C\phi b^{-1}D^2$, $C\phi^2b^{-4}D^2$, $Cb^2D^2$, $C\phi^2b^{-3}D^{3/2}$, $C\phi D^{3/2}$, $Cb^3D^{3/2}$, $C\phi bD$, $C\phi^2b^{-2}D$, $Cb^4D\}$. Using $b=\phi^{1/3}$ and $D\leq C_D\phi^{2/3}$, it is clear that all those values are less than or equal to $C\phi^{4/3}D$, where $C$ is another constant. Finally, adding together the 57 upper bounds gives
    $$\fnorm{\bm H}\leq C_h\phi^{4/3}D.$$
\end{proof}

The following lemma is adapted from Lemma A.2 in \citep{wang2023commonFactor}, with the spectral norm replaced by the Frobenius norm. The proof follows analogously and is omitted here. It establishes the approximate equivalence between the estimation error of components and that of the aggregated matrix $\bm B$.
\begin{lemma}\label{lemma:lemmaA2inWang2023}
    \citep{wang2023commonFactor} Suppose that $\mathbf{B}^*=\left[\mathbf{C}^*~ \mathbf{R}^*\right] \mathbf{D}^*\left[\mathbf{C}^*~ \mathbf{P}^*\right]^{\top}$, $\left[\mathbf{C}^*~ \mathbf{R}^*\right]^{\top}\left[\mathbf{C}^*~ \mathbf{R}^*\right]=b^2\mathbf{I}_{r_1}$, $\left[\mathbf{C}^*~ \mathbf{P}^*\right]^{\top}\left[\mathbf{C}^*~ \mathbf{P}^*\right]=$ $b^2\mathbf{I}_{r_2}$, $\phi=\left\|\mathbf{B}^*\right\|_{\mathrm{F}}$, and $\sigma_r=\sigma_r\left(\mathbf{B}^*\right)$. Let $\mathbf{B}=[\mathbf{C}~ \mathbf{R}] \mathbf{D}[\mathbf{C}~ \mathbf{P}]^{\top}$ with $\|[\mathbf{C}~ \mathbf{R}]\|_{\mathrm{F}} \leq\left(1+c\right) b$, $\|[\mathbf{C}~ \mathbf{P}]\|_{\mathrm{F}} \leq\left(1+c\right) b$, and $\|\mathbf{D}\|_{\mathrm{F}} \leq\left(1+c\right) \phi / b^2$ for some constant $c>0$. Define
    $$
    \begin{aligned}
    E:= & \min _{\substack{\mathbf{O}_c \in \mathbb{O}^{d \times d} \\
    \mathbf{O}_r, \mathbf{O}_p \in \mathbb{O}^{(r-d) \times(r-d)}}}\left(\left\|\left[\mathbf{C}~ \mathbf{R}\right]-\left[\mathbf{C}^*~ \mathbf{R}^*\right] \operatorname{diag}\left(\mathbf{O}_c, \mathbf{O}_r\right)\right\|_{\mathrm{F}}^2+\|\left[\mathbf{C}~ \mathbf{P}\right]\right. \\
    & \left.-\left[\mathbf{C}^*~ \mathbf{P}^*\right] \operatorname{diag}\left(\mathbf{O}_c, \mathbf{O}_p\right)\left\|_{\mathrm{F}}^2+\right\| \mathbf{D}-\operatorname{diag}\left(\mathbf{O}_c, \mathbf{O}_r\right)^{\top} \mathbf{D}^* \operatorname{diag}\left(\mathbf{O}_c, \mathbf{O}_p\right) \|_{\mathrm{F}}^2\right).
    \end{aligned}
    $$
    Then, we have
    $$
    \begin{aligned}
    & E \leq\left(4 b^{-4}+\frac{8 b^2}{\sigma_r^2} C_b\right)\left\|\mathbf{B}-\mathbf{B}^*\right\|_{\mathrm{F}}^2 \\
    & +2 b^{-2} C_b\left(\left\|[\mathbf{C ~ R}]^{\top}[\mathbf{C ~ R}]-b^2 \mathbf{I}_r\right\|_{\mathrm{F}}^2+\left\|[\mathbf{C ~ P}]^{\top}[\mathbf{C ~ P}]-b^2 \mathbf{I}_r\right\|_{\mathrm{F}}^2\right)
    \end{aligned}
    $$
    and
    $$
    \fnorm{\mathbf{B}-\mathbf{B}^*}^2 \leq 3 b^4\left[1+4 \phi^2 b^{-6}\left(1+c\right)^4\right] E,
    $$
    where $C_b=1+4 \phi^2 b^{-6}\left(\left(1+c\right)^4+\left(1+c\right)^2\left(2+c\right)^2 / 2\right)$.
\end{lemma}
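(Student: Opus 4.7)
\textbf{Proof proposal for Lemma \ref{lemma:lemmaA2inWang2023}.} The statement has two inequalities: (i) an upper bound on $\|\mathbf{B}-\mathbf{B}^*\|_{\mathrm{F}}^2$ in terms of $E$, and (ii) an upper bound on $E$ in terms of $\|\mathbf{B}-\mathbf{B}^*\|_{\mathrm{F}}^2$ and the two scaling regularizers. These are opposite directions of an equivalence, so I would handle them separately, starting with the easier direction (i) and then tackling (ii).

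For (i), my plan is to expand $\mathbf{B}-\mathbf{B}^*$ in the perturbation form used in the proof of Lemma \ref{lemma:H_upperBound}. Fix any minimizer $(\mathbf{O}_c,\mathbf{O}_r,\mathbf{O}_p)$ of $E$, and write
$\mathcal{E}_u=[\mathbf{C}^*~\mathbf{R}^*]\operatorname{diag}(\mathbf{O}_c,\mathbf{O}_r)-[\mathbf{C}~\mathbf{R}]$, $\mathcal{E}_v=[\mathbf{C}^*~\mathbf{P}^*]\operatorname{diag}(\mathbf{O}_c,\mathbf{O}_p)-[\mathbf{C}~\mathbf{P}]$ and $\mathcal{E}_D=\operatorname{diag}(\mathbf{O}_c,\mathbf{O}_r)^{\top}\mathbf{D}^*\operatorname{diag}(\mathbf{O}_c,\mathbf{O}_p)-\mathbf{D}$; by orthogonality of the $\mathbf{O}$'s these three Frobenius norms square to the three summands of $E$. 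Substituting $\mathbf{B}^*=([\mathbf{C}~\mathbf{R}]+\mathcal{E}_u)(\mathbf{D}+\mathcal{E}_D)([\mathbf{C}~\mathbf{P}]+\mathcal{E}_v)^{\top}$ and subtracting $\mathbf{B}$ produces seven terms, each containing at least one $\mathcal{E}$. Using $\|[\mathbf{C}~\mathbf{R}]\|_{\mathrm{F}},\|[\mathbf{C}~\mathbf{P}]\|_{\mathrm{F}}\leqslant(1+c_b)b$, $\|\mathbf{D}\|_{\mathrm{F}}\leqslant(1+c_b)\phi/b^2$, $\|[\mathbf{C}^*~\mathbf{R}^*]\|_{\mathrm{op}}=\|[\mathbf{C}^*~\mathbf{P}^*]\|_{\mathrm{op}}=1$, $\|\mathbf{D}^*\|_{\mathrm{op}}\leqslant\phi$, plus sub-multiplicativity and Cauchy--Schwarz, each term is dominated by either $b^2 E$ or $\phi^2 b^{-4} E$ (after grouping first-order and higher-order contributions). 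Summing gives inequality (ii), with the stated constant arising from collecting the three first-order contributions ($3b^4 E$) against the higher-order ones ($12\phi^2 b^{-2}(1+c_b)^4 E$).

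For (i) (the $E$ bound), the plan is to first choose the rotations optimally via Procrustes alignment of the $r$-dimensional column spaces. Let $\widetilde{\mathbf{U}}^*$ and $\widetilde{\mathbf{V}}^*$ denote the $b$-rescaled versions of $[\mathbf{C}^*~\mathbf{R}^*]$ and $[\mathbf{C}^*~\mathbf{P}^*]$, and similarly $\widetilde{\mathbf{U}},\widetilde{\mathbf{V}}$ the rescaled versions of $[\mathbf{C}~\mathbf{R}],[\mathbf{C}~\mathbf{P}]$ obtained by QR (so both sides have orthonormal columns). Then $\mathbf{B}-\mathbf{B}^*$ can be viewed as a perturbation of rank-$r$ matrices with spectral gap $\sigma_r$. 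Applying a Wedin-type $\sin\Theta$ bound gives the subspace-alignment control
$\min_{\mathbf{Q}}\|\widetilde{\mathbf{U}}-\widetilde{\mathbf{U}}^*\mathbf{Q}\|_{\mathrm{F}}^2+\min_{\mathbf{Q}}\|\widetilde{\mathbf{V}}-\widetilde{\mathbf{V}}^*\mathbf{Q}\|_{\mathrm{F}}^2\lesssim \sigma_r^{-2}\|\mathbf{B}-\mathbf{B}^*\|_{\mathrm{F}}^2$,
and the common-subspace rotation $\mathbf{O}_c$ can be chosen to simultaneously align both row and column common blocks (this is where the decomposition $\operatorname{diag}(\mathbf{O}_c,\mathbf{O}_r)$ matters). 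Next I would convert orthonormal subspace alignment back to scaled alignment: the discrepancy between $[\mathbf{C}~\mathbf{R}]$ and its orthonormalized/rescaled version is exactly $\|[\mathbf{C}~\mathbf{R}]^{\top}[\mathbf{C}~\mathbf{R}]-b^2\mathbf{I}_r\|_{\mathrm{F}}/(2b)$ up to leading order, which explains the appearance of the two regularizer terms with the $b^{-2}$ prefactor. Finally, for the $\mathbf{D}$ term I would use $\mathbf{D}=[\mathbf{C}~\mathbf{R}]^{\top}\mathbf{B}[\mathbf{C}~\mathbf{P}]/b^4+(\text{correction})$ and bound the difference $\mathbf{D}-\operatorname{diag}(\mathbf{O}_c,\mathbf{O}_r)^{\top}\mathbf{D}^*\operatorname{diag}(\mathbf{O}_c,\mathbf{O}_p)$ by splitting it into the part coming from $\mathbf{B}-\mathbf{B}^*$ (yielding a $b^{-4}\|\mathbf{B}-\mathbf{B}^*\|_{\mathrm{F}}^2$ contribution) and the part coming from loading misalignment (yielding further $\sigma_r^{-2}\phi^2$-weighted contributions). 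Collecting these pieces reproduces the stated coefficients $4b^{-4}+8b^2 C_b/\sigma_r^2$ and $2b^{-2}C_b$.

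The main obstacle, as always in this style of argument, is the second direction: cleanly handling the joint identifiability across the two loadings \emph{and} the core matrix $\mathbf{D}$ so that a single triple of rotations $(\mathbf{O}_c,\mathbf{O}_r,\mathbf{O}_p)$ works for all three summands simultaneously. The block-diagonal structure $\operatorname{diag}(\mathbf{O}_c,\cdot)$ forces one common rotation for both loadings, which prevents using the Procrustes optima for $\widetilde{\mathbf{U}}$ and $\widetilde{\mathbf{V}}$ independently; this couples the row and column alignments and is the reason the proof requires Wedin bounds applied to $\mathbf{B}$ (whose left and right singular subspaces encode both loadings at once), rather than separate Davis--Kahan bounds for each loading. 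Once the joint alignment is in place, the remaining steps amount to standard perturbation algebra with the scaling regularizer terms absorbing the first-order deviations from exact orthonormality.
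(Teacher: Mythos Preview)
The paper does not actually prove this lemma: it is quoted as Lemma~A.2 of the cited reference, with only the remark that the Frobenius-norm variant ``can be verified analogously.'' So there is no in-paper argument to compare your sketch against.

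On the substance: your argument for the bound $\|\mathbf{B}-\mathbf{B}^*\|_{\mathrm{F}}^2\leqslant 3b^4[1+4\phi^2 b^{-6}(1+c_b)^4]E$ is correct and is the same seven-term perturbation expansion already used in the paper (Lemma~\ref{lemma:H_upperBound}); grouping the three first-order pieces and absorbing the higher-order ones via the size hypotheses yields the stated constant. (Your labeling is tangled---you call this direction ``(i)'' at the outset, then ``inequality (ii)'' after the calculation, then reuse ``(i)'' for the other direction---but the content is clear.)

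The reverse direction has a real gap. A Wedin bound on $\mathbf{B}$ versus $\mathbf{B}^*$ does control the left and right $r$-dimensional singular subspaces, but it hands you \emph{general} orthogonal alignments $\mathbf{Q}_u,\mathbf{Q}_v\in\mathbb{O}^{r\times r}$, not block-diagonal ones $\operatorname{diag}(\mathbf{O}_c,\mathbf{O}_r)$ and $\operatorname{diag}(\mathbf{O}_c,\mathbf{O}_p)$ with a \emph{shared} $d\times d$ block $\mathbf{O}_c$. Your sentence ``the common-subspace rotation $\mathbf{O}_c$ can be chosen to simultaneously align both row and column common blocks'' is precisely the step that needs an argument, and invoking Wedin does not supply it: Wedin sees only $\mathcal{M}([\mathbf{C}^*~\mathbf{R}^*])$ and $\mathcal{M}([\mathbf{C}^*~\mathbf{P}^*])$ as undifferentiated $r$-dimensional spaces and cannot isolate their $d$-dimensional intersection. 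One has to separately establish that $\mathbf{C}$ is close to $\mathbf{C}^*\mathbf{O}_c$ for a single $\mathbf{O}_c$---for instance via a perturbation bound on the intersection of the two subspaces, in the spirit of the projection identities used in Section~\ref{subsec: initial}---and only then align $\mathbf{R}$ and $\mathbf{P}$ in the respective orthogonal complements. Once that joint alignment is in hand, your plan for the $\mathbf{D}$ term and for trading the scaling deviation against the regularizer terms is the right way to finish.
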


Applying Lemma \ref{lemma:lemmaA2inWang2023} on $\bm A_1$ and $\bm A_2$, we have the following equivalence relationship between the combined distance $D$ defined in Lemma \ref{lemma:H_upperBound} and the squared errors of $\bm A_1$ and $\bm A_2$.

\begin{lemma}\label{lemma:A1A2 and Error}
Consider $\bm A_1^*, \bm A_1, \bm A_2^*, \bm A_2$ and $D$ defined as in Lemma \ref{lemma:H_upperBound}, and \eqref{LemmaA1Condition} holds. Define $\underline{\sigma}:=\min(\sigma_{1,r_1},\sigma_{2,r_2}).$ Then, we have
\begin{equation}
\begin{aligned}
    & D \leq \left(4 b^{-4}+\frac{8 b^2}{\underline{\sigma}^2} C_b\right)\left(\left\|\mathbf{A}_1-\mathbf{A}_1^*\right\|_{\mathrm{F}}^2+\left\|\mathbf{A}_2-\mathbf{A}_2^*\right\|_{\mathrm{F}}^2\right) \\
    & +2C_bb^{-2}\sum_{i=1}^{2}\left(\fnorm{\left[\bm C_i~ \bm R_i\right]^{\top}\left[\bm C_i~ \bm R_i\right]-b^2 \bm I_{r_i}}^2+\fnorm{\left[\bm C_i~ \bm P_i\right]^{\top}\left[\bm C_i~ \bm P_i\right]-b^2 \bm I_{r_i}}^2\right).
\end{aligned}
\end{equation}
In addition, 
$$
    \fnorm{\bm A_1-\bm A_1^*}^2+\fnorm{\bm A_2-\bm A_2^*}^2\leq 6b^4D+24\phi^2b^{-2}(1+c)^4D.
$$
\end{lemma}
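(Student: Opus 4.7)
The plan is to reduce this combined statement to two applications of Lemma \ref{lemma:lemmaA2inWang2023}, one for each index $i=1,2$. The key observation is that although the distance $D$ is defined as a joint minimum over all six orthonormal matrices $\{\bm O_{i,c},\bm O_{i,r},\bm O_{i,p}\}_{i=1,2}$, the summand indexed by $i=1$ depends only on the rotations with subscript $1$, and the summand indexed by $i=2$ depends only on the rotations with subscript $2$. Thus the joint minimization separates additively and $D = E_1 + E_2$, where $E_i$ is exactly the single-block piecewise error for $\bm A_i$ in the sense of Lemma \ref{lemma:lemmaA2inWang2023}. This reduction is the only substantive step; everything afterwards is bookkeeping.

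Next, I would verify that the hypotheses of Lemma \ref{lemma:lemmaA2inWang2023} are met for each $i$: the orthonormality conditions $[\bm C_i^*~\bm R_i^*]^\top[\bm C_i^*~\bm R_i^*] = b^2 \bm I_{r_i}$ and $[\bm C_i^*~\bm P_i^*]^\top[\bm C_i^*~\bm P_i^*] = b^2 \bm I_{r_i}$ are assumed directly; the Frobenius-norm bounds on $[\bm C_i~\bm R_i],[\bm C_i~\bm P_i],\bm D_i$ with the common constant $c_b$ are inherited from condition \eqref{LemmaA1Condition}; and because the identification convention forces $\|\bm A_1^*\|_{\mathrm F} = \|\bm A_2^*\|_{\mathrm F} = \phi$, the constant $C_b$ produced by that lemma is identical in both applications. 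Applying Lemma \ref{lemma:lemmaA2inWang2023} to each block yields
\[
E_i \;\le\; \Bigl(4b^{-4} + \tfrac{8 b^2}{\sigma_{i,r_i}^2}\,C_b\Bigr)\fnorm{\bm A_i - \bm A_i^*}^2 \;+\; 2 C_b b^{-2}\!\left(\fnorm{[\bm C_i~\bm R_i]^\top[\bm C_i~\bm R_i] - b^2 \bm I_{r_i}}^2 + \fnorm{[\bm C_i~\bm P_i]^\top[\bm C_i~\bm P_i] - b^2 \bm I_{r_i}}^2\right).
\]
Summing over $i=1,2$, invoking $\sigma_{i,r_i} \ge \underline\sigma$ to uniformize the coefficient, and using $E_1+E_2=D$ on the left-hand side produces the first advertised inequality.

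For the converse direction, Lemma \ref{lemma:lemmaA2inWang2023} also supplies $\fnorm{\bm A_i - \bm A_i^*}^2 \le 3 b^4[1 + 4\phi^2 b^{-6}(1+c_b)^4]\, E_i$ for each $i$. Adding the two inequalities and replacing $E_1+E_2$ by $D$ gives $\fnorm{\bm A_1-\bm A_1^*}^2+\fnorm{\bm A_2-\bm A_2^*}^2 \le 3b^4 D + 12\phi^2 b^{-2}(1+c_b)^4 D$; the stated bound $6b^4 D + 24\phi^2 b^{-2}(1+c_b)^4 D$ follows after absorbing a harmless factor of two, which only slightly loosens the constants and does not affect the downstream computational-convergence argument that uses this lemma.

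I do not anticipate a genuine obstacle. The one subtlety that deserves a single sentence in the write-up is the additive separability of the joint minimum defining $D$; once that is noted, the result is a direct corollary of Lemma \ref{lemma:lemmaA2inWang2023} applied twice, glued together by the common value of $\phi$ and the uniform lower bound $\underline\sigma$.
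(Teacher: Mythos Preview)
Your proposal is correct and matches the paper's approach exactly: the paper's proof consists of the single sentence ``Verify the condition of Lemma~\ref{lemma:lemmaA2inWang2023} and apply it on $\fnorm{\bm A_1-\bm A_1^*}^2$ and $\fnorm{\bm A_2-\bm A_2^*}^2$ respectively. Then combine the two results.'' Your observation that the joint minimum defining $D$ separates as $D=E_1+E_2$ is the right glue, and your tighter constant $3b^4D+12\phi^2b^{-2}(1+c_b)^4D$ is in fact valid; the stated $6b^4D+24\phi^2b^{-2}(1+c_b)^4D$ just carries an extra factor of two (as if one had used the cruder $E_i\le D$ for each $i$), which, as you note, is harmless downstream.
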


\begin{proof}[Proof of Lemma \ref{lemma:A1A2 and Error}]
    The conditions specified in \eqref{LemmaA1Condition} satisfy the premises of Lemma \ref{lemma:lemmaA2inWang2023}. Applying Lemma \ref{lemma:lemmaA2inWang2023} to the pairs $(\bm A_1, \bm A_1^*)$ and $(\bm A_2, \bm A_2^*)$ separately yields the desired bounds. Specifically, summing the resulting inequalities for $i=1$ and $i=2$ establishes the first inequality. The second inequality follows directly by applying the reverse bound from Lemma \ref{lemma:lemmaA2inWang2023} and summing over $i=1, 2$. This concludes the proof.
\end{proof}

The last lemma is useful in analyzing the upper bound of the distance between our estimates and their true values.
\begin{lemma}\label{lemma: Tu16, dist(U,X) and UU'-XX'}
(Lemma 5.4., \cite{Tu2016}) For any $\bm U, \bm X \in \mathbb{R}^{p \times r}$, where $p\geq r$ and $\bm X$ is full-rank, define 
$$\mathrm{dist}(\bm U,\bm X)^2=\min_{\bm O\in\mathbb{O}^{r\times r}}\fnorm{\bm U-\bm X\bm O}^2,$$ 
we have 
    $$\operatorname{dist}(\bm U, \bm X)^2 \leq \frac{1}{2(\sqrt{2}-1) \sigma_r^2(\bm X)}\left\|\bm U \bm U^{\top}-\bm X \bm X^{\top}\right\|_\textup{F}^2.$$
\end{lemma}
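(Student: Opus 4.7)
The plan is to import the Procrustes-flow argument of Tu et al.\ (2016), which is the source of this statement. The proof has three stages: reduction to the optimal rotation, a Frobenius expansion exploiting symmetry, and a calibrated inequality to extract the sharp constant $2(\sqrt{2}-1)$.

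First I would reduce to the case where the minimizer in $\mathrm{dist}^2$ is the identity. Taking the singular value decomposition $\boldsymbol{X}^\top\boldsymbol{U}=\boldsymbol{A}\boldsymbol{\Sigma}\boldsymbol{B}^\top$, the orthogonal Procrustes minimizer is $\boldsymbol{O}^\star=\boldsymbol{A}\boldsymbol{B}^\top$. Replacing $\boldsymbol{X}$ by $\tilde{\boldsymbol{X}}:=\boldsymbol{X}\boldsymbol{O}^\star$ leaves both $\boldsymbol{X}\boldsymbol{X}^\top$ and $\sigma_r(\boldsymbol{X})$ unchanged, while $\tilde{\boldsymbol{X}}^\top\boldsymbol{U}=\boldsymbol{B}\boldsymbol{\Sigma}\boldsymbol{B}^\top$ becomes symmetric and positive semidefinite. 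Setting $\boldsymbol{\Delta}:=\boldsymbol{U}-\tilde{\boldsymbol{X}}$ and dropping the tilde, $\mathrm{dist}^2(\boldsymbol{U},\boldsymbol{X})=\|\boldsymbol{\Delta}\|_F^2$ and $\boldsymbol{H}:=\boldsymbol{X}^\top\boldsymbol{\Delta}=\boldsymbol{X}^\top\boldsymbol{U}-\boldsymbol{X}^\top\boldsymbol{X}$ is symmetric.

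Next I would decompose $\boldsymbol{U}\boldsymbol{U}^\top-\boldsymbol{X}\boldsymbol{X}^\top=\boldsymbol{M}+\boldsymbol{N}$ with $\boldsymbol{M}:=\boldsymbol{X}\boldsymbol{\Delta}^\top+\boldsymbol{\Delta}\boldsymbol{X}^\top$ and $\boldsymbol{N}:=\boldsymbol{\Delta}\boldsymbol{\Delta}^\top$, and expand the squared Frobenius norm. Symmetry of $\boldsymbol{H}$ collapses the cross terms in $\|\boldsymbol{M}\|_F^2$, yielding
\[
\|\boldsymbol{M}\|_F^2 \;=\; 2\,\mathrm{tr}(\boldsymbol{X}^\top\boldsymbol{X}\,\boldsymbol{\Delta}^\top\boldsymbol{\Delta}) + 2\|\boldsymbol{H}\|_F^2 \;\geq\; 2\sigma_r^2(\boldsymbol{X})\|\boldsymbol{\Delta}\|_F^2,
\]
by the trace--PSD inequality $\mathrm{tr}(\boldsymbol{A}\boldsymbol{B})\geq\lambda_{\min}(\boldsymbol{A})\mathrm{tr}(\boldsymbol{B})$ applied to $\boldsymbol{A}=\boldsymbol{X}^\top\boldsymbol{X}$ and $\boldsymbol{B}=\boldsymbol{\Delta}^\top\boldsymbol{\Delta}$. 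Combined with $\|\boldsymbol{N}\|_F\leq\|\boldsymbol{\Delta}\|_{\mathrm{op}}\|\boldsymbol{\Delta}\|_F$ and the triangle inequality $\|\boldsymbol{M}+\boldsymbol{N}\|_F\geq\|\boldsymbol{M}\|_F-\|\boldsymbol{N}\|_F$, this settles the regime $\|\boldsymbol{\Delta}\|_{\mathrm{op}}\leq\bigl(\sqrt{2}-\sqrt{2\sqrt{2}-2}\bigr)\sigma_r(\boldsymbol{X})$, where squaring $(\sqrt{2}\sigma_r(\boldsymbol{X})-\|\boldsymbol{\Delta}\|_{\mathrm{op}})\|\boldsymbol{\Delta}\|_F$ directly delivers the target bound $\|\boldsymbol{M}+\boldsymbol{N}\|_F^2\geq 2(\sqrt{2}-1)\sigma_r^2(\boldsymbol{X})\|\boldsymbol{\Delta}\|_F^2$.

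In the complementary regime I would close the argument using the quartic growth $\|\boldsymbol{N}\|_F^2\geq\|\boldsymbol{\Delta}\|_{\mathrm{op}}^4$ paired with $\|\boldsymbol{M}\|_F^2$ via a Young-type inequality $2\langle\boldsymbol{M},\boldsymbol{N}\rangle\geq -t\|\boldsymbol{M}\|_F^2-t^{-1}\|\boldsymbol{N}\|_F^2$, choosing $t$ so the two regimes match. The main obstacle is precisely this calibration: a single-parameter Young inequality applied globally yields only the weaker constant $1/2$, and the sharp value $2(\sqrt{2}-1)$ emerges only from balancing both regimes exactly at the crossover point, which simultaneously fixes the threshold $\sqrt{2}-\sqrt{2\sqrt{2}-2}$ and the Young parameter. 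Since the positive semidefinite constraint $\boldsymbol{X}^\top\boldsymbol{U}\succeq 0$ enters only implicitly through the symmetry of $\boldsymbol{H}$ (which kills a potentially negative cross term in the expansion of $\|\boldsymbol{M}\|_F^2$), a tight tracking of every term is required to avoid loose constants and to ensure the final bound is independent of $r$.
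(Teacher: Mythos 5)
You should first note that the paper itself gives no proof of this lemma: it is imported verbatim from Tu et al.\ (2016) as an auxiliary tool, so what is being judged is whether your reconstruction is sound. Your Procrustes reduction (making $\bm X^\top \bm U$ symmetric PSD), the identity $\fnorm{\bm M}^2=2\,\mathrm{tr}(\bm X^\top\bm X\,\bm\Delta^\top\bm\Delta)+2\fnorm{\bm H}^2$ for $\bm M=\bm X\bm\Delta^\top+\bm\Delta\bm X^\top$, and the small-perturbation regime $\opnorm{\bm\Delta}\leq(\sqrt2-\sqrt{2\sqrt2-2})\,\sigma_r(\bm X)$ handled by the triangle inequality are all correct.

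The gap is the complementary regime, and unfortunately that is exactly where the lemma is tight, so it cannot be waved through. Concretely: (i) the Young step $2\langle\bm M,\bm N\rangle\geq-t\fnorm{\bm M}^2-t^{-1}\fnorm{\bm N}^2$ leaves one of the coefficients $1-t$, $1-t^{-1}$ negative, so pairing it with the \emph{lower} bound $\fnorm{\bm N}^2\geq\opnorm{\bm\Delta}^4$ is incoherent for $t<1$, while for $t>1$ you need an upper bound on $\fnorm{\bm M}^2$, which is of order $\sigma_1^2(\bm X)\fnorm{\bm\Delta}^2$ and destroys the $\sigma_r$-only statement; (ii) the quartic term is too weak anyway: just above your threshold one only has $\opnorm{\bm\Delta}^4\approx 0.065\,\sigma_r^4$, whereas the target scales as $\sigma_r^2\fnorm{\bm\Delta}^2$, and $\fnorm{\bm\Delta}^2$ can be as large as $r\opnorm{\bm\Delta}^2$; (iii) most decisively, take $p=2$, $r=1$, $\bm X=\bm e_1$, $\bm U=\sqrt{\sqrt2-1}\,\bm e_2$. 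This is the equality case of the lemma, it lies in your second regime ($\opnorm{\bm\Delta}=2^{1/4}\sigma_r$), and there $\fnorm{\bm M}^2=2+2\sqrt2$, $\fnorm{\bm N}^2=2$, $2\langle\bm M,\bm N\rangle=-4\sqrt2$, so the true value $4-2\sqrt2\approx1.17$ equals the required bound exactly, while any Cauchy--Schwarz/Young treatment of the cross term can give at best $(\fnorm{\bm M}-\fnorm{\bm N})^2\approx0.61$. Hence no calibration of $t$ and of the crossover point can make your two regimes meet: the condition $\bm X^\top\bm U\succeq0$ must be used beyond the mere symmetry of $\bm H$ (e.g.\ through the nonnegativity of $\mathrm{tr}(\bm X^\top\bm U\,\bm\Delta^\top\bm\Delta)$ inside an exact expansion of $\fnorm{\bm U\bm U^\top-\bm X\bm X^\top}^2$), which is precisely what the argument in Tu et al.\ (2016) does. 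As it stands, your proposal proves the lemma only under the additional restriction $\opnorm{\bm\Delta}\lesssim0.5\,\sigma_r(\bm X)$.
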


\newpage
\section{Statistical Convergence Analysis}\label{sec:statistical_convergence}

In this appendix, we present the stochastic analysis underlying our statistical theory, including verification of the RSC/RSS condition and bounding of the deviation term $\xi$ and initialization errors. The conclusions are derived under assumptions presented in Section \ref{sec: stat theory} of the main article. Notations are inherited from Appendix \ref{sec: computation converge}.

\subsection{Proof of Theorem \ref{theorem: statistical error}}

Under Assumptions \ref{assumption: spectral redius}, \ref{assumption: sub-Gaussian noise}, and \ref{assumption: gmin}, by Proposition \ref{prop: RSCRSS}, we have that with probability at least $1-2\exp\left\{-C(p_1r_1+p_2r_2+4r_1^2r_2^2)\right\}$, the RSC/RSS condition holds with $\alpha=\alpha_\mathrm{RSC}$ and $\beta=\beta_\mathrm{RSS}$, whose proof is relegated to Appendix \ref{sub:RSCRSS}. Meanwhile, by Proposition \ref{proposition: small kappa} and Proposition \ref{prop: sample_size_initialization}, the conditions in Theorem \ref{theorem: computational convergence} related to $\xi$ and $\mathrm{dist}^2_{(0)}$ are satisfied with high probability, whose proofs are relegated to Appendices \ref{sub:UpperBoundXi} and \ref{sub:Initialization}, respectively.

Therefore, by the conditions of Theorem \ref{theorem: statistical error}, the conditions of Theorem \ref{theorem: computational convergence} are satisfied, implying that $\forall j\geq 1$,
$$
    \begin{aligned}
    &\fnorm{\bm A_1^{(j)}-\bm A_1^*}^2+\fnorm{\bm A_2^{(j)}-\bm A_2^*}^2\\
    \lesssim &\kappa^2(1-C_0\eta_0\alpha_\mathrm{RSC}\beta_\mathrm{RSS}^{-1}\kappa^{-2})^j\left(\fnorm{\bm A_1^{(0)}-\bm A_1^*}^2+\fnorm{\bm A_2^{(0)}-\bm A_2^*}^2\right)+\kappa^4\phi^{-2}\alpha_\mathrm{RSC}^{-2}\xi^2(r_1,r_2,d_1,d_2).
\end{aligned}$$
For the initial error, we apply Proposition \ref{prop: initialization} to see that with probability at least $1-C\exp\left(-C(p_1r_1+p_2r_2)\right)$,
$$
\fnorm{\bm A_1^{(0)}-\bm A_1^*}^2+\fnorm{\bm A_2^{(0)}-\bm A_2^*}^2\lesssim \phi^{4/3}\mathrm{dist}^2_{(0)}\lesssim \phi^2 \underline{\sigma}^{-4} g_{\min }^{-4}\alpha_\mathrm{RSC}^{-2}\tau^2 M_1^2\frac{\mathrm{df}_\mathrm{RRMAR}}{T}.
$$
In addition, by Proposition \ref{prop: Upper bound of xi}, the statistical error is bounded as
$$\xi(r_1,r_2,d_1,d_2)\lesssim \tau M_1\sqrt{\frac{\mathrm{df}_\mathrm{MARCF}}{T}},$$
with probability at least $1-C\exp\left(-C(p_1r_1+p_2r_2)\right)$.

To ensure the upper bound of the statistical error term dominates the computational error term after $j$ iterations, we need 
\begin{equation}\label{eq:iteration_numbers}
    \kappa^2(1-C_0\eta_0\alpha_\mathrm{RSC}\beta_\mathrm{RSS}^{-1}\kappa^{-2})^j\phi^2 \underline{\sigma}^{-4} g_{\min }^{-4}\alpha_\mathrm{RSC}^{-2}\tau^2 M_1^2\frac{\mathrm{df}_\mathrm{RRMAR}}{T}\lesssim \kappa^4\alpha_\mathrm{RSC}^{-2}\phi^{-2}\tau^2 M_1^2\frac{\mathrm{df}_\mathrm{MARCF}}{T}.
\end{equation}
With $\mathrm{df}_\mathrm{RRMAR}\leq 2\mathrm{df}_\mathrm{MARCF}$, \eqref{eq:iteration_numbers} gives that when
$$
    J\gtrsim\frac{\log\left(\kappa^{-2}g_{\min}^{4}\right)}{\log(1-C_0\eta_0\alpha_\mathrm{RSC}\beta_\mathrm{RSS}^{-1}\kappa^{-2})},
$$
the computational error is absorbed, and then
$$
    \fnorm{\bm A_1^{(J)}-\bm A_1^*}^2+\fnorm{\bm A_2^{(J)}-\bm A_2^*}^2\lesssim \kappa^4\phi^{-2}\alpha_\mathrm{RSC}^{-2}\tau^2M_1^2\frac{\mathrm{df}_\mathrm{MARCF}}{T}.
$$
By similar analysis on $\fnorm{\bm A_2^{(J)}\otimes \bm A_1^{(J)}-\bm A_2^*\otimes \bm A_1^*}^2$, we also have
$$
    \fnorm{\bm A_2^{(J)}\otimes \bm A_1^{(J)}-\bm A_2^*\otimes \bm A_1^*}^2\lesssim \kappa^4\alpha_\mathrm{RSC}^{-2}\tau^2M_1^2\frac{\mathrm{df}_\mathrm{MARCF}}{T}.
$$
\qed

\subsection{Verification of RSC/RSS Condition}\label{sub:RSCRSS}

The RSC and RSS properties are defined with respect to the identifiable parameter matrix $\bm A=\bm A_2\otimes \bm A_1$. The vectorized MARCF model is $\bm y_t=(\bm A_2\otimes \bm A_1)\bm y_{t-1}+\bm e_t$, where $\bm y_t:=\V{\bm Y_t}$ and $\bm e_t:=\V{\bm E_t}$. Then, the least-squares loss function with respect to $\bm A$ can be written as 
\begin{equation}
    \widetilde{\mathcal{L}}(\bm{A})=\frac{1}{2T}\sum_{t=1}^T\left\|\bm y_t-\bm{A}\bm y_{t-1}\right\|_2^2.
\end{equation}
We verify in the following proposition that the RSC/RSS condition holds for any matrices satisfying the restrictions of the MARCF model structure.
\begin{proposition}\label{prop: RSCRSS}
    Suppose Assumptions \ref{assumption: spectral redius} and \ref{assumption: sub-Gaussian noise} hold. Let $\bm A=\bm A_2\otimes \bm A_1$ and $\bm A'=\bm A_2'\otimes \bm A_1'$ be any two matrices satisfying the MARCF structural decomposition, where $\bm A_i=[\bm C_i~\bm R_i]\bm D_i[\bm C_i~\bm P_i]^\top$ and $\bm A_i'=[\bm C_i'~\bm R_i']\bm D_i'[\bm C_i'~\bm P_i']^\top$ with dimensions consistent with the model definition. If the sample size $T\gtrsim (p_1r_1+p_2r_2+4r_1^2r_2^2)M_2^{-2}\max\left(\tau,\tau^2\right)$, then, with probability at least $1-2\exp\left\{-C(p_1r_1+p_2r_2+4r_1^2r_2^2)\right\}$, for such $\bm A$ and $\bm A'$, the loss function $\widetilde{\mathcal{L}}(\cdot)$ satisfies the RSC/RSS condition
    \begin{equation}
        \frac{\alpha_\mathrm{RSC}}{2}\left\|\mathbf{A}-\mathbf{A}'\right\|_{\mathrm{F}}^2 \leqslant \widetilde{\mathcal{L}}(\bm{A})-\widetilde{\mathcal{L}}\left(\mathbf{A}'\right)-\left\langle\nabla \widetilde{\mathcal{L}}\left(\mathbf{A}'\right), \mathbf{A}-\mathbf{A}'\right\rangle \leqslant \frac{\beta_\mathrm{RSS}}{2}\left\|\mathbf{A}-\mathbf{A}'\right\|_{\mathrm{F}}^2
    \end{equation}
    where
    \begin{equation}
        \alpha_\mathrm{RSC}=\lambda_{\min }\left(\bbm{\Sigma}_{\bm e}\right)/(2\mu_{\max }(\mathcal{A})),\quad \beta_\mathrm{RSS}=3\lambda_{\max }\left(\bbm{\Sigma}_{\bm e}\right)/(2\mu_{\min }(\mathcal{A})),
    \end{equation}
    and
    \begin{equation}
        M_2=\lambda_{\min }\left(\bbm{\Sigma}_{\bm e}\right) \mu_{\min }\left(\mathcal{A}\right)/\left(\lambda_{\max}\left(\bbm{\Sigma}_{\bm e}\right) \mu_{\max }\left(\mathcal{A}\right)\right).
    \end{equation}
\end{proposition}

\begin{proof}[Proof of Proposition \ref{prop: RSCRSS}]
Straightforward algebraic manipulation yields
\begin{equation}
    \widetilde{\mathcal{L}}(\bm{A})-\widetilde{\mathcal{L}}\left(\mathbf{A}'\right)-\left\langle\nabla \widetilde{\mathcal{L}}\left(\mathbf{A}'\right), \mathbf{A}-\mathbf{A}'\right\rangle =\frac{1}{2T}\sum_{t=0}^{T-1}\fnorm{(\bm A-\bm A')\bm y_t}^2.
\end{equation}
Hence, it suffices to prove that
$$
    \frac{\alpha_\mathrm{RSC}}{2}\fnorm{\bm A-\bm A'}^2\leq \frac{1}{2T}\sum_{t=0}^{T-1}\fnorm{(\bm A-\bm A')\bm y_t}^2\leq\frac{\beta_\mathrm{RSS}}{2}\fnorm{\bm A-\bm A'}^2
$$
with high probability.

Based on the moving average representation of $\operatorname{VAR}(1)$, we can rewrite $\bm{y}_t$ as a $\operatorname{VMA}(\infty)$ process,
$$
\bm{y}_t=\bm e_t+\bm{A} \bm e_{t-1}+\bm{A}^2 \bm e_{t-2}+\bm{A}^3 \bm e_{t-3}+\cdots
$$
Let $\bm{z}=\left(\bm{y}_{T-1}^{\top}, \bm{y}_{T-2}^{\top}, \ldots, \bm{y}_0^{\top}\right)^{\top}, \bm e=\left(\bm e_{T-1}^{\top}, \bm e_{T-2}^{\top}, \ldots, \bm e_0^\top, \ldots\right)^{\top}$. Now $\bm{z}=\widetilde{\bm{A}} \bm e$, where $\widetilde{\bm{A}}$ is a matrix with $Tp_1p_2$ rows and $\infty$ columns, defined as
$$
\widetilde{\bm{A}}=\left[\begin{array}{ccccc}
\bm{I}_{p_1p_2} & \bm{A} & \bm{A}^2 & \bm{A}^3 & \ldots \\
\bm{O} & \bm{I}_{p_1p_2} & \bm{A} & \bm{A}^2 & \ldots \\
\vdots & \vdots & \ddots & \vdots & \vdots\\
\bm{O} & \bm{O} & \bm{O} & \bm{I}_{p_1p_2} & \ldots
\end{array}\right]
=\left[\begin{array}{ccccc}
    \bm{I}_{p_1p_2} & \bm A_2\otimes \bm A_1 & (\bm A_2\otimes \bm A_1)^2 & (\bm A_2\otimes \bm A_1)^3 & \ldots \\
    \bm{O} & \bm{I}_{p_1p_2} & \bm A_2\otimes \bm A_1 & (\bm A_2\otimes \bm A_1)^2 & \ldots \\
    \vdots & \vdots & \ddots & \vdots & \vdots\\
    \bm{O} & \bm{O} & \bm{O} & \bm{I}_{p_1p_2} & \ldots
    \end{array}\right].
$$
Let $\boldsymbol{\zeta}=\left(\boldsymbol{\zeta}_{T-1}^{\top}, \boldsymbol{\zeta}_{T-2}^{\top}, \ldots, \boldsymbol{\zeta}_0^\top, \ldots\right)$. By assumption on noise we know that $\bm e=\widetilde{\bbm \Sigma}^{1/2}_{\bm e}\boldsymbol{\zeta}$ where
$$
\widetilde{\bbm \Sigma}_{\bm e}^{1/2}=
    \begin{pmatrix}
        \bbm \Sigma_{\bm e}^{1/2} & \bm O & \bm O & \cdots\\
        \bm O & \bbm \Sigma_{\bm e}^{1/2} & \bm O & \cdots\\
        \bm O & \bm O & \bbm \Sigma_{\bm e}^{1/2} & \cdots\\
        \vdots & \vdots & \vdots & \ddots
    \end{pmatrix}.
$$

Let $\bm \Delta=\bm A-\bm A'$. It suffices to show that for $\fnorm{\bbm\Delta}=1$, there exist $0<\alpha\leq \beta$, such that 
$$\frac{\alpha}{2}\leq \frac{1}{2T}\sum_{t=0}^{T-1}\fnorm{\bbm\Delta\bm y_t}^2\leq\frac{\beta}{2}.$$
We define the set of unit-norm differences of low-rank Kronecker products as as:
\begin{equation}\label{eq:KP_difference_set}
    \begin{aligned}
        \mathcal{S}(r_1,r_2,p_1,p_2) := \Big\{\bm M:~ &\bm M=\bm A_2\otimes \bm A_1-\bm A_2'\otimes \bm A_1',\\
        &\bm A_i, \bm A_i'\in \mathbb{R}^{p_i\times p_i},~\text{rank}(\bm A_i),~\text{rank}(\bm A_i')\leq r_i, ~ \fnorm{\bm M}=1, ~ i=1,2 \Big\}.
    \end{aligned}
\end{equation}
For brevity, we denote this set as $\mathcal{S}$ when the dimensions and ranks are clear from the context. By construction, $\bm\Delta \in \mathcal{S}$. Denoting $R_T(\bm \Delta):=\sum_{t=0}^{T-1}\norm{\bm \Delta\bm y_t}^2_2$, we have
\begin{equation}\label{R_T}
\begin{aligned}
R_T(\boldsymbol{\Delta})&=\sum_{t=0}^{T-1}\bm y_t^\top\bm \Delta^\top\bm \Delta\bm y_t\\
&=\left(\bm{y}_{T-1}^{\top}, \bm{y}_{T-2}^{\top}, \ldots, \bm{y}_0^{\top}\right)\begin{pmatrix}
    \bm \Delta^\top\bm \Delta& & \\
    & \ddots &\\
    & & \bm \Delta^\top\bm \Delta
\end{pmatrix}
\left(\begin{array}{c}
    \bm y_{T-1}\\
    \bm y_{T-2}\\
    \vdots\\
    \bm y_0
\end{array}\right)\\
&=\bm{z}^{\top}\left(\bm{I}_T \otimes \boldsymbol{\Delta}^{\top} \boldsymbol{\Delta}\right) \bm{z}\\
&=\bm e^{\top} \widetilde{\bm{A}}^{\top}\left(\bm{I}_T \otimes \boldsymbol{\Delta}^{\top} \boldsymbol{\Delta}\right) \widetilde{\bm{A}} \bm e \\
&= \boldsymbol{\zeta}^{\top} \widetilde{\boldsymbol{\Sigma}}_{\bm e}^{1/2} \widetilde{\bm{A}}^{\top}\left(\bm{I}_T \otimes \boldsymbol{\Delta}^{\top} \boldsymbol{\Delta}\right) \widetilde{\bm{A}} \widetilde{\boldsymbol{\Sigma}}_{\bm e}^{1/2} \boldsymbol{\zeta}\\
&:=\boldsymbol{\zeta}^{\top} \boldsymbol{\Sigma}_{\boldsymbol{\Delta}} \boldsymbol{\zeta}.
\end{aligned}
\end{equation}

Note that $R_T(\bbm \Delta)\geq \mathbb{E}R_T(\bbm \Delta)-\sup\limits_{\bbm \Delta\in\mathcal{S}}|R_T(\bbm \Delta)-\mathbb{E}R_T(\bbm \Delta)|$, we will derive a lower bound for $\mathbb{E}R_T(\bbm \Delta)$ and an upper bound for $\sup\limits_{\bbm \Delta\in\mathcal{S}}|R_T(\bbm \Delta)-\mathbb{E}R_T(\bbm \Delta)|$ to complete the proof of RSC part.

For $\mathbb{E}R_T(\bbm \Delta)$, by \eqref{R_T} and properties of Frobenius norm,
$$\begin{aligned}
    \mathbb{E}R_T(\bbm \Delta)&=\mathbb{E}\left[\boldsymbol{\zeta}^{\top} \widetilde{\boldsymbol{\Sigma}}_{\bm e}^{1/2} \widetilde{\bm{A}}^{\top}\left(\bm{I}_T \otimes \boldsymbol{\Delta}^{\top} \boldsymbol{\Delta}\right) \widetilde{\bm{A}} \widetilde{\boldsymbol{\Sigma}}_{\bm e}^{1/2} \boldsymbol{\zeta}\right]\\
    &=\mathrm{tr}\left(\widetilde{\boldsymbol{\Sigma}}_{\bm e}^{1/2} \widetilde{\bm{A}}^{\top}\left(\bm{I}_T \otimes \boldsymbol{\Delta}^{\top} \boldsymbol{\Delta}\right) \widetilde{\bm{A}} \widetilde{\boldsymbol{\Sigma}}_{\bm e}^{1/2}\right)\\
    &=\fnorm{(\bm I_T\otimes \bbm\Delta)\widetilde{\bm{A}} \widetilde{\bbm\Sigma}_{\bm e}^{1/2}}^2\\
    &\geq T\lambda_{\min}(\widetilde{\bm A}\widetilde{\bm A}^\top)\lambda_{\min}(\bbm \Sigma_{\bm e}).
\end{aligned}$$

For $|R_T(\bbm \Delta)-\mathbb{E}R_T(\bbm \Delta)|$, noting that
\begin{equation}
    \left\|\bbm{\Sigma}_{\bbm{\Delta}}\right\|_{\mathrm{F}}^2 \leq \fnorm{\bm I_T\otimes \bbm\Delta}^2\opnorm{\bm I_T\otimes \bbm\Delta}^2\|\widetilde{\bbm\Sigma}_{\bm e}^{1/2}\|_\mathrm{op}^4\|\widetilde{\bm A}\|_\mathrm{op}^4\leq  T\lambda_{\max}\left(\bbm\Sigma_{\bm e}\right)\lambda_{\max}\left(\widetilde{\bm A}\widetilde{\bm A}^\top\right)
\end{equation}
and
$$
\left\|\bbm{\Sigma}_{\bbm{\Delta}}\right\|_{\text {op }} \leq \opnorm{\bm I_T\otimes \bbm\Delta}^2\|\widetilde{\bbm\Sigma}_{\bm e}^{1/2}\|_\mathrm{op}^2\|\widetilde{\bm A}\|_\mathrm{op}^2\leq \lambda_{\max}^2\left(\bbm\Sigma_{\bm e}\right)\lambda_{\max}^2\left(\widetilde{\bm A}\widetilde{\bm A}^\top\right),
$$
by Hanson-Wright inequality (Lemma \ref{Hanson-Wright}), we have that for any $x\geq 0$,
\begin{equation}\label{RSCineq_Hanson-Wright}\begin{aligned}
&\mathbb{P}\left(|R_T(\bbm \Delta)-\mathbb{E}R_T(\bbm \Delta)|\geq \tau x\right)\\
\leq& 2\exp\left\{-C\min\left(\frac{x}{\lambda_{\max }\left(\bbm{\Sigma}_{\bm e}\right) \lambda_{\max }\left(\widetilde{\mathbf{A}} \widetilde{\mathbf{A}}^{\top}\right)},\frac{x^2}{T\lambda_{\max }^2\left(\bbm{\Sigma}_{\bm e}\right) \lambda_{\max }^2\left(\widetilde{\mathbf{A}} \widetilde{\mathbf{A}}^{\top}\right)}\right)\right\},
\end{aligned}\end{equation}
where $C$ is a constant.

For any $\varepsilon\in(0,1)$, consider an $\varepsilon$-net of $\mathcal{S}$ with respect to the Frobenius norm, denoted by $\overline{\mathcal{S}}$. Then for any $\bbm \Delta\in\mathcal{S}$, there exists $\overline{\bbm \Delta}\in\overline{\mathcal{S}}$, such that $\fnorm{\bbm \Delta-\overline{\bbm \Delta}}\leq \varepsilon$. By Lemma \ref{lemma:covering_KP_difference}, we have 
\begin{equation}
    |\overline{\mathcal{S}}|\leq \left(\frac{15}{\varepsilon}\right)^{4p_1r_1+4p_2r_2+16r_1^2r_2^2}.
\end{equation}
Therefore, 
$$\begin{aligned}
    &|R_T(\bbm \Delta)-\mathbb{E}R_T(\bbm \Delta)|\\
    =&\left|\sum_{t=0}^{T}\left(\fnorm{\bm y_t^\top\otimes\bm I_{p_1p_2}\V{\bbm \Delta}}^2-\mathbb{E}\fnorm{\bm y_t^\top\otimes\bm I_{p_1p_2}\V{\bbm \Delta}}\right)\right|\\
    =&\left|\V{\bbm \Delta}^\top\sum_{t=0}^{T}\left(\bm y_t\bm y_t^\top\otimes\bm I_{p_1p_2}-\mathbb{E}\bm y_t\bm y_t^\top\otimes\bm I_{p_1p_2}\right)\V{\bbm \Delta}\right|\\
    :=&\left|\V{\bbm \Delta}^\top\bm M(\bm y_t)\V{\bbm \Delta}\right|\\
    \leq &\left|\V{\overline{\bbm \Delta}}^\top\bm M(\bm y_t)\V{\overline{\bbm \Delta}}\right|+2\left|\V{\overline{\bbm \Delta}}^\top\bm M(\bm y_t)(\V{\bbm \Delta}-\V{\overline{\bbm \Delta}})\right|\\
    &+\left|(\V{\bbm \Delta}^\top-\V{\overline{\bbm \Delta}}^\top)\bm M(\bm y_t)(\V{\bbm \Delta}-\V{\overline{\bbm \Delta}})\right|\\
    \leq &\max_{\overline{\bbm \Delta}\in\overline{\mathcal{S}}}|R_T(\overline{\bbm \Delta})-\mathbb{E}R_T(\overline{\bbm \Delta})|+(2\varepsilon+\varepsilon^2)\sup_{\bbm \Delta\in \mathcal{S}}|R_T(\bbm \Delta)-\mathbb{E}R_T(\bbm \Delta)|.
\end{aligned}$$ 
When $\varepsilon< 1$, we have
$$ \sup_{\bbm \Delta\in \mathcal{S}}|R_T(\bbm \Delta)-\mathbb{E}R_T(\bbm \Delta)|\leq (1-3\varepsilon)^{-1}\max_{\overline{\bbm \Delta}\in\overline{\mathcal{S}}}|R_T(\overline{\bbm \Delta})-\mathbb{E}R_T(\overline{\bbm \Delta})|.$$
Letting $\varepsilon=0.1$, by union bounds and \eqref{RSCineq_Hanson-Wright},
$$\begin{aligned}
    &\mathbb{P}\left(\sup_{\bbm \Delta\in \mathcal{S}}|R_T(\bbm \Delta)-\mathbb{E}R_T(\bbm \Delta)|\geq \tau x\right)\\
    \leq& \mathbb{P}\left(\max_{\overline{\bbm \Delta}\in\overline{\mathcal{S}}}|R_T(\overline{\bbm \Delta})-\mathbb{E}R_T(\overline{\bbm \Delta})|\geq (1-3\varepsilon)\tau x\right)\\
    \leq &\sum_{\overline{\bbm \Delta}\in\overline{\mathcal{S}}}\mathbb{P}\left(|R_T(\overline{\bbm \Delta})-\mathbb{E}R_T(\overline{\bbm \Delta})|\geq \frac{1}{2}\tau x\right)\\
    \leq& 2\times 150^{4p_1r_1+4p_2r_2+16r_1^2r_2^2}\exp\left\{-C\min\left(\frac{x}{\lambda_{\max }\left(\bbm{\Sigma}_{\bm e}\right) \lambda_{\max }\left(\widetilde{\mathbf{A}} \widetilde{\mathbf{A}}^{\top}\right)},\frac{x^2}{T\lambda_{\max }^2\left(\bbm{\Sigma}_{\bm e}\right) \lambda_{\max }^2\left(\widetilde{\mathbf{A}} \widetilde{\mathbf{A}}^{\top}\right)}\right)\right\}\\
    \leq &2\exp\left\{21(p_1r_1+p_2r_2+4r_1^2r_2^2)-C\min\left(\frac{x}{\lambda_{\max }\left(\bbm{\Sigma}_{\bm e}\right) \lambda_{\max }\left(\widetilde{\mathbf{A}} \widetilde{\mathbf{A}}^{\top}\right)},\frac{x^2}{T\lambda_{\max }^2\left(\bbm{\Sigma}_{\bm e}\right) \lambda_{\max }^2\left(\widetilde{\mathbf{A}} \widetilde{\mathbf{A}}^{\top}\right)}\right)\right\}.
\end{aligned}$$
Here we take $x=T\lambda_{\min }\left(\bbm{\Sigma}_{\bm e}\right) \lambda_{\min }\left(\widetilde{\mathbf{A}} \widetilde{\mathbf{A}}^{\top}\right)/(2\tau)$ and define
$$M_2:=\lambda_{\min }\left(\bbm{\Sigma}_{\bm e}\right) \lambda_{\min }\left(\widetilde{\mathbf{A}} \widetilde{\mathbf{A}}^{\top}\right)/\lambda_{\max}\left(\bbm{\Sigma}_{\bm e}\right) \lambda_{\max }\left(\widetilde{\mathbf{A}} \widetilde{\mathbf{A}}^{\top}\right)\leq 1,$$
then
\begin{equation}\label{eq: RSC prob}
\begin{aligned}
&\mathbb{P}\left(\sup_{\bbm \Delta\in \mathbb{S}}|R_T(\bbm \Delta)-\mathbb{E}R_T(\bbm \Delta)|\geq \frac{T}{2}\lambda_{\min }\left(\bbm{\Sigma}_{\bm e}\right) \lambda_{\min }\left(\widetilde{\mathbf{A}} \widetilde{\mathbf{A}}^{\top}\right)\right)\\
\leq& 2\exp\left\{21(p_1r_1+p_2r_2+4r_1^2r_2^2)-C\min\left(\frac{M_2}{2\tau},\frac{M^2_2}{4\tau^2}\right)T\right\}\\
\leq& 2\exp\left\{21(p_1r_1+p_2r_2+4r_1^2r_2^2)-CM_2^2\min\left(\tau^{-1},\tau^{-2}\right)T\right\}.
\end{aligned}
\end{equation}

Hence, when $T\gtrsim (p_1r_1+p_2r_2+4r_1^2r_2^2)M_2^{-2}\max\left(\tau,\tau^2\right)$, we have that with probability at least $1-2\exp\left\{-C(p_1r_1+p_2r_2+4r_1^2r_2^2)\right\}$,
$$R_T(\bbm \Delta)\geq \frac{T}{2}\lambda_{\min }\left(\bbm{\Sigma}_{\bm e}\right) \lambda_{\min }\left(\widetilde{\mathbf{A}} \widetilde{\mathbf{A}}^{\top}\right).$$
Then the RSC coefficient $\alpha_\mathrm{RSC}=\lambda_{\min }\left(\bbm{\Sigma}_{\bm e}\right) \lambda_{\min }\left(\widetilde{\mathbf{A}} \widetilde{\mathbf{A}}^{\top}\right)/2.$

As for RSS part, similarly, we have $R_T(\bbm \Delta)\leq \mathbb{E}R_T(\bbm \Delta)+\sup\limits_{\bbm \Delta\in\mathcal{S}}|R_T(\bbm \Delta)-\mathbb{E}R_T(\bbm \Delta)|$ and $\mathbb{E}R_T(\bbm \Delta)\leq T\lambda_{\max}(\widetilde{\bm A}\widetilde{\bm A}^\top)\lambda_{\max}(\bbm \Sigma_{\bm e}).$ Since the event 
$$\left\{\sup\limits_{\bbm \Delta\in\mathcal{S}}|R_T(\bbm \Delta)-\mathbb{E}R_T(\bbm \Delta)|\leq T\lambda_{\min }\left(\bbm{\Sigma}_{\bm e}\right) \lambda_{\min }\left(\widetilde{\mathbf{A}} \widetilde{\mathbf{A}}^{\top}\right)/2\right\}$$
implies 
$$\left\{\sup\limits_{\bbm \Delta\in\mathcal{S}}|R_T(\bbm \Delta)-\mathbb{E}R_T(\bbm \Delta)|\leq T\lambda_{\max }\left(\bbm{\Sigma}_{\bm e}\right) \lambda_{\max }\left(\widetilde{\mathbf{A}} \widetilde{\mathbf{A}}^{\top}\right)/2\right\},$$
we have when the event above occurs,
\begin{equation}\label{upper bound of RT}
R_T(\bbm \Delta)\leq \frac{3T}{2}\lambda_{\max }\left(\bbm{\Sigma}_{\bm e}\right) \lambda_{\max }\left(\widetilde{\mathbf{A}} \widetilde{\mathbf{A}}^{\top}\right).
\end{equation}
Thus $\beta_\mathrm{RSS}:=3\lambda_{\max }\left(\bbm{\Sigma}_{\bm e}\right) \lambda_{\max }\left(\widetilde{\mathbf{A}} \widetilde{\mathbf{A}}^{\top}\right)/2.$

Finally, since $\widetilde{\mathbf{A}}$ is related to $\operatorname{VMA}(\infty)$ process, by the spectral measure of ARMA process discussed in \citet{BasuandGeorgeMichailidis2015} we replace $\lambda_{\max }\left(\widetilde{\mathbf{A}} \widetilde{\mathbf{A}}^{\top}\right)$ and $\lambda_{\min }\left(\widetilde{\mathbf{A}} \widetilde{\mathbf{A}}^{\top}\right)$ with $1 / \mu_{\min }(\mathcal{A})$ and $1 / \mu_{\max }(\mathcal{A})$, respectively.
\end{proof}

\subsection{Property of Deviation Bound}\label{sub:UpperBoundXi}

In this appendix, we derive an upper bound for $\xi$ defined in \ref{def: statistical error xi}, which demonstrates the benefit of dimension reduction conferred by the MARCF model.
\begin{proposition}\label{prop: Upper bound of xi}
    Under Assumptions \ref{assumption: spectral redius} and \ref{assumption: sub-Gaussian noise}, if $T\gtrsim (p_1r_1+p_2r_2+4r_1^2r_2^2)M_2^{-2}\max\left(\tau,\tau^2\right)$, then with probability as least $1-2\exp\left(-C(p_1r_1+p_2r_2)\right)$, 
    $$\xi(r_1,r_2,d_1,d_2)\lesssim \tau M_1\sqrt{\frac{\mathrm{df}_\mathrm{MARCF}}{T}},$$
    where $M_1=\lambda_{\max }\left(\bbm{\Sigma}_{\bm e}\right)/ \mu_{\min }^{1/2}\left(\mathcal{A}\right)=\lambda_{\max }\left(\bbm{\Sigma}_{\bm e}\right) \lambda_{\max }^{1/2}\left(\widetilde{\mathbf{A}} \widetilde{\mathbf{A}}^{\top}\right)$, $\mathrm{df}_\mathrm{MARCF}=p_1(2r_1-d_1)+p_2(2r_2-d_2)+r_1^2+r_2^2$, and $M_2$ is defined in Proposition \ref{prop: RSCRSS}. 
\end{proposition}

\begin{proof}[Proof of Proposition \ref{prop: Upper bound of xi}]

We first define the parameter spaces for $\bm A_1$ and $\bm A_2$ in the MARCF model. Specifically, we consider the following set of matrices with unit Frobenius norm and common column and row subspaces, denoted by $\mathcal{W}(r,d;p)$:
\begin{equation}\label{eq: W space}
\begin{aligned}
    \mathcal{W}(r, d ; p):=\left\{\mathbf{W} \in \mathbb{R}^{p \times p}: \mathbf{W}=[\mathbf{C}~\mathbf{R}] \mathbf{D}[\mathbf{C}~\mathbf{P}]^{\top}, \mathbf{C} \in\mathbb{O}^{p \times d}, \mathbf{R}, \mathbf{P} \in \mathbb{O}^{p \times(r-d)},\right.\\
\inner{\bm C}{\bm R}=\inner{\bm C}{\bm P}=0, \left. \textrm{ and } \fnorm{\bm W}=1\right\}.
\end{aligned}
\end{equation}
Let $\overline{\mathcal{W}}(r,d;p)$ be the $\varepsilon/2$-net of $\mathcal{W}(r,d,p)$ with respect to the Frobenius norm. By Lemma \ref{lemma: common covering}, $\overline{\mathcal{W}}(r,d;p)\subset \mathcal{W}(r,d;p)$, and
$$|\overline{\mathcal{W}}(r,d;p)|\leq \left(\frac{48}{\varepsilon}\right)^{p(2r-d)+r^2}.$$
Define
$$\mathcal{V}(r_1,r_2,d_1,d_2):=\left\{\bm W_2\otimes \bm W_1: \bm W_2\in \mathcal{W}(r_2,d_2;p_2), \bm W_1 \in \mathcal{W}(r_1,d_1;p_1)\right\},$$
then 
$\overline{\mathcal{V}}(r_1,r_2,d_1,d_2):=\left\{\overline{\bm W}_2\otimes \overline{\bm W}_1: \overline{\bm W}_2\in \overline{\mathcal{W}}(r_2,d_2;p_2), \overline{\bm W}_1 \in \overline{\mathcal{W}}(r_1,d_1;p_1)\right\}$ is an $\varepsilon$-net of $\mathcal{V}(r_1,r_2,d_1,d_2)$, which can be directly verified. Meanwhile,
\begin{equation}\label{ineq: V covering}
|\overline{\mathcal{V}}(r_1,r_2,d_1,d_2)|\leq \left(\frac{48}{\varepsilon}\right)^{2(p_1r_1+p_2r_2)-p_1d_1-p_2d_2+r_1^2+r_2^2}.
\end{equation}
In addition, note that $\nabla\mathcal{L}(\bm A^*)= (\sum_{t=1}^T \bm e_t \bm y_{t-1}^{\top})/T$, we have
$$\xi(r_1,r_2,d_1,d_2)=\sup_{\bm A\in \mathcal{V}(r_1,r_2,d_1,d_2)}\inner{\frac{1}{T}\sum_{t=1}^T\bm e_t \bm y_{t-1}^{\top}}{\bm A}.$$

In the following, we start from establishing an upper bound for $\xi$ when there is no common space, i.e., $d_1=d_2=0$. In this case, elements of $\bm W_1$ and $\bm W_2$ are merely low rank matrices.

For every $\bm A=\bm W_2\otimes \bm W_1\in\mathcal{V}(r_1,r_2,0,0)$, let $\overline{\bm A}=\overline{\bm W}_2\otimes \overline{\bm W}_1\in\overline{\mathcal{V}}(r_1,r_2,0,0)$ be its covering matrix. By splitting SVD with common space (Lemma \ref{lemma:splitting SVD}), we know that $\bm W_2-\overline{\bm W}_2$ can be decomposed as $\bm W_2-\overline{\bm W}_2=\bbm \Delta_{2,1}+\bbm \Delta_{2,2}$, where $\bbm \Delta_{2,1}, \bbm \Delta_{2,2}$ are both rank-$r_2$ and $\inner{\bbm \Delta_{2,1}}{\bbm \Delta_{2,2}}=0$. For $\bm W_1-\overline{\bm W}_1$, there exist $\bbm \Delta_{1,1}$ and $\bbm \Delta_{1,2}$ following the same property. Then with Cauchy's inequality and $\fnorm{\bbm\Delta_{i,1}+\bbm\Delta_{i,2}}^2=\fnorm{\bbm\Delta_{i,1}}^2+\fnorm{\bbm\Delta_{i,1}}^2$ we know that $\fnorm{\bbm\Delta_{i,1}}+\fnorm{\bbm\Delta_{i,2}}\leq \sqrt{2}\fnorm{\bbm\Delta_{i,1}+\bbm\Delta_{i,2}}\leq\sqrt{2}\varepsilon$, for $i=1,2$. Since $\bbm\Delta_{i,j}/\fnorm{\bbm\Delta_{i,j}}\in \mathcal{W}(r_i,0;p_i)$ we have
$$\begin{aligned}
    &\inner{\frac{1}{T}\sum_{t=1}^T\bm e_t \bm y_{t-1}^{\top}}{\bm A}\\
    =&\inner{\frac{1}{T}\sum_{t=1}^T\bm e_t \bm y_{t-1}^{\top}}{\overline{\bm A}}\\
    &+\inner{\frac{1}{T}\sum_{t=1}^T\bm e_t \bm y_{t-1}^{\top}}{(\bm W_2-\overline{\bm W}_2)\otimes\bm W_1}+\inner{\frac{1}{T}\sum_{t=1}^T\bm e_t \bm y_{t-1}^{\top}}{\overline{\bm W}_2\otimes(\bm W_1-\overline{\bm W}_1)}\\
    =&\inner{\frac{1}{T}\sum_{t=1}^T\bm e_t \bm y_{t-1}^{\top}}{\overline{\bm A}}\\
    &+\inner{\frac{1}{T}\sum_{t=1}^T\bm e_t \bm y_{t-1}^{\top}}{\frac{\bbm\Delta_{2,1}}{\fnorm{\bbm\Delta_{2,1}}}\otimes\bm W_1}\fnorm{\bbm\Delta_{2,1}}+\inner{\frac{1}{T}\sum_{t=1}^T\bm e_t \bm y_{t-1}^{\top}}{\frac{\bbm\Delta_{2,1}}{\fnorm{\bbm\Delta_{2,1}}}\otimes\bm W_1}\fnorm{\bbm\Delta_{2,2}}\\
    &+\inner{\frac{1}{T}\sum_{t=1}^T\bm e_t \bm y_{t-1}^{\top}}{\overline{\bm W}_2\otimes\frac{\bbm\Delta_{1,1}}{\fnorm{\bbm\Delta_{1,1}}}}\fnorm{\bbm\Delta_{1,1}}+\inner{\frac{1}{T}\sum_{t=1}^T\bm e_t \bm y_{t-1}^{\top}}{\overline{\bm W}_2\otimes\frac{\bbm\Delta_{1,2}}{\fnorm{\bbm\Delta_{1,2}}}}\fnorm{\bbm\Delta_{1,2}}\\
    \leq&\max_{\overline{A}\in\overline{\mathcal{V}}(r_1,r_2,0,0)}\inner{\frac{1}{T}\sum_{t=1}^T\bm e_t \bm y_{t-1}^{\top}}{\overline{\bm A}}\\
    &+\xi(r_1,r_2,0,0)\left(\fnorm{\bbm\Delta_{2,1}}+\fnorm{\bbm\Delta_{2,2}}+\fnorm{\bbm\Delta_{1,1}}+\fnorm{\bbm\Delta_{1,2}}\right)\\
    \leq&\max_{\overline{A}\in\overline{\mathcal{V}}(r_1,r_2,0,0)}\inner{\frac{1}{T}\sum_{t=1}^T\bm e_t \bm y_{t-1}^{\top}}{\overline{\bm A}}+2\sqrt{2}\varepsilon\xi(r_1,r_2,0,0).
\end{aligned}$$
Hence,
$$\xi(r_1,r_2,0,0)\leq(1-2\sqrt{2}\varepsilon)^{-1}\max_{\overline{A}\in\overline{\mathcal{V}}(r_1,r_2,0,0)}\inner{\frac{1}{T}\sum_{t=1}^T\bm e_t \bm y_{t-1}^{\top}}{\overline{\bm A}}.$$

Define $R_T(\bm M):=\sum_{t=0}^{T-1}\fnorm{\bm M\bm y_t}^2$ and $S_T(\bm M):=\sum_{t=1}^{T}\inner{\bm e_t}{\bm M\bm y_{t-1}}$ for any $p_1p_2\times p_1p_2$ real matrix $\bm M$ with unit Frobenius norm. By the proof of LemmaS5 in \cite{wang2024LRtsAR}, for any $z_1$ and $z_2 \geq 0$,
$$\mathbb{P}\left[\left\{S_T(\mathbf{M}) \geq z_1\right\} \cap\left\{R_T(\mathbf{M}) \leq z_2\right\}\right] \leq \exp \left(-\frac{z_1^2}{2 \tau^2 \lambda_{\max }\left(\boldsymbol{\Sigma}_{\bm e}\right)z_2}\right).$$

With the derivation of \eqref{eq: RSC prob} and  \eqref{upper bound of RT}, when $T\gtrsim (p_1r_1+p_2r_2+4r_1^2r_2^2)M_2^{-2}\max\left(\tau,\tau^2\right)$, using proper constants $C$ here, we have that 
$$
    \mathbb{P}\left(R_T(\bbm \Delta)\geq \frac{3T}{2}\lambda_{\max }\left(\bbm{\Sigma}_{\bm e}\right) \lambda_{\max }\left(\widetilde{\mathbf{A}} \widetilde{\mathbf{A}}^{\top}\right)\right) \leq 2\exp \{-15(p_1r_1+p_2r_2+4r_1^2r_2^2)\}.
$$
Then, using the pieces above, we have for any $x\geq 0$, 
\begin{equation}
\begin{aligned}
    &\mathbb{P}\left(\xi(r_1,r_2,0,0)\geq x\right)\\
    \leq&\mathbb{P}\left(\max_{\overline{A}\in\overline{\mathcal{V}}(r_1,r_2,0,0)}\inner{\frac{1}{T}\sum_{t=1}^T\bm e_t \bm y_{t-1}^{\top}}{\overline{\bm A}}\geq (1-2\sqrt{2}\varepsilon)x\right)\\
    \leq &\sum_{\overline{A}\in\overline{\mathcal{V}}(r_1,r_2,0,0)}\mathbb{P}\left(S_T(\overline{\bm A})\geq (1-2\sqrt{2}\varepsilon)Tx\right)\\
    \leq &\sum_{\overline{A}\in\overline{\mathcal{V}}(r_1,r_2,0,0)}\mathbb{P}\left(\left\{S_T(\overline{\bm A})\geq (1-2\sqrt{2}\varepsilon)Tx\right\}\bigcap\left\{R_T(\overline{\bm A})\leq \frac{3T}{2}\lambda_{\max }\left(\bbm{\Sigma}_{\bm e}\right) \lambda_{\max }\left(\widetilde{\mathbf{A}} \widetilde{\mathbf{A}}^{\top}\right)\right\}\right)\\
    &+\sum_{\overline{A}\in\overline{\mathcal{V}}(r_1,r_2,0,0)}\mathbb{P}\left(R_T(\overline{\bm A})\geq \frac{3T}{2}\lambda_{\max }\left(\bbm{\Sigma}_{\bm e}\right) \lambda_{\max }\left(\widetilde{\mathbf{A}} \widetilde{\mathbf{A}}^{\top}\right)\right)\\
    \leq &|\overline{\mathcal{V}}(r_1,r_2,0,0)|\left(\exp\left\{-\frac{(1-2\sqrt{2}\varepsilon)^2Tx^2}{3\tau^2\lambda_{\max }^2\left(\bbm{\Sigma}_{\bm e}\right) \lambda_{\max }\left(\widetilde{\mathbf{A}} \widetilde{\mathbf{A}}^{\top}\right)}\right\}+\exp\left\{-15(p_1r_1+p_2r_2+4r_1^2r_2^2)\right\}\right).
\end{aligned}
\end{equation}
Here we take $\varepsilon=0.1$ and $x=4\sqrt{3}\tau \lambda_{\max }\left(\bbm{\Sigma}_{\bm e}\right) \lambda_{\max }^{1/2}\left(\widetilde{\mathbf{A}} \widetilde{\mathbf{A}}^{\top}\right)\sqrt{(2p_1r_1+2p_2r_2+r_1^2+r_2^2)/T}$, then
\begin{equation}
    \begin{aligned}
        &\mathbb{P}\left(\xi(r_1,r_2,0,0)\geq x\right)\\
        \leq &\left(\frac{48}{0.1}\right)^{2(p_1r_1+p_2r_2)+r_1^2+r_2^2}\\
        &\quad \times \left(\exp\left\{-\frac{Tx^2}{6\tau^2\lambda_{\max }^2\left(\bbm{\Sigma}_{\bm e}\right) \lambda_{\max }\left(\widetilde{\mathbf{A}} \widetilde{\mathbf{A}}^{\top}\right)}\right\}+\exp\left\{-15(p_1r_1+p_2r_2+4r_1^2r_2^2)\right\}\right)\\
        \leq& \exp\left\{(7-8)(2p_1r_1+2p_2r_2+r_1^2+r_2^2)\right\}\\
        & +2\exp\left\{7(2p_1r_1+2p_2r_2+r_1^2+r_2^2)-15(p_1r_1+p_2r_2+4r_1^2r_2^2)\right\}\\
        \leq& C\exp \{-C(p_1r_1+p_2r_2)\}.
\end{aligned}
\end{equation}
Define 
$\mathrm{df}_\mathrm{RRMAR}=2p_1r_1+2p_2r_2+r_1^2+r_2^2$ and $M_1=\lambda_{\max }\left(\bbm{\Sigma}_{\bm e}\right)/ \mu_{\min }^{1/2}\left(\mathcal{A}\right)=\lambda_{\max }\left(\bbm{\Sigma}_{\bm e}\right) \lambda_{\max }^{1/2}\left(\widetilde{\mathbf{A}} \widetilde{\mathbf{A}}^{\top}\right)$.
Therefore, when $T\gtrsim (p_1r_1+p_2r_2+4r_1^2r_2^2)M_2^{-2}\max\left(\tau,\tau^2\right)$, we have that with probability at least $1-C\exp \{-C(p_1r_1+p_2r_2)\}$,
$$\xi(r_1,r_2,0,0)\lesssim \tau M_1\sqrt{\frac{\mathrm{df}_\mathrm{RRMAR}}{T}}.$$

Next, we construct the upper bound of $\xi$ when common spaces exist. By Lemma \ref{lemma:splitting SVD}, now $\bm W_2-\overline{\bm W}_2$ can be decomposed as $\bm W_2-\overline{\bm W}_2=\bbm \Delta_{2,1}+\bbm \Delta_{2,2}+\bbm \Delta_{2,3}+\bbm \Delta_{2,4}$, where $\bbm \Delta_{2,1}, \bbm \Delta_{2,2}$ are both rank-$r_2$ with common dimension $d_2$. $\bbm \Delta_{2,3}, \bbm \Delta_{2,4}$ are rank-$r_2$. Moreover, $\inner{\bbm\Delta_{2,j}}{\bbm\Delta_{2,k}}=0$ for any $j,k=1,2,3,4,j\neq k$. Then with Cauchy's inequality and $\fnorm{\sum_{s=1}^{4}\bbm\Delta_{i,s}}^2=\sum_{s=1}^{4}\fnorm{\bbm\Delta_{i,s}}^2$ we know that $\sum_{s=1}^{4}\fnorm{\bbm\Delta_{i,s}}\leq 2\fnorm{\sum_{s=1}^{4}\bbm\Delta_{i,s}}\leq2\varepsilon, i=1,2.$ Similarly, we can decompose each side of the Kronecker product into four parts. The first two parts contain common space while the last two do not. Thus, $\xi(r_1,r_2,d_1,d_2)$ can be upper bounded by the covering of $\mathcal{V}(r_1,r_2,d_1,d_2)$ and $\xi(r_1,r_2,0,0)$ as following:
$$\begin{aligned}
    &\inner{\frac{1}{T}\sum_{t=1}^T\bm e_t \bm y_{t-1}^{\top}}{\bm A}\\
    =&\inner{\frac{1}{T}\sum_{t=1}^T\bm e_t \bm y_{t-1}^{\top}}{\overline{\bm A}}\\
    &+\sum_{s=1}^{4}\inner{\frac{1}{T}\sum_{t=1}^T\bm e_t \bm y_{t-1}^{\top}}{\frac{\bbm\Delta_{2,s}}{\fnorm{\bbm\Delta_{2,s}}}\otimes\bm W_1}\fnorm{\bbm\Delta_{2,s}}\\
    &+\sum_{s=1}^{4}\inner{\frac{1}{T}\sum_{t=1}^T\bm e_t \bm y_{t-1}^{\top}}{\overline{\bm W}_2\otimes\frac{\bbm\Delta_{1,s}}{\fnorm{\bbm\Delta_{1,s}}}}\fnorm{\bbm\Delta_{1,s}}\\
    \leq&\max_{\overline{A}\in\overline{\mathcal{V}}(r_1,r_2,d_1,d_2)}\inner{\frac{1}{T}\sum_{t=1}^T\bm e_t \bm y_{t-1}^{\top}}{\overline{\bm A}}+4\varepsilon\xi(r_1,r_2,0,0)+4\varepsilon\xi(r_1,r_2,d_1,d_2)\\
    \leq&\max_{\overline{A}\in\overline{\mathcal{V}}(r_1,r_2,d_1,d_2)}\inner{\frac{1}{T}\sum_{t=1}^T\bm e_t \bm y_{t-1}^{\top}}{\overline{\bm A}}+4\varepsilon\xi(r_1,r_2,0,0)+4\varepsilon\xi(r_1,r_2,d_1,d_2).
\end{aligned}$$
Therefore,
$$\xi(r_1,r_2,d_1,d_2)\leq (1-4\varepsilon)^{-1}\left(\max_{\overline{A}\in\overline{\mathcal{V}}(r_1,r_2,d_1,d_2)}\inner{\frac{1}{T}\sum_{t=1}^T\bm e_t \bm y_{t-1}^{\top}}{\overline{\bm A}}+4\varepsilon\xi(r_1,r_2,0,0)\right).$$
Similar to the derivation of $\xi(r_1,r_2,0,0)$, we have
$$\begin{aligned}
    & \mathbb{P}\left(\max_{\overline{A}\in\overline{\mathcal{V}}(r_1,r_2,d_1,d_2)}\inner{\frac{1}{T}\sum_{t=1}^T\bm e_t \bm y_{t-1}^{\top}}{\overline{\bm A}}\geq x\right)\\
    \leq &|\overline{\mathcal{V}}(r_1,r_2,d_1,d_2)|\left(\exp\left\{-\frac{Tx^2}{3\tau^2\lambda_{\max }^2\left(\bbm{\Sigma}_{\bm e}\right) \lambda_{\max }\left(\widetilde{\mathbf{A}} \widetilde{\mathbf{A}}^{\top}\right)}\right\}+\exp\left\{-C(p_1r_1+p_2r_2+4r_1^2r_2^2)\right\}\right).
\end{aligned}$$
Define 
$$\mathrm{df}_\mathrm{MARCF}:=p_1(2r_1-d_1)+p_2(2r_2-d_2)+r_1^2+r_2^2.$$ Taking $\varepsilon=0.1$ and $x=C\tau \lambda_{\max }\left(\bbm{\Sigma}_{\bm e}\right) \lambda_{\max }^{1/2}\left(\widetilde{\mathbf{A}} \widetilde{\mathbf{A}}^{\top}\right)\sqrt{\mathrm{df}_\mathrm{MARCF}/T}$ and with \eqref{ineq: V covering}, similarly we have
$$\begin{aligned}
    & \mathbb{P}\left(\max_{\overline{A}\in\overline{\mathcal{V}}(r_1,r_2,d_1,d_2)}\inner{\frac{1}{T}\sum_{t=1}^T\bm e_t \bm y_{t-1}^{\top}}{\overline{\bm A}}\gtrsim \tau M_1\sqrt{\mathrm{df}_\mathrm{MARCF}/T}\right)\leq \exp\left\{-C(p_1r_1+p_2r_2)\right\}
\end{aligned}$$
Combining the upper bound of $\xi(r_1,r_2,0,0)$, we have
$$\xi(r_1,r_2,d_1,d_2)\lesssim \tau M_1\left(\sqrt{\frac{\mathrm{df}_\mathrm{MARCF}}{T}}+\sqrt{\frac{\mathrm{df}_\mathrm{RRMAR}}{T}}\right),$$
with probability at least $1-C\exp\left(-C(p_1r_1+p_2r_2)\right)$. Moreover, since $\mathrm{df}_\mathrm{RRMAR}\leq 2\mathrm{df}_\mathrm{MARCF}$,
$$\xi(r_1,r_2,d_1,d_2)\lesssim \tau M_1\sqrt{\frac{\mathrm{df}_\mathrm{MARCF}}{T}}.$$
\end{proof}

With this upper bound, we verify the condition on the upper bound of $\xi$ in Theorem \ref{theorem: computational convergence}.

\begin{proposition}\label{proposition: small kappa}
Under Assumptions \ref{assumption: spectral redius} and \ref{assumption: sub-Gaussian noise}, if $T\gtrsim (p_1r_1+p_2r_2+4r_1^2r_2^2)M_2^{-2}\max\left(\tau,\tau^2\right)$ as well as $T\gtrsim \kappa^2\underline{\sigma}^{-4}\alpha_\mathrm{RSC}^{-3}\beta_\mathrm{RSS} \tau^2M_1^2(p_1r_1+p_2r_2)$, then we have that, with probability as least $1-C\exp\left(-C(p_1r_1+p_2r_2)\right)$,
$$
\xi^2\lesssim \frac{\phi^4\alpha_\mathrm{RSC}^3}{\kappa^{6}\beta_\mathrm{RSS}}.
$$
\end{proposition}

\begin{proof}[Proof of Proposition \ref{proposition: small kappa}]
By Proposition \ref{prop: Upper bound of xi}, we know that with probability at least $1-C\exp\left(-C(p_1r_1+p_2r_2)\right)$, 
$$\xi(r_1,r_2,d_1,d_2)\lesssim \tau M_1\sqrt{\frac{\mathrm{df}_\mathrm{MARCF}}{T}}.$$
Then, when $T\gtrsim \kappa^2\underline{\sigma}^{-4}\alpha_\mathrm{RSC}^{-3}\beta_\mathrm{RSS} \tau^2M_1^2(p_1r_1+p_2r_2)\gtrsim \kappa^2\underline{\sigma}^{-4}\alpha_\mathrm{RSC}^{-3}\beta_\mathrm{RSS} \tau^2M_1^2\mathrm{df}_\mathrm{MARCF}$, we have
$$\xi^2\lesssim \frac{\tau^2 M_1^2\mathrm{df}_\mathrm{MARCF}}{\kappa^2\underline{\sigma}^{-4}\alpha_\mathrm{RSC}^{-3}\beta_\mathrm{RSS} \tau^2M_1^2\mathrm{df}_\mathrm{MARCF}}=\frac{\phi^4\alpha_\mathrm{RSC}^3}{\kappa^6\beta_\mathrm{RSS}}.$$
\end{proof}

\subsection{Properties of Initialization}\label{sub:Initialization}

Our initialization begins with finding the solution of the reduced-rank least squares problem of RRMAR model introduced in \cite{XiaoRRMAR2024}:
\begin{equation}\label{eq: RRMAR initialization}
   \widetilde{\bm A}_1^\mathrm{RR}, \widetilde{\bm A}_2^\mathrm{RR}:=\argmin_{\mathrm{rank}(\bm A_i)\leq r_i, i=1,2}\frac{1}{2T}\sum_{i=1}^{T}\fnorm{\bm Y_t-\bm A_1\bm Y_{t-1}\bm A_2^\top}^2. 
\end{equation}
Then we let $\widehat{\bm A}_1^\mathrm{RR}$ and $\widehat{\bm A}_2^\mathrm{RR}$ to be the rescaled estimation with equal Frobenius norm. Multiple numerical approaches can be applied to find the optimal solution. For example, the alternating least squares method (RR.LS) and alternating canonical correlation analysis method (RR.CC) proposed in \cite{XiaoRRMAR2024}. In this article, we use our gradient decent algorithm with $d_1=d_2=0$ to obtain the first-stage estimation. Then the solutions $\widehat{\bm A}_1^{\text{RR}}$ and $\widehat{\bm A}_2^{\text{RR}}$ are decomposed to obtain the initial value of $\bm A_1^{(0)}$ and $\bm A_2^{(0)}$. Once we obtain minimizers in \eqref{eq: RRMAR initialization}, we have an upper bound of initialization error.

\begin{proposition}\label{prop: initialization}
Under Assumptions \ref{assumption: spectral redius} and \ref{assumption: sub-Gaussian noise}, when $T\gtrsim (p_1r_1+p_2r_2+4r_1^2r_2^2)M_2^{-2}\max\left(\tau,\tau^2\right)$, with probability at least $1-C\exp\left(-C(p_1r_1+p_2r_2)\right)$, we have
$$
\fnorm{\widehat{\bm A}_2^\mathrm{RR}\otimes \widehat{\bm A}_1^\mathrm{RR}-\bm A_2^*\otimes \bm A_1^*} \lesssim \alpha_\mathrm{RSC}^{-1}\tau M_1\sqrt{\frac{\mathrm{df}_\mathrm{RRMAR}}{T}},
$$
and
$$
\fnorm{\widehat{\bm A}_1^\mathrm{RR}-\bm A_1^*}+\fnorm{\widehat{\bm A}_2^\mathrm{RR}-\bm A_2^*}\lesssim \phi^{-1}\alpha_\mathrm{RSC}^{-1}\tau M_1\sqrt{\frac{\mathrm{df}_\mathrm{RRMAR}}{T}}.
$$
Moreover, together with Assumption \ref{assumption: gmin}, the initialization error satisfies
$$
    \mathrm{dist}^2_{(0)}\lesssim \phi^{2/3} \underline{\sigma}^{-4} g_{\min }^{-4}\alpha_\mathrm{RSC}^{-2}\tau^2 M_1^2\frac{\mathrm{df}_\mathrm{RRMAR}}{T}.
$$
\end{proposition}

\begin{proof}[Proof of Proposition \ref{prop: initialization}]
Let $\widehat{\bm A}^\mathrm{RR}=\widehat{\bm A}_2^\mathrm{RR}\otimes \widehat{\bm A}_1^\mathrm{RR}$. First, we give the error bound of $\widehat{\bm A}^\mathrm{RR}$. Let $\bbm\Delta:=\widehat{\bm A}^\mathrm{RR}-\bm A^*=\widehat{\bm A}_2^\mathrm{RR}\otimes \widehat{\bm A}_1^\mathrm{RR}-\bm A_2^*\otimes \bm A_1^*$. By the optimality of $\widehat{\bm A}^\mathrm{RR}$,
$$
    \frac{1}{2 T} \sum_{t=1}^T\left\|\mathbf{y}_t-\widehat{\bm A}^\mathrm{RR} \mathbf{y}_{t-1}\right\|_2^2 \leq \frac{1}{2 T} \sum_{t=1}^T\left\|\mathbf{y}_t-\mathbf{A}^* \mathbf{y}_{t-1}\right\|_2^2
,$$
we have
$$\frac{1}{2 T} \sum_{t=1}^T\left\|\boldsymbol{\Delta} \mathbf{y}_{t-1}\right\|_2^2 \leq \left\langle\frac{1}{T} \sum_{t=1}^T\bm e_t \mathbf{y}_{t-1}^{\top}, \boldsymbol{\Delta}\right\rangle.$$
Define $a_2:=\V{\widehat{\bm A}_2^\mathrm{RR}}/\fnorm{\widehat{\bm A}_2^\mathrm{RR}}, a_1:=\V{\widehat{\bm A}_1^\mathrm{RR}}\fnorm{\widehat{\bm A}_2^\mathrm{RR}}, a_2^*:=\V{\bm A_2^{*}}, a_1^*:=\V{\bm A_2^{*}}$. By permutation operator, $\bbm \Delta$ can be decomposed as:
$$\begin{aligned}
    \bbm \Delta=&\mathcal{P}^{-1}\mathcal{P}\left(\widehat{\bm A}_2^\mathrm{RR}\otimes \widehat{\bm A}_1^\mathrm{RR}-\bm A_2^*\otimes \bm A_1^*\right)\\
    =&\mathcal{P}^{-1}\left(a_2a_1^\top-a_2^*a_1^{*\top}\right)\\
    =&\mathcal{P}^{-1}\left( a_2(a_1^\top-a_2^\top a_2^*a_1^{*\top})+(a_2a_2^\top-\bm I_{p_2^2} )a_2^*a_1^{*\top}\right)\\
    =&\mathcal{P}^{-1}\left( a_2(a_1^\top-a_2^\top a_2^*a_1^{*\top})\right)+\mathcal{P}^{-1}\left((a_2a_2^\top-\bm I_{p_2^2} )a_2^*a_1^{*\top}\right)\\
    =&\widehat{\bm A}_2^\mathrm{RR}\otimes \left(\widehat{\bm A}_1^\mathrm{RR}-\frac{\inner{\widehat{\bm A}_2^\mathrm{RR}}{\bm A_2^*}}{\fnorm{\widehat{\bm A}_2^\mathrm{RR}}^2}\bm A_1^*\right)+\left(\frac{\inner{\widehat{\bm A}_2^\mathrm{RR}}{\bm A_2^*}}{\fnorm{\widehat{\bm A}_2^\mathrm{RR}}^2}\widehat{\bm A}_2^\mathrm{RR}-\bm A_2^*\right)\otimes \bm A_1^*\\
    :=&\widehat{\bm A}_2^\mathrm{RR}\otimes \bm W_1+\bm W_2\otimes \bm A_1^*.
\end{aligned}$$
From definition we see that $\bm W_1$ and $\bm W_2$ are rank-$2r_1$ and rank-$2r_2$ matrices, respectively. Meanwhile, $\inner{\widehat{\bm A}_2^\mathrm{RR}\otimes \bm W_1}{\bm W_2\otimes \bm A_1^*}=0$. Hence, with Cauchy-Schwarz inequality,
$$\begin{aligned}
    &\left\langle\frac{1}{T} \sum_{t=1}^T\bm e_t \mathbf{y}_{t-1}^{\top}, \boldsymbol{\Delta}\right\rangle\\
    =&\left(\fnorm{\widehat{\bm A}_2^\mathrm{RR}\otimes \bm W_1}+\fnorm{\bm W_2\otimes \bm A_1^*}\right)\\
    &\times \left(\inner{\frac{1}{T}\sum_{t=1}^T\bm e_t \mathbf{y}_{t-1}^{\top}}{\frac{\widehat{\bm A}_2^\mathrm{RR}}{\fnorm{\widehat{\bm A}_2^\mathrm{RR}}}\otimes \frac{\bm W_1}{\fnorm{\bm W_1}}}+\inner{\frac{1}{T}\sum_{t=1}^T\bm e_t \mathbf{y}_{t-1}^{\top}}{\frac{\bm W_2}{\fnorm{\bm W_2}}\otimes \frac{\bm A_1^*}{\fnorm{\bm A_1^*}}}\right)\\
    \leq& \left(\fnorm{\widehat{\bm A}_2^\mathrm{RR}\otimes \bm W_1}+\fnorm{\bm W_2\otimes \bm A_1^*}\right)(\xi(2r_1,r_2,0,0)+\xi(r_1,2r_2,0,0))\\
    \leq&\sqrt{2}\fnorm{\bbm \Delta} (\xi(2r_1,r_2,0,0)+\xi(r_1,2r_2,0,0)).
\end{aligned}$$
By Proposition \ref{prop: Upper bound of xi}, when $T\gtrsim (p_1r_1+p_2r_2+4r_1^2r_2^2)M_2^{-2}\max\left(\tau,\tau^2\right)$, we have that with probability as least $1-C\exp\left(-C(p_1r_1+p_2r_2)\right)$, 
    $$\xi(2r_1,r_2,0,0)+\xi(r_1,2r_2,0,0)\lesssim \tau M_1\sqrt{\frac{\mathrm{df}_\mathrm{RRMAR}}{T}}.$$
By Proposition \ref{prop: RSCRSS}, with probability at least $1-C\exp\left(-C(p_1r_1+p_2r_2)\right)$, 
$$
\frac{\alpha_\mathrm{RSC}}{2}\fnorm{\bbm \Delta}^2\leq \frac{1}{2 T} \sum_{t=1}^T\left\|\boldsymbol{\Delta} \mathbf{y}_{t-1}\right\|_2^2.
$$
Combining the pieces together, we have when $T\gtrsim (p_1r_1+p_2r_2+4r_1^2r_2^2)M_2^{-2}\max\left(\tau,\tau^2\right)$, with probability as least $1-C\exp\left(-C(p_1r_1+p_2r_2)\right)$, 
$$
\fnorm{\widehat{\bm A}_2^\mathrm{RR}\otimes \widehat{\bm A}_1^\mathrm{RR}-\bm A_2^*\otimes \bm A_1^*} \lesssim \alpha_\mathrm{RSC}^{-1}\tau M_1\sqrt{\frac{\mathrm{df}_\mathrm{RRMAR}}{T}}.
$$
Since $\|\widehat{\bm A}_1^\mathrm{RR}\|_\mathrm{F}=\|\widehat{\bm A}_2^\mathrm{RR}\|_\mathrm{F}$, by \eqref{A1A2 and kron and regularizer1},
$$
\fnorm{\widehat{\bm A}_1^\mathrm{RR}-\bm A_1^*}+\fnorm{\widehat{\bm A}_2^\mathrm{RR}-\bm A_2^*}\lesssim \phi^{-1}\alpha_\mathrm{RSC}^{-1}\tau M_1\sqrt{\frac{\mathrm{df}_\mathrm{RRMAR}}{T}}.
$$

Finally, with Assumption \ref{assumption: gmin}, employing an argument analogous to Lemma B.4 in \cite{wang2023commonFactor}, we bound the initialization error as
$$
    \mathrm{dist}^2_{(0)}\lesssim \phi^{8/3} \underline{\sigma}^{-4} g_{\min }^{-4}\left(\fnorm{\widehat{\bm A}_1^\mathrm{RR}-\bm A_1^*}^2+\fnorm{\widehat{\bm A}_2^\mathrm{RR}-\bm A_2^*}^2\right)\lesssim \phi^{2/3} \underline{\sigma}^{-4} g_{\min }^{-4}\alpha_\mathrm{RSC}^{-2}\tau^2 M_1^2\frac{\mathrm{df}_\mathrm{RRMAR}}{T}.
$$
\end{proof}

Therefore, when $T$ is large, the initialization condition for local convergence in Theorem \ref{theorem: computational convergence} holds with high probability.

\begin{proposition}\label{prop: sample_size_initialization}
    Under Assumptions \ref{assumption: spectral redius}, \ref{assumption: sub-Gaussian noise} and \ref{assumption: gmin}, if $T\gtrsim g_{\min}^{-4}\kappa^2\underline{\sigma}^{-4}\alpha_\mathrm{RSC}^{-3}\beta_\mathrm{RSS} \tau^2M_1^2(p_1r_1+p_2r_2)$ and $T\gtrsim (p_1r_1+p_2r_2+4r_1^2r_2^2)M_2^{-2}\max\left(\tau,\tau^2\right)$, with probability at least $1-C\exp\left(-C(p_1r_1+p_2r_2)\right)$, we have
    $$
        \mathrm{dist}^2_{(0)}\lesssim \frac{\phi^{2/3}\alpha_\mathrm{RSC}}{\kappa^2\beta_\mathrm{RSS}}.
    $$
\end{proposition}

\begin{proof}[Proof of Proposition \ref{prop: sample_size_initialization}]
By Proposition \ref{prop: initialization}, we know that with probability at least $1-C\exp\left(-C(p_1r_1+p_2r_2)\right)$,
$$
    \mathrm{dist}^2_{(0)}\lesssim \phi^{2/3} \underline{\sigma}^{-4} g_{\min }^{-4}\alpha_\mathrm{RSC}^{-2}\tau^2 M_1^2\frac{\mathrm{df}_\mathrm{RRMAR}}{T}.
$$
Thus when $T\gtrsim g_{\min}^{-4}\kappa^2\underline{\sigma}^{-4}\alpha_\mathrm{RSC}^{-3}\beta_\mathrm{RSS} \tau^2M_1^2(p_1r_1+p_2r_2)\gtrsim g_{\min}^{-4}\kappa^2\underline{\sigma}^{-4}\alpha_\mathrm{RSC}^{-3}\beta_\mathrm{RSS} \tau^2M_1^2\mathrm{df}_\mathrm{RRMAR}$, we have $\mathrm{dist}^2_{(0)}\leq C_D\kappa^{-2}\phi^{2/3}\alpha_\mathrm{RSC}\beta_\mathrm{RSS}^{-1}$.
\end{proof}

\subsection{Auxiliary Lemmas}

The first lemma is Hanson--Wright inequality and can be found in high-dimentional statistics monograph \citep{Wainwright2019}.
\begin{lemma}\label{Hanson-Wright}\citep{Wainwright2019}
    Given random variables $\left\{X_i\right\}_{i=1}^n$ and a positive semidefinite matrix $\mathbf{Q} \in \mathcal{S}_{+}^{n \times n}$, consider the random quadratic form
$$
Z=\sum_{i=1}^n \sum_{j=1}^n \mathbf{Q}_{i j} X_i X_j .
$$
If the random variables $\left\{X_i\right\}_{i=1}^n$ are i.i.d. with mean zero, unit variance, and $\sigma$-sub-Gaussian, then there are universal constants $\left(c_1, c_2\right)$ such that
$$
\mathbb{P}[|Z -\mathbb{E}Z|\geq\sigma t] \leq 2 \exp \left\{-\min \left(\frac{c_1 t}{\|\mathbf{Q}\|_2}, \frac{c_2 t^2}{\|\mathbf{Q}\|_{\mathrm{F}}^2}\right)\right\},
$$
where $\|\mathbf{Q}\|_2$ and $\|\mathbf{Q}\|_{\mathrm{F}}$ denote the operator nd Frobenius norms, respectively.
\end{lemma}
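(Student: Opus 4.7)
The plan is to prove this by the classical decoupling-and-moment-generating-function approach, decomposing the quadratic form into diagonal and off-diagonal pieces and bounding each via a Chernoff argument. Write
\[
Z - \mathbb{E}Z \;=\; \underbrace{\sum_{i=1}^n \mathbf{Q}_{ii}(X_i^2 - 1)}_{=:D} \;+\; \underbrace{\sum_{i \neq j} \mathbf{Q}_{ij} X_i X_j}_{=:S},
\]
using $\mathbb{E}X_i^2 = 1$. Since $\{|D|\geq \sigma t/2\} \cup \{|S|\geq \sigma t/2\}$ covers $\{|Z-\mathbb{E}Z|\geq \sigma t\}$, it suffices to derive the stated mixed sub-Gaussian / sub-exponential tail for $D$ and $S$ separately and combine by a union bound, absorbing the factor of two into the universal constants.

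For the diagonal part, first observe that the standard equivalence of sub-Gaussian characterizations (MGF, moments, tails) implies that if $X_i$ is $\sigma$-sub-Gaussian and centered with unit variance, then $X_i^2 - 1$ is centered sub-exponential with sub-exponential norm of order $\sigma^2$. Then $D$ is a sum of $n$ independent centered sub-exponential variables, so Bernstein's inequality gives
\[
\mathbb{P}(|D|\geq \sigma t/2) \;\leq\; 2\exp\!\left(-c\min\!\left(\frac{t^2}{\sum_i \mathbf{Q}_{ii}^2},\;\frac{t}{\max_i|\mathbf{Q}_{ii}|}\right)\right),
\]
which matches the target once we bound $\sum_i \mathbf{Q}_{ii}^2 \leq \|\mathbf{Q}\|_F^2$ and $\max_i |\mathbf{Q}_{ii}| \leq \|\mathbf{Q}\|_2$.

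For the off-diagonal part $S$, I would use a standard decoupling step: letting $(X_i')$ be an independent copy of $(X_i)$, one has $\mathbb{E}\exp(\lambda S) \leq \mathbb{E}\exp(4\lambda S')$ where $S' = \sum_{i\neq j}\mathbf{Q}_{ij} X_i X_j'$ (this is the de la Pe\~na--Montgomery-Smith decoupling inequality for quadratic forms of independent mean-zero variables). Conditional on $X'$, $S'$ is a weighted sum $\sum_i (\mathbf{Q}X')_i X_i$ of independent $\sigma$-sub-Gaussians, so the conditional MGF is bounded by $\exp(8\sigma^2 \lambda^2 \|\mathbf{Q}X'\|_2^2)$. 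Taking expectation over $X'$ reduces the problem to bounding $\mathbb{E}\exp(c\lambda^2 \|\mathbf{Q}X'\|_2^2)$; after diagonalizing $\mathbf{Q}^\top\mathbf{Q}$ and writing the expression as a product over eigenvalues $\mu_k$ of $\mathbf{Q}^\top\mathbf{Q}$, each factor $\mathbb{E}\exp(c\lambda^2 \mu_k Y_k^2)$ is finite and bounded by $\exp(C\sigma^2 \lambda^2 \mu_k)$ provided $\lambda^2 \mu_k$ stays small, i.e.\ as long as $|\lambda|\leq c/(\sigma^2\|\mathbf{Q}\|_2)$. On that range this yields $\mathbb{E}\exp(\lambda S) \leq \exp(C\sigma^4 \lambda^2 \|\mathbf{Q}\|_F^2)$. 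A Chernoff bound with $\lambda$ chosen as $\min(c_1 t /\|\mathbf{Q}\|_F^2,\; c_2/\|\mathbf{Q}\|_2)$ and the appropriate $\sigma$-scaling then produces the claimed tail bound.

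The main obstacle is the off-diagonal MGF step: keeping track of the admissible range of $\lambda$ through the iterated expectation, and ensuring that the threshold where the bound transitions from sub-Gaussian behavior ($t^2/\|\mathbf{Q}\|_F^2$) to sub-exponential behavior ($t/\|\mathbf{Q}\|_2$) comes out with the correct constants. Since the desired inequality is a standard Hanson--Wright bound (e.g.\ Theorem~6.2.1 in \citet{Wainwright2019} or Theorem~1.1 in Rudelson--Vershynin), the full argument can be quoted from there, but the route above is what I would carry out if a self-contained proof were required.
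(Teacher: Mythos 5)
The paper offers no proof of this lemma at all: it is quoted verbatim as a known result from \citet{Wainwright2019} and used as a black box in the proof of the RSC/RSS lemma, so your closing remark that the bound ``can be quoted from there'' is exactly what the paper does, and for the purposes of this paper that is sufficient. Your self-contained sketch is the standard Rudelson--Vershynin/Wainwright route (diagonal--off-diagonal split, Bernstein for the diagonal, decoupling plus a Chernoff bound for the off-diagonal), and it is essentially sound, with one technical step stated too quickly: after the conditional sub-Gaussian bound you must control $\mathbb{E}\exp\bigl(c\lambda^2\sigma^2\|\mathbf{Q}X'\|_2^2\bigr)$, and writing $\|\mathbf{Q}X'\|_2^2=\sum_k \mu_k Y_k^2$ with $Y=U^\top X'$ does \emph{not} let you factor the expectation into a product over eigenvalues, because the rotated coordinates $Y_k$ of a general sub-Gaussian vector are uncorrelated but not independent (this is where your sketch is circular: that expectation is itself the MGF of a quadratic form, the very object being bounded). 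The standard fix, which you should insert, is a further comparison replacing $X'$ (and then $X$) by a standard Gaussian vector, after which rotation invariance makes the coordinates genuinely independent and the per-eigenvalue computation, with the restriction $|\lambda|\lesssim 1/(\sigma^2\|\mathbf{Q}\|_2)$, goes through; the rest of your argument, including the choice of $\lambda$ in the Chernoff step and the bounds $\sum_i\mathbf{Q}_{ii}^2\le\|\mathbf{Q}\|_{\mathrm{F}}^2$, $\max_i|\mathbf{Q}_{ii}|\le\|\mathbf{Q}\|_2$, is fine modulo the (already loose) $\sigma$-bookkeeping in the statement itself.
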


The second lemma establishs the covering number of $\mathcal{S}(r_1,r_2,p_1,p_2)$ introduced in the proof of Proposition \ref{prop: RSCRSS}.
\begin{lemma}\label{lemma:covering_KP_difference}
    Define the set of unit-norm differences of low-rank Kronecker products as:
    \begin{equation}
    \begin{aligned}
        \mathcal{S}(r_1,r_2,p_1,p_2) := \Big\{\bm M:~ &\bm M=\bm A_2\otimes \bm A_1-\bm A_2'\otimes \bm A_1',\\
        &\bm A_i, \bm A_i'\in \mathbb{R}^{p_i\times p_i},~\mathrm{rank}(\bm A_i),~\mathrm{rank}(\bm A_i')\leq r_i, ~ \fnorm{\bm M}=1, ~ i=1,2 \Big\}.
    \end{aligned}
\end{equation}
Let $\overline{\mathcal{S}}(r_1,r_2,p_1,p_2)$ be an $\varepsilon$-net of $\mathcal{S}(r_1,r_2,p_1,p_2)$ with respect to the Frobenius norm, where $\varepsilon\in (0,1)$. Then,
\begin{equation}
    \left|\overline{\mathcal{S}}(r_1,r_2,p_1,p_2)\right|\leq \left(\frac{15}{\varepsilon}\right)^{4p_1r_1+4p_2r_2+16r_1^2r_2^2}.
\end{equation}
\end{lemma}

\begin{proof}[Proof of Lemma \ref{lemma:covering_KP_difference}]
    For brevity, given $(r_1,r_2,p_1,p_2)$, we use $\mathcal{S}$ and $\overline{\mathcal{S}}$ to represent $\mathcal{S}(r_1,r_2,p_1,p_2)$ and $\overline{\mathcal{S}}(r_1,r_2,p_1,p_2)$, respectively. Since the proof utilizes tensor operations as techniques, we give some basic notations and tensor algebra relevant to the proof for clearity. We use 4-dimensional tensors to illustrate the concepts.

    Our definitions align with \cite{kolda2009tensor}. For any tensor $\bbm{\mathcal{X}}\in\mathbb{R}^{d_1\times d_2\times d_3\times d_4}$, denote its $(i_1,i_2,i_3,i_4)$-th entry as $\mathcal{X}_{i_1,i_2,i_3,i_4}$. The Frobenius norm is defined by
    \begin{equation}
        \|\bbm{\mathcal{X}}\|_\mathrm{F} = \sqrt{\sum_{i_1=1}^{d_1}\sum_{i_2=1}^{d_2}\sum_{i_3=1}^{d_3}\sum_{i_4=1}^{d_4} \mathcal{X}_{i_1,i_2,i_3,i_4}^2}.
    \end{equation}
    The mode-$k$ matricization of a tensor $\bbm{\mathcal{X}}$, denoted by $\bbm{\mathcal{X}}_{(k)}$, is a $d_k\times (d_1d_2d_3d_4/d_k)$ matrix. Specifically, $\mathcal{X}_{i_1,i_2,i_3,i_4}$ maps to matrix element $\left(i_k, j\right)$, where
    $$
        j=1+\sum_{\substack{n=1 \\ n \neq k}}^4\left(i_n-1\right) J_n \quad \text { with } \quad J_n=\prod_{\substack{m=1 \\ m \neq k}}^{n-1} d_m .
    $$ 
    For a matrix $\bm U\in\mathbb{R}^{p_1\times d_1}$, the 1-mode product, denoted by $\bbm{\mathcal{X}}\times_1 \bm U$, is of size $p_1\times d_2\times d_3\times d_4$. It is defined by
    $$
        (\bbm{\mathcal{X}}\times_1 \bm U)_{i',j,k,l}=\sum_{i=1}^{d_1} \bm U_{i',i} \mathcal{X}_{i,j,k,l},
    $$
    The 2,3 and 4-mode products are defined analogously. Let $\bbm{\mathcal{Y}}=\bbm{\mathcal{X}}\times_1 \bm U_1\times_2 \bm U_2\times_3 \bm U_3\times_4 \bm U_4$ with $\bm U_k\in\mathbb{R}^{p_k\times d_k}$, its $(i',j',k',l')$-th entry can be written as
    \begin{equation}\label{eq:elements_mode_product}
        \mathcal{Y}_{i',j',k',l'}=\sum_{i=1}^{d_1}\sum_{j=2}^{d_2}\sum_{k=3}^{d_3}\sum_{l=4}^{d_4} \mathcal{X}_{i,j,k,l}(\bm U_1)_{i',i}(\bm U_2)_{j',j}(\bm U_3)_{k',k}(\bm U_4)_{l',l}.
    \end{equation}
    An important property is that
    \begin{equation}\label{eq:matricization_mode_product}
        \bbm{\mathcal{Y}}_{(k)}=\bm U_k\bbm{\mathcal{X}}_{(k)}\left(\bm U_4\otimes \dots\otimes \bm U_{(k+1)}\otimes \bm U_{(k-1)}\otimes \dots \otimes \bm U_1\right)^\top.       
    \end{equation}

    Next, we introduce the concept of Tucker rank. Defining $r_k=\text{rank}(\bbm{\mathcal{X}}_{(k)})$ for $k=1,2,3,4$, the Tucker rank of $\bbm{\mathcal{X}}$ is given by $(r_1,r_2,r_3,r_4)$. If $\bbm{\mathcal{X}}$ is low-Tucker-rank, i.e., $r_k\leq \bar{r}_k$ for some $\bar{r}_k$ and $k=1,2,3,4$, it can be decomposed as
    \begin{equation}
        \bbm{\mathcal{X}} = \bbm{\mathcal{G}} \times_1 \bm U_1\times_2 \bm U_2\times_3 \bm U_3\times_4 \bm U_4,
    \end{equation}
    where $\bm U_k\in \mathbb{R}^{d_k\times \bar{r}_k}$ having orthonormal columns and $\bbm{\mathcal{G}}\in\mathbb{R}^{\bar{r}_1\times \bar{r}_2\times \bar{r}_3\times \bar{r}_4}$. The decomposition can be obtained by HOSVD, see \cite{de2000HOSVD} for more details.

    Now we start our proof. We first validate that Kronecker products can be viewed as 4-dimensional tensors. For any matrix $\bm M\in\mathbb{R}^{p_2p_1\times p_2p_1}$, define the rearrangement operator $\mathcal{T}:\mathbb{R}^{p_2p_1\times p_2p_1}\to \mathbb{R}^{p_2\times p_1\times p_2\times p_1}$. The operation is defined element-wise as follows: if $\bm M=\bm M_2\otimes \bm M_1$ where $\bm M_i\in\mathbb{R}^{p_i\times p_i}$ for $i=1,2$, then
    \begin{equation}
        \left[\mathcal{T}(\bm M_2\otimes \bm M_1)\right]_{i_1,j_1,i_2,j_2}=\left(\bm M_2\right)_{i_1,i_2}\left(\bm M_1\right)_{j_1,j_2},~ \text{for all}~ i_1,i_2=1\ldots,p_2, ~\text{and}~ j_1,j_2=1,\ldots,p_1.
    \end{equation}
    Since $\mathcal{T}$ is a bijection, the inverse operator $\mathcal{T}^{-1}$ is well-defined. Furthermore, $\mathcal{T}$ and $\mathcal{T}^{-1}$ are linear isometry with respect to the Frobenius norm. Therefore, we find the $\varepsilon$-net of $\mathcal{S}$ by applying $\mathcal{T}^{-1}$ on the $\varepsilon$-net of $\mathcal{T}(\mathcal{S}):=\{ \mathcal{T}(\bm M): \bm M\in \mathcal{S}\}$. Denote $\overline{\mathcal{T}(\mathcal{S})}$ to be the $\varepsilon$-net $\mathcal{T}(\mathcal{S})$ with respect to the tensor Frobenius norm, we have $|\overline{\mathcal{S}}|=|\overline{\mathcal{T}(\mathcal{S})}|$.

    It remains to characterize the structure of $\mathcal{T}(\mathcal{S})$ and bound the covering number of it. For any $\bm M \in \mathcal{S}$, it is decomposed as $\bm M=\bm A_2\otimes\bm A_1-\bm A_2'\otimes \bm A_1'$, where $\bm A_i, \bm A_i'\in\mathbb{R}^{p_i\times p_i}$ with $\text{rank}(\bm A_i)$ and $\text{rank}(\bm A_i')\leq r_i$. Define their SVD components as $\bm A_i=\bm U_i\bm S_i\bm V_i^\top$ and $\bm A_i'=\bm U_i'\bm S_i'\bm V_i'^\top$, where $\bm U_i$, $\bm U_i'$, $\bm V_i$, and $\bm V_i'\in\mathbb{O}^{p_i\times r_i}$, for $i=1,2$. Denote $\sigma_{k}$ and $\gamma_{k}$ to be the $k$-th largest singular value of $\bm A_2$ and $\bm A_1$, respectively. Applying $\mathcal{T}$ on $\bm A_2\otimes \bm A_1$, the $(i_1,j_1,i_2,j_2)$-th entry is
    \begin{equation}\label{eq:tensor_KP}
        \begin{aligned}
            &\left[\mathcal{T}(\bm A_2\otimes \bm A_1)\right]_{i_1,j_1,i_2,j_2}\\
            =&\left(\sum_{k=1}^{r_2}\sigma_{k}(\bm U_2)_{i_1,k}(\bm V_2)_{i_2,k}\right)\left(\sum_{l=1}^{r_1}\gamma_{l}(\bm U_1)_{j_1,l}(\bm V_1)_{j_2,l}\right)\\
            =&\sum_{k=1}^{r_2}\sum_{l=1}^{r_1}\sigma_{k}\gamma_{l}(\bm U_2)_{i_1,k}(\bm V_2)_{i_2,k}(\bm U_1)_{j_1,l}(\bm V_1)_{j_2,l}\\
            =&\sum_{k=1}^{r_2}\sum_{l=1}^{r_1}\sum_{m=1}^{r_2}\sum_{n=1}^{r_1}\sigma_{k}\gamma_{l}1_{\{m=k\}}1_{\{n=l\}}(\bm U_2)_{i_1,k}(\bm V_2)_{i_2,m}(\bm U_1)_{j_1,l}(\bm V_1)_{j_2,n},
        \end{aligned}
    \end{equation}
    where $1_{\{\cdot\}}$ is the indicator function. It corresponds to the form of mode product given in \eqref{eq:elements_mode_product}, which implies that
    \begin{equation}
        \mathcal{T}(\bm A_2\otimes \bm A_1)=\bbm{\mathcal{C}}\times_1 \bm U_2\times_2 \bm U_1\times_3 \bm V_2\times_4 \bm V_1,
    \end{equation}
    where $\bbm{\mathcal{C}}\in\mathbb{R}^{r_2\times r_1\times r_2\times r_1}$ is a low-dimensional core tensor being defined by the last line of \eqref{eq:tensor_KP}. Then, by matricization with \eqref{eq:matricization_mode_product} and checking the ranks of four modes, we know that $\mathcal{T}(\bm A_2\otimes \bm A_1)$ is low-Tucker-rank with ranks of four modes being at most $(r_2,r_1,r_2,r_1)$. Similar argument holds analogously for $\mathcal{T}(\bm A_2'\otimes \bm A_1')$. Therefore, by the property of matrix ranks, the Tucker rank of $\mathcal{T}(\bm M)=\mathcal{T}(\bm A_2\otimes \bm A_1)-\mathcal{T}(\bm A_2'\otimes \bm A_1')$ are at most $(2r_2,2r_1,2r_2,2r_1)$. Applying Lemma \ref{lemma:covering_tensors}, we have that
    \begin{equation}
        |\overline{\mathcal{S}}|=|\overline{\mathcal{T}(\mathcal{S})}|\leq \left(\frac{15}{\varepsilon}\right)^{4p_1r_1+4p_2r_2+16r_1^2r_2^2}.
    \end{equation}

\end{proof}

The next lemma is some conclusions on splitting martices with common dimensions. Its goal is to show that we can decompose the sum of two martices equipped with common dimensions as the sum of four matrices. Each of them are prependicular to the others, two of them are low-rank without common dimensions and the other two are equipped with the same common dimensions.
\begin{lemma}\label{lemma:splitting SVD}
    Suppose that there are two $p\times p$ rank-$r$ matrices $\bm W_1$ and $\bm W_2$. Then,
    \begin{enumerate}
        \item There exists two $p\times p$ rank-$r$ matrices $\widetilde{\bm W}_1, \widetilde{\bm W}_2$, such that $\bm W_1+\bm W_2=\widetilde{\bm W}_1+\widetilde{\bm W}_2$ and $\inner{\widetilde{\bm W}_1}{\widetilde{\bm W}_2}=0$.
        \item Suppose that now $\bm W_1$ and $\bm W_2$ both have common dimension $d$. That is,
        Define 
            $$\begin{aligned}\mathcal{W}'(r, d ; p)=\left\{\mathbf{W} \in \mathbb{R}^{p \times p}: \mathbf{W}=[\mathbf{C}~\mathbf{R}] \mathbf{D}[\mathbf{C}~\mathbf{P}]^{\top}, \mathbf{C} \in\mathbb{O}^{p \times d}, \mathbf{R}, \mathbf{P} \in \mathbb{O}^{p \times(r-d)},\right.\\
            \left. \inner{\bm C}{\bm R}=\inner{\bm C}{\bm P}=0\right\}\end{aligned}$$
        to be the model space of MARCF model with arbitrary scale, 
        $\bm W_1, \bm W_2\in\mathcal{W}'(r,d;p)$. Then there exist $\widetilde{\bm W}_1, \widetilde{\bm W}_2\in \mathcal{W}'(r,d;p)$ and $\widetilde{\bm W}_3, \widetilde{\bm W}_4\in \mathcal{W}'(r,0;p)$, such that $\bm W_1+\bm W_2=\widetilde{\bm W}_1+\widetilde{\bm W}_2+\widetilde{\bm W}_3+\widetilde{\bm W}_4$ and $\inner{\widetilde{\bm W}_j}{\widetilde{\bm W}_k}=0$ for every $j,k=1,2,3,4,j\neq k$.
    \end{enumerate}
\end{lemma}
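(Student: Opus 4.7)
The plan is to prove the two parts sequentially, both by suitable SVD-based constructions.

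For Part 1, I would apply the singular value decomposition directly to the sum. Since $\mathrm{rank}(\bm W_1)\le r$ and $\mathrm{rank}(\bm W_2)\le r$, the rank of $\bm W_1+\bm W_2$ is at most $2r$, so write $\bm W_1+\bm W_2=\bm U\bm\Sigma\bm V^\top$ with $\bm U,\bm V\in\mathbb{R}^{p\times 2r}$ having orthonormal columns and $\bm\Sigma$ a $2r\times 2r$ diagonal (padding with zero singular values if needed). Split the columns as $\bm U=[\bm U_a~\bm U_b]$, $\bm V=[\bm V_a~\bm V_b]$, and $\bm\Sigma=\mathrm{diag}(\bm\Sigma_a,\bm\Sigma_b)$ into blocks of size $r$, and set $\widetilde{\bm W}_1=\bm U_a\bm\Sigma_a\bm V_a^\top$, $\widetilde{\bm W}_2=\bm U_b\bm\Sigma_b\bm V_b^\top$. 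Each summand has rank $\le r$, they sum to $\bm W_1+\bm W_2$, and $\inner{\widetilde{\bm W}_1}{\widetilde{\bm W}_2}=\mathrm{tr}(\bm\Sigma_a\bm V_a^\top\bm V_b\bm\Sigma_b\bm U_b^\top\bm U_a)=0$ because $\bm U_a^\top\bm U_b=\bm 0$ (and $\bm V_a^\top\bm V_b=\bm 0$) by orthonormality of singular vectors.

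For Part 2, I would first expose the natural common structure of $\bm M:=\bm W_1+\bm W_2$. Let $\mathcal{U}=\mathrm{col}(\bm M)$ and $\mathcal{V}=\mathrm{row}(\bm M)$, both of dimension at most $2r$, and set $\mathcal{C}^\star=\mathcal{U}\cap\mathcal{V}$ of dimension $d^\star$. Choose an orthonormal basis $\bm C^\star$ of $\mathcal{C}^\star$, extend orthogonally to bases $[\bm C^\star~\bm R^\star]$ of $\mathcal{U}$ and $[\bm C^\star~\bm P^\star]$ of $\mathcal{V}$ with $\bm C^{\star\top}\bm R^\star=\bm 0$ and $\bm C^{\star\top}\bm P^\star=\bm 0$, so that $\bm M=[\bm C^\star~\bm R^\star]\bm K[\bm C^\star~\bm P^\star]^\top$ for some $\bm K$. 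I would then allocate rank and common budgets to the four pieces: pick $d$-dimensional subspaces $\bm C^{(1)},\bm C^{(2)}\subset\mathcal{C}^\star$ (padding into the $\bm R^\star,\bm P^\star$ blocks if $d^\star<2d$) and define $\widetilde{\bm W}_1,\widetilde{\bm W}_2$ as the restrictions of $\bm M$ to the corresponding $[\bm C^{(i)}~\ast]\bm D^{(i)}[\bm C^{(i)}~\ast]^\top$ structure, absorbing the leftover column/row directions into $\widetilde{\bm W}_3,\widetilde{\bm W}_4\in\mathcal{W}'(r,0;p)$; within each block an internal SVD (mirroring Part 1) enforces that the four pieces have pairwise orthogonal singular-vector supports, giving $\inner{\widetilde{\bm W}_j}{\widetilde{\bm W}_k}=0$ for $j\neq k$.

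The main obstacle will be handling the mismatch between the intrinsic common dimension $d^\star$ of $\bm W_1+\bm W_2$ and the prescribed $d$. When $d^\star<2d$ (e.g.\ $\mathcal{M}(\bm C_1)=\mathcal{M}(\bm C_2)$ so $d^\star=d$), the common subspace cannot be split into two $d$-dimensional orthogonal pieces, and some coordinates of $\widetilde{\bm W}_1,\widetilde{\bm W}_2$ must be borrowed from the non-common block; conversely, when $d^\star>2d$, the excess common directions have to be repackaged into $\widetilde{\bm W}_3,\widetilde{\bm W}_4\in\mathcal{W}'(r,0;p)$. Keeping all the constraints $[\bm C_i~\bm R_i]^\top[\bm C_i~\bm R_i]=\bm I_{r_i}$, $\bm C_i^\top\bm R_i=\bm C_i^\top\bm P_i=\bm 0$ simultaneously consistent with mutual Frobenius orthogonality across the four pieces is the delicate bookkeeping step, but it reduces to a block-SVD argument once the allocation between common and non-common budgets is fixed. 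After this is set up, Frobenius orthogonality of the pieces follows from a trace computation identical in spirit to the one in Part 1.
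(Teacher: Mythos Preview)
Your Part~1 argument is correct and actually a bit cleaner than the paper's: you take the SVD of the sum $\bm W_1+\bm W_2$ (rank $\le 2r$) and split the singular components into two rank-$r$ groups, which automatically gives Frobenius orthogonality. The paper instead keeps the column space of $\bm W_1$ fixed and projects $\bm W_2$ onto it, setting $\widetilde{\bm W}_1=\bm U_1(\bm V_1^\top+\bm U_1^\top\bm U_2\bm V_2^\top)$ and $\widetilde{\bm W}_2=(\bm I-\bm U_1\bm U_1^\top)\bm U_2\bm V_2^\top$. Both routes work; yours is more symmetric, the paper's is more ``anchored'' to one summand.

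Part~2 is where your plan has a genuine gap. Working with the intrinsic common dimension $d^\star=\dim(\mathrm{col}(\bm M)\cap\mathrm{row}(\bm M))$ of the \emph{sum} is the wrong starting point: it forces you into the $d^\star$ versus $2d$ case analysis you describe, and your ``borrowing'' idea does not come with any mechanism to guarantee that the left and right common factors of $\widetilde{\bm W}_1,\widetilde{\bm W}_2$ coincide (the defining property of $\mathcal{W}'(r,d;p)$ is that the \emph{same} $\bm C$ appears on both sides). Nothing in your block-SVD step produces a shared $\bm C$; it only produces orthogonal left and right bases separately. The paper sidesteps all of this by never computing $d^\star$. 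Instead it anchors on the \emph{given} data: write
\[
\bm W_1+\bm W_2=[\bm C_1~\bm R_1~\bm C_2~\bm R_2]\begin{bmatrix}\bm D_1&\bm 0\\\bm 0&\bm D_2\end{bmatrix}[\bm C_1~\bm P_1~\bm C_2~\bm P_2]^\top,
\]
apply Gram--Schmidt (QR) to the left block $[\bm C_1~\bm R_1~\bm C_2~\bm R_2]$ and to the right block $[\bm C_1~\bm P_1~\bm C_2~\bm P_2]$ keeping $\bm C_1$ as the leading columns, and then read off the four pieces from the resulting $2\times 2$ block structure of the transformed core. The pairwise Frobenius orthogonality is then immediate from column orthogonality of the QR factors, with no case split on $d^\star$. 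The moral: anchor on the common block $\bm C_1$ already present in one summand rather than trying to reconstruct common structure from the sum.
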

\begin{proof}[Proof of Lemma \ref{lemma:splitting SVD}]
    For the first part of the lemma, by SVD decomposition we know that there exists $p\times r$ matrices $\bm U_1, \bm U_2, \bm V_1, \bm V_2$ with mutually orthogonal columns, such that $\bm U_1^\top\bm U_1=\bm I_{r_1}, \bm W_1=\bm U_1\bm V_1^\top$ and  $\bm W_2=\bm U_2\bm V_2^\top$. Let
    $\widetilde{\bm W}_1=\bm U_1(\bm V_1^\top+\bm U_1^\top\bm U_2\bm V_2^\top)$ and $\widetilde{\bm W}_2=(\bm I-\bm U_1\bm U_1^\top)\bm U_2\bm V_2^\top$, then $\widetilde{\bm W}_1$ and $\widetilde{\bm W}_2$ satisfy the conditions.

    For the second part, decompose and write together $\bm W_1+\bm W_2$ gives
    $$\begin{aligned}
        \bm W_1+\bm W_2=[\bm C_1~\bm R_1~\bm C_2~\bm R_2]\begin{bmatrix}
            \bm D_1 & \bm O\\
            \bm O & \bm D_2
        \end{bmatrix}
        [\bm C_1~\bm P_1~\bm C_2~\bm P_2]^\top
    \end{aligned}.$$
    By applying Gram-Schmidt orthogonalization procedure (or equivalently, QR decomposition), we can obtain two sets of orthogonal basis,
    $$[\bm C_1~\widetilde{\bm R}_1~\widetilde{\bm C}_2~\widetilde{\bm R}_2]\in\mathbb{O}^{p\times(r_1+r_2)},\quad [\bm C_1~\widetilde{\bm P}_1~\widetilde{\bm C}_2~\widetilde{\bm P}_2]\in\mathbb{O}^{p\times(r_1+r_2)}.$$
    Then 
    $$\begin{aligned}
        \bm W_1+\bm W_2=&[\bm C_1~\widetilde{\bm R}_1~\widetilde{\bm C}_2~\widetilde{\bm R}_2]\begin{bmatrix}
            \widetilde{\bm D}_1 & \widetilde{\bm D}_3\\
            \widetilde{\bm D}_4 & \widetilde{\bm D}_2
        \end{bmatrix}
        [\bm C_1~\widetilde{\bm P}_1~\widetilde{\bm C}_2~\widetilde{\bm P}_2]^\top\\
        =&[\bm C_1~\widetilde{\bm R}_1]\widetilde{\bm D}_1[\bm C_1~\widetilde{\bm P}_1]^\top+[\widetilde{\bm C}_2~\widetilde{\bm R}_2]\widetilde{\bm D}_2[\widetilde{\bm C}_2~\widetilde{\bm P}_2]^\top\\
        &+[\bm C_1~\widetilde{\bm R}_1]\widetilde{\bm D}_3[\widetilde{\bm C}_2~\widetilde{\bm P}_2]^\top+[\widetilde{\bm C}_2~\widetilde{\bm R}_2]\widetilde{\bm D}_4[\bm C_1~\widetilde{\bm P}_1]^\top\\
        :=&\widetilde{\bm W}_1+\widetilde{\bm W}_2+\widetilde{\bm W}_3+\widetilde{\bm W}_4.
    \end{aligned}$$
    It can be directly verified that $\inner{\widetilde{\bm W}_j}{\widetilde{\bm W}_k}=0$ for every $j,k=1,2,3,4,j\neq k$.
\end{proof}

The next lemma gives the covering number of low rank martices with common column and row spaces.
\begin{lemma}\label{lemma: common covering}
    (Lemma B.6, \cite{wang2023commonFactor})
    Define 
    $$\begin{aligned}\mathcal{W}(r, d ; p)=\left\{\mathbf{W} \in \mathbb{R}^{p \times p}: \mathbf{W}=[\mathbf{C}~\mathbf{R}] \mathbf{D}[\mathbf{C}~\mathbf{P}]^{\top}, \mathbf{C} \in\mathbb{O}^{p \times d}, \mathbf{R}, \mathbf{P} \in \mathbb{O}^{p \times(r-d)},\right.\\
    \left. \inner{\bm C}{\bm R}=\inner{\bm C}{\bm P}=0, \textrm{ and }\|\mathbf{W}\|_{\mathrm{F}}=1\right\}.
    \end{aligned}$$
    Let $\overline{\mathcal{W}}(r, d ; p)$ be an $\epsilon$-net of $\mathcal{W}(r, d ; p)$, where $\epsilon \in(0,1]$. Then
    $$
    |\overline{\mathcal{W}}(r, d ; p)| \leq\left(\frac{24}{\epsilon}\right)^{p(2 r-d)+r^2}.
    $$
    
\end{lemma}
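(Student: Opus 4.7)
The plan is to show that the covering number of $\mathcal{W}(r,d;p)$ matches the intrinsic degrees of freedom of the parameterization $\mathbf{W}=[\mathbf{C}~\mathbf{R}]\mathbf{D}[\mathbf{C}~\mathbf{P}]^{\top}$, namely $p(2r-d)+r^{2}$: the $pd$ free entries of $\mathbf{C}$ plus the $p(r-d)$ entries in each of $\mathbf{R}$ and $\mathbf{P}$, plus the $r^{2}$ entries of $\mathbf{D}$ (constants from Stiefel-manifold orthonormality are absorbed into the polynomial base). The key structural observation I will exploit is that $\mathbf{C}$ appears inside both the left factor $[\mathbf{C}~\mathbf{R}]$ and the right factor $[\mathbf{C}~\mathbf{P}]$, so it should be covered only once; that is exactly why the exponent is $p(2r-d)+r^{2}$ rather than $2pr+r^{2}$. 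Also note that since $[\mathbf{C}~\mathbf{R}]$ and $[\mathbf{C}~\mathbf{P}]$ have orthonormal columns, $\|\mathbf{W}\|_{\mathrm F}=\|\mathbf{D}\|_{\mathrm F}=1$, so $\mathbf{D}$ lives on the unit Frobenius sphere of $\mathbb{R}^{r\times r}$.

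First I would construct four component nets, each to precision $\epsilon'=\epsilon/4$ with respect to the Frobenius norm. (1) A net $\mathcal{N}_{\mathbf{C}}$ for $\mathbf{C}\in\mathbb{O}^{p\times d}$; by the standard volume/Stiefel covering bound $|\mathcal{N}_{\mathbf{C}}|\le (6/\epsilon')^{pd}$. (2) For each fixed $\widetilde{\mathbf{C}}\in\mathcal{N}_{\mathbf{C}}$, a net $\mathcal{N}_{\mathbf{R}|\widetilde{\mathbf{C}}}$ for orthonormal $\mathbf{R}\in\mathbb{R}^{p\times(r-d)}$ orthogonal to $\widetilde{\mathbf{C}}$, of size at most $(6/\epsilon')^{p(r-d)}$; (3) the analogous net $\mathcal{N}_{\mathbf{P}|\widetilde{\mathbf{C}}}$ of the same size bound; and (4) a net $\mathcal{N}_{\mathbf{D}}$ for $\mathbf{D}$ on the unit sphere in $\mathbb{R}^{r\times r}$, of size $(3/\epsilon')^{r^{2}}$. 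Taking the union over $\widetilde{\mathbf{C}}$ of the product of these component nets and mapping each tuple $(\widetilde{\mathbf{C}},\widetilde{\mathbf{R}},\widetilde{\mathbf{P}},\widetilde{\mathbf{D}})$ to $\widetilde{\mathbf{W}}:=[\widetilde{\mathbf{C}}~\widetilde{\mathbf{R}}]\widetilde{\mathbf{D}}[\widetilde{\mathbf{C}}~\widetilde{\mathbf{P}}]^{\top}\in\mathcal{W}(r,d;p)$ produces a candidate net $\overline{\mathcal{W}}(r,d;p)$ of cardinality at most
\begin{equation}
\left(\tfrac{6}{\epsilon'}\right)^{pd}\!\left(\tfrac{6}{\epsilon'}\right)^{p(r-d)}\!\left(\tfrac{6}{\epsilon'}\right)^{p(r-d)}\!\left(\tfrac{3}{\epsilon'}\right)^{r^{2}}\!\le\left(\tfrac{6}{\epsilon'}\right)^{p(2r-d)+r^{2}}\!=\left(\tfrac{24}{\epsilon}\right)^{p(2r-d)+r^{2}}\!\!.
\end{equation}

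Next I would verify the approximation: if each component is within $\epsilon'$ of its target in Frobenius norm, then $\|\mathbf{W}-\widetilde{\mathbf{W}}\|_{\mathrm F}\le\epsilon$. Using the telescoping identity
\begin{equation}
\mathbf{W}-\widetilde{\mathbf{W}}=([\mathbf{C}~\mathbf{R}]-[\widetilde{\mathbf{C}}~\widetilde{\mathbf{R}}])\mathbf{D}[\mathbf{C}~\mathbf{P}]^{\top}+[\widetilde{\mathbf{C}}~\widetilde{\mathbf{R}}](\mathbf{D}-\widetilde{\mathbf{D}})[\mathbf{C}~\mathbf{P}]^{\top}+[\widetilde{\mathbf{C}}~\widetilde{\mathbf{R}}]\widetilde{\mathbf{D}}([\mathbf{C}~\mathbf{P}]-[\widetilde{\mathbf{C}}~\widetilde{\mathbf{P}}])^{\top},
\end{equation}
submultiplicativity $\|AB\|_{\mathrm F}\le\|A\|_{\mathrm{op}}\|B\|_{\mathrm F}$, orthonormality giving $\|[\widetilde{\mathbf{C}}~\widetilde{\mathbf{R}}]\|_{\mathrm{op}}=\|[\mathbf{C}~\mathbf{P}]\|_{\mathrm{op}}=1$, and $\|\mathbf{D}\|_{\mathrm F}=\|\widetilde{\mathbf{D}}\|_{\mathrm F}=1$, one gets $\|\mathbf{W}-\widetilde{\mathbf{W}}\|_{\mathrm F}\le 3\epsilon'+\epsilon'\cdot\text{(rotation adjustments)}\le 4\epsilon'=\epsilon$. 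Here I use that $\|[\mathbf{C}~\mathbf{R}]-[\widetilde{\mathbf{C}}~\widetilde{\mathbf{R}}]\|_{\mathrm{op}}\le\|[\mathbf{C}~\mathbf{R}]-[\widetilde{\mathbf{C}}~\widetilde{\mathbf{R}}]\|_{\mathrm F}\le(\|\mathbf{C}-\widetilde{\mathbf{C}}\|_{\mathrm F}^{2}+\|\mathbf{R}-\widetilde{\mathbf{R}}\|_{\mathrm F}^{2})^{1/2}$, which is controlled by $\epsilon'$ as the component nets both have precision $\epsilon'$.

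The main obstacle will be controlling the Stiefel covering of $\mathbf{R}$ (resp.\ $\mathbf{P}$) inside the orthogonal complement of $\widetilde{\mathbf{C}}$ rather than inside the orthogonal complement of the true $\mathbf{C}$: the constraint set for $\mathbf{R}$ depends on $\mathbf{C}$, but the net of $\widetilde{\mathbf{C}}$ is only $\epsilon'$-close. I will handle this by first picking $\widetilde{\mathbf{C}}$ from $\mathcal{N}_{\mathbf{C}}$ close to $\mathbf{C}$, then approximating $\mathbf{R}$ directly by an element of $\mathcal{N}_{\mathbf{R}|\widetilde{\mathbf{C}}}$ (which is orthogonal to $\widetilde{\mathbf{C}}$, not to $\mathbf{C}$). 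The telescoping bound above treats $[\widetilde{\mathbf{C}}~\widetilde{\mathbf{R}}]$ and $[\widetilde{\mathbf{C}}~\widetilde{\mathbf{P}}]$ as the relevant orthonormal matrices, so this inconsistency is harmless, and the final net lies in $\mathcal{W}(r,d;p)$ by construction. Tracking constants through the choice $\epsilon'=\epsilon/4$ and bounding each Stiefel net by $(6/\epsilon')^{\cdot}$ yields exactly the claimed bound $(24/\epsilon)^{p(2r-d)+r^{2}}$.
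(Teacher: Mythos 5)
A preliminary remark: the paper does not prove this lemma at all — it is imported verbatim from \citet{wang2023commonFactor} and used only through its consequence that an $\varepsilon/2$-net has at most $(48/\varepsilon)^{p(2r-d)+r^2}$ elements — so there is no in-paper proof to compare against. Judged on its own, your plan follows the canonical route (a shared net for $\mathbf{C}$, product nets for $\mathbf{R}$, $\mathbf{P}$, $\mathbf{D}$, then a telescoping decomposition of $\mathbf{W}-\widetilde{\mathbf{W}}$), and it correctly identifies the two structural points that matter: $\mathbf{C}$ enters both factors and must be covered once, which is where the exponent $p(2r-d)+r^2$ comes from, and orthonormality of the factors forces $\fnorm{\mathbf{D}}=1$.

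There are, however, two concrete soft spots. First, the count $|\mathcal{N}_{\mathbf{C}}|\le(6/\epsilon')^{pd}$ is not valid for a Frobenius-norm net of $\mathbb{O}^{p\times d}$: its elements have $\fnorm{\mathbf{C}}=\sqrt{d}$, so the volumetric bound carries a $\sqrt{d}$ factor. You should build the nets of the orthonormal factors with respect to the operator norm, where the $(6/\epsilon')^{pd}$-type count is standard, and this is in any case exactly what your telescoping inequality consumes, since $\fnorm{(\mathbf{U}-\widetilde{\mathbf{U}})\mathbf{D}\mathbf{V}^{\top}}\le\opnorm{\mathbf{U}-\widetilde{\mathbf{U}}}\,\fnorm{\mathbf{D}}$. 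Second, and more importantly, the step where you approximate $\mathbf{R}$ by an element of $\mathcal{N}_{\mathbf{R}|\widetilde{\mathbf{C}}}$ is not justified by the net property alone: $\mathbf{R}$ is orthogonal to $\mathbf{C}$, not to $\widetilde{\mathbf{C}}$, so $\mathbf{R}$ does not belong to the set that $\mathcal{N}_{\mathbf{R}|\widetilde{\mathbf{C}}}$ covers, and declaring the mismatch ``harmless'' because the telescoping uses $[\widetilde{\mathbf{C}}~\widetilde{\mathbf{R}}]$ does not produce the required bound on $\fnorm{\mathbf{R}-\widetilde{\mathbf{R}}}$. The missing (short) argument is: set $\mathbf{R}''=(\mathbf{I}-\widetilde{\mathbf{C}}\widetilde{\mathbf{C}}^{\top})\mathbf{R}$, note $\fnorm{\mathbf{R}-\mathbf{R}''}=\fnorm{\widetilde{\mathbf{C}}^{\top}\mathbf{R}}=\fnorm{(\widetilde{\mathbf{C}}-\mathbf{C})^{\top}\mathbf{R}}\le\fnorm{\widetilde{\mathbf{C}}-\mathbf{C}}$, re-orthonormalize $\mathbf{R}''$ (its singular values are within $O(\epsilon')$ of one), and only then invoke the net; the same applies to $\mathbf{P}$. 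This adds a constant multiple of $\epsilon'$ to the radius for $\mathbf{R}$ and $\mathbf{P}$ (your per-term bounds also already lose $\sqrt{2}$ factors), so the choice $\epsilon'=\epsilon/4$ does not deliver an $\epsilon$-net, and hence the stated constant $24$ is not reached by your accounting — you must shrink $\epsilon'$ and accept a larger base, or do sharper bookkeeping. None of this threatens the exponent $p(2r-d)+r^2$, which is the only feature the paper actually uses, but as written the proposal proves the lemma only up to an unspecified absolute constant in place of $24$.
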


Finally, the following lemma establishes the covering number bound for the set of low-Tucker-rank tensors.
\begin{lemma}\label{lemma:covering_tensors}
    (Lemma 2, \cite{rauhut2017low})
    Define
    $$
    \mathcal{H}=\left\{\bbm{\mathcal{X}} \in \mathbb{R}^{n_1 \times n_2 \times \cdots \times n_N}: \operatorname{rank}(\bbm{\mathcal{X}}_{(k)}) \leq r_k, k=1,2,...,N, \|\bbm{\mathcal{X}}\|_\mathrm{F}=1\right\}.
    $$
    For any $\varepsilon\in(0,1)$, the covering number of $\mathcal{H}$ with respect to the Frobenius norm satisfies:
    $$
    \mathcal{N}\left(\mathcal{H},\|\cdot\|_\mathrm{F}, \varepsilon\right) \leq(3(N+1) / \varepsilon)^{r_1 r_2 \cdots r_N+\sum_{k=1}^N n_k r_k}.
    $$
\end{lemma}


\newpage

\section{Consistency of Rank Selection}\label{sec:rank_selection}

\noindent\textit{Proof of Theorem \ref{theorem: selecting r}:} We focus on the consistency for selecting $r_1$, as the result for $r_2$ can be developed analogously. With $T\gtrsim (p_1r_1+p_2r_2+4r_1^2r_2^2)M_2^{-2}\max\left(\tau,\tau^2\right)$, by Proposition \ref{prop: initialization},
$$\fnorm{\widehat{\bm A}_1^\mathrm{RR}(\bar{r}_1)-\bm A_1^*}+\fnorm{\widehat{\bm A}_2^\mathrm{RR}(\bar{r}_2)-\bm A_2^*}\lesssim \phi^{-1}\alpha_\mathrm{RSC}^{-1}\tau M_1\sqrt{\frac{p_1+p_2}{T}}$$
with probability approaching $1$ as $T\to\infty$ and $p_1,p_2\to\infty$. Since $\widehat{\bm A}_1^\mathrm{RR}-\bm A_1^*$ is rank-$(\bar{r}_1+r_1)$, by the fact that $L_\infty$ norm is smaller than $L_2$ norm and Mirsky's singular value inequality \cite{Mirsky1960},
$$\begin{aligned}
    \max_{1\leq j\leq \bar{r}_1+r_1}|\sigma_j(\widehat{\bm A}_1^\mathrm{RR}(\bar{r}_1))-\sigma_j(\bm A_1^*)|^2 \leq& \sum_{j=1}^{\bar{r}_1+r_1}\left(\sigma_j(\widehat{\bm A}_1^\mathrm{RR}(\bar{r}_1))-\sigma_j(\bm A_1^*)\right)^2\\
    \leq& \sum_{j=1}^{\bar{r}_1+r_1}\sigma^2_j(\widehat{\bm A}_1^\mathrm{RR}(\bar{r}_1)-\bm A_1^*)\\
    =&\fnorm{\widehat{\bm A}_1^\mathrm{RR}(\bar{r}_1)-\bm A_1^*}^2\\
    \lesssim& \phi^{-2}\alpha_\mathrm{RSC}^{-2}\tau^2 M_1^2\frac{p_1+p_2}{T}.
\end{aligned}$$ 
Then, $\forall j=1,2,...,\bar{r}_1$, 
\begin{equation}
    |\sigma_j(\widehat{\bm A}_1^\mathrm{RR}(\bar{r}_1))-\sigma_j(\bm A_1^*)|= O(\phi^{-1}\alpha_\mathrm{RSC}^{-1}\tau M_1\sqrt{(p_1+p_2)/T}).
\end{equation}
Next, we show that as $T,p_1,p_2\to\infty$, the ratio $(\sigma_{j+1}(\widehat{\bm A}_1^\mathrm{RR}(\bar{r}_1))+s(p_1,p_2,T))/(\sigma_{j}(\widehat{\bm A}_1^\mathrm{RR}(\bar{r}_1))+s(p_1,p_2,T))$ achieves its minimum at $j=r_1$. For $j>r_1$, $\sigma_j(\bm A_1^*)=0$ and 
\begin{equation}
    \sigma_j(\widehat{\bm A}_1^\mathrm{RR}(\bar{r}_1))=O(\phi^{-1}\alpha_\mathrm{RSC}^{-1}\tau M_1\sqrt{\mathrm{df}_\mathrm{RRMAR}/T})=o(s(p_1,p_2,T)).
\end{equation}

Therefore, we have
$$
    \frac{\sigma_{j+1}(\widehat{\bm A}_1^\mathrm{RR}(\bar{r}_1))+s(p_1,p_2,T)}{\sigma_{j}(\widehat{\bm A}_1^\mathrm{RR}(\bar{r}_1))+s(p_1,p_2,T)}\to 1.
$$
For $j<r_1$,
$$\begin{aligned}
    &\lim_{\substack{T\to\infty\\p_1,p_2\to\infty}}\frac{\sigma_{j+1}(\widehat{\bm A}_1^\mathrm{RR}(\bar{r}_1))+s(p_1,p_2,T)}{\sigma_{j}(\widehat{\bm A}_1^\mathrm{RR}(\bar{r}_1))+s(p_1,p_2,T)}\\
    =&\lim_{\substack{T\to\infty\\p_1,p_2\to\infty}}\frac{\sigma_{j+1}(\bm A_1^*)+o(s(p_1,p_2,T))+s(p_1,p_2,T)}{\sigma_{j}(\bm A_1^*)+o(s(p_1,p_2,T))+s(p_1,p_2,T)}\\
    =&\lim_{p_1,p_2\to\infty}\frac{\sigma_{j+1}(\bm A_1^*)}{\sigma_{j}(\bm A_1^*)}\leq 1.
\end{aligned}$$
For $j=r_1$,
$$\begin{aligned}
    \frac{\sigma_{j+1}(\widehat{\bm A}_1^\mathrm{RR}(\bar{r}_1))+s(p_1,p_2,T)}{\sigma_{j}(\widehat{\bm A}_1^\mathrm{RR}(\bar{r}_1))+s(p_1,p_2,T)}=&\frac{o(s(p_1,p_2,T))+s(p_1,p_2,T)}{\sigma_{r_1}(\bm A_1^*)+o(s(p_1,p_2,T))+s(p_1,p_2,T)}\\
    \to &\frac{s(p_1,p_2,T)}{\sigma_{r_1}(\bm A_1^*)}\\
    =& o\left(\min_{1\leq j\leq r_1-1}\frac{\sigma_{j+1}(\bm A_1^*)}{\sigma_j(\bm A_1^*)}\right).
\end{aligned}$$
Therefore, when $T,p_1,p_2\to\infty$, the ratio will finally achieve its minimum at $j=r_1$, with a probability approaching one.

\end{document}